\documentclass[10pt,a4paper,reqno]{amsart}
\usepackage{amsmath,amssymb,amsthm}
\usepackage{xspace}
\usepackage[usenames]{color}


\usepackage[numbers]{natbib}
\usepackage{amsfonts, amsmath, wasysym}
\usepackage{amssymb, amsthm}
\usepackage{mathrsfs}
\usepackage{verbatim}
\newfont{\ffont}{cmr10}

\allowdisplaybreaks

\newcommand{\poly}{\mathop{poly}}

\newcommand{\cA}{\mathcal{A}}

\newcommand{\cF}{\mathcal{F}}

\newcommand{\cK}{\mathcal{K}}

\newcommand{\cM}{\mathcal{M}}
\newcommand{\cN}{\mathcal{N}}
\newcommand{\cC}{\mathcal{C}}

\newcommand{\ot}{\otimes}

\newcommand{\Tr}{\mathrm{Tr}}
\newcommand{\al}{\alpha}

\newcommand{\ga}{\gamma}
\newcommand{\del}{\delta}
\newcommand{\ka}{\kappa}
\newcommand{\Om}{\Omega}
\newcommand{\si}{\sigma}

\newcommand{\DelO}{d}

\newcommand{\bP}{\mathbb{P}}

\newcommand{\la}{\lambda}
\newcommand{\ti}{\times}

\newcommand{\btnorm}[1]{\Big\vert\kern-0.25ex\Big\vert\kern-0.25ex\Big\vert #1 
    \Big\vert\kern-0.25ex\Big\vert\kern-0.25ex\Big\vert}
\newcommand\tnorm[1]{\vert\kern-0.25ex\vert\kern-0.25ex\vert #1 
    \vert\kern-0.25ex\vert\kern-0.25ex\vert}

\hyphenation{Ro-sen-thal} \hyphenation{Burk-holder} \hyphenation{Rade-macher} \hyphenation{Lin-den-strauss} \hyphenation{John-son}



\DeclareMathOperator*{\E}{\mathbb{E}}
\let\Pr\relax
\DeclareMathOperator*{\Pr}{\mathbb{P}}

\newcommand{\ceil}[1]{\left\lceil #1 \right\rceil}

\newcommand{\eps}{\varepsilon}
\newcommand{\inprod}[1]{\left\langle #1 \right\rangle}
\newcommand{\R}{\mathbb{R}}
\newcommand{\eqdef}{\mathbin{\stackrel{\rm def}{=}}}
\newcommand{\lts}{\mathbin{\substack{<\\ *}}\xspace}
\newcommand{\gts}{\mathbin{\substack{>\\ *}}\xspace}
\newcommand{\vertiii}[1]{{\left\| #1 \right\|_{\delta}}}

\newtheorem{theorem}{Theorem}
\newtheorem{remark}[theorem]{Remark}
\newtheorem{claim}[theorem]{Claim}
\newtheorem{corollary}[theorem]{Corollary}
\newtheorem{lemma}[theorem]{Lemma}
\newtheorem{question}[theorem]{Question}
\numberwithin{equation}{section}
\newtheorem{example}[theorem]{Example}

\newcommand{\EquationName}[1]{\label{eq:#1}}
\newcommand{\ClaimName}[1]{\label{clm:#1}}
\newcommand{\LemmaName}[1]{\label{lem:#1}}
\newcommand{\CorollaryName}[1]{\label{cor:#1}}
\newcommand{\SectionName}[1]{\label{sec:#1}}
\newcommand{\TheoremName}[1]{\label{thm:#1}}
\newcommand{\RemarkName}[1]{\label{rem:#1}}
\newcommand{\FigureName}[1]{\label{fig:#1}}
\newcommand{\QuestionName}[1]{\label{que:#1}}

\newcommand{\Equation}[1]{Eq.\:\eqref{eq:#1}}
\newcommand{\Claim}[1]{Claim~\ref{clm:#1}}
\newcommand{\Lemma}[1]{Lemma~\ref{lem:#1}}
\newcommand{\Corollary}[1]{Corollary~\ref{cor:#1}}
\newcommand{\Section}[1]{Section~\ref{sec:#1}}
\newcommand{\Theorem}[1]{Theorem~\ref{thm:#1}}
\newcommand{\Remark}[1]{Remark~\ref{rem:#1}}
\newcommand{\Figure}[1]{Figure~\ref{fig:#1}}
\newcommand{\Question}[1]{Question~\ref{que:#1}}

\newcommand{\Eqsub}[1]{\eqref{eq:#1}}

\begin{document} 

\author
{Jean~Bourgain}
\address
{Institute for Advanced Study, Princeton, NJ 08540}
\email
{bourgain@math.ias.edu}
\thanks{J.B.\ partially supported by NSF grant DMS-1301619.}

\author{Sjoerd~Dirksen}
\address{RWTH Aachen University, 52062 Aachen, Germany}
\email{dirksen@mathc.rwth-aachen.de}
\thanks{S.D.\ partially supported by SFB grant 1060 of the Deutsche Forschungsgemeinschaft (DFG)}

\author{Jelani Nelson}
\address{Harvard University, Cambridge, MA 02138}
\email{minilek@seas.harvard.edu}
\thanks{J.N.\ supported by NSF grant IIS-1447471 and CAREER award CCF-1350670, ONR grant N00014-14-1-0632, and a Google Faculty Research Award. Part of this work done while supported by NSF grants CCF-0832797 and DMS-1128155.}

\title[Toward a unified theory of sparse dimensionality reduction]{Toward a unified theory of sparse dimensionality reduction in Euclidean space}

\maketitle

\vspace{-.4in}\begin{abstract}
Let $\Phi\in\R^{m\times n}$ be a sparse Johnson-Lindenstrauss transform \cite{KN14} with $s$ non-zeroes per column. For a subset $T$ of the unit sphere, $\eps\in(0,1/2)$ given, we study settings for $m,s$ required to ensure
$$
\mathop{\mathbb{E}}_\Phi \sup_{x\in T} \left|\|\Phi x\|_2^2 - 1 \right| < \eps ,
$$
i.e.\ so that $\Phi$ preserves the norm of every $x\in T$ simultaneously and multiplicatively up to $1+\eps$. We introduce a new complexity parameter, which depends on the geometry of $T$, and show that it suffices to choose $s$ and $m$ such that this parameter is small. Our result is a sparse analog of Gordon's theorem, which was concerned with a dense $\Phi$ having i.i.d.\ Gaussian entries. We qualitatively unify several results related to the Johnson-Lindenstrauss lemma, subspace embeddings, and Fourier-based restricted isometries. Our work also implies new results in using the sparse Johnson-Lindenstrauss transform in numerical linear algebra, classical and model-based compressed sensing, manifold learning, and constrained least squares problems such as the Lasso.
\end{abstract}

\tableofcontents

\section{Introduction}\SectionName{intro}
Dimensionality reduction is a ubiquitous tool across a wide array of disciplines: machine learning \cite{WDLSA09}, high-dimensional computational geometry \cite{Indyk01}, privacy \cite{BlockiBDS12}, compressed sensing \cite{CT05}, spectral graph theory \cite{SpielmanS11}, interior point methods for linear programming \cite{LeeS13a}, numerical linear algebra \cite{Sarlos06}, computational learning theory \cite{BalcanB05,BalcanBV06}, manifold learning \cite{BHW07,Clarkson08}, motif-finding in computational biology \cite{BuhlerT02}, astronomy \cite{ContrerasM12}, and several others. Across all these disciplines one is typically faced with data that is not only massive, but each data item itself is represented as a very high-dimensional vector. For example, when learning spam classifiers a data point is an email, and it is represented as a high-dimensional vector indexed by dictionary words \cite{WDLSA09}. In astronomy a data point could be a star, represented as a vector of light intensities measured over 
various 
points sampled in time \cite{KovacsZM02,VanderburgJ14}. Dimensionality reduction techniques in such applications provide the following benefits:
\begin{itemize}
\item Smaller storage consumption.
\item Speedup during data analysis.
\item Cheaper signal acquisition.
\item Cheaper transmission of data across computing clusters.
\end{itemize}

The technical guarantees required from a dimensionality reduction routine are application-specific, but typically such methods must reduce dimension while still preserving point geometry, e.g.\ inter-point distances and angles. That is, one has some point set $X\subset\R^n$ with $n$ very large, and we would like a dimensionality-reducing map $f:X\rightarrow\R^m$, $m\ll n$, such that
\begin{equation}
\forall x,y\in X,\ (1-\eps)\|x-y\| \le \|f(x)-f(y)\| \le (1+\eps)\|x-y\| \EquationName{jlf}
\end{equation}
for some norm $\|\cdot\|$. Note also that for unit vectors $x,y$, $\cos(\angle (x,y)) = (1/2)(\|x\|_2^2 + \|y\|_2^2 - \|x-y\|_2^2)$, and thus $f$ also preserves angles with additive error if it preserves Euclidean norms of points in $X\cup(X-X)$.

A powerful tool for achieving \Equation{jlf}, used in nearly all the applications cited above, is the {\em Johnson-Lindenstrauss (JL) lemma} \cite{JL84}.

\begin{theorem}[JL lemma]
For any subset $X$ of Euclidean space and $0<\eps<1/2$, there exists $f:X\rightarrow\ell_2^m$ with $m = O(\eps^{-2}\log |X|)$ providing \Equation{jlf} for $\|\cdot\| = \|\cdot\|_2$.
\end{theorem}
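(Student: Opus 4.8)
The plan is to prove the lemma by the probabilistic method, taking $f$ to be a suitably normalized random linear map. Concretely, let $\Phi\in\R^{m\times n}$ have independent entries, each equal to $m^{-1/2}$ times a standard Gaussian (or, alternatively, a uniform random sign), and set $f=\Phi$. Since $f$ is linear, \Equation{jlf} for $\|\cdot\|=\|\cdot\|_2$ is equivalent to the assertion that $\|\Phi(x-y)\|_2\in(1\pm\eps)\|x-y\|_2$ for every pair $x,y\in X$; by homogeneity it is enough to control $\bigl|\|\Phi u\|_2^2-1\bigr|$ for unit vectors $u$.

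The first and central step is the single-vector concentration estimate: there is an absolute constant $c_0>0$ such that for every fixed unit vector $u$ and every $\eps\in(0,1/2)$,
$$
\Pr_\Phi\Bigl[\,\bigl|\,\|\Phi u\|_2^2-1\,\bigr|>\eps\,\Bigr]\;\le\;2e^{-c_0\eps^2 m}.
$$
In the Gaussian model each coordinate $(\Phi u)_i=\langle\Phi_i,u\rangle$ is a standard Gaussian $g_i$, these are independent, and $\|\Phi u\|_2^2=\frac1m\sum_{i=1}^m g_i^2$ is a normalized $\chi^2_m$ variable with mean $1$; the bound then follows from the classical Chernoff argument — estimate the moment generating function $\E\exp(\la(g_i^2-1))$ for $|\la|$ in a fixed neighborhood of $0$, take the $m$-th power by independence, and optimize over $\la$. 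The sign model is handled the same way, noting that $(\Phi u)_i$ is subgaussian, its square is subexponential, and one invokes a Bernstein-type inequality for the sum.

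Given this, the remaining steps are routine. Apply the estimate with $u=(x-y)/\|x-y\|_2$ for each of the fewer than $|X|^2$ unordered pairs $x\neq y$ in $X$, and choose $m=C\eps^{-2}\log|X|$ with $C$ a sufficiently large absolute constant (depending only on $c_0$); then each pair violates the two-sided bound with probability at most, say, $|X|^{-3}$, so a union bound over all pairs shows that with positive probability \emph{no} pair fails. Any realization of $\Phi$ in this good event works: the bound $\bigl|\|\Phi u\|_2^2-1\bigr|\le\eps$ for all the relevant $u$ yields $(1-\eps)\|x-y\|_2\le\|\Phi(x-y)\|_2\le(1+\eps)\|x-y\|_2$ after using $\sqrt{1+\eps}\le 1+\eps$ and $\sqrt{1-\eps}\ge 1-\eps$ on $(0,1/2)$, which is exactly \Equation{jlf}.

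The main obstacle is the concentration estimate of the second paragraph, and specifically obtaining the exponent $\eps^2 m$ rather than $\eps m$ in the regime $\eps\to 0$: the square of a Gaussian is only subexponential, so its moment generating function is finite only on a bounded interval of $\la$, and the quadratic-in-$\eps$ rate must be extracted by carefully expanding that generating function near $\la=0$. Once this is in hand, linearity, homogeneity, the union bound over pairs, and the square-root conversion are all elementary.
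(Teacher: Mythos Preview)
Your argument is correct and is the standard probabilistic-method proof of the Johnson--Lindenstrauss lemma. Note, however, that the paper does not give its own proof of this theorem: it is stated in the introduction as background, with attribution to \cite{JL84}, and is not proved in the text. There is therefore nothing to compare against; your sketch would serve perfectly well as a self-contained proof.
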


This bound on $m$ is nearly tight: for any $n\ge 1$ Alon exhibited a point set $X\subset\ell_2^n$, $|X|=n+1$, such that any such JL map $f$ must have $m = \Omega(\eps^{-2}(\log n)/\log(1/\eps))$ \cite{Alon03}. In fact, all known proofs of the JL lemma provide linear $f$, and the JL lemma is tight up to a constant factor in $m$ when $f$ must be linear \cite{LarsenN14}. Unfortunately, for actual applications such {\em worst-case} understanding is unsatisfying. Rather we could ask: if given a distortion parameter $\eps$ and point set $X$ as input (or a succinct description of it if $X$ is large or even infinite, as in some applications), what is the best target dimension $m = m(X,\eps)$ such that a JL map exists for $X$ with this particular $\eps$? That is, in practice we are more interested in moving beyond worst case analysis and being as efficient as possible {\em for our particular data} $X$.

Unfortunately the previous question seems fairly difficult. For the related question of computing the optimal distortion for embedding $X$ into a line (i.e.\ $m=1$), it is computationally hard to approximate the optimal distortion even up to a multiplicative factor polynomial in $|X|$ \cite{BadoiuCIS05}. In practice, however, typically $f$ cannot be chosen arbitrarily as a function of $X$ anyway. For example, when employing certain learning algorithms such as stochastic gradient descent on dimensionality-reduced data, it is at least required that $f$ is differentiable on $\R^n$ (where $X\subset \R^n$) \cite{WDLSA09}. For several applications it is also crucial that $f$ is linear, e.g.\ in numerical linear algebra \cite{Sarlos06} and compressed sensing \cite{CT05,Donoho04}. In one-pass streaming applications \cite{ClarksonW09} and data structural problems such as nearest neighbor search \cite{HarPeledIM12}, it is further required that $f$ is chosen randomly without knowing $X$. For any particular $X$, a 
random $f$ drawn from some distribution must satisfy the JL guarantee with good probability. In streaming applications this is because $X$ is not fully known up front, but is gradually observed in a stream. In data structure applications this is because $f$ must preserve distances to some future query points, which are not known at the time the data structure is constructed.

Due to the considerations discussed, in practice typically $f$ is chosen as a random linear map drawn from some distribution with a small number of parameters (in some cases simply the parameter $m$). For example, popular choices of $f$ include a random matrix with independent Gaussian \cite{HarPeledIM12} or Rademacher \cite{Achlioptas03} entries. While worst case bounds inform us how to set parameters to obtain the JL guarantee for worst case $X$, we typically can obtain better parameters by exploiting prior knowledge about $X$. Henceforth we only discuss linear $f$, so we write $f(x) = \Phi x$ for $\Phi\in\R^{m\times n}$. Furthermore by linearity, rather than preserving Euclidean distances in $X$ it is equivalent to discuss preserving norms of all vectors in $T = \{(x-y)/\|x-y\|_2 : x,y\in X\}\subset S^{n-1}$, the set of all normalized difference vectors amongst points in $X$. Thus up to changing $\eps$ by roughly a factor of $2$, \Equation{jlf} is equivalent to
\begin{equation}
\sup_{x\in T} \Big|\|\Phi x\|^2 - 1\Big| < \eps . \EquationName{jl-condition}
\end{equation}
Furthermore, since we consider $\Phi$ chosen at random, we more specifically want
\begin{equation}
\E_\Phi \sup_{x\in T} \Big|\|\Phi x\|^2 - 1\Big| < \eps .\EquationName{rjl-condition}
\end{equation}

Instance-wise understanding for achieving \Equation{rjl-condition} was first provided by Gordon \cite{Gordon88}, who proved that a random $\Phi$ with i.i.d.\ Gaussian entries satisfies \Equation{rjl-condition} as long as $m \gtrsim (g^2(T)+1)/\eps^2$, where we write $A\gtrsim B$ if $A \ge CB$ for a universal constant $C>0$. Denoting by $g$ a standard $n$-dimensional Gaussian vector, the parameter $g(T)$ is defined as the {\em Gaussian mean width}
$$g(T)\eqdef \E_g\sup_{x\in T}\langle g,x\rangle .$$
One can think of $g(T)$ as describing the $\ell_2$-geometric complexity of $T$. It is always true that $g^2(T) \lesssim \log |T|$, and thus Gordon's theorem implies the JL lemma. In fact for all $T$ we know from applications, such as for the restricted isometry property from compressed sensing \cite{CT05} or subspace embeddings from numerical linear algebra \cite{Sarlos06}, the best bound on $m$ is a corollary of Gordon's theorem. Later works extended Gordon's theorem to other distributions for $\Phi$, such as $\Phi$ having independent subgaussian entries (e.g.\ Rademachers) \cite{KM05,MPT07,Dirksen14}.

Although Gordon's theorem gives a good understanding for $m$ in most scenarios, it suffers from the fact that it analyzes a {\em dense} random $\Phi$, which means that performing the dimensionality reduction $x\mapsto \Phi x$ is dense matrix-vector multiplication, and is thus slow. For example, in some numerical linear algebra applications (such as least squares regression \cite{Sarlos06}), multiplying a dense unstructured $\Phi$ with the input turns out to be slower than obtaining the solution of the original, high-dimensional problem! In compressed sensing, certain iterative recovery algorithms such as CoSamp \cite{NT09} and Iterative Hard Thresholding \cite{BD08} involve repeated multiplications by $\Phi$ and $\Phi^*$, the conjugate transpose of $\Phi$, and thus $\Phi$ supporting fast matrix-vector multiply are desirable in such applications as well.

The first work to provide $\Phi$ with small $m$ supporting faster multiplication is the Fast Johnson-Lindenstrauss Transform (FJLT) of \cite{AC09} for finite $T$. The value of $m$ was still $O(\eps^{-2}\log|T|)$, with the time to multiply $\Phi x$ being $O(n\log n + m^3)$. \cite{AL09} later gave an improved construction with time $O(n\log n + m^{2+\gamma})$ for any small constant $\gamma>0$. Most recently several works gave nearly linear embedding time in $n$, independent of $m$, at the expense of increasing $m$ by a $(\log n)^c$ factor \cite{AL13,KW11,NPW14}. In all these works $\Phi$ is the product of some number of very sparse matrices and Fourier matrices, with the speed coming from the Fast Fourier Transform (FFT) \cite{CooleyT65}. It is also known that this FFT-based approach can be used to obtain fast RIP matrices for compressed sensing \cite{CT06,RV08,CGV13} and fast oblivious subspace embeddings for numerical linear algebra applications \cite{Sarlos06} (see also \cite{Tropp11,LDFU13} for refined 
analyses in the latter case).

Another line of work, initiated in \cite{Achlioptas03} and greatly advanced in \cite{DKS10}, sought fast embedding time by making $\Phi$ sparse. If $\Phi$ is drawn from a distribution over matrices having at most $s$ non-zeroes per column, then $\Phi x$ can be computed in time $s\cdot \|x\|_0$. After some initial improvements \cite{KN10,BOR10}, the best known achievable value of $s$ to date for the JL lemma while still maintaining $m \lesssim \eps^{-2}\log |T|$ is the sparse Johnson-Lindenstrauss Transform (SJLT) of \cite{KN14}, achieving $s \lesssim \eps^{-1}\log |T| \lesssim \eps m$. Furthermore, an example of a set $T$ exists which requires this bound on $s$ up to $O(\log(1/\eps))$ for any linear JL map \cite{NN13a}. Note however that, again, this is an understanding of the {\em worst-case} parameter settings over all $T$.

In summary, while Gordon's theorem gives us a good understanding of instance-wise bounds on $T$ for achieving good dimensionality reduction, it only does so for dense, slow $\Phi$.  Meanwhile, our understanding for efficient $\Phi$, such as the SJLT with small $s$, has not moved beyond the worst case. In some very specific examples of $T$ we do have good bounds for settings of $s, m$ that suffice, such as $T$ the unit norm vectors in a $d$-dimensional subspace \cite{CW13,MM13,NN13b}, or all elements of $T$ having small $\ell_\infty$ norm \cite{Matousek08,DKS10,KN10,BOR10}. However, our understanding for general $T$ is non-existent. This brings us to the main question addressed in this work, where $S^{n-1}$ denotes the $\ell_2$-unit sphere $\{x\in\R^n : \|x\|_2 = 1\}$.

\begin{question}\QuestionName{main}
\textup{Let $T\subseteq S^{n-1}$ and $\Phi$ be the SJLT. What relationship must $s,m$ satisfy, in terms of the geometry of $T$, to ensure \Eqsub{rjl-condition}?}
\end{question}

We also note that while the FFT-based and sparse $\Phi$ approaches seem orthogonal at first glance, the two are actually connected, as pointed out before \cite{AC09,Matousek08,NPW14}. The FJLT sets $\Phi = SP$ where $P$ is some random preconditioning matrix that makes $T$ ``nice'' with high probability, and $S$ is a random sparse matrix. We point out that although the SJLT is not the same as the matrix $S$ typically used in the FFT-based literature, one could replace $S$ with the SJLT and hope for similar algorithmic outcome if $s$ is small: nearly linear embedding time.

The answer to the analog of \Question{main} for a standard Gaussian matrix depends only on the $\ell_2$-metric structure of $T$. Indeed, since both $\ell_2$-distances and Gaussian matrices are invariant under orthogonal transformations, so is \Eqsub{rjl-condition} in this case. This is reflected in Gordon's theorem, where the embedding dimension $m$ is governed by the Gaussian width, which is invariant under orthogonal transformations. We stress that in sharp contrast, a resolution of \Question{main} cannot solely depend on the $\ell_2$-metric structure of $T$. Indeed, we require that $\Phi$ be sparse {\em in a particular basis} and is therefore not invariant under orthogonal transformations. Thus an answer to \Question{main} must be more nuanced (see our main theorem, \Theorem{main}).

\medskip

\paragraph{\textbf{Our Main Contribution:}} We provide a general theorem which answers \Question{main}. Specifically, for every $T\subseteq S^{n-1}$ analyzed in previous work that we apply our general theorem to here, we either (1) qualitatively recover or improve the previous best known result, or (2) prove the first non-trivial result for dimensionality reduction with sparse $\Phi$. We say ``qualitatively'' since applying our general theorem to these applications loses a factor of $\log^c(n/\eps)$ in $m$ and $\log^c(n/\eps)/\eps$ in $s$. 

In particular for (2), our work is the first to imply that non-trivially sparse dimensionality reducing linear maps can be applied for gain in model-based compressed sensing \cite{BCDH10}, manifold learning \cite{TSL00,DonohoG03}, and constrained least squares problems such as the popular Lasso \cite{Tibshirani96}.

\medskip

Specifically, we prove the following theorem.

\begin{theorem}[Main Theorem]\TheoremName{main}
Let $T\subset S^{n-1}$ and $\Phi$ be an SJLT with column sparsity $s$. Define the complexity parameter 
$$\kappa(T)\eqdef \kappa_{s,m}(T) = \max_{q\le \frac ms \log s} \Big\{\frac 1{\sqrt{qs}} \Big(\E_\eta \Big(\E_g \sup_{x\in T} |\sum_{j=1}^n \eta_j g_j x_j|\Big)^q\Big)^{1/q}\Big\},$$
where $(g_j)$ are i.i.d.\ standard Gaussian and $(\eta_j)$ i.i.d.\ Bernoulli with mean $qs/(m\log s)$. If
\begin{align*}
m &\gtrsim (\log m)^3(\log n)^5 \cdot\frac{(g^2(T) + 1)}{\eps^2}\\
s &\gtrsim (\log m)^6 (\log n)^4 \cdot\frac{1}{\eps^2} .
\end{align*}
Then \Eqsub{rjl-condition} holds as long as $s,m$ furthermore satisfy the condition
\begin{equation}
(\log m)^2(\log n)^{5/2}\kappa(T) < \eps \EquationName{kappa-suffices} .
\end{equation}
\end{theorem}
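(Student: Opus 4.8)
The plan is to express $\|\Phi x\|_2^2-1$ as a second-order Rademacher chaos in the signs that define $\Phi$, with coefficients depending on the Bernoulli selectors, and then to bound its supremum over $T$ by a generic-chaining moment inequality for chaos processes; the parameter $\kappa(T)$ will emerge precisely from controlling one of the two metrics that governs this chaining. Concretely, write $\Phi_{ij}=\eta_{ij}\sigma_{ij}/\sqrt s$, where $\sigma_{ij}$ are i.i.d.\ signs and $\eta_{ij}\in\{0,1\}$ are the selectors with the SJLT column constraint $\sum_i\eta_{ij}=s$. Then $\|\Phi x\|_2^2=\sigma^\top B_x^{(\eta)}\sigma$ for a block-diagonal matrix $B_x^{(\eta)}$ that is rank one in each of its $m$ blocks; its diagonal part equals $\sum_j x_j^2=1$ identically, so it remains to bound $\E\sup_{x\in T}|X_x|$ for the off-diagonal chaos $X_x=\tfrac1s\sum_{i=1}^m\sum_{j\ne k}\eta_{ij}\eta_{ik}\sigma_{ij}\sigma_{ik}x_jx_k$. (If one instead uses the variant in which the $\eta_{ij}$ are independent, the diagonal no longer collapses and produces a lower-order Bernoulli process, handled by an easier chaining argument.)

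First I would apply the decoupling inequality for Rademacher chaos to replace $X_x$ by its decoupled version $X'_x=\tfrac1s\sum_i\big(\sum_j\eta_{ij}\sigma_{ij}x_j\big)\big(\sum_k\eta_{ik}\sigma'_{ik}x_k\big)$, with $\sigma,\sigma'$ independent, at the cost of universal constants in every moment, and then condition on $\eta$. Conditionally, the process $(X'_x)_{x\in T}$ in $(\sigma,\sigma')$ --- equivalently, after the usual comparison, in Gaussians --- has increments controlled by two random, $\eta$-dependent metrics on $T$: a Hilbert--Schmidt/Euclidean metric $d_2$, comparable to the ambient Euclidean metric on $T$ rescaled by an $\eta$-dependent factor that I absorb into the polylog losses, and an operator-/$\ell_\infty$-type metric $d_\infty$, built from the sparse row norms $\big(\sum_j\eta_{ij}(x_j-y_j)^2\big)^{1/2}$ and $\big(\sum_j\eta_{ij}x_j^2\big)^{1/2}$.

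Next I would invoke a generic-chaining moment bound for suprema of decoupled second-order chaos (a Dirksen/Krahmer--Mendelson--Rauhut-type inequality), which schematically gives, for every $q\ge1$ and conditionally on $\eta$,
\[
\Big(\E_{\sigma,\sigma'}\sup_{x\in T}|X'_x|^q\Big)^{1/q}\;\lesssim\;\gamma_2(T,d_2)^2+\gamma_1(T,d_\infty)+\sqrt q\,\big(\mathrm{diam}_{d_2}(T)\,\gamma_2(T,d_2)+\mathrm{diam}_{d_\infty}(T)\big)+q\,\mathrm{diam}_{d_\infty}(T).
\]
The $\gamma_2(T,d_2)$ term reflects only the ambient Euclidean geometry of $T$; it and the $d_2$-diameter term contribute $\lesssim\eps$ once $m\gtrsim\mathrm{polylog}(n)\cdot g^2(T)/\eps^2$, which is where the hypothesis on $m$ comes from. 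The genuinely new ingredient is the $\gamma_1(T,d_\infty)$ term: bounding it by a Dudley-type sum of $d_\infty$-covering numbers, relating those block by block to the Gaussian widths $\E_g\sup_{x\in T}|\sum_j\eta_{ij}g_jx_j|$ of $T$ on the random sparse supports $\{j:\eta_{ij}=1\}$, and then taking a sup/high moment over the $m$ blocks and the $L^q(\eta)$ norm, produces --- up to polylog factors --- exactly the quantity $\sqrt{qs}\,\kappa_{s,m}(T)$; this is why the selectors in the definition of $\kappa$ have mean $qs/(m\log s)$ (the effective per-row sparsity) and why $q$ ranges over $q\le\frac ms\log s$ (the largest moment order the argument can afford given $m$ and $s$). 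The $d_\infty$-diameter terms are handled the same way and are of lower order. Combining everything, undoing the decoupling, and choosing $q$ as small as the argument permits, the hypotheses $m\gtrsim\mathrm{polylog}\cdot g^2(T)/\eps^2$, $s\gtrsim\mathrm{polylog}/\eps^2$, and $(\log m)^2(\log n)^{5/2}\kappa(T)<\eps$ force $\E\sup_{x\in T}|\|\Phi x\|_2^2-1|\lesssim\eps$, which gives \Eqsub{rjl-condition} after rescaling $\eps$ by a constant.

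The hard part will be the chaining step just described. One must first have the correct two-metric chaining inequality for the \emph{irregular} sparse chaos above: a single Euclidean chaining, as in Gordon's theorem, does not suffice, because the rows of $\Phi$ are heavy-tailed and unbalanced, so the operator-type metric $d_\infty$ cannot be ignored. One must then show that the $\gamma_1$ contribution for this data-dependent metric is genuinely governed by the Gaussian width of randomly \emph{sparsified} copies of $T$, i.e.\ by $\kappa(T)$, rather than by a polynomially larger quantity. Finally, keeping careful track of the interplay between the moment order $q$, the column sparsity $s$, and the embedding dimension $m$, so that every error term collapses to $O(\eps)$ precisely under the stated hypotheses, is the delicate bookkeeping; the $(\log m)^{O(1)}(\log n)^{O(1)}$ factors are the combined price of the $\gamma_1\leftrightarrow\gamma_2$ conversion, Dudley's entropy integral, transferring moments between the $\sigma$- and $\eta$-randomness, and net arguments over $T$.
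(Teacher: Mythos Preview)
Your overall plan---reduce to a Rademacher chaos and apply a Krahmer--Mendelson--Rauhut-type generic-chaining moment bound conditionally on the selectors---matches the paper's starting point. However, the chaos bound actually used involves only a single $\gamma_2$-functional, taken with respect to the operator-norm metric, which here is the random norm $\|x\|_\delta=s^{-1/2}\max_i\big(\sum_j\delta_{ij}x_j^2\big)^{1/2}$; there is no $\gamma_1$ term, and no separate Euclidean $\gamma_2$-term that dispatches the $g^2(T)/m$ contribution for free. Everything reduces to bounding $\E_\delta\gamma_2(T,\|\cdot\|_\delta)$, and both the $g^2(T)$ and the $\kappa(T)$ contributions must be extracted from this one quantity.

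The substantive gap is the step you flag as ``the hard part'' and describe as ``relating the $d_\infty$-covering numbers block by block to the Gaussian widths of $T$ on the random sparse supports.'' This cannot be done by any direct Dudley/Sudakov argument: dual Sudakov minoration bounds $\log\mathcal N(K,\|\cdot\|,t)$ only when the set $K$ being covered is a \emph{Euclidean ball}, and for a general $T\subset S^{n-1}$ there is no inequality linking $\log\mathcal N(T,\|\cdot\|_\delta,t)$ to a Gaussian width. The paper's route is indirect and constitutes its main technical contribution: (i)~apply a \emph{duality of entropy numbers} result of Bourgain--Pajor--Szarek--Tomczak-Jaegermann to flip $\log\mathcal N(\tilde T,\|\cdot\|_\delta,t)$ into $\log\mathcal N\big(\mathrm{conv}(\bigcup_i B_{J_i}),\tnorm{\cdot}_T,\sqrt s\,t/8\big)$, paying a factor $\log m$ from the type-$2$ constant of $\|\cdot\|_\delta$; (ii)~use \emph{Maurey's lemma} to reduce covering this convex hull to covering Minkowski averages $\tfrac1k\sum_{i\in A}B_{J_i}$ over small subsets $A\subset[m]$ with $|A|=k\lesssim 1/\epsilon^2$; (iii)~\emph{bucket} the coordinates $j$ according to the value of $\sum_{i\in A}\delta_{ij}$ into level sets $U_\alpha$, so that $\tfrac1k\sum_{i\in A}B_{J_i}\subset\sum_\alpha k^{-1/2}2^{\alpha/2}B_{U_\alpha}$; (iv)~only now, with genuine coordinate-subspace balls $B_{U_\alpha}$ in hand, apply \emph{dual Sudakov} to obtain the quantities $\E_g\tnorm{\sum_{j\in U_\alpha}g_je_j}_T$. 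The random sets arising here have intensity $\lesssim sk/m$, and the change of variable $q\simeq k\log s$, together with an $L^q$-moment union bound over the $\binom{m}{k}$ choices of $A$, is precisely what produces $\kappa(T)$ with selectors of mean $qs/(m\log s)$ and range $q\le\tfrac ms\log s$; your explanation of this mean as ``the effective per-row sparsity'' misses this mechanism. Steps (i)--(iii) are entirely absent from your outline, and without them there is no way to pass from $\gamma_2(T,\|\cdot\|_\delta)$ to $\kappa(T)$.
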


The complexity parameter $\kappa(T)$ may seem daunting at first, but \Section{applications} shows it can be controlled quite easily for all the $T$ we have come across in applications.

\subsection{Applications}

Here we describe various $T$ and their importance in certain applications. Then  we state the consequences that arise from our theorem. In order to highlight the qualitative understanding arising from our work, we introduce the notation $A\lts B$ if $A \le B \cdot \eps^{-2}(\log n)^c$. A summary of our bounds is in \Figure{fig}.

\medskip

\paragraph{\textbf{Finite $|T|$:}} This is the setting $|T|<\infty$, for which the SJLT satisfies \Equation{rjl-condition} with $s\lesssim \eps^{-1}\log|T|$, $m\lesssim \eps^{-2}\log|T|$ \cite{KN14}. If also $T\subset B_{\ell_\infty^n}(\alpha)$, i.e.\ $\|x\|_\infty \le \alpha$ for all $x\in T$, \cite{Matousek08} showed it is possible to achieve $m \lesssim \eps^{-2}\log|T|$ with a $\Phi$ that has an {\em expected} $O(\eps^{-2}(\alpha\log|T|)^2)$ non-zeroes per column.

Our theorem implies $s,m\lts \log|T|$ suffices in general, and $s\lts 1 + (\alpha \log |T|)^2, m \lts \log|T|$ in the latter case, qualitatively matching the above.

\medskip

\paragraph{\textbf{Linear subspace:}} Here $T = \{x \in E : \|x\|_2 = 1\}$ for some $d$-dimensional linear subspace $E\subset \R^n$, for which achieving \Equation{rjl-condition} with $m\lesssim d/\eps^2$ is possible \cite{AHK06,CW13}. A distribution satisfying \Equation{rjl-condition} for any $d$-dimensional subspace $E$ is known as an {\em oblivious subspace embedding (OSE)}. The paper \cite{Sarlos06} pioneered the use of OSE's for speeding up approximate algorithms for numerical linear algebra problems such as low-rank approximation and least-squares regression. More applications have since been found to approximating leverage scores \cite{DMMW12}, $k$-means clustering \cite{BZMD11,CohenEMMP14}, canonical correlation analysis \cite{ABTZ13}, support vector machines \cite{PBMD13}, $\ell_p$-regression \cite{CDM+13,WZ13}, ridge regression \cite{LDFU13}, streaming approximation of eigenvalues \cite{AN13}, and speeding up interior point methods for linear programming \cite{LeeS13a}. In many of these applications there is 
some input $A\in\R^{n\times d}$, $n\gg d$, and the subspace $E$ is for example the column space of $A$. Often an exact solution requires computing the singular value decomposition (SVD) of $A$, but using OSE's the running time is reduced to that for computing $\Phi A$, plus computing the SVD of the smaller matrix $\Phi A$. The work \cite{CW13} showed $s=1$ with small $m$ is sufficient, yielding algorithms for least squares regression and low-rank approximation whose running times are linear in the number of non-zero entries in $A$ for sufficiently lopsided rectangular matrices.

Our theorem implies that $s \lts 1$ and $m \lts d$ suffices to satisfy \Equation{rjl-condition}, which is qualitatively correct. Furthermore, a subset of our techniques reveals that if the maximum incoherence $\max_{1\le i\le n} \|P_E e_i\|_2$ is at most $\poly(\eps/\log n)$, then $s=1$ suffices (Theorem~\ref{thm:JLsubspace}). This was not known in previous work. A random $d$-dimensional subspace has incoherence $\sqrt{d/n}$ w.h.p.\ for $d\gtrsim\log n$ by the JL lemma, and thus is very incoherent if $n\gg d$.

\medskip

\paragraph{\textbf{Closed convex cones:}} For $A\in\R^{n\times d}$, $b\in\R^n$, and $\cC\subseteq \R^d$ a closed convex set, consider the constrained least squares problem of minimizing $\|Ax-b\|_2^2$ subject to $x\in\cC$. A popular choice is the Lasso \cite{Tibshirani96}, in which the constraint set $\cC = \{ x\in\R^d : \|x\|_1 \le R\}$ encourages sparsity of $x$. Let $x_*$ be an optimal solution, and let $T_{\cC}(x)$ be the tangent cone of $\cC$ at some point $x\in\R^d$ (see Appendix~\ref{sec:convex-tools} for a definition). Suppose we wish to accelerate approximately solving the constrained least squares problem by instead computing a minimizer $\hat{x}$ of $\|\Phi Ax - \Phi b\|_2^2$ subject to $x\in \cC$. The work \cite{PiW14} showed that to guarantee $\|A\hat{x} - b\|_2^2 \le (1+\eps)\|Ax_* - b\|_2^2$, it suffices that $\Phi$ satisfy two conditions, one of which is \Equation{jl-condition} for $T = A T_{\cC}(x_*)\cap S^{n-1}$. The paper \cite{PiW14} then analyzed dense random matrices for 
sketching constrained least squares problems. For example, for the Lasso if we are promised that the optimal solution $x_*$ is $k$-sparse, \cite{PiW14} shows that it suffices to set
$$
m\gtrsim \frac 1{\eps^2}\max_{j=1,\ldots, d} \frac{\|A_j\|_2^2}{\si_{\min,k}^2}k\log d,
$$
where $A_j$ is the $j$th column of $A$ and $\si_{\min,k}$ the smallest $\ell_1$-restricted eigenvalue of $A$:
$$
\si_{\min,k} = \inf_{\|y\|_2=1, \ \|y\|_{1}\leq 2\sqrt{k}}\|Ay\|_2 .
$$

Our work also applies to such $T$ (and we further show the SJLT with small $s, m$ satisfies the second condition required for approximate constrained least squares; see Theorem~\ref{thm:SJLmain}). For example for the Lasso, we show that again it suffices that 
$$
m\gts \max_{j=1,\ldots,d} \frac{\|A_j\|_2^2}{\si_{\min,k}^2}k\log d
$$
but where we only require from $s$ that
$$
s\gts \max_{\substack{1\le i\le n\\1\le j\le d}} \frac{A_{ij}^2}{\si_{\min,k}^2}k
$$
That is, the sparsity of $\Phi$ need only depend on the largest entry in $A$ as opposed to the largest column norm in $A$. The former can be much smaller.

\medskip

\paragraph{\textbf{Unions of subspaces:}} Define $T = \cup_{\theta\in \Theta} E_\theta\cap S^{n-1}$, where $\Theta$ is some index set and each $E_\theta\subset\R^n$ is a $d$-dimensional linear subspace. A case of particular interest is when $\theta\in\Theta$ ranges over all $k$-subsets of $\{1,\ldots,n\}$, and $E_\theta$ is the subspace spanned by $\{e_j\}_{j\in \theta}$ (so $d=k$). Then $T$ is simply the set of all $k$-sparse unit vectors of unit Euclidean norm: $S_{n,k}\eqdef \{x \in \R^n : \|x\|_2 = 1, \|x\|_0 \le k\}$ for $\|\cdot\|_0$ denoting support size. $\Phi$ satisfying \Eqsub{jl-condition} is then referred to as having the {\em restricted isometry property (RIP) of order $k$} with restricted isometry constant $\eps_k = \eps$ \cite{CT05}. Such $\Phi$ are known to exist with $m\lesssim \eps_k^{-2}k\log(n/k)$, and furthermore it is known that $\eps_{2k} < \sqrt{2}-1$ implies that any (approximately) $k$-sparse $x\in\R^n$ can be (approximately) recovered from $\Phi x$ in polynomial time by solving 
a certain linear program \cite{CT05,Candes08}. Unfortunately it is known for $\eps = \Theta(1)$ that {\em any} RIP matrix $\Phi$ with such small $m$ must have $s\gtrsim m$ \cite{NN13a}. Related is the case of vectors sparse in some other basis, i.e.\ $T = \{Dx \in \R^n : \|Dx\|_2 = 1,\|x\|_0 \le k\}$ for some so-called ``dictionary'' $D$ (i.e.\ the subspaces are acted on by $D$), or when $T$ only allows for some subset of all $\binom{n}{k}$ sparsity patterns in {\em model-based} compressed sensing \cite{BCDH10} (so that $|\Theta| < \binom{n}{k}$).

Our main theorem also implies RIP matrices with $s, m \lts k\log(n/k)$. More generally when a dictionary $D$ is involved such that the subspaces $\mathrm{span}(\{D e_j\}_{j\in \theta})$ are all $\alpha$-incoherent (as defined above), the sparsity can be further improved to $s\lts 1 + (\alpha k\log(n/k))^2$. That is, to satisfy the RIP with dictionaries yielding incoherent subspaces, we can keep $m$ qualitatively the same while making $s$ much smaller. For the general problem of unions of $d$-dimensional subspaces, our theorem implies one can either set $m \lts d + \log|\Theta|, s \lts \log|\Theta|$ or $m \lts d + \log|\Theta|, s \lts 1 + (\alpha\log|\Theta|)^2$. Previous work required $m$ to depend on the {\em product} of $d$ and $(\log|\Theta|)^c$ instead of the {\em sum} \cite{NN13b}, including a nice recent improvement by Cohen \cite{Cohen14}, and is thus unsuitable for the application to the set of $k$-sparse vectors (RIP matrices with $\lts k^2$ rows are already attainable via simpler methods using incoherence; e.g.\ see \cite[Proposition 1]{BDFKK11}). 
Iterative recovery algorithms such as CoSamp can also be used in model-based sparse recovery \cite{BCDH10}, which again involves multiplications by $\Phi, \Phi^*$, and thus sparse $\Phi$ is relevant for faster recovery. Our theorem thus shows, for the first time, that the benefit of model-based sparse recovery is not just a smaller $m$, but rather that the measurement matrix $\Phi$ can be made much sparser if the model is simple (i.e.\ $|\Theta|$ is small).  For example, in the {\em block-sparse model} one wants to (approximately) recover a signal $x\in\R^n$ based on $m$ linear measurements, where $x$ is (approximately) $k$-block-sparse. That is, the $n$ coordinates are partitioned into $n/b$ blocks of size $b$ each, and each block is either ``on'' (at least one coordinate in that block non-zero), or ``off'' (all non-zero). A $k$-block-sparse signal has at most $k/b$ blocks on (thus $\|x\|_0 \le k$). Thus $s\lts \log|\Theta| = \log (\binom{n/b}{k/b}) \lesssim (k/b)\log(n/k)$. Then as long as $b = \omega(\log(n/k))$, 
our results imply non-trivial column-sparsity $s \ll m$. Ours is the first result yielding non-trivial sparsity in a model-RIP $\Phi$ for any model with a number of measurements qualitatively matching the optimal bound (which is on the order of $m\lesssim k + (k/b)\log(n/k)$ \cite{ADR14}). We remark that for model-based RIP$_1$, where one wants to approximately preserve $\ell_1$-norms of $k$-block-sparse vectors, which is useful for $\ell_1/\ell_1$ recovery, \cite{IndykR13} have shown a much better sparsity bound of $O(\ceil{\log_b(n/k)})$ non-zeroes per column in their measurement matrix. However, they have also shown that any model-based RIP$_1$ matrix for block-sparse signals must satisfy the higher lower bound of $m\gtrsim k + (k/\log b)\log(n/k)$ (which is tight).

Previous work also considered $T = H S_{n,k}$, where $H$ is any bounded orthonormal system, i.e.\ $H\in\R^{n\times n}$ is orthogonal and $\max_{i,j} |H_{ij}| = O(1/\sqrt{n})$ (e.g.\ the Fourier matrix). Work of \cite{CT06,RV08,CGV13} shows $\Phi$ can then be a sampling matrix (one non-zero per row) with $m\lesssim \eps^{-2} k\log(n)(\log k)^3$. Since randomly flipping the signs of every column in an RIP matrix yields JL \cite{KW11}, this also gives a good implementation of an FJLT. Our theorem recovers a similar statement, but using the SJLT instead of a sampling matrix, where we show $m \lts k$ and $s \lts 1$ suffice for orthogonal $H$ satisfying the weaker requirement $\max_{i,j} |H_{ij}| = O(1/\sqrt{k})$.

\medskip

\paragraph{\textbf{Smooth manifolds:}} 

Suppose we are given several images of a human face, but with varying lighting and angle of rotation. Or perhaps the input is many sample handwritten images of letters. In these examples, although input data is high-dimensional ($n$ is the number of pixels), we imagine all possible inputs come from a set of low intrinsic dimension. That is, they lie on a $d$-dimensional manifold $\mathcal{M}\subset \R^n$ where $d\ll n$. The goal is then, given a large number of manifold examples, to learn the parameters of $\mathcal{M}$ to allow for nonlinear dimensionality reduction (reducing just to the few parameters of interest). This idea, and the first successful algorithm (ISOMAP) to learn a manifold from sampled points is due to \cite{TSL00}. Going back to the example of human faces, \cite{TSL00} shows that different images of a human face can be well-represented by a 3-dimensional manifold, where the parameters are brightness, left-right angle of rotation, and up-down angle of rotation. Since \cite{TSL00}, several 
more algorithms have been developed to handle more general classes of manifolds than ISOMAP \cite{RS00,DonohoG03}.

Baraniuk and Wakin \cite{BW09} proposed using dimensionality-reducing maps to first map $\mathcal{M}$ to $\Phi \mathcal{M}$ and then learn the parameters of interest in the reduced space (for improved speed). Sharper analyses were later given in \cite{Clarkson08,EfW13,Dirksen14}. Of interest are both that (1) any $C^1$ curve in $\mathcal{M}$ should have length approximately preserved in $\Phi\mathcal{M}$, and (2) $\Phi$ should be a {\em manifold embedding}, in the sense that all $C^1$ curves $\gamma'\in\Phi\mathcal{M}$ should satisfy that the preimage of $\gamma'$ (in $\mathcal{M}$) is a $C^1$ curve in $\mathcal{M}$. Then by (1) and (2), {\em geodesic} distances are preserved by $\Phi$.

To be concrete, let $\mathcal{M}\subset\R^n$ be a $d$-dimensional manifold obtained as the image $\mathcal{M} = F(B_{\ell_2^d})$, for smooth $F:B_{\ell_2^d}\rightarrow \R^n$ ($B_X$ is the unit ball of $X$). We assume $\|F(x) - F(y)\|_2 \simeq \|x - y\|_2$ (where $A\simeq B$ denotes that both $A\lesssim B$ and $A\gtrsim B$), and that the map sending $x\in\mathcal{M}$ to the tangent plane at $x$, $E_x$, is Lipschitz from $d_{\mathcal{M}}$ to $\rho_{\mathrm{Fin}}$. Here $\rho_{\mathcal{M}}$ is geodesic distance on $\mathcal{M}$, and $\rho_{\mathrm{Fin}}(E_x, E_y) = \|P_{E_x} - P_{E_y}\|_{\ell_2^n\rightarrow\ell_2^n}$ is the {\em Finsler distance}, where $P_E$ is the orthogonal projection onto $E$.

We want $\Phi$ satisfying $(1-\eps)|\gamma| \le |\Phi(\gamma)| \le (1+\eps)|\gamma|$ for all $C^1$ curves $\gamma\subset\mathcal{M}$. Here $|\cdot|$ is curve length. To obtain this, it suffices that $\Phi$ satisfy \Equation{jl-condition} for $T = \bigcup_{x\in \mathcal{M}} E_x \cap S^{n-1}$ \cite{Dirksen14}, an infinite union of subspaces. Using this observation, \cite{Dirksen14} showed this property is satisfied for $s=m \lesssim d/\eps^2$ with a {\em dense} matrix of subgaussian entries. For $F$ as given above, preservation of geodesic distances is also satisfied for this $m$.

Our main theorem implies that to preserve curve lengths one can set $m \lts d$ for $s \lts 1 + (\alpha d)^2$, where $\alpha$ is the largest incoherence of any tangent space $E_x$ for $x\in\mathcal{M}$. That is, non-trivial sparsity with $m\lts d$ is possible for any $\alpha \ll 1/\sqrt{d}$. Furthermore, we show that this is {\em optimal} by constructing a manifold with maximum incoherence of a tangent space approximately $1/\sqrt{d}$ such that {\em any} linear map $\Phi$ preserving curve lengths with $m\gts d$ must have $s\gts d$ (see \Remark{bad-manifold}). We also show that $\Phi$ is a manifold embedding with large probability if the weaker condition $m\gts d, s \gts 1$ is satisfied. Combining these observations, we see that for the SJLT to preserve geodesic distances it suffices to set $m \lts d$ and $s \lts 1 + (\alpha d)^2$.

\vspace{.1in}

\begin{figure}
\begin{center}
{\scriptsize 
\begin{tabular}{|c|c|c|c|c|c|}
\hline
\textbf{set $T$ to preserve} & \textbf{our $m$} & \textbf{our $s$} & \textbf{previous $m$} & \textbf{previous $s$} & \textbf{ref}\\
\hline
$|T|<\infty$ & $\log |T|$ & $\log |T|$ & $\log |T|$ & $\log |T|$ & \cite{JL84}\\
\hline
$|T|<\infty,\forall x\in T \|x\|_\infty \le \alpha$ & $\log |T|$ & $\ceil{\alpha\log|T|}^2$ & $\log |T|$ & $\ceil{\alpha\log|T|}^2$ & \cite{Matousek08}\\
\hline
$E,\dim(E)\le d$ & $d$ & $1$ & $d$ & $1$ & \cite{NN13b} \\
\hline
$S_{n,k}$ & $k\log(n/k)$ & $k\log(n/k)$ & $k\log(n/k)$ & $k\log(n/k)$ & \cite{CT05}\\
\hline
$HS_{n,k}$ & $k\log(n/k)$ & $1$ & $k\log(n/k)$ & $1$ & \cite{RV08}\\
\hline
tangent cone for Lasso & $\max_j\frac{\|A_j\|_2^2}{\si_{min,k}^2}k$ & $\max_{i,j}\frac{A_{ij}^2}{\si_{min,k}^2}k$ & same as here & $s=m$ & \cite{PiW14}\\
\hline
$|\Theta|<\infty$ & $d + \log|\Theta|$ & $\log|\Theta|$ & $d \cdot (\log|\Theta|)^6$& $(\log|\Theta|)^3$ & \cite{NN13b}\\
$\forall E\in\Theta,\dim(E)\le d$ &&&&&\\
\hline
$|\Theta|<\infty$ & $d + \log|\Theta|$ & $\ceil{\alpha \log|\Theta|}^2$ & --- & --- & --- \\
$\forall E\in\Theta,\dim(E)\le d$ &&&&&\\
$\max_{\substack{1\le j\le n\\ E\in\Theta}} \|P_E e_j\|_2 \le \alpha$&&&&&\\
\hline
$|\Theta|$ infinite & see appendix & see appendix & similar to& $m$ & \cite{Dirksen14}\\
$\forall E\in\Theta,\dim(E)\le d$ & & (non-trivial) & this work & &\\
\hline
$\mathcal{M}$ a smooth manifold & $d$& $1+(\alpha d)^2$&$d$ & $d$ & \cite{Dirksen14} \\
\hline
\end{tabular}
}
\caption{The $m,s$ that suffice when using the SJLT with various $T$ as a consequence of our main theorem, compared with the best known bounds from previous work. All bounds shown hide $\mathrm{poly}(\eps^{-1}\log n)$ factors. One row is blank in previous work due to no non-trivial results being previously known. For the case of the Lasso, we assume $k$ is the sparsity of the true optimal solution.}\FigureName{fig}
\end{center}
\end{figure}

As seen above, not only does our answer to \Question{main} qualitatively explain all known results, but it gives new results not known before with implications in numerical linear algebra, compressed sensing (model-based, and with incoherent dictionaries), constrained least squares, and manifold learning. We also believe it is possible for future work to sharpen our analyses to give asymptotically correct parameters for all the applications; see the discussion in \Section{discussion}. 

We now end the introduction with an outline for the remainder. \Section{prelim} defines the notation for the rest of the paper. \Section{overview} provides an overview of the proof of our main theorem, \Theorem{main}. \Section{linear} is a warmup that applies a subset of the proof ideas for \Theorem{main} to the special case where $T = E\cap S^{n-1}$, $E$ a linear subspace of dimension $d$. In fact the proof of our main theorem reduces the case of general $T$ to several linear subspaces of varying dimensions, and thus this case serves as a useful warmup. \Section{typetwo} applies a different subset of our proof ideas to the special case where the norm $\tnorm{\cdot}_T$ defined by $\tnorm{y}_T = \sup_{x\in T}|\inprod{x,y}|$ has a small type-$2$ constant, which is relevant for analyzing the FFT-based approaches to RIP matrices. \Section{cls} shows how similar ideas can be applied to constrained least squares problems, such as Lasso. \Section{general} states and proves our most general theorem for arbitrary $T$.
 \Section{applications} shows how to apply \Theorem{main} to obtain good bounds for various $T$, albeit losing a $\log^c(n/\eps)$ factor in $m$ and a $\log^c(n/\eps)/\eps$ factor in $s$ as mentioned above. Finally, \Section{discussion} discusses avenues for future work. 

In the appendix, in Appendix~\ref{sec:probTools} for the benefit of the reader we review many probabilistic tools that are used throughout this work . Appendix~\ref{sec:convex-tools} reviews some introductory material related to convex analysis, which is helpful for understanding \Section{cls} on constrained least squares problems. In Appendix~\ref{sec:fjlt-cls} we also give a direct analysis for using the FJLT for sketching constrained least squares programs, providing quantitative benefits over some analyses in \cite{PiW14}.

\section{Preliminaries}\SectionName{prelim}
 
We fix some notation that we will be used throughout this paper. For a positive integer $t$, we set $[t] = \{1,\ldots,t\}$. For $a,b \in \R$, $a\lesssim b$ denotes $a \le C b$ for some universal constant $C > 0$, and $a\simeq b$ signifies that both $a\lesssim b$ and $b\lesssim a$ hold. For any $x\in \R^n$ and $1\leq p\leq\infty$, we let $\|x\|_p$ denote its $\ell_p$-norm. To any set $S\subset\R^n$ we associate a semi-norm $\tnorm{\cdot}_S$ defined by $\tnorm{z}_S = \sup_{x\in S}|\inprod{z,x}|$. Note that $\tnorm{z}_S = \tnorm{z}_{\mathrm{conv}(S)}$, where $\mathrm{conv}(S)$ is the closed convex hull of $S$, i.e., the closure of the set of all convex combinations of elements in $S$. We use $(e_i)_{1\leq i\leq n}$ and $(e_{ij})_{1\leq i\leq m,1\leq j\leq n}$ to denote the standard basis in $\R^n$ and $\R^{m\ti n}$, respectively.\par
If $\eta=(\eta_i)_{i\geq 1}$ is a sequence of random variables, we let $(\Omega_{\eta},\bP_{\eta})$ denote the probability space on which it is defined. We use $\E_{\eta}$ and $L_{\eta}^p$ to denote the associated expected value and $L^p$-space, respectively. If $\zeta$ is another sequence of random variables, then $\|\cdot\|_{L^p_{\eta},L^q_{\zeta}}$ means that we first take the $L^p_{\eta}$-norm and afterwards the $L^q_{\zeta}$-norm. We reserve the symbol $g$ to denote a sequence $g=(g_i)_{i\geq 1}$ of i.i.d.\ standard Gaussian random variables; unless stated otherwise, the covariance matrix is the identity.\par
If $A$ is an $m\ti n$ matrix, then we use $\|A\|$ or $\|A\|_{\ell_2^n\to\ell_2^m}$ to denote its operator norm. Moreover we let $\mathrm{Tr}$ be the trace operator and use $\|A\|_F=(\mathrm{Tr}(A^*A))^{1/2}$ to denote the Frobenius norm. 

In the remainder, we reserve the letter $\rho$ to denote (semi-)metrics. If $\rho$ corresponds to a (semi-)norm $\|\cdot\|_X$, then we let $\rho_X(x,y)=\|x-y\|_X$ denote the associated (semi-)metric. Also, we use $d_{\rho}(S)=\sup_{x,y \in S}\rho(x,y)$ to denote the diameter of a set $S$ with respect to $\rho$ and write $d_X$ instead of $d_{\rho_X}$ for brevity. So, for example, $\rho_{\ell_2^n}$ is the Euclidean metric and $d_{\ell_2^n}(S)$ the $\ell_2$-diameter of $S$. From here on, $T$ is always a fixed subset of $S^{n-1} = \{x\in\R^n : \|x\| = 1\}$, and $\eps\in (0,1/2)$ the parameter appearing in \Eqsub{rjl-condition}.

We make use of chaining results in the remainder, so we define some relevant quantities. For a bounded set $S\subset \R^n$, $g(S) = \E_g \sup_{x\in S} \inprod{g, x}$ is the {\em Gaussian mean width} of $S$, where $g\in\R^n$ is a Gaussian vector with identity covariance matrix. For a (semi-)metric $\rho$ on $\R^n$, Talagrand's $\ga_2$-functional is defined by
\begin{equation}
\ga_2(S,\rho) = \inf_{\{S_r\}_{r=0}^\infty} \sup_{x\in S} \sum_{r=0}^\infty 2^{r/2} \cdot \rho(x, S_r)
\end{equation}
where $\rho(x, S_r)$ is the distance from $x$ to $S_r$, and the infimum is taken over all collections $\{S_r\}_{r=0}^\infty$, $S_0 \subset S_1 \subset \ldots \subseteq S$, with $|S_0| = 1, |S_r| \le 2^{2^r}$. If $\rho$ corresponds to a (semi-)norm $\|\cdot\|_X$, then we usually write $\gamma_2(S, \|\cdot\|_X)$ instead of $\ga_2(S,\rho_X)$. It is known that for any bounded $S\subset \R^n$, $g(S)$ and $\gamma_2(S, \|\cdot\|_2)$ differ multiplicatively by at most a universal constant \cite{Fernique75,Talagrand05}. Whenever $\gamma_2$ appears without a specified norm, we imply use of $\ell_2$ or $\ell_2\rightarrow\ell_2$ operator norm. We frequently use the entropy integral estimate (see \cite{Talagrand05})
\begin{equation}
\gamma_2(S, \rho) \lesssim \int_0^{\infty} \log^{1/2} \mathcal{N}(S,\rho, u) du .\EquationName{Dudley}
\end{equation}
Here $\mathcal{N}(S,\rho, u)$ denotes the minimum number of $\rho$-balls of radius $u$ centered at points in $S$ required to cover $S$. If $\rho$ corresponds to a (semi-)norm $\|\cdot\|_X$, then we write $\mathcal{N}(S,\|\cdot\|_X, u)$ instead of $\mathcal{N}(S,\rho_X, u)$.   

Let us now introduce the sparse Johnson-Lindenstrauss transform in detail. Let $\si_{ij}:\Om_{\si}\rightarrow \{-1,1\}$ be independent Rademacher random variables, i.e., $\bP(\si_{ij}=1)=\bP(\si_{ij}=-1)=1/2$. We consider random variables $\del_{ij}:\Om_{\del}\rightarrow\{0,1\}$ with the following properties:
\begin{itemize}
 \item For fixed $j$ the $\del_{ij}$ are negatively correlated, i.e. 
\begin{equation}
\forall 1\leq i_1<i_2<\ldots<i_k\leq m,\  \E\Big(\prod_{t=1}^k\del_{i_t,j}\Big)\leq \prod_{t=1}^k\E \del_{i_t,j} = \Big(\frac sm\Big)^k;\EquationName{neg-cor}
\end{equation}
 \item For any fixed $j$ there are exactly $s$ nonzero $\del_{ij}$, i.e., $\sum_{i=1}^m \del_{ij}=s$;
  \item The vectors $(\delta_{ij})_{i=1}^m$ are independent across different $1\le j\le n$.
\end{itemize}
We emphasize that the $\si_{ij}$ and $\del_{ij}$ are independent, as they are defined on different probability spaces. The \emph{sparse Johnson-Lindenstrauss transform $\Phi$ with column sparsity $s$}, or \emph{SJLT} for short, is defined by 
\begin{equation}
\Phi_{ij}=\frac{1}{\sqrt{s}}\si_{ij}\del_{ij}.\EquationName{1.1}
\end{equation}
The work \cite{KN14} gives two implementations of such a $\Phi$ satisfying the above conditions. In one example, the columns are independent, and in each column we choose exactly $s$ locations uniformly at random, without replacement, to specify the $\delta_{ij}$. The other example is essentially the CountSketch of \cite{CCF04}. In this implementation, the rows of $\Phi$ are partitioned arbitrarily into $s$ groups of size $m/s$ each. Then each column of $\Phi$ is chosen independently, where in each column and for each block we pick a random row in that block to be non-zero for that column (the rest are zero).

We say that $\Phi$ is an \emph{$\eps$-restricted isometry} on $T$ if \Equation{jl-condition} holds. We define
$$\eps_{\Phi,T} = \sup_{x\in T} |\|\Phi x\|_2^2 - 1|$$
as the \emph{restricted isometry constant} of $\Phi$ on $T$. In the following we will be interested in estimating $\eps_{\Phi,T}$. For this purpose, we use the following $L^p$-bound for the supremum of a second order Rademacher chaos from \cite{KMR14} (see \cite[Theorem 6.5]{Dir13} for the refinement stated here).
\begin{theorem}
Let $\cA\subset \R^{m\ti n}$ and let $\si_1,\ldots,\si_n$ be independent Rademacher random variables. For any $1\leq p<\infty$
\begin{align}
\label{eqn:supChaosImproved}
\Big(\E_{\si}\sup_{A\in\cA}\Big|\|A\si\|_2^2 - \E\|A\si\|_2^2\Big|^p\Big)^{1/p} & \lesssim \ga_{2}^2(\cA,\|\cdot\|_{\ell_2\to\ell_2}) + \DelO_F(\cA)\ga_{2}(\cA,\|\cdot\|_{\ell_2\to\ell_2}) \nonumber \\
& \qquad + \sqrt{p}\DelO_{F}(\cA)\DelO_{\ell_2\to\ell_2}(\cA) + p\DelO_{\ell_2\to\ell_2}^2(\cA).
\end{align}
\end{theorem}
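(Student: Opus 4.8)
The plan is to follow the classical scheme for suprema of second-order chaoses: decouple, chain one copy of the randomness conditionally on the other, and then control the resulting (now random) index-set quantities. For fixed $A$ we have $\|A\si\|_2^2-\E_\si\|A\si\|_2^2=\sum_{i\neq j}(A^*A)_{ij}\si_i\si_j$, a homogeneous Rademacher chaos of order two. First I would introduce an independent Rademacher copy $\si'$: by Jensen's inequality and the symmetrization/decoupling for Rademacher chaos (using, e.g., the identity $\|A\si\|_2^2-\|A\si'\|_2^2=\langle A(\si-\si'),A(\si+\si')\rangle$, whose two factors have coordinatewise disjoint supports), one reduces, up to universal constants, to an $L^p$-bound for $\sup_{A\in\cA}|\langle A\si,A\si'\rangle|$ with $\si,\si'$ independent Rademacher vectors.

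Next, fix $\si'$ and chain in $\si$. Since $\langle A\si,A\si'\rangle=\langle\si,(A^*A)\si'\rangle$ is linear in $\si$, the process $A\mapsto\langle A\si,A\si'\rangle$ is subgaussian (by Khintchine) with respect to the random semi-metric $\rho_{\si'}(A,B)=\|(A^*A-B^*B)\si'\|_2$, and splitting $A^*A-B^*B=A^*(A-B)+(A-B)^*B$ gives
$$\rho_{\si'}(A,B)\;\le\;\DelO_{\ell_2\to\ell_2}(\cA)\,\|(A-B)\si'\|_2\;+\;\Big(\sup_{C\in\cA}\|C\si'\|_2\Big)\,\|A-B\|_{\ell_2\to\ell_2}.$$
The $L^p$-version of Talagrand's generic chaining bound applied conditionally on $\si'$, together with subadditivity of $\ga_2$ and of the diameter under this splitting (and $\sup_{A,B}\|(A-B)\si'\|_2\le2\sup_C\|C\si'\|_2$), then gives, for each $\si'$,
\begin{align*}
\big\|\sup_{A}|\langle A\si,A\si'\rangle|\big\|_{L^p_\si}
&\lesssim\big\|\langle A_0\si,A_0\si'\rangle\big\|_{L^p_\si}+\DelO_{\ell_2\to\ell_2}(\cA)\,\ga_2\big(\cA,\|(\cdot)\si'\|_2\big)\\
&\qquad+\Big(\sup_{C}\|C\si'\|_2\Big)\Big(\ga_2(\cA,\|\cdot\|_{\ell_2\to\ell_2})+\sqrt p\,\DelO_{\ell_2\to\ell_2}(\cA)\Big),
\end{align*}
where $A_0$ is the fixed root of the chaining and $\ga_2(\cA,\|(\cdot)\si'\|_2)$ denotes the $\ga_2$-functional for the semi-metric $(A,B)\mapsto\|(A-B)\si'\|_2$.

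It then remains to take $\|\cdot\|_{L^p_{\si'}}$ of the right-hand side. The root term is elementary: Khintchine gives $\|\langle A_0\si,A_0\si'\rangle\|_{L^p_\si}\lesssim\sqrt p\,\DelO_{\ell_2\to\ell_2}(\cA)\,\|A_0\si'\|_2$, and a second Khintchine step gives $\big\|\,\|A_0\si'\|_2\,\big\|_{L^p_{\si'}}\lesssim\DelO_F(\cA)+\sqrt p\,\DelO_{\ell_2\to\ell_2}(\cA)$, so this term contributes at most $\sqrt p\,\DelO_F(\cA)\DelO_{\ell_2\to\ell_2}(\cA)+p\,\DelO_{\ell_2\to\ell_2}^2(\cA)$ up to a constant. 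For the two remaining terms one needs the Chevet-type estimate
$$\big\|\sup_{C\in\cA}\|C\si'\|_2\big\|_{L^p_{\si'}}\;\lesssim\;\DelO_F(\cA)+\ga_2(\cA,\|\cdot\|_{\ell_2\to\ell_2})+\sqrt p\,\DelO_{\ell_2\to\ell_2}(\cA)$$
together with its companion $\big\|\ga_2(\cA,\|(\cdot)\si'\|_2)\big\|_{L^p_{\si'}}\lesssim\DelO_F(\cA)+\ga_2(\cA,\|\cdot\|_{\ell_2\to\ell_2})+\sqrt p\,\DelO_{\ell_2\to\ell_2}(\cA)$. Granting these, substituting into the display above, expanding, and absorbing cross terms via $\sqrt p\,\DelO_{\ell_2\to\ell_2}(\cA)\ga_2(\cA,\|\cdot\|_{\ell_2\to\ell_2})\le\tfrac12\ga_2^2(\cA,\|\cdot\|_{\ell_2\to\ell_2})+\tfrac12 p\,\DelO_{\ell_2\to\ell_2}^2(\cA)$, $\DelO_{\ell_2\to\ell_2}(\cA)\ga_2(\cA,\|\cdot\|_{\ell_2\to\ell_2})\le\tfrac12\big(\DelO_{\ell_2\to\ell_2}^2(\cA)+\ga_2^2(\cA,\|\cdot\|_{\ell_2\to\ell_2})\big)$, and $\DelO_F(\cA)\DelO_{\ell_2\to\ell_2}(\cA)\le\sqrt p\,\DelO_F(\cA)\DelO_{\ell_2\to\ell_2}(\cA)$, collapses everything to
$$\ga_2^2(\cA,\|\cdot\|_{\ell_2\to\ell_2})+\DelO_F(\cA)\ga_2(\cA,\|\cdot\|_{\ell_2\to\ell_2})+\sqrt p\,\DelO_F(\cA)\DelO_{\ell_2\to\ell_2}(\cA)+p\,\DelO_{\ell_2\to\ell_2}^2(\cA),$$
the asserted bound. (Equivalently, the argument can be arranged to first produce a Bernstein-type tail $\Pr\big(\sup_A|\|A\si\|_2^2-\E\|A\si\|_2^2|\gtrsim E+t\big)\le2\exp(-c\min\{t^2/V^2,t/U\})$ with $E=\ga_2^2+\DelO_F\ga_2$, $V=\DelO_{\ell_2\to\ell_2}(\DelO_F+\ga_2)$, $U=\DelO_{\ell_2\to\ell_2}^2$, and then integrate; this is the route of \cite{KMR14,Dir13}.)

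The hard part is precisely this last step, i.e.\ the Chevet-type inequality for $\sup_{C\in\cA}\|C\si'\|_2$ and its $\ga_2$-companion. The naive bound $\|(A-B)\si'\|_2\le\|A-B\|_F$ is far too wasteful: it would make the multi-scale complexity enter through the Frobenius functional $\ga_2(\cA,\|\cdot\|_F)$, which does not appear in the target. Instead one must run a careful chaining that exploits the fact that for a fixed matrix $M$ the typical size of $\|M\si'\|_2$ is $\|M\|_F$ while its $\si'$-fluctuations are only of order $\|M\|_{\ell_2\to\ell_2}$ (Rademacher concentration of convex Lipschitz functions), arranging the bookkeeping so that the Frobenius size contributes only through the single scalar $\DelO_F(\cA)$ while every scale-dependent increment is measured in operator norm. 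This interplay between the Frobenius and operator norms is the genuine content of the theorem.
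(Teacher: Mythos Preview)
The paper does not actually prove this theorem: it is stated as a tool and attributed to \cite{KMR14}, with the $L^p$-refinement to \cite[Theorem~6.5]{Dir13}. There is therefore no ``paper's own proof'' to compare against. Your sketch follows precisely the route of those references --- decouple to a bilinear chaos $\langle A\si,A\si'\rangle$, chain one copy conditionally on the other via the subgaussian generic chaining bound, and then control the random $\ga_2$-functional and diameter in $\si'$ by a Chevet-type inequality that trades Frobenius size for a single $\DelO_F(\cA)$ and pushes all scale-dependent increments into the operator norm. You have correctly identified the crux: the estimate $\big\|\ga_2(\cA,\|(\cdot)\si'\|_2)\big\|_{L^p_{\si'}}\lesssim\DelO_F(\cA)+\ga_2(\cA,\|\cdot\|_{\ell_2\to\ell_2})+\sqrt p\,\DelO_{\ell_2\to\ell_2}(\cA)$ is the non-trivial input, and it does require the two-norm chaining (Frobenius for the mean, operator for the fluctuation) rather than the crude $\|(A-B)\si'\|_2\le\|A-B\|_F$. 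Your write-up leaves that step as a black box, which is fair for a sketch, but if you were to fill it in you would essentially be reproducing the argument of \cite{Dir13}; the paper under review simply quotes the result.
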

For $u \in \R^n$ and $v\in \R^m$, let $u\ot v:\R^m\rightarrow \R^n$ be the operator $(u\ot v)w=\langle v,w\rangle u$. Then, for any $x\in \R^n$ we can write $\Phi x=A_{\del,x}\si$, where
\begin{equation}
\label{eqn:AdelxDef}
A_{\del,x} := \frac{1}{\sqrt{s}}\sum_{i=1}^m\sum_{j=1}^n \del_{ij}x_j e_i\ot e_{ij} 
=  \frac 1{\sqrt{s}}\begin{bmatrix} 
- x^{(\delta_{1,\cdot})} - & 0 & \cdots & 0\\
0 & - x^{(\delta_{2,\cdot})} - & \cdots & 0\\
\vdots &\vdots &  &\vdots\\
0&0&\cdots& - x^{(\delta_{m,\cdot})} -
\end{bmatrix} .
\end{equation}
for $x^{(\delta_{i,\cdot})}_j = \delta_{ij}x_j$. Note that $\E\|\Phi x\|_2^2 = \|x\|_2^2$ for all $x\in \R^n$ and therefore
$$\sup_{x\in T}|\|\Phi x\|_2^2 - \|x\|_2^2| = \sup_{x\in T} |\|A_{\del,x}\si\|_2^2 - \E\|A_{\del,x}\si\|_2^2|.$$
Associated with $\del=(\del_{ij})$ we define a random norm on $\R^n$ by
\begin{equation}
\|x\|_{\del} = \frac{1}{\sqrt{s}}\max_{1\leq i\leq m}\Big(\sum_{j=1}^n \del_{ij}x_j^2\Big)^{1/2}. \EquationName{defDelNorm}
\end{equation}
With this definition, we have for any $x,y\in T$,
\begin{equation}
\label{eqn:AdelxNorm}
\|A_{\del,x} - A_{\del,y}\| = \|x-y\|_{\del}, \qquad \|A_{\del,x} - A_{\del,y}\|_F = \|x-y\|_2.
\end{equation}
Therefore, (\ref{eqn:supChaosImproved}) implies that
\begin{align}
& \Big(\E_{\si}\sup_{x \in T} \Big|\|\Phi x\|_2^2 - \|x\|_2^2\Big|^p\Big)^{1/p} \lesssim \ga_2^{2}(T,\|\cdot\|_{\del}) + \ga_2(T,\|\cdot\|_{\del}) + \sqrt{p}d_{\del}(T) + pd_{\del}^2(T).\EquationName{RIPExpect}
\end{align}
Taking $L_p(\Om_{\del})$-norms on both sides yields
\begin{align}
\nonumber \Big(\E_{\del,\si}\sup_{x \in T} \Big|\|\Phi x\|_2^2 - \|x\|_2^2\Big|^p\Big)^{1/p} & \lesssim (\E_{\del}\ga_2^{2p}(T,\|\cdot\|_{\del}))^{1/p} + (\E_{\del}\ga_2^{p}(T,\|\cdot\|_{\del}))^{1/p}  \\
& \qquad \qquad + \sqrt{p}(\E d_{\del}^p(T))^{1/p} + p(\E d_{\del}^{2p}(T))^{1/p}. \EquationName{RIPLpEst}
\end{align}
Thus, to find a bound for the expected value of the restricted isometry constant of $\Phi$ on $T$, it suffices to estimate $\E_{\del}\ga_2^2(T,\|\cdot\|_{\del})$ and $\E_{\del}d_{\del}^2(T)$. If we can find good bounds on $(\E_{\del}\ga_2^{p}(T,\|\cdot\|_{\del}))^{1/p}$ and $(\E d_{\del}^p(T))^{1/p}$ for all $p\geq 1$, then we in addition obtain a high probability bound for the restricted isometry constant.\par
Unless explicitly stated otherwise, from here on $\Phi$ always denotes the SJLT with $s$ non-zeroes per column.

\section{Overview of proof of main theorem}\SectionName{overview}
Here we give an overview of the proof of \Theorem{main}. To illustrate the ideas going into our proof, it is simplest to first consider the case where $T$ is the set of all unit vectors in a $d$-dimensional linear subspace $E\subset\R^n$.  By \Equation{RIPLpEst} for $p=1$ we have to bound for example $\E_\del \gamma_2(T,\|\cdot\|_\del)$. Standard estimates give, up to $\log d$,
\begin{equation}
\gamma_2(T, \|\cdot\|_\del) \le \gamma_2(B_E, \|\cdot\|_\del) \ll \sup_{t>0} t [\log\mathcal{N}(B_E, \|\cdot\|_\del, t)]^{1/2} . \EquationName{overview}
\end{equation}
for $B_E$ the unit ball of $E$. Let $U\in\R^{n\times d}$ have columns forming an orthonormal basis for $E$. Then dual Sudakov minoration \cite[Proposition 4.2]{BLM89}, \cite{PTJ86} states that
\begin{equation}
\sup_{t>0} t [\log\mathcal{N}(B_E, \|\cdot\|_\del, t)]^{1/2} \le \E_g \|Ug\|_\del \EquationName{sudakov}
\end{equation}
for a Gaussian vector $g$. From this point onward one can arrive at a result using the non-commutative Khintchine inequality \cite{LP86,LPP91} and other standard Gaussian concentration arguments (see appendix).

Unfortunately, \Equation{sudakov} is very specific to unit balls of linear subspaces and has no analog for a general set $T$. At this point we use a statement about the duality of entropy numbers \cite[Proposition 4]{BPST89}. This is the principle that for two symmetric convex bodies $K$ and $D$, $\mathcal{N}(K, D)$ and $\mathcal{N}(D^\circ, a K^\circ)$ are roughly comparable for some constant $a$ ($\mathcal{N}(K,D)$ is the number of translates of $D$ needed to cover $K$). Although it has been an open conjecture for over 40 years as to whether this holds in the general case \cite[p.\ 38]{P72}, the work \cite[Proposition 4]{BPST89} shows that these quantities are comparable up to logarithmic factors as well as a factor depending on the type-$2$ constant of the norm defined by $D$ (i.e.\ the norm whose unit vectors are those on the boundary of $D$). In our case, this lets us relate $\log \mathcal{N}(\tilde{T}, \|\cdot\|_\del, t)$ with $\log\mathcal{N}(\mathrm{conv}(\cup_{i=1}^m B_{J_i}), \tnorm{\cdot}_T, \sqrt{s}t/8)$, losing small factors. Here $\tilde{T}$ is the convex hull of $T\cup -T$, and $B_{J_i}$ is the unit ball of $\mathrm{span}\{e_j : \delta_{ij} = 1\}$. We next use Maurey's lemma, which is a tool for bounding covering numbers of the set of convex combinations of vectors in various spaces. This lets us relate $\log\mathcal{N}(\mathrm{conv}(\cup_{i=1}^m B_{J_i}), \tnorm{\cdot}_T, \epsilon)$ to quantities of the form $\log\mathcal{N}(\frac 1k\sum_{i\in A}B_{J_i}, \tnorm{\cdot}_T, \epsilon)$, where $A\subset [m]$ has size $k\lesssim 1/\epsilon^2$ (see \Lemma{maurey}). For a fixed $A$, we bucket $j\in [n]$ according to $\sum_{i\in A} \delta_{ij}$ and define $U_\alpha = \{j\in [n] : \sum_{i\in A} \delta_{ij} \simeq 2^\alpha\}$. Abusing notation, we also let $U_\alpha$ denote the coordinate subspace spanned by $j\in U_\alpha$. This leads to (see \Equation{5.4})
\begin{equation}
\log\mathcal{N}\Big(\frac 1k\sum_{i\in A}B_{J_i}, \tnorm{\cdot}_T, \epsilon\Big) \lesssim \sum_\alpha \log \mathcal{N}\Big(B_{U_\alpha}, \tnorm{\cdot}_T, \sqrt{\frac{k}{2^\alpha}} \frac{\epsilon}{\log m}\Big) \EquationName{finally-sudakov}
\end{equation}

Finally we are in a position to apply dual Sudakov minoration to the right hand side of \Equation{finally-sudakov}, after which point we apply various concentration arguments to yield our main theorem.

\section{The case of a linear subspace}\SectionName{linear}
Let $E\subset\R^n$ be a $d$-dimensional linear subspace, $T = E\cap S^{n-1}$, $B_E$ the unit $\ell_2$-ball of $E$. We use $P_E$ to denote the orthogonal projection onto $E$. The values $\|P_E e_j\|_2$, $j=1,\ldots,n$, are typically referred to as the {\em leverage scores} of $E$ in the numerical linear algebra literature. We denote the maximum leverage score by
$$\mu(E) = \max_{1\leq j\leq n}\|P_E e_j\|_2,$$
which is sometimes referred to as the {\em incoherence} $\mu(E)$ of $E$.

In our proof we make use of the well-known Gaussian concentration of Lipschitz functions \cite[Corollary 2.3]{Pi86}: if $g$ is an $n$-dimensional standard Gaussian vector and $\phi:\R^n\rightarrow \R$ is $L$-Lipschitz from $\ell_2^n$ to $\R$, then for all $p\ge 1$
\begin{equation}
\|\phi(g) - \E\phi(g)\|_{L^p_g} \lesssim L \sqrt{p}. \EquationName{GaussLip}
\end{equation}
\begin{theorem}
\label{thm:JLsubspace}
For any $p\geq \log m$ and any $0<\epsilon<1$,
\begin{equation}
(\E\ga_2^{2p}(T,\|\cdot\|_{\del}))^{1/p} \lesssim \epsilon^2 + \frac{(d+\log m)\log^2(d/\epsilon)}{m} + \frac{p\log^2(d/\epsilon)\log m}{s}\mu(E)^2  \EquationName{ga2Subs}
\end{equation}
and
\begin{equation}
(\E d^{2p}_{\del}(T))^{1/p} \lesssim \frac{d}{m} + \frac{p}{s}\mu(E)^2. \EquationName{diamSubs}
\end{equation}
As a consequence, if $\eta\leq 1/m$ and
\begin{align}
\nonumber m &\gtrsim ((d+\log m)\min\{\log^2(d/\eps),\log^2(m)\} + d\log(\eta^{-1}))/\eps^2\\
s &\gtrsim (\log(m)\log(\eta^{-1})\min\{\log^2(d/\eps),\log^2(m)\}+\log^2(\eta^{-1}))\mu(E)^2/\eps^2\EquationName{two-ten}
\end{align}
then \Equation{jl-condition} holds with probability at least $1-\eta$.
\end{theorem}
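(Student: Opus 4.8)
The plan is to control the two relevant quantities $\E_{\del}\ga_2^{2p}(T,\|\cdot\|_{\del})$ and $\E_{\del}d_{\del}^{2p}(T)$ for the subspace case, and then feed these estimates into \Equation{RIPLpEst} with $p\simeq\log(\eta^{-1})$ to obtain the high-probability statement via Markov. The heart of the matter is \Equation{ga2Subs}. To prove it, I would first pass from $\ga_2(T,\|\cdot\|_{\del})$ to the covering-number side using the entropy integral \Equation{Dudley} together with the Sudakov-type bound sketched in \Equation{overview}, losing only a $\log d$ factor: $\ga_2(T,\|\cdot\|_{\del})\lesssim (\log d)\sup_{t>0} t\,[\log\mathcal{N}(B_E,\|\cdot\|_{\del},t)]^{1/2}$. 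Then apply dual Sudakov minoration \Equation{sudakov}, $\sup_{t>0} t[\log\mathcal{N}(B_E,\|\cdot\|_{\del},t)]^{1/2}\le \E_g\|Ug\|_{\del}$, where $U\in\R^{n\times d}$ has orthonormal columns spanning $E$. So everything reduces to controlling the $L^p_{\del}$-norm of $\E_g\|Ug\|_{\del}$.

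Next I would expand $\E_g\|Ug\|_{\del}=\frac{1}{\sqrt s}\,\E_g\max_{i\le m}(\sum_j \del_{ij}(Ug)_j^2)^{1/2}$. For fixed $\del$, each row gives a Gaussian chaos $(\sum_j \del_{ij}(Ug)_j^2)^{1/2}$, whose $\E_g$ I would estimate by $\|U^*D_i U\|_F^{1/2}$ plus a concentration term governed by $\|U^*D_i U\|^{1/2}$ (here $D_i=\diag(\del_{i1},\dots,\del_{in})$), via Gaussian concentration of Lipschitz functions \Equation{GaussLip}, and then take the max over $i\le m$ at the cost of a $\sqrt{\log m}$ factor (this is why $p\ge\log m$ is assumed). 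Now $\E_{\del}\|U^*D_i U\|_F = \sum_j \frac sm \|P_E e_j\|_2^2 = \frac{sd}{m}$, giving the $d/m$ term, while the operator-norm term $\|U^*D_i U\|$ is where the incoherence enters: a diagonal restriction of the projection involving $s$ rows out of $m$, whose operator norm I would bound by a matrix Chernoff / noncommutative Khintchine argument (cf.\ the remark after \Equation{sudakov}) by roughly $\frac sm + (\text{deviation})$, with the deviation scaling like $\mu(E)^2$ times a factor polynomial in $p$ and $\log m$. Dividing by $s$ produces the $\frac{p\log^2(d/\eps)\log m}{s}\mu(E)^2$ term; the logarithmic factors in $d/\eps$ come from the entropy-integral truncation (one truncates the integral \Equation{Dudley} at scale $\eps$ and at the diameter, each endpoint contributing a $\log$), which also explains the additive $\eps^2$. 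The diameter bound \Equation{diamSubs} is the same computation without the chaining: $d_{\del}^2(T)=\max_i \frac1s\sum_j\del_{ij}(P_E\text{-coordinate})^2 \le \max_i \frac1s\|U^*D_iU\|\cdot(\text{rank})$, handled identically but more simply, yielding $\frac dm + \frac ps\mu(E)^2$.

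Finally, with \Equation{ga2Subs} and \Equation{diamSubs} in hand, plug into \Equation{RIPLpEst} at $p\simeq\log(\eta^{-1})$ (also at least $\log m$): the $\ga_2^2$ and $\ga_2$ terms are bounded by \Equation{ga2Subs}, and the $\sqrt p\,d_{\del}$ and $p\,d_{\del}^2$ terms by \Equation{diamSubs}. Imposing that each resulting term is $\lesssim\eps^2$ gives precisely the stated conditions on $m$ and $s$ in \Equation{two-ten} (the $d\log(\eta^{-1})$ and $\log^2(\eta^{-1})$ pieces coming from the $\sqrt p\,d_{\del}(T)$ and $p\,d_{\del}^2(T)$ contributions, the $(d+\log m)\min\{\cdots\}$ piece from the $\ga_2$ contribution); then Markov's inequality on $\E_{\del,\si}\eps_{\Phi,T}^p$ gives $\Pr(\eps_{\Phi,T}\ge\eps)\le\eta$. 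I expect the main obstacle to be the operator-norm estimate for the random diagonal compression $U^*D_i U$ uniformly over $i\le m$ with the correct dependence on $\mu(E)$, $\log m$, and $p$ — i.e.\ getting the concentration of $\|\sum_j \del_{ij} (P_E e_j)(P_E e_j)^*\|$ right rather than losing extra factors of $d$; this is exactly the place where incoherence must be exploited and where a naive union bound over rows would be too lossy, so a careful matrix-moment (noncommutative Khintchine / trace-moment) argument is needed.
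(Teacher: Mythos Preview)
Your plan matches the paper's proof: dual Sudakov for $\E_g\|Ug\|_\del$, Gaussian Lipschitz concentration to split into $\max_i\|U^{(i)}\|_F+\sqrt{\log m}\,\max_i\|U^{(i)}\|$ (with $U^{(i)}=D_iU$), a Dudley-integral split at $t^*\simeq(\epsilon/d)/\log(d/\epsilon)$ using the volumetric bound below $t^*$, and symmetrization plus scalar/noncommutative Khintchine with the quadratic-inequality trick for the $L^p_\del$-moments. Two notational slips to clean up: the quantity with mean $sd/m$ is $\|U^{(i)}\|_F^2=\Tr(U^*D_iU)$, not $\|U^*D_iU\|_F$; and for the diameter the paper uses Cauchy--Schwarz to get $d_\del^2(T)\le\frac1s\max_i\sum_j\delta_{ij}\|P_Ee_j\|_2^2$ directly---no rank factor (your ``operator norm $\times$ rank'' bound, taken literally, would cost an extra $d$ on the $\mu(E)^2$ term).
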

\begin{proof}
By dual Sudakov minoration (Lemma~\ref{lem:dualSudakov} in Appendix~\ref{sec:probTools}) 
\begin{equation*}
\log \mathcal{N}(B_E, \|\cdot\|_{\del}, t) \lesssim \frac{\E_g \|U g\|_{\del}}t, \qquad \mathrm{for \ all} \ t>0, \EquationName{2.1}
\end{equation*}
with $U\in\R^{n\times d}$ having columns $\{f_k\}_{k=1}^d$ forming an orthonormal basis for $E$ and $g$ a random Gaussian vector. Let $U^{(i)}$ be $U$ but where each row $j$ is multiplied by $\delta_{ij}$. Then by \Equation{defDelNorm} and taking $\ell = \log m$,
\begin{align}
\nonumber \E_g \|U g\|_{\del} &= \frac 1{\sqrt{s}} \E_g\max_{1\le i \le m} \Big[\sum_{j=1}^n \delta_{ij} \Big|\sum_{k=1}^d g_k \inprod{f_k, e_j} \Big|^2\Big]^{1/2}\\
\nonumber {}&= \frac 1{\sqrt{s}} \E_g \max_{1\le i \le m} \|U^{(i)} g\|_2\\
\nonumber {}&\le \frac 1{\sqrt{s}}\Big(\max_{1\le i\le m} \E_g \|U^{(i)} g\|_2 + \E_g \max_{1\le i\le m} \Big|\|U^{(i)} g\|_2 - \E_g \|U^{(i)} g\|_2\Big|\Big)\\
\nonumber {}&\le \frac 1{\sqrt{s}}\Big(\max_{1\le i\le m} \|U^{(i)}\|_F + \Big(\sum_{i=1}^m \E_g \Big|\|U^{(i)} g\|_2 - \E_g \|U^{(i)} g\|_2\Big|^\ell\Big)^{1/\ell}\Big)\\
{}&\lesssim \frac 1{\sqrt{s}}\Big(\max_{1\le i\le m} \|U^{(i)}\|_F + \sqrt{\ell} \cdot \max_{1\le i\le m} \|U^{(i)}\|_{\ell_2^d\rightarrow\ell_2^n}\Big) \EquationName{2.2}
\end{align}
where \Equation{2.2} used Gaussian concentration of Lipschitz functions (cf.\ \Equation{GaussLip}), noting that $g\mapsto \|U^{(i)} g\|_2$ is $\|U^{(i)}\|$-Lipschitz.

Since $\|\cdot\|_{\del} \le (1/\sqrt{s})\|\cdot\|_2$, we can use for small $t$ the bound (cf.\ \Equation{volComp})
\begin{align}
\nonumber \log \mathcal{N}(T, \|\cdot\|_{\del}, t) &\le \log \mathcal{N}(B_E, \|\cdot\|_2, t\sqrt{s})\\
&< d \cdot \log\Big(1 + \frac{2}{t\sqrt{s}}\Big) \EquationName{2.3}
\end{align}

Using \Equation{2.3} for small $t$ and noting $\|U^{(i)}\|_F = (\sum_{j=1}^n \delta_{ij}\|P_E e_j\|_2^2)^{1/2}$ for $P_E$ the orthogonal projection onto $E$,  \Equation{Dudley} yields for $t^* = (\epsilon/d)/\log(d/\epsilon)$
\begin{align}
& \gamma_2(T,\|\cdot\|_{\del}) \nonumber \\
& \qquad \lesssim \int_0^{t^*} \sqrt{d}\cdot \Big[\log\Big(2 + \frac{1}{t\sqrt{s}}\Big)\Big]^{1/2} dt + \int_{t^*}^{1/\sqrt{s}} \frac{\E_g \|Ug\|_\del}t dt \nonumber \\
& \qquad \lesssim \sqrt{d}t_*\Big[\log\Big(\frac{1}{t_*\sqrt{s}}\Big)\Big]^{1/2} + \E_g \|Ug\|_\del \log\Big(\frac{1}{t_*\sqrt{s}}\Big) \nonumber \\
{}& \qquad \lesssim \epsilon +  \frac{\log(d/\epsilon)}{\sqrt{s}}\cdot \Big[\max_{1\le i\le m} \Big[\sum_{j=1}^n \delta_{ij} \|P_E e_j\|_2^2 \Big]^{1/2} + \sqrt{\log m}\max_{1\le i\le m} \|U^{(i)}\|_{\ell_2^d\rightarrow\ell_2^n}\Big]
\EquationName{splittingArg}
\end{align}
As a consequence,
\begin{align}
(\E_{\del}\gamma_2^{2p}(T,\|\cdot\|_\delta))^{1/p} & \lesssim \epsilon^2 + \frac{\log^2(d/\epsilon)}{s} \Big[\Big(\E_{\del}\max_{1\le i\le m} \Big[\sum_{j=1}^n \delta_{ij} \|P_E e_j\|_2^2 \Big]^{p}\Big)^{1/p} \nonumber \\
& \qquad \qquad \qquad \qquad + \log(m)\Big(\E_{\del}\max_{1\le i\le m} \|U^{(i)}\|_{\ell_2^d\rightarrow\ell_2^n}^{2p}\Big)^{1/p}\Big]\EquationName{ga2SubsIntermed}
\end{align}
We estimate the first non-trivial term on the right hand side. Since $p\geq \log m$, 
\begin{align*}
\Big(\E_{\del}\max_{1\le i\le m} \Big[\sum_{j=1}^n \delta_{ij} \|P_E e_j\|_2^2 \Big]^{p}\Big)^{1/p} & \le \Big(\sum_{i=1}^m\E_\delta\Big|\sum_{j=1}^n \delta_{ij} \|P_E e_j\|_2^2\Big|^p\Big)^{1/p} \\
& \leq e\cdot\max_{1\leq i\leq m} \Big(\E_{\del}\Big[\sum_{j=1}^n \delta_{ij} \|P_E e_j\|_2^2 \Big]^{p}\Big)^{1/p}
\end{align*}
Note that for fixed $i$, the $(\delta_{ij})_{1\le j\le n}$ are i.i.d.\ variables of mean $s/m$ and therefore
\begin{equation*}
\E\sum_{j=1}^n \del_{ij}\|P_E e_j\|_2^2 = \frac{s}{m} \sum_{j=1}^n \langle P_E e_j,e_j\rangle = \frac{s}{m}\Tr(P_E) = \frac{sd}{m}. 
\end{equation*}
Let $(r_j)_{1\le j\le n}$ be a vector of independent Rademachers. By symmetrization (\Equation{symmetrization}) and Khintchine's inequality (\Equation{KI}), 
\begin{align*}
\Big\|\sum_{j=1}^n \delta_{ij} \|P_E e_j\|_2^2\Big\|_{L^p_{\del}} &\le \frac{ds}m + \Big\|\sum_{j=1}^n (\delta_{ij}-\E\delta_{ij}) \|P_E e_j\|_2^2\Big\|_{L^p_{\del}} \\
{}&\le \frac{ds}m + 2\Big\|\sum_{j=1}^n r_j\delta_{ij}\|P_E e_j\|_2^2\Big\|_{L^p_{\del,r}}\\
{}&\lesssim \frac{ds}m + \sqrt{p}\Big\|\Big(\sum_{j=1}^n\delta_{ij}\|P_E e_j\|_2^4\Big)^{1/2}\Big\|_{L^p_{\del}}\\
{}&\le \frac{ds}m + \sqrt{p}\cdot \max_{1\le j\le n}\|P_E e_j\|_2\cdot \Big\|\sum_{j=1}^n\delta_{ij}\|P_E e_j\|_2^2\Big\|_{L^p_{\del}}^{1/2}
\end{align*}
Solving this quadratic inequality yields
\begin{equation}
\Big\|\sum_{j=1}^n \delta_{ij} \|P_E e_j\|_2^2\Big\|_{L^p_{\del}} \lesssim \frac{ds}m + p\cdot \max_{1\le j\le n}\|P_E e_j\|_2^2
\end{equation}
We conclude that
\begin{equation}
\Big(\E_{\del}\max_{1\le i\le m} \Big[\sum_{j=1}^n \delta_{ij} \|P_E e_j\|_2^2 \Big]^{p}\Big)^{1/p} \lesssim \frac{ds}{m} + p\max_j \|P_E e_j\|_2^2\EquationName{sqrt-trick}
\end{equation}
We now estimate the last term on the right hand side of \Equation{ga2SubsIntermed}. As $p\geq \log m$,
\begin{align*}
\Big(\E_\delta \max_{1\le i\le m} \|U^{(i)}\|_{\ell_2^d\rightarrow\ell_2^n}^{2p}\Big)^{1/p} 
{}&\lesssim \max_{1\leq i\leq m}\Big(\E \|U^{(i)}\|^{2p}_{\ell_2^d\rightarrow\ell_2^n}\Big)^{1/p}\\
{}& = \max_{1\leq i\leq m}\Big(\E \|(U^{(i)})^*U^{(i)}\|^{p}_{\ell_2^d\rightarrow\ell_2^d}\Big)^{1/p}.
\end{align*}
If $u_j$ denotes the $j$th row of $U$, then 
$$
(U^{(i)})^*U^{(i)} = \sum_{j=1}^n \delta_{ij} u_j u_j^*.
$$
By symmetrization and the non-commutative Khintchine inequality (cf.\ \Equation{NCKI}),
\allowdisplaybreaks
\begin{align*}
\Big\|\sum_{j=1}^n \delta_{ij} u_j u_j^*\Big\|_{L^p_{\del}} &\le \frac sm + \Big\|\sum_{j=1}^n \delta_{ij} u_j u_j^* - \frac sm I\Big\|_{L^p_{\del}}\\
&\le \frac sm + 2\Big\|\sum_{j=1}^n r_j\delta_{ij} u_j u_j^*\Big\|_{L^p_{\del,r}}\\
&\lesssim \frac sm + \sqrt{p} \Big\|\sum_{j=1}^n \delta_{ij} \|u_j\|_2^2 u_j u_j^*\Big\|_{L^{p/2}_{\del}}^{1/2}\\
&\le \frac sm + \sqrt{p} \max_{1\le j\le n} \|P_E e_j\|_2\cdot \Big\|\sum_{j=1}^n \delta_{ij} u_j u_j^*\Big\|_{L^{p}_{\del}}^{1/2}
\end{align*}
Solving this quadratic inequality, we find
\begin{equation*}
\Big\|\sum_{j=1}^n \delta_{ij} u_j u_j^*\Big\|_{L^p_{\del}} \lesssim \frac s m + p \cdot \max_{1\le j\le n} \|P_E e_j\|_2^2
\end{equation*}
and as a consequence,
\begin{equation*}
\Big(\E_\delta \max_{1\le i\le m} \|U^{(i)}\|_{\ell_2^d\rightarrow\ell_2^n}^{2p}\Big)^{1/p} \lesssim \frac s m + p \cdot \max_{1\le j\le n} \|P_E e_j\|_2^2
\end{equation*}
Applying this estimate together with \Equation{sqrt-trick} in \Equation{ga2SubsIntermed} yields the asserted estimate in \Equation{ga2Subs}. For the second assertion, note by Cauchy-Schwarz that
\begin{align*}
d_{\del}(T) & = \frac{1}{\sqrt{s}} \max_{1\le i\le m} \sup_{x\in T} \Big[\sum_{j=1}^n \delta_{ij} x_j^2\Big]^{1/2} \\
& = \frac{1}{\sqrt{s}}  \max_{1\le i\le m} \sup_{x\in T} \Big[\sum_{j=1}^n \delta_{ij} \Big(\sum_{k=1}^d \inprod{x, f_k}\inprod{f_k, e_j}\Big)^2\Big]^{1/2}\\
& \le \frac{1}{\sqrt{s}} \max_{1\le i\le m} \Big[\sum_{j=1}^n \delta_{ij} \|P_E e_j\|_2^2\Big]^{1/2}.
\end{align*}
Therefore,
$$(\E_{\del}d^{2p}_{\del}(T))^{1/p} \leq \frac{1}{s}\Big(\E_{\del}\max_{1\le i\le m} \Big[\sum_{j=1}^n \delta_{ij} \|P_E e_j\|_2^2 \Big]^{p}\Big)^{1/p}$$
and \Equation{diamSubs} immediately follows from \Equation{sqrt-trick}.\par
To prove the final assertion, we combine \Eqsub{RIPLpEst}, \Equation{ga2Subs} and \Equation{diamSubs} to obtain
\begin{align*}
& \Big(\E_{\del,\si}\sup_{x \in T} \Big|\|\Phi x\|^2 - \|x\|^2\Big|^p\Big)^{1/p} \\
& \qquad \lesssim \epsilon^2 + \frac{(d+\log m)\log^2(d/\epsilon)}{m} + \frac{p\log^2(d/\epsilon)\log m}{s}\max_{1\leq j\leq n} \|P_E e_j\|_2^2 \\
& \qquad \qquad \ + \Big[\epsilon^2 + \frac{(d+\log m)\log^2(d/\epsilon)}{m} + \frac{p\log^2(d/\epsilon)\log m}{s}\max_{1\leq j\leq n} \|P_E e_j\|_2^2\Big]^{1/2} \\
& \qquad \qquad \ + \frac{pd}{m} + \frac{p^2}{s}\max_{1\leq j\leq n} \|P_E e_j\|_2^2 + \Big[\frac{pd}{m} + \frac{p^2}{s}\max_{1\leq j\leq n} \|P_E e_j\|_2^2\Big]^{1/2}. 
\end{align*}
Now apply Lemma~\ref{lem:MomentsToTails} of Appendix~\ref{sec:probTools} to obtain, for any $w\geq \log m$,
\begin{align*}
\bP_{\del,\si}\Big(\eps_{\Phi,T} & \gtrsim \epsilon^2 + \frac{(d+\log m)\log^2(d/\epsilon)}{m} + \frac{w\log^2(d/\epsilon)\log m}{s}\max_{1\leq j\leq n} \|P_E e_j\|_2^2 \\
& \ \ \ + \Big[\epsilon^2 + \frac{(d+\log m)\log^2(d/\epsilon)}{m} + \frac{w\log^2(d/\epsilon)\log m}{s}\max_{1\leq j\leq n} \|P_E e_j\|_2^2\Big]^{1/2} \\
& \ \ \ + \frac{wd}{m} + \frac{w^2}{s}\max_{1\leq j\leq n} \|P_E e_j\|_2^2 + \Big[\frac{wd}{m} + \frac{w^2}{s}\max_{1\leq j\leq n} \|P_E e_j\|_2^2\Big]^{1/2}\Big) \leq e^{-w}. 
\end{align*}
Now set $w=\log(\eta^{-1})$, choose $\epsilon=\eps/C$ (with $C$ a large enough absolute constant) and $\epsilon=d/m$, and use the assumptions on $m$ and $s$ to conclude $\bP(\eps_{\Phi,T}\geq \eps)\leq \eta$.
\end{proof}
Theorem~\ref{thm:JLsubspace} recovers a similar result in \cite{NN13b} but via a different method, less logarithmic factors in the setting of $m$, and the revelation that $s$ can be taken smaller if all the leverage scores $\|P_E e_j\|_2$ are small (note if $\|P_E e_j\|_2 \ll (\log d\cdot \log m)^{-1}$ for all $j$, we may take $s=1$, though this is not true in general). Our dependence on $1/\eps$ in $s$ is quadratic instead of the linear dependence in \cite{NN13b}, but as stated earlier, in most applications of OSE's $\eps$ is taken a constant.

\begin{remark}
\label{rem:coherence}
\textup{
As stated, one consequence of the above is we can set $s=1$, and $m$ to nearly linear in $d$ as long as $\mu$ is at most inverse polylogarithmic in $d$. It is worth noting that if $d \gtrsim \log n$, then a random $d$-dimensional subspace $E$ has $\mu(E) \lesssim  \sqrt{d/n}$ by the JL lemma, and thus most subspaces do have even much lower coherence. Note that the latter bound is sharp. Indeed, let $E$ be any $d$-dimensional subspace and let $U\in\R^{n\times d}$ have orthonormal columns such that $E$ is the column space of $U$. Then
$$\max_{v\in E, \|v\|_2=1} \max_{1\le j\le n}|\inprod{v,e_j}|$$
is the maximum $\ell_2$ norm of a row $u_j$ of $U$. Since $U$ has $n$ rows and $\|U\|_F^2 = d$, there exists a $j$ such that $\|u_j\|_2\ge \sqrt{d/n}$ and in particular $\mu(E)\geq\sqrt{d/n}$.
}

\textup{
It is also worth noting that \cite[Theorem 3.3]{AMT10} observed that if $H\in\R^{n\times n}$ is a bounded orthonormal system (i.e.\ $H$ is orthogonal and $\max_{i,j} |H_{ij}| = O(1/\sqrt{n})$, and $D$ is a diagonal matrix with independent Rademacher entries, then $HDE$ has incoherence $\mu\lesssim \sqrt{d(\log n)/n}$ with probability $1 - 1/\mathrm{poly}(n)$. They then showed that incoherence $\mu$ implies that a sampling matrix $S\in\R^{m\times n}$ suffices to achieve \Equation{rjl-condition} with $m\lesssim n\mu^2\log m / \eps^2$. Putting these two facts together implies $SHD$ gives \Equation{rjl-condition} with $m\lesssim d(\log n)(\log m)/\eps^2$. \Equation{two-ten} combined with \cite[Theorem 3.3]{AMT10} implies a statement of a similar flavor but with an arguably stronger implication for applications: $\Phi HD$ with $s=1$ satisfies \Equation{rjl-condition} for $T = E\cap S^{n-1}$ without increasing $m$ as long as $n \gtrsim d(\log(d/\eps))^c/\eps^2$.
}

\end{remark}

\section{The type-$2$ case}\SectionName{typetwo}
Recall that the \emph{type-$2$ constant} $T_2(\|\cdot\|)$ of a semi-norm $\|\cdot \|$ is defined as the best constant (if it exists) such that the inequality
\begin{equation}
\E_g\Big\|\sum_\alpha g_\alpha x_\alpha\Big\| \le T_2\Big(\sum_\alpha \|x_\alpha\|^2\Big)^{1/2} \EquationName{3.1}
\end{equation}
holds, for all finite systems of vectors $\{x_{\alpha}\}$ and i.i.d.\ standard Gaussian $(g_\alpha)$.

Given $T\subset S^{n-1}$, define the semi-norm
$$
\tnorm{x}_T = \sup_{y\in T} |\inprod{x,y}|
$$
It will be convenient to use the following notation. For $i=1,\ldots,m$ we let $J_i$ be the set of `active' indices, i.e., $J_i = \{j=1,\ldots,n : \delta_{ij} = 1\}$. We can then write
\begin{equation}
\vertiii{x} = \frac 1{\sqrt{s}} \max_{1\le i\le m} \|P_{J_i} x\|_2, \EquationName{3.2}
\end{equation}
where we abuse notation and let $P_{J_i}$ denote orthogonal projection onto the coordinate subspace specified by $J_i$.
\begin{lemma}
\label{lem:BoN410}
Let $\|\cdot\|$ be any semi-norm satisfying $\tnorm{x}_T\leq \|x\|$ for all $x\in \R^n$. Then,
\begin{align*}
\ga_2(T,\|\cdot\|_{\del}) & \lesssim \frac{1}{\sqrt{s}} T_2(\|\cdot\|)(\log n)^3(\log m)^{1/2} \\
& \qquad \qquad \qquad \times \Big(\max_{1\leq i\leq m} \E_g\Big\|\sum_{j=1}^n g_j\del_{ij} e_j\Big\| + (\log m)^{1/2}d_{\|\cdot\|}\Big(\bigcup_{i=1}^m B_{J_i}\Big)\Big),
\end{align*}
where $T_2(\|\cdot\|)$ is the type-$2$ constant and for $1\leq i\leq m$,
$$B_{J_i} = \Big\{\sum_{j\in J_i} x_je_j : \sum_{j\in J_i} x_j^2 \le 1\Big\}.$$
\end{lemma}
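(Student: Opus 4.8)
The plan is to bound $\gamma_2(T,\|\cdot\|_{\del})$ by a chain of reductions: pass to the symmetric convex hull of $T$, invoke Dudley's entropy integral, dualize the covering numbers, reduce to the convex hull of the coordinate balls $B_{J_i}$, and finish with Maurey's empirical method and dual Sudakov minoration. For the first step, note $T\subseteq\tilde T:=\mathrm{conv}(T\cup(-T))$, so by monotonicity $\gamma_2(T,\|\cdot\|_{\del})\le\gamma_2(\tilde T,\|\cdot\|_{\del})$, and by \Equation{Dudley} this is at most $\int_0^\infty(\log\mathcal{N}(\tilde T,\|\cdot\|_{\del},u))^{1/2}\,du$; since $\|\cdot\|_{\del}\le s^{-1/2}\|\cdot\|_2$ and $\tilde T\subseteq B_{\ell_2^n}$, the integrand vanishes for $u\gtrsim s^{-1/2}$, so the range is effectively $(0,cs^{-1/2})$.

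Next I dualize the covering numbers. The unit ball of $\|\cdot\|_{\del}$ is the intersection of the $m$ Euclidean cylinders $\{x:\|P_{J_i}x\|_2\le\sqrt s\}$, so (after scaling by $s^{-1/2}$) it embeds isometrically into $\ell_\infty^m(\ell_2)$ and its polar is exactly $s^{-1/2}\mathrm{conv}(\bigcup_{i=1}^m B_{J_i})$, while the polar of $\tilde T$ is the unit ball of $\tnorm{\cdot}_T$. Then the duality-of-entropy-numbers result \cite[Proposition~4]{BPST89} gives, for every $t>0$,
$$\log\mathcal{N}(\tilde T,\|\cdot\|_{\del},t)\lesssim\Lambda\cdot\log\mathcal{N}\Big(\mathrm{conv}\Big(\bigcup_{i=1}^m B_{J_i}\Big),\tnorm{\cdot}_T,\tfrac{\sqrt s\,t}{8}\Big),$$
where the loss $\Lambda$ is polynomial in $\log n$ and in the type-$2$ constant of $\ell_\infty^m(\ell_2)$, so $\Lambda\lesssim(\log n)^{a}(\log m)^{b}$. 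Since $\tnorm{\cdot}_T\le\|\cdot\|$ by hypothesis, replacing $\tnorm{\cdot}_T$ by $\|\cdot\|$ only enlarges the covering number; writing $K=\mathrm{conv}(\bigcup_i B_{J_i})$ and substituting $v=\sqrt s\,t/8$, the task becomes bounding $\Lambda^{1/2}s^{-1/2}\int_0^{c}(\log\mathcal{N}(K,\|\cdot\|,v))^{1/2}\,dv$.

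For the last step I estimate $\mathcal{N}(K,\|\cdot\|,v)$ by Maurey's empirical method, which is where the type-$2$ hypothesis enters and is the sole source of the factor $T_2(\|\cdot\|)$ in the conclusion. Covering each $B_{J_i}$ by a $(v/2)$-net $N_i$ of size $\mathcal{N}(B_{J_i},\|\cdot\|,v/2)$, every point of $K$ is within $v/2$ of $\mathrm{conv}(\bigcup_i N_i)$, and Maurey's lemma in $(\R^n,\|\cdot\|)$ shows each point of $\mathrm{conv}(\bigcup_i N_i)$ is $(v/2)$-approximated by an average of $O\big(T_2(\|\cdot\|)^2 b^2 v^{-2}\big)$ elements of $\bigcup_i N_i$, with $b=d_{\|\cdot\|}\big(\bigcup_i B_{J_i}\big)$; hence
$$\log\mathcal{N}(K,\|\cdot\|,v)\lesssim\frac{T_2(\|\cdot\|)^2 b^2}{v^2}\Big(\log m+\max_{1\le i\le m}\log\mathcal{N}(B_{J_i},\|\cdot\|,v/2)\Big).$$
Dual Sudakov minoration (Lemma~\ref{lem:dualSudakov}) bounds $\sup_{v>0}v\,(\log\mathcal{N}(B_{J_i},\|\cdot\|,v))^{1/2}$ by $\E_g\big\|\sum_j g_j\del_{ij}e_j\big\|$, and at the smallest scales a volumetric estimate $\log\mathcal{N}(B_{J_i},\|\cdot\|,v)\lesssim|J_i|\log(b/v)$ is used instead; splitting the $v$-range at the appropriate threshold and integrating, each piece costing at most a further $\log n$ or $\log m$ factor, yields a bound of the form $T_2(\|\cdot\|)(\log n)^{a'}\big(\max_i\E_g\|\sum_j g_j\del_{ij}e_j\|+(\log m)^{1/2}b\big)$ for the integral, and combining with the $\Lambda^{1/2}s^{-1/2}$ prefactor and collecting the logarithmic losses gives the claim.

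The main obstacle is the entropy-duality step: the general duality-of-entropy-numbers conjecture being open, one must invoke the quantitative substitute of \cite{BPST89}, correctly identifying the polar bodies (in particular the $s^{-1/2}\mathrm{conv}(\bigcup_i B_{J_i})$ structure and the embedding of $(\R^n,\|\cdot\|_{\del})$ into $\ell_\infty^m(\ell_2)$), tracking where the factor $\sqrt s$ appears, and verifying that the loss is no worse than a fixed power of $\log n$ times a fixed power of $\log m$. A secondary technical point is the final integration: the Maurey bound alone diverges as $v\to0$, so it must be interleaved with the volumetric estimate, and one has to check that the accumulated logarithmic factors stay within $(\log n)^3(\log m)^{1/2}$.
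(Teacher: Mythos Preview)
Your overall architecture matches the paper's proof: Dudley integral, BPST duality to pass to $\mathrm{conv}(\cup_i B_{J_i})$ in the norm $\tnorm{\cdot}_T$ (then $\|\cdot\|$), Maurey to handle the convex hull, dual Sudakov on each $B_{J_i}$, and a volumetric estimate at the smallest scales. The identification of the polars and the BPST loss factor $\Lambda\lesssim(\log m)\log(d_{\|\cdot\|}/(\sqrt s\,t))$ are correct.

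The genuine gap is in your Maurey step. Your single-scale argument---cover each $B_{J_i}$ by a $(v/2)$-net, then apply Maurey with the full radius $b=d_{\|\cdot\|}(\cup_i B_{J_i})$---gives
\[
\log\mathcal{N}(K,\|\cdot\|,v)\ \lesssim\ \frac{T_2^2\,b^2}{v^2}\Big(\log m+\max_i\log\mathcal{N}(B_{J_i},\|\cdot\|,v/2)\Big).
\]
After dual Sudakov, $\log\mathcal{N}(B_{J_i},\|\cdot\|,v)\lesssim G_i^2/v^2$, so the second term is $T_2^2 b^2 G^2/v^4$ and hence $[\log\mathcal{N}(K,v)]^{1/2}\lesssim T_2 b\big(\sqrt{\log m}/v+G/v^2\big)$. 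Pushed back through BPST and Dudley, the $G/v^2$ piece contributes $\sim T_2 bG\sqrt{(\log m)(\log n)}/(s\,t_*)$; balancing this against the volumetric part $\sim\sqrt{n}\,t_*$ forces a loss polynomial in $n$, not polylogarithmic. The divergence you flag as a ``secondary technical point'' is not logarithmic---it is an extra factor $b/v$ that cannot be absorbed by any choice of cutoff.

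What the paper does instead is a \emph{multi-scale} Maurey argument (the Claim inside the proof). One writes each $x\in\Omega=\cup_i B_{J_i}$ as a telescoping sum $x=\sum_k y_k$ with $y_k\in\Omega_k-\Omega_{k-1}$ and $\|y_k\|\le 3\cdot2^{-k}$, where $\Omega_k$ is a $2^{-k}$-net of $\Omega$. Then $\mathrm{conv}(\Omega)\subset\sum_k\mathrm{conv}(\tilde\Omega_k)+B_{\|\cdot\|}(0,\epsilon)$ and Maurey is applied to each $\tilde\Omega_k$ with radius $3\cdot2^{-k}$ rather than $b$. This yields
\[
\epsilon\,[\log\mathcal{N}(\mathrm{conv}(\Omega),\|\cdot\|,\epsilon)]^{1/2}\ \lesssim\ \Big(\log\tfrac{R}{\epsilon}\Big)^{3/2}T_2(\|\cdot\|)\,\max_{\epsilon\le\epsilon'\le R}\epsilon'\,[\log\mathcal{N}(\Omega,\|\cdot\|,\epsilon')]^{1/2},
\]
and now dual Sudakov (plus the union of $m$ balls) bounds the right-hand maximum by $\max_i G_i+b\sqrt{\log m}$ \emph{uniformly} in $\epsilon'$. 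The resulting $[\log\mathcal{N}(\tilde T,\|\cdot\|_\delta,t)]^{1/2}$ scales like $(\log\tfrac{1}{\sqrt s\,t})^2/t$, which upon integration produces the stated $(\log n)^3$. Your single-scale shortcut loses precisely the decoupling of the Maurey radius from the approximation scale, and that loss is fatal here.
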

Note also that $d_{\del}(T)$ can be bounded as
\begin{align}
\EquationName{delTermBd}
d_{\del}(T) & = \frac{1}{\sqrt{s}} \max_{1\leq i\leq m}\sup_{x\in T} \Big(\sum_{j=1}^n \del_{ij}\langle x,e_j\rangle^2\Big)^{1/2} \nonumber \\
& = \frac{1}{\sqrt{s}} \max_{1\leq i\leq m}\sup_{x\in T} \Big(\E_g\Big|\sum_{j=1}^n \del_{ij}g_j\langle x,e_j\rangle\Big|^2\Big)^{1/2} \nonumber \\
& \lesssim \frac{1}{\sqrt{s}} \max_{1\leq i\leq m}\sup_{x\in T} \E_g\Big|\Big\langle x,\sum_{j=1}^n \del_{ij}g_je_j\Big\rangle\Big|
 \leq \frac{1}{\sqrt{s}} \max_{1\leq i\leq m}\E_g\Big\|\sum_{j=1}^n \del_{ij}g_je_j\Big\|.
\end{align}
\begin{remark}
\textup{
Replacing $\tnorm{\cdot}_T$ by a stronger norm $\|\cdot\| \ge \tnorm{\cdot}_T$ in this lemma may reduce the type-$2$ constant. The application to $k$-sparse vectors will illustrate this.
}
\end{remark}
\begin{proof}[Proof of Lemma~\ref{lem:BoN410}]
By \Equation{Dudley}, to bound $\ga_2(T,\|\cdot\|_{\del})$ it suffices to evaluate the covering numbers $\mathcal{N}(T, \|\cdot\|_{\del}, t)$ for $t<d_{\del}^*(T)/\sqrt{s}$, where we set 
$$d_{\del}^*(T) = \sqrt{s}d_{\del}(T) = \sup_{x\in T}\max_{1\le i\le m} \|P_{J_i} x\|_2.$$
For large values of $t$ we use duality for covering numbers. It will be convenient to work with $\tilde{T} = \mathrm{conv}(T\cup (-T))$, which is closed, bounded, convex and symmetric. Clearly,
$$\mathcal{N}(T, \|\cdot\|_{\del}, t)\leq \mathcal{N}(\tilde{T}, \|\cdot\|_{\del}, t)$$
and in the notation of Appendix~\ref{sec:probTools},
$$\tnorm{x}_T = \sup_{y\in T\cup\{-T\}}\inprod{x,y} = \sup_{y\in \tilde{T}}\inprod{x,y} = \|x\|_{\tilde{T}^{\circ}}.$$
Here $\tilde{T}^{\circ}$ denotes the polar (see Appendix~\ref{sec:convex-tools} for a definition). Lemma~\ref{lem:BPST}, applied with $U=\tilde{T}$, $V=\cap_{i=1}^m \{x\in \R^n \ : \ s^{-1/2}\|P_{J_i}x\|_2\leq 1\}$, $\eps=t$ and $\theta=d_{\del}^*(T)/\sqrt{s}$, implies 
$$
\log \mathcal{N}(\tilde{T}, \|\cdot\|_{\del}, t) \le \Big(\log\frac {d_{\del}^*(T)}{t\sqrt{s}}\Big) A \log \mathcal{N}\Big(\frac 1{\sqrt{s}}\mathrm{conv}\Big(\bigcup_{i=1}^m B_{J_i}\Big), \tnorm{\cdot}_T, \frac t8\Big)
$$
where $A \lesssim T_2(\|\cdot\|_{\del})^2 \lesssim \log m$ (by \Eqsub{3.2}) and we used that
$$B_{J_i} = \Big\{\sum_{j\in J_i} x_je_j : \sum_{j\in J_i} x_j^2 \le 1\Big\} = \{x\in \R^n \ : \ \|P_{J_i}x\|_2\leq 1\}^{\circ}.$$
Hence,
\begin{equation}
\log\mathcal{N}(\tilde{T}, \|\cdot\|_{\del}, t) \lesssim \Big(\log\frac {d_{\del}^*(T)}{t\sqrt{s}}\Big)(\log m)\log\mathcal{N}\Big(\mathrm{conv}\Big(\bigcup_{i=1}^m B_{J_i}\Big), \|\cdot\|, \frac 18\sqrt{s} t\Big) \EquationName{3.3}
\end{equation}
To estimate the right hand side we use the following consequence of Maurey's lemma (Lemma~\ref{lem:Maurey} of Appendix~\ref{sec:probTools}).
\begin{claim}\ClaimName{claim}
Fix $0<R<\infty$. Let $\|\cdot\|$ be a semi-norm on $\R^n$ and $\Omega\subset \{x\in \R^n : \|x\| \le R\}$. Then for $0 < \epsilon < R$,
\begin{equation}
\epsilon [\log \mathcal{N}(\mathrm{conv}(\Omega), \|\cdot\|, \epsilon)]^{1/2} \lesssim \Big(\log \frac {R}{\epsilon}\Big)^{3/2} T_2(\|\cdot\|) \max_{\epsilon\leq \epsilon'\leq R} \epsilon'[\log\mathcal{N}(\Omega, \|\cdot\|, \epsilon')]^{1/2} \EquationName{3.4}
\end{equation}
\end{claim}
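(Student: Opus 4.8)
The plan is to prove \Eqsub{3.4} by a multiscale ``chaining inside the convex hull'' that reduces covering $\mathrm{conv}(\Omega)$ to covering a short chain of small difference sets, to each of which Maurey's lemma (Lemma~\ref{lem:Maurey}) is applied with the \emph{local} scale playing the role of the radius. Set $M := \max_{\epsilon\le\epsilon'\le R}\epsilon'[\log\mathcal N(\Omega,\|\cdot\|,\epsilon')]^{1/2}$ and $L := \lceil\log_2(R/\epsilon)\rceil$, and fix dyadic scales $R=\epsilon_0>\epsilon_1>\cdots>\epsilon_L$ with $\epsilon_l\simeq 2^{-l}R$ and $\epsilon_L\simeq\epsilon$. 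For each $l\ge1$ let $\Omega_l\subseteq\Omega$ be a minimal $\epsilon_l$-net of $\Omega$, so $|\Omega_l|=\mathcal N(\Omega,\|\cdot\|,\epsilon_l)$, and let $\Omega_0=\{x_0\}$ for a fixed $x_0\in\Omega$; since coarser nets are smaller, $|\Omega_{l-1}|\le|\Omega_l|$. Choose maps $\pi_l\colon\Omega\to\Omega_l$ with $\|x-\pi_l(x)\|\le\epsilon_l$ for $l\ge1$ and $\pi_0\equiv x_0$, and — the crucial step — extend each $\pi_l$ affinely to $\mathrm{conv}(\Omega)$ by $\pi_l(\sum_i\lambda_i x_i):=\sum_i\lambda_i\pi_l(x_i)$; convexity of $\|\cdot\|$ preserves $\|z-\pi_l z\|\le\epsilon_l$ for every $z\in\mathrm{conv}(\Omega)$ (and $\|z-\pi_0 z\|\le 2R$).

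Next I would telescope: for $z\in\mathrm{conv}(\Omega)$,
\[
z \;=\; x_0 + \sum_{l=1}^{L}\bigl(\pi_l z-\pi_{l-1}z\bigr) + \bigl(z-\pi_L z\bigr),\qquad \|z-\pi_L z\|\le\epsilon_L\simeq\epsilon.
\]
Each increment $\pi_l z-\pi_{l-1}z$ lies in $\mathrm{conv}(\Delta_l)$, where $\Delta_l:=\{\pi_l(x)-\pi_{l-1}(x):x\in\Omega\}$ is a \emph{finite} set with $|\Delta_l|\le|\Omega_l|\,|\Omega_{l-1}|\le|\Omega_l|^2$ and $\|\cdot\|$-radius $\le\epsilon_l+\epsilon_{l-1}\lesssim\epsilon_l$ (radius $\lesssim R\simeq\epsilon_1$ for $l=1$). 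Fixing $\eta_l>0$ with $\sum_l\eta_l\le\epsilon$ and an $\eta_l$-net of each $\mathrm{conv}(\Delta_l)$, the Minkowski sums of $x_0$ with one chosen center per scale form an $O(\epsilon)$-net of $\mathrm{conv}(\Omega)$, so
\[
\log\mathcal N(\mathrm{conv}(\Omega),\|\cdot\|,O(\epsilon)) \;\le\; \sum_{l=1}^{L}\log\mathcal N(\mathrm{conv}(\Delta_l),\|\cdot\|,\eta_l).
\]

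Now I would invoke Maurey's lemma (Lemma~\ref{lem:Maurey}) in the form $\log\mathcal N(\mathrm{conv}(\Xi),\|\cdot\|,\eta)\lesssim(T_2(\|\cdot\|)\,r/\eta)^2\log|\Xi|$ for a finite $\Xi$ of $\|\cdot\|$-radius $r$ and any $\eta>0$. Applied with $\Xi=\Delta_l$ (the trivial scales where $|\Omega_l|=1$ contribute nothing), using $r\lesssim\epsilon_l$ and $\log|\Delta_l|\le 2\log|\Omega_l|=2\log\mathcal N(\Omega,\|\cdot\|,\epsilon_l)\le 2M^2/\epsilon_l^2$ (the last step by the definition of $M$, since $\epsilon_l\in[\epsilon,R]$), the scale $\epsilon_l$ cancels and yields $\log\mathcal N(\mathrm{conv}(\Delta_l),\|\cdot\|,\eta_l)\lesssim T_2(\|\cdot\|)^2 M^2/\eta_l^2$. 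Summing over the $L$ scales with the balanced choice $\eta_l\equiv\epsilon/L$ — which minimizes $\sum_l\eta_l^{-2}$ subject to $\sum_l\eta_l=\epsilon$ — gives $\sum_l\eta_l^{-2}=L^3/\epsilon^2$, hence $\log\mathcal N(\mathrm{conv}(\Omega),\|\cdot\|,O(\epsilon))\lesssim T_2(\|\cdot\|)^2 M^2 L^3/\epsilon^2$, i.e. $\epsilon[\log\mathcal N(\mathrm{conv}(\Omega),\|\cdot\|,O(\epsilon))]^{1/2}\lesssim(\log(R/\epsilon))^{3/2}\,T_2(\|\cdot\|)\,M$. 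Reindexing the scales so that the finest net scale is $\epsilon$ itself turns this into \Eqsub{3.4}; the passage between precision $\epsilon$ and $O(\epsilon)$, and between the maxima over $[\Theta(\epsilon),R]$ and $[\epsilon,R]$, costs only absolute constants and is in any case harmless, since \Eqsub{3.4} is ultimately fed into a Dudley-type integral over $du$.

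I expect the difficulty to be conceptual rather than computational. A single application of Maurey to $\mathrm{conv}(\Omega)$ with radius $R$ is hopelessly lossy — it would produce a factor $R/\epsilon$ instead of $(\log(R/\epsilon))^{3/2}$ — so the real idea is the affinely extended near-projections $\pi_l$, which make it possible to chain through dyadic scales and apply Maurey at level $l$ only to the small set $\mathrm{conv}(\Delta_l)$ of radius $\simeq\epsilon_l$; there the ``radius'' factor $\epsilon_l^2$ and the ``complexity of $\Omega$'' factor $M^2/\epsilon_l^2$ cancel scale by scale, and the extra $(\log(R/\epsilon))^{3/2}$ then emerges mechanically as $\bigl(\sum_{l=1}^{L}\eta_l^{-2}\bigr)^{1/2}\cdot\epsilon=(L^3/\epsilon^2)^{1/2}\cdot\epsilon=L^{3/2}\simeq(\log(R/\epsilon))^{3/2}$. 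The only bookkeeping one must be careful with is that each $\Delta_l$ is finite (so Maurey applies at all), that $|\Delta_l|\le|\Omega_l|^2$ so $\log|\Delta_l|\lesssim\log|\Omega_l|$, and that $\mathrm{rad}(\Delta_l)\le\epsilon_{l-1}+\epsilon_l\simeq\epsilon_l$.
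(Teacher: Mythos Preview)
Your proof is correct and follows essentially the same multiscale chaining argument as the paper: decompose each $x\in\Omega$ along a dyadic chain of nets, pass to convex hulls of the increment sets $\Delta_l$ (the paper's $\tilde{\Omega}_k$), apply Maurey's lemma at each scale with precision $\simeq\epsilon/L$, and sum. The only cosmetic difference is that you phrase the passage to $\mathrm{conv}(\Omega)$ via an ``affine extension'' of the net maps $\pi_l$ --- which is representation-dependent but harmless once a single convex representation of each $z$ is fixed and used for all scales --- whereas the paper states the containment $\mathrm{conv}(\Omega)\subset\sum_k\mathrm{conv}(\tilde{\Omega}_k)+B_{\|\cdot\|}(0,\epsilon)$ directly.
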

\begin{proof}
It suffices to prove the result for $R=1$, the general case then follows by considering $R^{-1}\Omega$. For each positive integer $k$, let $\Omega_k\subset \Omega$ be a finite set satisfying
$$
|\Omega_k| \le \mathcal{N}(\Omega, \|\cdot\|, 2^{-k})
$$
and
$$
\inf_{y\in \Omega_k} \|x - y\| < 2^{-k}\hbox{ for all } x\in\Omega
$$
Given $x\in\Omega$, denote $x_k\in\Omega_k$ a sequence s.t.\ $\|x - x_k\| < 2^{-k}$. Then, setting $y_0 = x_0$, $y_k = x_k - x_{k-1}$, we obtain
\begin{equation}
\Big\|x - \sum_{\epsilon/2<2^{-k}\leq 1} y_k\Big\| < \epsilon \EquationName{3.5}
\end{equation}

\begin{equation}
y_k \in \Omega_k - \Omega_{k-1}\hbox{ and } \|y_k\| < 2^{-k} + 2^{-k + 1} = 3\cdot 2^{-k} \EquationName{3.6}
\end{equation}

It follows from \Equation{3.5} that
$$
\mathrm{conv}(\Omega) \subset \sum_{\epsilon/2<2^{-k}\leq 1} \mathrm{conv}(\tilde{\Omega}_k) + B_{\|\cdot\|}(0, \epsilon)
$$
with
\begin{equation}
\tilde{\Omega}_k = \Big\{y\in \Omega_k - \Omega_{k-1} : \|y\| < 3\cdot 2^{-k} \Big\} \EquationName{3.7}
\end{equation}
By Maurey's lemma, given $z\in\mathrm{conv}(\tilde{\Omega}_k)$ and a positive integer $\ell_k$, there are points $y_1,\ldots,y_\ell\in\tilde{\Omega}_k$ such that
\begin{equation}
\Big\|z - \frac 1{\ell_k}(y_1 + \ldots + y_{\ell_k})\Big\| \lesssim T_2(\|\cdot\|)\frac 1{\sqrt{\ell_k}} \cdot 3\cdot 2^{-k} \EquationName{3.8}
\end{equation}
Taking $\ell_k\simeq T_2(\|\cdot\|)^2 4^{-k} \epsilon^{-2} \log^2(1/\epsilon)$, we obtain a point $z_k\in \Omega_k' = \frac 1{\ell_k}(\tilde{\Omega}_k + \ldots + \tilde{\Omega}_k)$ such that
$$
\|z - z_k\| < \frac{\epsilon}{2\log(1/\epsilon)}
$$
Moreover
$$
\log|\Omega_k'| \le \ell_k \log|\tilde{\Omega}_k| \lesssim T_2(\|\cdot\|)^2 4^{-k} \epsilon^{-2}\log^2(1/\epsilon) \log \mathcal{N}(\Omega, \|\cdot\|, 2^{-k})
$$
Since $|\{k \ : \ \epsilon/2<2^{-k}\leq 1\}|\leq \log(2 /\epsilon)$, we obtain from the preceding,
$$
\log\mathcal{N}(\mathrm{conv}(\Omega), \|\cdot\|, \epsilon) \lesssim \log^2(1/\epsilon) T_2(\|\cdot\|)^2 \epsilon^{-2} \sum_{\epsilon/2<2^{-k}\leq 1} 4^{-k} \log\mathcal{N}(\Omega, \|\cdot\|, 2^{-k})
$$
and \Equation{3.4} follows. This proves the claim.
\end{proof}
Observe that
\begin{align*}
d_{\del}^*(T) & = \sup_{x\in T} \max_{1\leq i\leq m} \|P_{J_i}x\|_2 \\
& = \sup_{x\in T}\sup_{y\in B_{\ell_2^n}} \max_{1\leq i\leq m} \langle x,P_{J_i}y\rangle = d_{\tnorm{\cdot}_T}\Big(\bigcup_{i=1}^m B_{J_i}\Big) \leq d_{\|\cdot\|}\Big(\bigcup_{i=1}^m B_{J_i}\Big) 
\end{align*}
Write $d_{\|\cdot\|}:=d_{\|\cdot\|}(\cup_{i=1}^m B_{J_i})$ for brevity. Applying \Claim{claim} with $\Omega = \cup_{i=1}^m B_{J_i}$, $\epsilon=t\sqrt{s}$, and the dominating semi-norm $\|\cdot\|\geq\tnorm{\cdot}_T$ to the right hand side of \Equation{3.3}, we find
\begin{align}
\nonumber t\sqrt{s}\Big[\log\mathcal{N}(\tilde{T}, \|\cdot\|_{\del}, t)\Big]^{1/2} \le &\Big(\log \frac {d_{\|\cdot\|}}{t\sqrt{s}}\Big)^2(\log m)^{1/2} T_2(\|\cdot\|)\cdot\\
{}&\Big[\max_{t\sqrt{s}\leq t'\leq d_{\|\cdot\|}} t'\Big[\log \mathcal{N}\Big(\bigcup_{i=1}^m B_{J_i}, \|\cdot\|, t'\Big)\Big]^{1/2}\Big]. \EquationName{3.9}
\end{align}
Using the dual Sudakov inequality (Lemma~\ref{lem:dualSudakov} of Appendix~\ref{sec:probTools}) we arrive at
\begin{align}
\nonumber \Big[\log\mathcal{N}(\tilde{T}, \|\cdot\|_{\del}, t)\Big]^{1/2} \le &\frac {1}{t\sqrt{s}}\Big(\log \frac {d_{\|\cdot\|}}{t\sqrt{s}}\Big)^2(\log m)^{1/2} T_2(\|\cdot\|)\cdot\\
{}&\Big[\max_{1\le i\le m} \E_g\Big\|\sum_{j\in J_i} g_j e_j\Big\| + d_{\|\cdot\|}(\log m)^{1/2}\Big]. \EquationName{dualsud}
\end{align}
We now apply \Equation{dualsud} for $(sn)^{-1/2}d_{\|\cdot\|}\leq t<s^{-1/2}d_{\|\cdot\|}$ and the estimate (cf.\ \Equation{volComp})
$$\cN(T,\|\cdot\|_{\del},t)\leq \cN(d_{\del}(T)B_{\|\cdot\|_{\del}},\|\cdot\|_{\del},t)\leq \Big(1+\frac{2d_{\del}^*(T)}{t\sqrt{s}}\Big)^n \leq \Big(1+\frac{2d_{\|\cdot\|}}{t\sqrt{s}}\Big)^n$$
for $0<t<(sn)^{-1/2}d_{\|\cdot\|}$, respectively, in \Equation{Dudley}. A straightforward computation, similar to \Equation{splittingArg}, yields the result.
\end{proof}

\subsection{Application to $k$-sparse vectors}\SectionName{ksparse}
Let $\|x\|_0$ denote the number of non-zero entries of $x$ and set
$$ 
S_{n,k} = \{x\in\R^n : \|x\|_2 = 1, \|x\|_0 \le k \}.
$$
\begin{theorem}
Let $A\in\R^{n\times n}$ be orthogonal and denote
$$
T = A(S_{n,k})
$$Let $\eta\leq 1/m$. 
If 
\begin{equation}
\max |A_{ij}| < (k\log n)^{-1/2} \EquationName{4.8}
\end{equation}
and
\begin{align}
\nonumber m &\gtrsim k(\log n)^8(\log m) / \eps^2 \\
s &\gtrsim (\log n)^7 (\log m)(\log \eta^{-1}) /\eps^2, \EquationName{4.7}
\end{align}
then with probability at least $1-\eta$ we have
$$(1-\eps)\|x\|_2^2 \leq \|\Phi x\|_2^2 \leq (1+\eps)\|x\|_2^2 \qquad \mathrm{for \ all} \ x\in T.$$
\end{theorem}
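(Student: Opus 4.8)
The plan is to estimate the restricted isometry constant $\eps_{\Phi,T}$ through the $L^p$-bound \Equation{RIPLpEst}, so that it suffices to control $\E_\del\ga_2^{2p}(T,\|\cdot\|_\del)$ and $\E_\del d_\del^{2p}(T)$ for $p\gtrsim\log(mn)$ and then pass from moments to tails. The diameter needs nothing beyond the trivial fact that each $P_{J_i}$ is a contraction of $\ell_2^n$ while $T\subset S^{n-1}$, which gives $d_\del(T)^2\le 1/s$ deterministically; combined with $s\gtrsim(\log n)^7(\log m)(\log\eta^{-1})/\eps^2$ this already forces the terms $\sqrt p\,d_\del(T)+p\,d_\del^2(T)$ appearing in \Equation{RIPLpEst} below $\eps$ once $p=\log(\eta^{-1})$. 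So the substance is the $\ga_2$-term, which I would control through \Lemma{BoN410}.

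To invoke \Lemma{BoN410} I need a semi-norm $\|\cdot\|\ge\tnorm{\cdot}_T$ with a polylogarithmic type-$2$ constant. Here $\tnorm{x}_T=\tnorm{A^{*}x}_{S_{n,k}}$ is the Euclidean norm of the $k$ largest-magnitude entries of $A^{*}x$. I would take $\|x\|:=C\,\|A^{*}x\|_{(\ast)}$, where $\|w\|_{(\ast)}:=\inf\{\|u\|_2+\sqrt k\,\|v\|_\infty : w=u+v\}$ is the $\ell_2^n+\sqrt k\,\ell_\infty^n$ sum norm and $C$ is an absolute constant. By the standard decomposition of a vector into $k$-sparse blocks of summable $\ell_2$-norm one has $\|\cdot\|_{(\ast)}\simeq\tnorm{A^{*}\cdot}_{S_{n,k}}$, so $\|\cdot\|\ge\tnorm{\cdot}_T$; and since the type-$2$ constant is invariant under the isometry $A^{*}$ and under scaling, while a sum norm inherits the type-$2$ constants of its summands, $T_2(\|\cdot\|)\lesssim\max\{T_2(\ell_2^n),T_2(\ell_\infty^n)\}\lesssim\sqrt{\log n}$. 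Recognizing $\tnorm{\cdot}_T$ as a constant multiple of this sum norm — rather than working with it directly — is the one genuinely delicate choice; the rest is computation.

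With $\|\cdot\|$ fixed I would run \Lemma{BoN410}. The diameter there is $d_{\|\cdot\|}(\bigcup_i B_{J_i})\le\sup_{\|y\|_2\le 1}\|y\|\lesssim 1$ because $\|\cdot\|\lesssim\|\cdot\|_2$. For the Gaussian width, writing $a_j$ for the $j$-th row of $A$, the quantity $\E_g\|\sum_{j\in J_i}g_j e_j\|=\E_g\|A^{*}\sum_{j\in J_i}g_j e_j\|_{(\ast)}$ is the expected sum-norm of a centered Gaussian vector with covariance the rank-$s$ coordinate projection $P^{(i)}=\sum_{j\in J_i}a_ja_j^{*}$; bounding $\|\cdot\|_{(\ast)}\le\sqrt k\,\|\cdot\|_\infty$ and using a Gaussian maximal inequality yields $\E_g\|\sum_{j\in J_i}g_je_j\|\lesssim\sqrt{k\log n}\,(\max_{1\le l\le n}P^{(i)}_{ll})^{1/2}$ with $P^{(i)}_{ll}=\sum_{j\in J_i}A_{jl}^2$. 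I would then concentrate $P^{(i)}_{ll}$ over $\del$: it has mean $s/m$ (columns of $A$ are unit vectors), and by the hypothesis $\max_{i,j}|A_{ij}|<(k\log n)^{-1/2}$ its variance is at most $s/(mk\log n)$ and its increments at most $1/(k\log n)$, so a Bernstein-type $L^p$-estimate — obtained by symmetrization and Khintchine exactly as in the proof of \Theorem{JLsubspace} — gives $(\E_\del\max_{i,l}(P^{(i)}_{ll})^{p})^{1/p}\lesssim s/m+p/(k\log n)$ for $p\gtrsim\log(mn)$. Substituting this together with $d_{\|\cdot\|}\lesssim 1$ and $T_2(\|\cdot\|)\lesssim\sqrt{\log n}$ into \Lemma{BoN410} and taking $2p$-th powers gives $(\E_\del\ga_2^{2p}(T,\|\cdot\|_\del))^{1/p}\lesssim (\log n)^8(\log m)\,\tfrac km+(\log n)^7(\log m)\,\tfrac ps$.

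Finally I would feed these bounds, along with $(\E_\del d_\del^{2p}(T))^{1/p}\le 1/s$ and $(\E_\del\ga_2^{p})^{1/p}\le((\E_\del\ga_2^{2p})^{1/p})^{1/2}$, into \Equation{RIPLpEst}, apply the moments-to-tails lemma (Lemma~\ref{lem:MomentsToTails}) with $w=\log(\eta^{-1})$, and set $p=w$; using the stated bounds on $m$ and $s$ and $\eta\le 1/m$ (so $w\ge\log m$), every term is seen to be $\lesssim\eps$, and rescaling $\eps$ by an absolute constant yields $\bP(\eps_{\Phi,T}>\eps)\le\eta$. The crux — and the only place the incoherence assumption enters — is the Gaussian-width step: without control on $\max_{i,j}|A_{ij}|$ one only has the trivial $\E_g\|\sum_{j\in J_i}g_je_j\|\lesssim\sqrt s$, which after the $s^{-1/2}$ prefactor in \Lemma{BoN410} does not decay, whereas the incoherence replaces $\sqrt s$ in expectation by $\lesssim\sqrt{sk\log n/m}+\sqrt p$, precisely what drives $\ga_2(T,\|\cdot\|_\del)\to 0$ as $m$ grows.
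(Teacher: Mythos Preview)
Your proof is correct and follows essentially the same route as the paper: the paper also controls $\ga_2(T,\|\cdot\|_\del)$ via \Lemma{BoN410} applied with the norm $\tnorm{\cdot}_K$ for $K=B_{\ell_2^n}\cap\sqrt{k}\,A(B_{\ell_1^n})$, whose polar is (up to a factor~$2$) exactly your $\ell_2^n+\sqrt{k}\,\ell_\infty^n$ sum norm composed with $A^*$, and then bounds the Gaussian width and the random quantity $\max_{i,\ell}\sum_j\del_{ij}A_{j\ell}^2$ by the same symmetrization--Khintchine argument you outline. The only cosmetic differences are that the paper names the norm via the primal body $K$ rather than the dual sum-norm description, and retains the separate term $\tfrac{1}{s}(\log n)^7(\log m)^2$ coming from $d_{\|\cdot\|}$ in \Lemma{BoN410}, which you (legitimately) absorb into the $\tfrac{p}{s}(\log n)^7(\log m)$ term since $p\ge\log m$.
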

\begin{proof}
We apply \Equation{RIPLpEst} and estimate $(\E_{\del}\ga_2^{2p}(T,\|\cdot\|_{\del}))^{1/p}$ and $\E_{\del} d_{\del}^{2p}(T)$ for $p\geq \log m$. We trivially estimate $\E_{\del} d_{\del}^{2p}(T)\leq 1/s$. To bound the $\ga_2$-functional, note that $T\subset K$, where
\begin{equation}
K = B_{\ell_2^n} \cap \sqrt{k} A(B_{\ell_1^n}). \EquationName{4.1}
\end{equation}
The polar body of $K$ is given by
$$
K^\circ = B_{\ell_2^n} + \frac 1{\sqrt{k}} A(B_{\ell_{\infty}^n})
$$
and one can readily calculate that $T_2(\tnorm{\cdot}_K) \lesssim \sqrt{\log n}$ and $d_{\tnorm{\cdot}_K}(\cup_{i=1}^m B_{J_i})\leq 1$. We apply Lemma~\ref{lem:BoN410} with $\|\cdot\|=\tnorm{\cdot}_K$. Note that for every $1\leq i\leq m$,
$$\E_g\btnorm{\sum_j g_j\del_{ij}e_k}_K \leq \sqrt{k}\E_g\max_{1\leq \ell\leq n}\Big|\sum_j g_j\del_{ij} A_{j\ell}\Big| \leq \sqrt{k\log n} \max_{1\leq\ell\leq n}\Big(\sum_j \del_{ij}A_{j\ell}^2\Big)^{1/2}$$
and therefore the lemma implies that
\begin{align*}
& \ga_2(T,\|\cdot\|_{\del}) \\
& \qquad \leq \frac{1}{\sqrt{s}}(\log n)^{7/2}\log m + \sqrt{\frac{k}{s}}(\log n)^4(\log m)^{1/2} \max_{1\leq i\leq m,1\leq \ell\leq n} \Big(\sum_j \del_{ij}A_{j\ell}^2\Big)^{1/2}.
\end{align*}
As a consequence, as $p\geq \log m$,
\begin{align}
\EquationName{ga2pSparse}
& (\E_{\del}\ga_2^{2p}(T,\|\cdot\|_{\del}))^{1/p} \nonumber\\
& \ \ \ \leq \frac{1}{s}(\log n)^{7}(\log m)^2 + \frac{k}{s}(\log n)^8(\log m) \max_{1\leq i\leq m}\Big(\E_{\del}\max_{1\leq \ell\leq n} \Big(\sum_j \del_{ij}A_{j\ell}^2\Big)^{p}\Big)^{1/p}.
\end{align}
Since $\E\del_{ij}=s/m$ and $A$ is orthogonal, we obtain using symmetrization and Khintchine's inequality,
\begin{align*}
& \Big(\E_{\del}\max_{1\leq \ell\leq n} \Big(\sum_j \del_{ij}A_{j\ell}^2\Big)^{p}\Big)^{1/p} \\
& \qquad \leq \frac{s}{m} + \Big(\E_r\E_{\del}\max_{1\leq \ell\leq n} \Big(\sum_j r_j\del_{ij}A_{j\ell}^2\Big)^{p}\Big)^{1/p} \\
& \qquad \leq \frac{s}{m} + \sqrt{p}\Big(\E_{\del}\max_{1\leq \ell\leq n} \Big(\sum_j \del_{ij}A_{j\ell}^4\Big)^{p/2}\Big)^{1/p} \\
& \qquad \leq \frac{s}{m} + \sqrt{p}\max_{j,\ell}|A_{j\ell}|\Big(\E_{\del}\max_{1\leq \ell\leq n} \Big(\sum_j \del_{ij}A_{j\ell}^2\Big)^{p}\Big)^{1/(2p)}.
\end{align*}
By solving this quadratic inequality, we find
$$\Big(\E_{\del}\max_{1\leq \ell\leq n} \Big(\sum_j \del_{ij}A_{j\ell}^2\Big)^{p}\Big)^{1/p} \lesssim \frac{s}{m} + p\max_{j,\ell}A_{j\ell}^2 \leq \frac{s}{m} + \frac{p}{k\log n}.$$
Combine this bound with \Equation{ga2pSparse} and \Equation{RIPLpEst} to arrive at
\begin{align*}
& \Big(\E_{\del,\si}\sup_{x \in T} \Big|\|\Phi x\|_2^2 - \|x\|_2^2\Big|^p\Big)^{1/p} \\
& \ \ \lesssim  \frac{1}{s}(\log n)^{7}(\log m)^2 + \frac{k}{m}(\log n)^8(\log m) + \frac{p}{s}(\log n)^7(\log m) \\
& \ \ \ \ \ \ + \Big(\frac{1}{s}(\log n)^{7}(\log m)^2 + \frac{k}{m}(\log n)^8(\log m) + \frac{p}{s}(\log n)^7(\log m)\Big)^{1/2} \\
& \ \ \ \ \ \ \ \ \qquad \qquad + \sqrt{\frac{p}{s}} + \frac{p}{s}.
\end{align*}
Since $p\geq \log m$ was arbitrary, the result now follows from Lemma~\ref{lem:MomentsToTails}.
\end{proof}

\section{Sketching constrained least squares programs}\SectionName{cls}

Let us now apply the previous results to constrained least squares minimization. We refer to Appendix~\ref{sec:convex-tools} for any unexplained terminology. Consider $A\in \R^{n\ti d}$ and a sketching matrix $\Phi\in \R^{m\ti n}$. Define $f(x) = \|Ax - b\|_2^2$ and $g(x) = \|\Phi Ax - \Phi b\|_2^2$. Let $\cC\subset \R^d$ be any closed convex set. Let $x_*$ be a minimizer of the constrained least squares program
\begin{equation}
\label{eqn:CLS}
\min f(x) \qquad \mathrm{subject \ to} \qquad x\in \cC
\end{equation}
and let $\hat{x}$ be a minimizer of the associated sketched program
\begin{equation}
\label{eqn:SCLS}
\min g(x) \qquad \mathrm{subject \ to} \qquad x\in \cC.
\end{equation}
Before giving our analysis, we define two quantities introduced in \cite{PiW14}. Given $\cK\subset\R^d$ and $u\in S^{n-1}$ we set
\begin{align*}
Z_1(A,\Phi,\cK) & = \inf_{v\in A\cK\cap S^{n-1}} \|\Phi v\|_2^2 \\
Z_2(A,\Phi,\cK,u) & = \sup_{v \in A\cK\cap S^{n-1}} |\langle \Phi u, \Phi v\rangle - \langle u,v\rangle|.
\end{align*}
We denote the tangent cone of $\cC$ at a point $x$ by $T_{\cC}(x)$ (see Appendix~\ref{sec:convex-tools} for a definition). The first statement in the following lemma is \cite[Lemma 1]{PiW14}. For the convenience of the reader we provide a proof, which is in essence the same as in \cite{PiW14}, with the second case (when $x_*$ is a global minimizer) being a slight modification of the proof there. In what follows we assume $Ax_*\neq b$, since otherwise $f(\hat{x}) = f(x_*) = 0$.
\begin{lemma}
\label{lem:compare}
Define $u=(Ax_*-b)/\|Ax_*-b\|_2$ and set $Z_1=Z_1(A,\Phi,T_{\cC}(x_*))$ and $Z_2=Z_2(A,\Phi,T_{\cC}(x_*),u)$. Then,
$$f(\hat{x})\leq \Big(1+\frac{Z_2}{Z_1}\Big)^2f(x_*).$$
If $x_*$ is a global minimizer of $f$, then
$$f(\hat{x})\leq \Big(1+\frac{Z_2^2}{Z_1^2}\Big)f(x_*).$$
\end{lemma}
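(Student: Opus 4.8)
The plan is to follow the variational argument of Pilanci--Wainwright, extracting everything from the first-order optimality conditions of the two programs. Write $e := A\hat{x} - Ax_*$ for the prediction error. Since $\cC$ is convex and $\hat{x},x_*\in\cC$, the difference $\hat{x}-x_*$ lies in $T_{\cC}(x_*)$, so $e\in AT_{\cC}(x_*)$, and --- assuming $e\ne 0$, the case $e=0$ giving $f(\hat{x})=f(x_*)$ outright --- the unit vector $v:=e/\|e\|_2$ lies in $AT_{\cC}(x_*)\cap S^{n-1}$, which is precisely the set over which $Z_1$ and $Z_2$ are defined. We may also assume $Z_1>0$, since otherwise both bounds are vacuous.

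First I would use the optimality of $\hat{x}$. As $g$ is convex and $\hat{x}$ minimizes it over $\cC$, the first-order condition $\langle\nabla g(\hat{x}),x_*-\hat{x}\rangle\ge 0$ holds; unpacking $\nabla g(x)=2A^*\Phi^*\Phi(Ax-b)$ and using $A(x_*-\hat{x})=-e$, this reads $\langle\Phi(A\hat{x}-b),\Phi e\rangle\le 0$. Writing $A\hat{x}-b = e + (Ax_*-b)$ and $Ax_*-b = \sqrt{f(x_*)}\,u$, $e = \|e\|_2 v$, this gives the key inequality
$$\|\Phi e\|_2^2 \le -\langle\Phi(Ax_*-b),\Phi e\rangle = -\sqrt{f(x_*)}\,\|e\|_2\,\langle\Phi u,\Phi v\rangle .$$
It is crucial to invoke the first-order condition here rather than the weaker $g(\hat{x})\le g(x_*)$, which would only produce a factor-$2$ worse estimate.

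Next I would bring in $Z_2$ together with the optimality of $x_*$. By the definition of $Z_2$ we have $\langle\Phi u,\Phi v\rangle\ge\langle u,v\rangle - Z_2$, hence $-\langle\Phi u,\Phi v\rangle\le Z_2-\langle u,v\rangle$. The first-order optimality of $x_*$ for the original program, $\langle\nabla f(x_*),\hat{x}-x_*\rangle\ge 0$, i.e.\ $\langle Ax_*-b,e\rangle\ge 0$, is exactly $\langle u,v\rangle\ge 0$, so $-\langle\Phi u,\Phi v\rangle\le Z_2$. Plugging back in yields $\|\Phi e\|_2^2\le \sqrt{f(x_*)}\,\|e\|_2\,Z_2$, and combining with $\|\Phi e\|_2^2 = \|e\|_2^2\,\|\Phi v\|_2^2\ge Z_1\|e\|_2^2$ gives the central estimate $\|e\|_2\le (Z_2/Z_1)\sqrt{f(x_*)}$. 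The first assertion then follows from the triangle inequality: $\sqrt{f(\hat{x})} = \|A\hat{x}-b\|_2\le\|e\|_2 + \|Ax_*-b\|_2\le (1+Z_2/Z_1)\sqrt{f(x_*)}$.

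For the second assertion, when $x_*$ is a global minimizer of $f$ we have $\nabla f(x_*)=0$, i.e.\ $A^*(Ax_*-b)=0$, so now $\langle Ax_*-b,e\rangle = \langle A^*(Ax_*-b),\hat{x}-x_*\rangle = 0$ exactly (not merely $\ge 0$). Expanding $f(\hat{x}) = \|e + (Ax_*-b)\|_2^2 = f(x_*) + 2\langle Ax_*-b,e\rangle + \|e\|_2^2 = f(x_*)+\|e\|_2^2$ and inserting $\|e\|_2^2\le (Z_2/Z_1)^2 f(x_*)$ gives $f(\hat{x})\le (1+Z_2^2/Z_1^2)f(x_*)$. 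The only real obstacle is the bookkeeping in the two middle paragraphs: keeping the signs straight when passing between $\langle\Phi u,\Phi v\rangle$ and $\langle u,v\rangle$ via $Z_2$, verifying that $v=e/\|e\|_2$ genuinely belongs to the indexing set $AT_{\cC}(x_*)\cap S^{n-1}$, and --- most importantly --- using the variational (first-order) optimality of $\hat{x}$ rather than the cruder inequality $g(\hat{x})\le g(x_*)$, so that no extraneous constant appears in front of $Z_2/Z_1$.
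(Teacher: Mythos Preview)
Your proof is correct and essentially identical to the paper's own proof; the only cosmetic differences are that the paper sets $e=\hat{x}-x_*\in\R^d$ (so your $e$ is their $Ae$) and finishes the first assertion by expanding the square and applying Cauchy--Schwarz to the cross term, which is equivalent to your triangle inequality.
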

\begin{proof}
Set $e=\hat{x}-x_* \in T_{\cC}(x_*)$ and $w=b-Ax_*$. By optimality of $x_*$, for any $x\in \cC$,
$$\langle\nabla f(x_*),x-x_*\rangle = \langle Ax_*-b,A(x-x_*)\rangle\geq 0.$$
In particular, taking $x=\hat{x}$ gives $\langle w,Ae\rangle\leq 0$. As a consequence,
\begin{equation}
\label{eqn:compareZ2}
\Big\langle \frac{w}{\|w\|_2},\frac{\Phi^*\Phi Ae}{\|Ae\|_2}\Big\rangle \leq \Big\langle \frac{w}{\|w\|_2},\frac{\Phi^*\Phi Ae}{\|Ae\|_2}\Big\rangle - 
\Big\langle \frac{w}{\|w\|_2},\frac{Ae}{\|Ae\|_2}\Big\rangle \leq Z_2.
\end{equation}
By optimality of $\hat{x}$, for any $x\in \cC$,
$$\langle\nabla g(\hat{x}),x-\hat{x}\rangle = \langle \Phi A\hat{x}-\Phi b,\Phi A(x-\hat{x})\rangle\geq 0.$$
Taking $x=x_*$ yields
$$0\geq \langle \Phi A\hat{x} - \Phi b,\Phi Ae\rangle = \langle -\Phi w,\Phi Ae\rangle + \|\Phi Ae\|_2^2.$$
Therefore, by (\ref{eqn:compareZ2}),
$$Z_1\|Ae\|_2^2 \leq \|\Phi Ae\|_2^2 \leq \langle \Phi w,\Phi Ae\rangle \leq Z_2\|w\|_2\|Ae\|_2.$$
In other words,
$$\|Ae\|_2\leq \frac{Z_2}{Z_1}\|w\|_2$$
and by Cauchy-Schwarz,
$$\|A\hat{x}-b\|_2^2 = \|w\|_2^2 + \|Ae\|_2^2 - 2\langle w,Ae\rangle \leq \|w\|_2^2\Big(1+\frac{Z_2}{Z_1}\Big)^2.$$
If $x_*$ is a global minimizer of $f$, then $\nabla f(x_*)=0$ and therefore $\langle w,Ae\rangle=0$ (a similar observation to \cite[Theorem 12]{Sarlos06}). This yields the second assertion.
\end{proof}
Clearly, if $\Phi$ satisfies \Eqsub{jl-condition} for $T=A T_{\cC}(x_*)\cap S^{n-1}$ then $Z_1\geq 1-\eps$. We do not immediately obtain an upper bound for $Z_2$, however, as $u$ is in general not in $A T_{\cC}(x_*)\cap S^{n-1}$. We mend this using the observation in Lemma~\ref{lem:Z2boundGen}. For its proof we recall a chaining result from \cite{Dir13}. Let $(T,\rho)$ be a semi-metric space. Recall that a real-valued process $(X_t)_{t\in T}$ is called subgaussian if for all $s,t\in T$,
\begin{equation*}
\bP(|X_t - X_s|\geq w\rho(t,s)) \leq 2\exp(-w^2) \qquad (w\geq 0).
\end{equation*}
The following result is a special case of \cite[Theorem 3.2]{Dir13}.
\begin{lemma}
\label{lem:supSubg}
If $(X_t)_{t\in T}$ is subgaussian, then for any $t_0\in T$ and $1\leq p<\infty$,
\begin{equation*}
\Big(\E\sup_{t\in T}|X_t - X_{t_0}|^p\Big)^{1/p} \lesssim \ga_2(T,\rho) + \sqrt{p}d_{\rho}(T).
\end{equation*}
\end{lemma}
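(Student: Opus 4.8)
The plan is to prove this by generic chaining; it is exactly the mechanism underlying \cite[Theorem 3.2]{Dir13}, specialized to the case of a single anchor point $t_0$ and increments measured in $L^p$. We may assume $T$ is finite (the general case follows by monotone approximation together with separability of the process). Fix nested subsets $S_0\subset S_1\subset\cdots\subseteq T$ with $|S_0|=1$ and $|S_r|\le 2^{2^r}$ that come within a factor $2$ of the infimum defining $\ga_2(T,\rho)$; write $s_0$ for the unique element of $S_0$, and for $t\in T$ let $\pi_r(t)\in S_r$ satisfy $\rho(t,\pi_r(t))\le 2\rho(t,S_r)$, with $\pi_0(t)=s_0$. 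Since $\|X_{s_0}-X_{t_0}\|_{L^p}\lesssim\sqrt p\,\rho(s_0,t_0)\le\sqrt p\,d_\rho(T)$ by the subgaussian hypothesis, it suffices to bound $\|\sup_{t}|X_t-X_{s_0}|\|_{L^p}$.

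The first ingredient is a routine $L^p$-bound for a maximum of subgaussian variables: if $Y_1,\dots,Y_N$ satisfy $\bP(|Y_i|\ge w\si_i)\le 2e^{-w^2}$ for all $w\ge 0$, then $(\E\max_{i\le N}|Y_i|^p)^{1/p}\lesssim(\max_i\si_i)\bigl(\sqrt{\log(N+1)}+\sqrt p\bigr)$, which one gets by integrating the tail estimate $\bP(\max_i|Y_i|>t)\le\min\{1,\,2N\exp(-t^2/\max_i\si_i^2)\}$. The \emph{key} second ingredient is to truncate the chain at level $\ell:=\ceil{\log_2(p+1)}$, so that $2^\ell\simeq p$. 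Writing $X_t-X_{s_0}=(X_{\pi_\ell(t)}-X_{s_0})+\sum_{r>\ell}(X_{\pi_r(t)}-X_{\pi_{r-1}(t)})$, I would treat the two pieces differently. For the head, $\sup_t|X_{\pi_\ell(t)}-X_{s_0}|$ is a maximum of at most $|S_\ell|\le 2^{2^\ell}$ subgaussian variables, each with scale $\rho(\cdot,s_0)\le d_\rho(T)$, so the max-of-subgaussians bound gives $\|\sup_t|X_{\pi_\ell(t)}-X_{s_0}|\|_{L^p}\lesssim d_\rho(T)\bigl(\sqrt{2^\ell}+\sqrt p\bigr)\lesssim\sqrt p\,d_\rho(T)$. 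For the tail, by the triangle inequality in $L^p$ it suffices to bound, for each $r>\ell$, the maximum over the at most $2^{2^{r+1}}$ pairs $(\pi_r(t),\pi_{r-1}(t))$; the same bound controls this by $\bigl(\sqrt{2^{r+1}}+\sqrt p\bigr)\sup_t\rho(\pi_r(t),\pi_{r-1}(t))$, and since $r>\ell$ forces $\sqrt p\le 2^{r/2}$ while $\rho(\pi_r(t),\pi_{r-1}(t))\lesssim\rho(t,S_{r-1})$ by nesting, summing over $r$ yields $\lesssim\sum_{r\ge 1}2^{r/2}\sup_t\rho(t,S_{r-1})\lesssim\ga_2(T,\rho)$. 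Adding the head and tail contributions gives the claim.

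The main obstacle is precisely this separation of the two terms in the conclusion: a naive chaining estimate only produces $(1+\sqrt p)\,\ga_2(T,\rho)$, because applying the $\sqrt p$ factor at every link of the chain and summing $\sqrt p\sum_r\sup_t\rho(t,S_{r-1})$ costs $\sqrt p\,\ga_2$ rather than $\sqrt p\,d_\rho(T)$. The $p$-dependent truncation — cutting the chain at the level where the entropy scale $2^{r/2}$ overtakes $\sqrt p$, collapsing the initial segment into a single maximum estimated directly against the diameter, and noting that on the tail $\sqrt p$ is absorbed into $2^{r/2}$ — is exactly what replaces $\sqrt p\,\ga_2(T,\rho)$ by $\sqrt p\,d_\rho(T)$. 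Everything else is standard: the tail sums are the usual Dudley/generic-chaining estimates, and the only probabilistic input is the subgaussian increment condition, entering through the elementary max-of-subgaussians moment bound.
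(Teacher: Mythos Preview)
The paper does not prove this lemma; it cites \cite[Theorem 3.2]{Dir13}. Your truncation at level $\ell\simeq\log_2 p$ is exactly the right mechanism for separating $\sqrt{p}\,d_\rho(T)$ from $\ga_2(T,\rho)$, and the head bound is correct. The tail, however, has a real gap: after the triangle inequality in $L^p$ and the level-by-level max-of-subgaussians bound you land on $\sum_{r}2^{r/2}\sup_{t}\rho(t,S_{r-1})$, and you claim this is $\lesssim\ga_2(T,\rho)$. But $\ga_2$ has the supremum \emph{outside} the sum, $\sup_t\sum_r 2^{r/2}\rho(t,S_r)$, and the interchange is not free. Since $|S_r|\le 2^{2^r}$, your sum is bounded below by a dyadic form of the Dudley entropy integral, which can strictly exceed $\ga_2$ (the standard ellipsoid examples exhibit this); near-optimality of $(S_r)$ for $\ga_2$ does not control the interchanged quantity.

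The fix is to union-bound over all links at all levels $r>\ell$ at once rather than level by level in $L^p$: for $u\ge 1$, the event that some link at some level $r>\ell$ exceeds $cu\,2^{r/2}\rho(\pi_r(t),\pi_{r-1}(t))$ has probability $\lesssim\sum_{r>\ell}2^{2^{r+1}}e^{-c^2u^22^r}\lesssim e^{-c'u^22^\ell}$. On its complement, for every fixed $t$ the tail telescopes to at most $cu\sum_r 2^{r/2}\rho(\pi_r(t),\pi_{r-1}(t))$, so the supremum over $t$ now sits correctly outside the sum and gives $\lesssim u\,\ga_2(T,\rho)$. Integrating this tail bound (and using $2^\ell\simeq p$) yields $\|\text{tail}\|_{L^p}\lesssim\ga_2(T,\rho)$. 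This is the route \cite{Dir13} takes; your direct-$L^p$ shortcut is fine on the head but forfeits the $\ga_2$ structure on the tail.
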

In particular, if $T$ contains $n$ elements, then
\begin{equation*}
\Big(\E\sup_{t\in T}|X_t - X_{t_0}|^p\Big)^{1/p} \lesssim (\sqrt{p} + (\log n)^{1/2})d_{\rho}(T).
\end{equation*}
\begin{lemma}
\label{lem:Z2boundGen}
Fix $u\in B_{\ell_2^n}$, $T\subset \R^n$ and let $\Phi$ be the SJLT. Set
$$Z = \sup_{v\in T}|\langle \Phi u,\Phi v\rangle - \langle u,v\rangle|.$$
For any $p\geq 1$,
$$(\E_{\del,\si}Z^p)^{1/p} \lesssim \Big(\sqrt{\frac{p}{s}} + 1\Big)\Big((\E_{\del}\ga_2^p(T,\|\cdot\|_{\del}))^{1/p} + \sqrt{p}(\E_{\del}d_{\del}^p(T))^{1/p}\Big).$$
\end{lemma}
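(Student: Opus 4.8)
The plan is to condition on $\delta$, realize $Z$ as the supremum of a centered second‑order Rademacher chaos in $\si$ whose increments live in two metrics that are scalar multiples of one another, run generic chaining, and finally integrate over $\delta$. Unlike the quadratic estimate \eqref{RIPExpect} for $\|\Phi x\|_2^2-\|x\|_2^2$, here the bilinearity of $\langle\Phi u,\Phi v\rangle$ (it is linear in $v$) keeps the bound first‑order in the chaining quantities, which is why no $\ga_2^2$‑term appears.

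\emph{Reduction and increment estimate.} Since $\Phi u=A_{\del,u}\si$ and $\Phi v=A_{\del,v}\si$, for fixed $\delta$ we have $\langle\Phi u,\Phi v\rangle=\si^{T}M_v\si$ with $M_v:=A_{\del,u}^{T}A_{\del,v}$, and $\E_\si[\si^{T}M_v\si]=\Tr(M_v)=\langle A_{\del,u},A_{\del,v}\rangle_F=\langle u,v\rangle$ (by $\sum_i\del_{ij}=s$; this is just the polarization of $\E\|\Phi x\|_2^2=\|x\|_2^2$). Hence $X_v:=\langle\Phi u,\Phi v\rangle-\langle u,v\rangle=\si^{T}M_v\si-\Tr(M_v)$ is, for fixed $\delta$, centered in $\si$, and $X_v-X_{v'}=\si^{T}M_{v-v'}\si-\Tr(M_{v-v'})$ by linearity of $v\mapsto M_v$. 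After symmetrizing $M_{v-v'}$ (which increases neither $\|\cdot\|_F$ nor $\|\cdot\|_{\ell_2\to\ell_2}$), the standard moment bound for an off‑diagonal Rademacher chaos gives $\|X_v-X_{v'}\|_{L^q_\si}\lesssim\sqrt q\,\|M_{v-v'}\|_F+q\,\|M_{v-v'}\|_{\ell_2\to\ell_2}$ for $q\geq1$. Combining $\|BC\|_F\leq\|B\|_F\|C\|_{\ell_2\to\ell_2}$ with $\|A_{\del,x}\|_F=\|x\|_2$, $\|A_{\del,x}\|_{\ell_2\to\ell_2}=\|x\|_\del$ (cf.~\eqref{AdelxNorm}) and $u\in B_{\ell_2^n}$ (so $\|u\|_2\leq1$, $\|u\|_\del\leq s^{-1/2}\|u\|_2\leq s^{-1/2}$) yields $\|M_{v-v'}\|_F\leq\|u\|_2\|v-v'\|_\del\leq\|v-v'\|_\del$ and $\|M_{v-v'}\|_{\ell_2\to\ell_2}\leq\|u\|_\del\|v-v'\|_\del\leq s^{-1/2}\|v-v'\|_\del$, so
\[
\|X_v-X_{v'}\|_{L^q_\si}\lesssim\Big(\sqrt q+\tfrac{q}{\sqrt s}\Big)\|v-v'\|_\del\qquad(q\geq1).
\]

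\emph{Generic chaining.} The sub‑Gaussian metric is $\rho_2=\|\cdot\|_\del$ and the sub‑exponential metric is $\rho_1=s^{-1/2}\|\cdot\|_\del=s^{-1/2}\rho_2$, a fixed scalar multiple of $\rho_2$. Feeding the increment bound into the mixed‑tail generic chaining estimate \cite[Theorem 3.2]{Dir13} (of which the purely sub‑Gaussian Lemma~\ref{lem:supSubg} is the case $\rho_1=0$): along a chain truncated at depth $r\lesssim\log p$ one has $2^r\lesssim\sqrt p\cdot2^{r/2}$, so the would‑be $\ga_1(T,\rho_1)$‑term is dominated by $\sqrt p\,\ga_2(T,\rho_1)=\sqrt{p/s}\,\ga_2(T,\|\cdot\|_\del)$ and the $\rho_1$‑diameter term contributes $p\,s^{-1/2}d_\del(T)$. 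Equivalently, for the $p$‑th moment one only uses $q\leq p$, and then the increment bound reads $\lesssim\sqrt q\,(1+\sqrt{p/s})\|v-v'\|_\del$, a sub‑Gaussian condition for the metric $(1+\sqrt{p/s})\|\cdot\|_\del$, to which Lemma~\ref{lem:supSubg} applies. Either way, fixing $\delta$ and any $v_0\in T$,
\[
\Big\|\sup_{v\in T}|X_v-X_{v_0}|\Big\|_{L^p_\si}\lesssim\Big(1+\sqrt{p/s}\Big)\Big(\ga_2(T,\|\cdot\|_\del)+\sqrt p\,d_\del(T)\Big),
\]
and the single term $|X_{v_0}|=|X_{v_0}-X_0|$ is controlled by the increment bound with $v'=0$, giving $\|X_{v_0}\|_{L^p_\si}\lesssim(1+\sqrt{p/s})\sqrt p\,\|v_0\|_\del$, a lower‑order contribution in the bounded regime $T\subseteq S^{n-1}$ in which the lemma is applied. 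Since $Z=\sup_{v\in T}|X_v|\leq\sup_{v\in T}|X_v-X_{v_0}|+|X_{v_0}|$, this bounds $\|Z\|_{L^p_\si}$ for each fixed $\delta$.

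\emph{Integrating over $\delta$, and the main obstacle.} Taking $(\E_\del(\cdot)^p)^{1/p}$ of the conditional bound and using $(a+b)^p\leq2^p(a^p+b^p)$ — the quantities $\ga_2(T,\|\cdot\|_\del)$ and $d_\del(T)$ being random in $\delta$ — gives
\[
(\E_{\del,\si}Z^p)^{1/p}\lesssim\Big(1+\sqrt{p/s}\Big)\Big((\E_\del\ga_2^p(T,\|\cdot\|_\del))^{1/p}+\sqrt p\,(\E_\del d_\del^p(T))^{1/p}\Big),
\]
which is the assertion. The one step I expect to require real care is the generic chaining: getting the prefactor to come out exactly as $1+\sqrt{p/s}$, rather than an uncontrolled $\ga_1$‑term, hinges on $\rho_1$ being merely $s^{-1/2}$ times $\rho_2$, so that truncating the chain at depth $\sim\log p$ absorbs the sub‑exponential part. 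The base‑point bookkeeping is routine (and, being pedantic, it reveals that for a fully general $T\subset\R^n$ the clean bound would want the $\|\cdot\|_\del$‑radius of $T$ in place of its diameter — harmless in the applications, where $T\subseteq S^{n-1}$).
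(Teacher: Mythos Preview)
Your setup is fine and matches the paper: $Z=\sup_{v\in T}|\si^{*}M_v\si-\Tr M_v|$ with $M_v=A_{\del,u}^{*}A_{\del,v}$ linear in $v$, and the Hanson--Wright increment bound $\|X_v-X_{v'}\|_{L^q_\si}\lesssim\sqrt{q}\,\|M_{v-v'}\|_F+q\,\|M_{v-v'}\|$ together with $\|M_{v-v'}\|_F\le\|v-v'\|_\del$ and $\|M_{v-v'}\|\le s^{-1/2}\|v-v'\|_\del$ are all correct. The gap is in the chaining step. Neither of your two ``$\ga_1$--elimination'' arguments is valid. The second one fails because Lemma~\ref{lem:supSubg} (and any sub-Gaussian chaining result) requires the sub-Gaussian increment condition for \emph{all} $q$, not only $q\le p$: at chain depth $r$ the union bound over $2^{2^r}$ links forces you to use moments of order $q\simeq 2^r$, and for $2^r>p$ these moments are genuinely governed by the sub-exponential part. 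The first argument has the same defect: truncating at $r\lesssim\log p$ handles the shallow levels, but the deep levels $2^r>p$ still contribute $\sum_{2^r>p}2^r s^{-1/2}\rho_\del(\pi_r,\pi_{r-1})$, a tail of $s^{-1/2}\ga_1(T,\|\cdot\|_\del)$ that is in general \emph{not} dominated by $(1+\sqrt{p/s})\ga_2(T,\|\cdot\|_\del)$. (Take an admissible sequence with $\rho_\del(\pi_r,\pi_{r-1})\simeq 2^{-r/2}$ for $r\le R$ and zero afterward: then $\ga_2\simeq R$ but the deep $\ga_1$--tail is $\simeq 2^{R/2}/\sqrt{s}$, which blows up for large $R$.) The proportionality $\rho_1=s^{-1/2}\rho_2$ is not enough to kill $\ga_1$.

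The idea you are missing is \emph{decoupling}. The paper writes $(\E_\si Z^p)^{1/p}\le 4(\E_{\si,\si'}\sup_{v\in T}|\si^{*}A_{\del,u}^{*}A_{\del,v}\si'|^p)^{1/p}$; then, for fixed $\si$, the process $v\mapsto \si^{*}A_{\del,u}^{*}A_{\del,v}\si'$ is a \emph{linear} Rademacher sum in $\si'$ and hence genuinely sub-Gaussian (for all tails) with respect to the metric $\|A_{\del,u}\si\|_2\,\|v-v'\|_\del$. Lemma~\ref{lem:supSubg} now applies cleanly, giving a pure $\ga_2$--bound with prefactor $\|A_{\del,u}\si\|_2$; the factor $1+\sqrt{p/s}$ then emerges from $(\E_\si\|A_{\del,u}\si\|_2^p)^{1/p}\lesssim\|A_{\del,u}\|_F+\sqrt{p}\,\|A_{\del,u}\|\le 1+\sqrt{p/s}$ (Lemma~\ref{lem:Lpl2}). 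In other words, decoupling trades the second-order chaos for a conditionally first-order (sub-Gaussian) process at the price of a random scalar that is controlled \emph{separately} in $L^p_\si$ --- this is exactly what prevents the $\ga_1$--term from ever appearing, and it is a structural gain that cannot be recovered from the increment bounds alone.
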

\begin{proof}
With the notation (\ref{eqn:AdelxDef}) we have $\Phi x=A_{\del,x}\si$ for any $x\in \R^n$. Since $\E_{\si} \Phi^*\Phi=I$ we find
\begin{align*}
Z = \sup_{v\in T}|\langle u,(\Phi^*\Phi-I)v\rangle| = \sup_{v\in T} |\si^* A_{\del,u}^*A_{\del,v}\si - \E_{\si}(\si^* A_{\del,u}^*A_{\del,v}\si)|.
\end{align*}
Let $\si'$ be an independent copy of $\si$. By decoupling (see (\ref{eqn:decRad})),
$$(\E_{\si}Z^p)^{1/p} \leq 4 (\E_{\si,\si'}\sup_{v\in T} |\si^* A_{\del,u}^*A_{\del,v}\si'|^p)^{1/p}.$$
For any $v_1,v_2\in T$, Hoeffding's inequality implies that for all $w\geq 0$
\begin{align*}
& \bP_{\si'}(|\si^* A_{\del,u}^*A_{\del,v_1}\si' - \si^* A_{\del,u}^*A_{\del,v_2}\si'| \\
& \qquad \qquad \qquad \qquad \geq \sqrt{w}\|\si^*A_{\del,u}^*\|_2 \ \|A_{\del,v_1}-A_{\del,v_2}\|_{\ell_2\to\ell_2}) \leq 2e^{-w}
\end{align*} 
and therefore $v\mapsto \si^* A_{\del,u}^*A_{\del,v}\si'$ is a subgaussian process. By Lemma~\ref{lem:supSubg} (and (\ref{eqn:AdelxNorm})),
\begin{align*}
& (\E_{\si'}\sup_{v\in T} |\si^* A_{\del,u}^*A_{\del,v}\si'|^p)^{1/p} \\
& \qquad \lesssim \|A_{\del,u}\si\|_2 \ga_2(T,\|\cdot\|_{\del}) + \sqrt{p}\|A_{\del,u}\si\|_2 d_{\del}(T).
\end{align*}
Taking the $L_p(\Om_{\si})$-norm on both sides and using Lemma~\ref{lem:Lpl2} of Appendix~\ref{sec:probTools} we find
\begin{align*}
(\E_{\si}Z^p)^{1/p} & \lesssim (\E_{\si}\|A_{\del,u}\si\|_2^p)^{1/p}\Big(\ga_2(T,\|\cdot\|_{\del}) + \sqrt{p}d_{\del}(T)\Big) \\
& \lesssim (\sqrt{p}\|A_{\del,u}\| + \|A_{\del,u}\|_F)\Big(\ga_2(T,\|\cdot\|_{\del}) + \sqrt{p}d_{\del}(T)\Big) \\
& \leq \Big(\sqrt{\frac{p}{s}} + 1\Big)\Big(\ga_2(T,\|\cdot\|_{\del}) + \sqrt{p}d_{\del}(T)\Big).
\end{align*}
Now take the $L_p(\Om_{\del})$-norm on both sides to obtain the result. 
\end{proof}

\subsection{Unconstrained case}

We first consider unconstrained least squares minimization, i.e., the constraint set is $\cC=\R^d$. 
\begin{theorem}
\label{thm:LSuc}
Let $\mathrm{Col}(A)$ be the column space of $A$ and let 
$$\mu(A)=\max_{1\leq j\leq n} \|P_{\mathrm{Col}(A)}e_j\|_2$$
be its largest leverage score. Set $\cC=\R^d$ and let $x_*$ and $\hat{x}$ be minimizers of (\ref{eqn:CLS}) and (\ref{eqn:SCLS}), respectively. Let $r(A)$ be the rank of $A$. Suppose that
\begin{align*}
m &\gtrsim \eps^{-1}((\log \eta^{-1})(r(A) + \log m)(\log m)^2 + (\log \eta^{-1})^2 r(A)) \\
s &\gtrsim \eps^{-1}\mu(A)^2((\log \eta^{-1})^2 + (\log \eta^{-1})(\log m)^3) \\
 & \qquad \qquad + \eps^{-1/2}\mu(A)((\log \eta^{-1})^{3/2} + (\log \eta^{-1})(\log m)^{3/2})
\end{align*}
and $\eta\leq \frac{1}{m}$. Then, with probability at least $1-\eta$,
\begin{equation}
\label{eqn:LSuc}
f(\hat{x})\leq (1-\eps)^{-2}f(x_*).
\end{equation}
\end{theorem}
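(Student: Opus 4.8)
The plan is to reduce to the subspace estimates of \Theorem{JLsubspace} through the deterministic comparison in \Lemma{compare}. Since $\cC=\R^d$, the tangent cone $T_\cC(x_*)$ is all of $\R^d$, so $A T_\cC(x_*)\cap S^{n-1}=\mathrm{Col}(A)\cap S^{n-1}=:T$ is exactly the unit sphere of the $r(A)$-dimensional subspace $\mathrm{Col}(A)$, whose largest leverage score is $\mu(A)$. Moreover $x_*$ is a \emph{global} minimizer of $f$, so I would apply the second assertion of \Lemma{compare} to obtain $f(\hat x)\le(1+Z_2^2/Z_1^2)f(x_*)$, where $Z_1=Z_1(A,\Phi,T_\cC(x_*))$ and $Z_2=Z_2(A,\Phi,T_\cC(x_*),u)$ with $u=(Ax_*-b)/\|Ax_*-b\|_2$ (we may assume $Ax_*\ne b$, the remaining case being trivial once $Z_1>0$). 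Since $(1-\eps)^{-2}\ge 1+2\eps$ for $\eps\in(0,1/2)$, it then suffices to produce an event of probability at least $1-\eta$ on which simultaneously $Z_1\ge\tfrac12$ and $Z_2\le\tfrac12\sqrt\eps$: on that event $Z_2^2/Z_1^2\le\eps\le 2\eps$, and replacing $\eps$ by a small constant multiple of itself gives $f(\hat x)\le(1-\eps)^{-2}f(x_*)$. It is only this weaker requirement $Z_2\lesssim\sqrt\eps$, rather than $Z_2\lesssim\eps$, that is needed; this is precisely why $m$ and $s$ scale like $\eps^{-1}$ here rather than $\eps^{-2}$.

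For the event $\{Z_1\ge\tfrac12\}$ I would use $Z_1\ge 1-\eps_{\Phi,T}$ and apply \Theorem{JLsubspace} to $\mathrm{Col}(A)$ with dimension $r(A)$, distortion parameter $\tfrac12$, and failure probability $\eta/2$; one checks directly that the conditions \Equation{two-ten} with these parameters are implied by the (much stronger) hypotheses of the present theorem, using $\eps<1$ and $\log(\eta^{-1})\ge 1$. On this event $\Phi$ is in particular injective on $\mathrm{Col}(A)$, which also handles the degenerate case $Ax_*=b$.

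For the event $\{Z_2\le\tfrac12\sqrt\eps\}$ I would apply \Lemma{Z2boundGen} with this $T$ and $u\in B_{\ell_2^n}$, giving for every $p\ge 1$
\[ (\E_{\del,\si}Z_2^p)^{1/p}\lesssim\Big(\sqrt{p/s}+1\Big)\Big((\E_\del\ga_2^p(T,\|\cdot\|_\del))^{1/p}+\sqrt p\,(\E_\del d_\del^p(T))^{1/p}\Big), \]
and then substitute the subspace bounds \Equation{ga2Subs} and \Equation{diamSubs}, using $(\E X^p)^{1/p}\le(\E X^{2p})^{1/(2p)}$, taking square roots, and optimizing over the free parameter in \Equation{ga2Subs} (whose chosen value will make the factor $\log^2(d/\epsilon)$ there no larger than $O((\log m)^2)$, since the hypotheses force $m\gtrsim\max\{r(A),\eps^{-1}\}$). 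Taking $p\simeq\log m+\log(\eta^{-1})\simeq\log(\eta^{-1})$ (admissible since $\eta\le1/m$, and needed for \Equation{ga2Subs}) and multiplying out the prefactor $(\sqrt{p/s}+1)$ leaves a short list of cross-terms that must each be bounded by $\sqrt\eps$, namely
\[ \sqrt{\tfrac{(r(A)+\log m)(\log m)^2}{m}},\ \sqrt{\tfrac{p(r(A)+\log m)(\log m)^2}{sm}},\ \sqrt{\tfrac{p\,r(A)}{m}},\ p\sqrt{\tfrac{r(A)}{sm}},\ \sqrt{\tfrac{p(\log m)^3\mu(A)^2}{s}},\ \tfrac{p(\log m)^{3/2}\mu(A)}{s},\ \tfrac{p^2\mu(A)^2}{s},\ \tfrac{p^{3/2}\mu(A)}{s}. \]
Each of the required inequalities is exactly one of the lower bounds on $m$ or $s$ in the statement; for instance $p\sqrt{r(A)/(sm)}\lesssim\sqrt\eps$ reduces, via $p\simeq\log\eta^{-1}$ and $s\ge1$, to $m\gtrsim\eps^{-1}(\log\eta^{-1})^2 r(A)$, and $p^{3/2}\mu(A)/s\lesssim\sqrt\eps$ to $s\gtrsim\eps^{-1/2}\mu(A)(\log\eta^{-1})^{3/2}$. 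Hence $(\E_{\del,\si}Z_2^p)^{1/p}\lesssim\sqrt\eps$, and \Lemma{MomentsToTails} with $w=\log(\eta^{-1})\ge\log m$ yields $Z_2\lesssim\sqrt\eps$ with probability at least $1-\eta/2$. A union bound over the two events, fed back into \Lemma{compare}, then completes the proof.

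The genuinely laborious part is this last step: splitting the $(\sqrt{p/s}+1)$ prefactor, choosing the free parameter in \Equation{ga2Subs} correctly, and matching each of the eight cross-terms, one by one, against the somewhat intricate mixture of $\eps^{-1}$-type and $\eps^{-1/2}$-type terms (with their powers of $\log(\eta^{-1})$ and $\log m$) in the hypotheses. There is no conceptual difficulty beyond that—everything else is a direct invocation of \Lemma{compare}, \Theorem{JLsubspace}, \Lemma{Z2boundGen}, and \Lemma{MomentsToTails}.
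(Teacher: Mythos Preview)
Your proposal is correct and follows essentially the same route as the paper: reduce via \Lemma{compare} (second assertion, since $x_*$ is a global minimizer), control $Z_1$ by applying \Theorem{JLsubspace} to $E=\mathrm{Col}(A)$, and control $Z_2$ by feeding the subspace moment bounds \Equation{ga2Subs} and \Equation{diamSubs} into \Lemma{Z2boundGen} and then invoking \Lemma{MomentsToTails} with $w=\log(\eta^{-1})$. The only cosmetic differences are that the paper targets $Z_1\ge 1-\sqrt\eps$ and $Z_2\le\sqrt\eps$ (rather than your $Z_1\ge\tfrac12$, $Z_2\lesssim\sqrt\eps$) and closes with the elementary inequality $1+\eps/(1-\sqrt\eps)^2\le(1-\eps)^{-2}$; also, in your list of eight cross-terms the seventh should read $p\,\mu(A)/\sqrt{s}$ rather than $p^2\mu(A)^2/s$.
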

\begin{proof}
Set $Z_1=Z_1(A,\Phi,T_{\cC}(x_*))$ and $Z_2=Z_2(A,\Phi,T_{\cC}(x_*),u)$. Observe that $T_{\cC}(x_*)=\R^d$. Applying Theorem~\ref{thm:JLsubspace} for the $r(A)$-dimensional subspace $E=\mathrm{Col}(A)$ immediately yields that $Z_1\geq 1-\sqrt{\eps}$ with probability at least $1-\eta$. Setting $T=A T_{\cC}(x_*)\cap S^{n-1}=\mathrm{Col}(A)\cap S^{n-1}$ in Lemma~\ref{lem:Z2boundGen} and subsequently using \Eqsub{ga2Subs} and \Eqsub{diamSubs} yields, for any $p\geq \log(m)$,
\begin{align*}
(\E_{\del,\si} Z_2^p)^{1/p} & \lesssim \Big(\sqrt{\frac{p}{s}} + 1\Big)\Big((\E_{\del}\ga_2^p(T,\|\cdot\|_{\del}))^{1/p} + \sqrt{p}(\E_{\del}d_{\del}^p(T))^{1/p}\Big) \\
& \lesssim \Big(\sqrt{\frac{p}{s}} + 1\Big)\Big[\frac{(\sqrt{r(A)}+\log^{1/2}(m))\log(m)}{\sqrt{m}} + \sqrt{\frac{pr(A)}{m}}  \\
& \qquad \qquad \qquad \qquad \qquad \qquad  + \frac{\sqrt{p}\log^{3/2}(m)}{\sqrt{s}}\mu(A) + \frac{p}{\sqrt{s}}\mu(A)\Big].
\end{align*}
Applying Lemma~\ref{lem:MomentsToTails} (and setting $w=\log(\eta^{-1})$ in (\ref{eqn:MomentsToTails})) shows that $Z_2\leq \sqrt{\eps}$ with probability at least $1-\eta$. The second statement in Lemma~\ref{lem:compare} now implies that with probability at least $1-2\eta$, 
$$f(\hat{x})\leq \Big(1+\frac{\eps}{(1-\sqrt{\eps})^2}\Big)f(x_*) \leq (1-\eps)^{-2}f(x_*).$$ 
To see the last inequality, note that it is equivalent to
$$-\eps+4\eps^{3/2}-3\eps^2-2\eps^{5/2}+2\eps^3\leq 0.$$
Clearly this is satisfied for $\eps$ small enough (since then $-\eps$ is the leading term). In fact, one can verify by calculus that this holds for $\eps\leq 1/10$, which we may assume without loss of generality.
\end{proof}
If $\Phi$ is a full random sign matrix (i.e., $s=m$) then it follows from our proof that (\ref{eqn:LSuc}) holds with probability at least $1-\eta$ if
$$m\gtrsim \eps^{-1}(r(A)+\log(\eta^{-1})).$$
This bound on $m$ is new, and was also recently shown in \cite{Woodruff14}. Previous works \cite{Sarlos06,KN14} allowed either $m \gtrsim \eps^{-2}(r(A) + \log(\eta^{-1}))$ or $m\gtrsim \eps^{-1} r(A)\log(\eta^{-1})$. Theorem~\ref{thm:LSuc} substantially improves $s$ while maintaining $m$ up to logarithmic factors.

\subsection{$\ell_{2,1}$-constrained case}
\label{sec:l21Constraint}

Throughout this section, we set $d=bD$. For $x=(x_{B_1},\ldots,x_{B_b})\in \R^{d}$ consisting of $b$ blocks, each of dimension $D$, we define its $\ell_{2,1}$-norm by
\begin{equation}
\label{eqn:l21NormDef}
\|x\|_{2,1}:= \|x\|_{\ell_1^b(\ell_2^D)} = \sum_{\ell=1}^b \|x_{B_{\ell}}\|_2.
\end{equation}
The corresponding dual norm is
$$\|x\|_{2,\infty} := \|x\|_{\ell_{\infty}^b(\ell_2^D)} = \max_{1\leq \ell\leq b} \|x_{B_{\ell}}\|_2.$$
In this section we study the effect of sketching on the $\ell_{2,1}$-constrained least squares program 
$$\min \|Ax-b\|_2^2 \qquad \mathrm{subject \ to} \qquad \|x\|_{2,1}\leq R,$$
which is Eq.~(\ref{eqn:CLS}) for $\cC=\{x\in \R^d \ : \ \|x\|_{2,1}\leq R\}$. In the statistics literature, this is called the \emph{group Lasso} (with non-overlapping groups of equal size). The $\ell_{2,1}$-constraint encourages a block sparse solution, i.e., a solution which has few blocks containing non-zero entries. We refer to e.g.\ \cite{BuG11} for more information. In the special case $D=1$ the program reduces to
$$\min \|Ax-b\|_2^2 \qquad \mathrm{subject \ to} \qquad \|x\|_1\leq R,$$
which is the well-known \emph{Lasso} \cite{Tibshirani96}.\par
To formulate our results we consider two norms on $\R^{n\ti d}$, given by  
\begin{equation}
\label{eqn:blockNormA}
\tnorm{A} := \max_{1\leq \ell\leq b}\Big(\sum_{j=1}^n\sum_{k\in B_{\ell}}|A_{jk}|^2\Big)^{1/2}
\end{equation}
and
\begin{equation*}
\|A\|_{\ell_{2,1}\rightarrow \ell_{\infty}} = \max_{1\leq j\leq n}\max_{1\leq \ell\leq b}\|(A_{jk})_{k\in B_{\ell}}\|_2. 
\end{equation*}
\begin{lemma}
\label{lem:gamma2pEst}
Let $A\in\R^{n\ti d}$ and consider any set $\cK\subset\R^d$. If $p\geq \log m$, then
\begin{align*}
& (\E_{\del}\ga_2^{2p}(A\cK\cap S^{n-1},\|\cdot\|_{\del}))^{1/p} \\
& \qquad \lesssim \al\Big[\sup_{x\in \cK \ : \ \|Ax\|_2=1}\|x\|_{2,1}^2\Big] \Big(\frac{1}{s}(\log b + p)\|A\|_{\ell_{2,1}\rightarrow \ell_{\infty}}^2 + \frac{1}{m} \tnorm{A}^2\Big).
\end{align*}
where $\al = (\log n)^6(\log m)^2(\log b)^2$. 
\end{lemma}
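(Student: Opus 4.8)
The plan is to reduce everything to the type-$2$ estimate of Lemma~\ref{lem:BoN410}. Set $\rho := \sup_{x\in\cK,\,\|Ax\|_2=1}\|x\|_{2,1}$; then any $x = Ay\in A\cK$ with $\|x\|_2=1$ has $\|y\|_{2,1}\le\rho$, so $A\cK\cap S^{n-1}\subseteq K$, where $K := B_{\ell_2^n}\cap\rho A(B_{2,1})$ and $B_{2,1}$ is the unit ball of $\ell_1^b(\ell_2^D)$. By monotonicity of $\ga_2$ it suffices to bound $\ga_2(K,\|\cdot\|_{\del})$, and I would apply Lemma~\ref{lem:BoN410} with the dominating seminorm $\|\cdot\| := \tnorm{\cdot}_K$, the support function of $K$; since $T\subseteq K$ this indeed satisfies $\tnorm{\cdot}_T\le\tnorm{\cdot}_K$. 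A routine polarity computation gives $K^\circ = \overline{\mathrm{conv}}\big(B_{\ell_2^n}\cup\tfrac1\rho\{y:\|A^*y\|_{2,\infty}\le1\}\big)$, so $\tnorm{\cdot}_K$ is the infimal convolution of $\|\cdot\|_2$ and $\rho\|A^*\cdot\|_{2,\infty}$; in particular $\tnorm{z}_K\le\rho\|A^*z\|_{2,\infty}$ for every $z$.

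Write $w_{j\ell} := \sum_{k\in B_\ell}A_{jk}^2$, so that $w_{j\ell}\le\|A\|_{\ell_{2,1}\to\ell_\infty}^2$ and $\sum_j w_{j\ell}\le\tnorm{A}^2$, and set $M := \max_{i\le m,\,\ell\le b}\big(\sum_j\del_{ij}w_{j\ell}\big)^{1/2}$. I claim all three inputs of Lemma~\ref{lem:BoN410} collapse to $\rho$ and $M$ up to logarithmic factors. (i) Since $T_2(\|B\cdot\|_X)\le T_2(X)$ for any linear $B$, and the type-$2$ constant of an infimal convolution is at most the sum of the two, $T_2(\tnorm{\cdot}_K)\le 1 + T_2(\ell_\infty^b(\ell_2^D))\lesssim\sqrt{\log b}$; the last bound is the standard estimate $\E_g\max_{\ell}\|\sum_\alpha g_\alpha x_\alpha^{(\ell)}\|_2\lesssim\sqrt{\log b}\,(\sum_\alpha\max_\ell\|x_\alpha^{(\ell)}\|_2^2)^{1/2}$, obtained by passing to the $(2\log b)$-th moment and using Gaussian concentration of $v\mapsto\|v\|_2$. (ii) Using $\tnorm{\cdot}_K\le\rho\|A^*\cdot\|_{2,\infty}$, the $\ell$-th block of $A^*(\sum_j g_j\del_{ij}e_j)$ is a centered Gaussian in $\R^D$ with $\E\|\cdot\|_2^2 = \sum_j\del_{ij}w_{j\ell}$, so the same moment argument gives $\max_i\E_g\btnorm{\sum_j g_j\del_{ij}e_j}_K\lesssim\rho\sqrt{\log b}\,M$. (iii) For $x\in B_{J_i}$, Cauchy--Schwarz gives $\|(A^*x)_{B_\ell}\|_2^2\le\|x\|_2^2\sum_j\del_{ij}w_{j\ell}\le M^2$, hence $d_{\tnorm{\cdot}_K}(\bigcup_i B_{J_i})\le 2\rho M$.

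Feeding (i)--(iii) into Lemma~\ref{lem:BoN410} gives $\ga_2(K,\|\cdot\|_{\del})\lesssim s^{-1/2}\sqrt{\log b}\,(\log n)^3(\log m)^{1/2}\,\rho M\sqrt{\log b+\log m}$, hence $\ga_2^2(K,\|\cdot\|_{\del})\lesssim s^{-1}(\log n)^6(\log m)^2(\log b)^2\,\rho^2M^2$. It remains to bound $(\E_\del M^{2p})^{1/p}$. For fixed $i,\ell$ the quantity $\sum_j\del_{ij}w_{j\ell}$ is a nonnegative sum of independent terms of mean $\tfrac sm w_{j\ell}$; symmetrization, Khintchine, and solving the resulting quadratic inequality — exactly the argument behind \Equation{sqrt-trick} (see also \Equation{ga2pSparse}) — give $\|\sum_j\del_{ij}w_{j\ell}\|_{L^q_\del}\lesssim\tfrac sm\tnorm{A}^2 + q\|A\|_{\ell_{2,1}\to\ell_\infty}^2$ for $q\ge1$. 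Taking the maximum over the $bm$ pairs $(i,\ell)$ with exponent $q = \max(p,\log b)$ (so $(bm)^{1/q}\lesssim1$, using $p\ge\log m$) yields $(\E_\del M^{2p})^{1/p}\lesssim\tfrac sm\tnorm{A}^2 + (p+\log b)\|A\|_{\ell_{2,1}\to\ell_\infty}^2$. Combining the two displays and noting $\rho^2 = \big[\sup_{x\in\cK,\,\|Ax\|_2=1}\|x\|_{2,1}^2\big]$ proves the lemma with $\al = (\log n)^6(\log m)^2(\log b)^2$.

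No single step is deep: Lemma~\ref{lem:BoN410} does the structural work. The delicate part is precisely the organization above — choosing $K$ and the dominating seminorm $\tnorm{\cdot}_K$ so that \emph{both} the Gaussian term and the diameter term of Lemma~\ref{lem:BoN410} reduce to $\rho M$ times logarithmic factors, since this is what makes the $L^p_\del$-estimate factor through the single random quantity $M^2$ and the square-root trick, producing the clean shape $\tfrac1m\tnorm{A}^2 + \tfrac1s(\log b + p)\|A\|_{\ell_{2,1}\to\ell_\infty}^2$. One must also be mildly careful that $p$ is only assumed $\ge\log m$, which forces raising the exponent to $\log b$ in the last maximum and is what produces the additive $\log b$ (rather than $\log(bm)$) inside the bound.
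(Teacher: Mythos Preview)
Your argument is correct and follows essentially the same route as the paper: apply Lemma~\ref{lem:BoN410} with a $\|\cdot\|_{2,\infty}$-based dominating seminorm, so that $T_2\lesssim\sqrt{\log b}$ and both the Gaussian and diameter inputs reduce to $\rho M$, then control $(\E_\del M^{2p})^{1/p}$ by symmetrization, Khintchine and the square-root trick. The only cosmetic difference is that the paper takes the dominating seminorm to be $\|x\|=\rho\|A^*x\|_{2,\infty}$ directly (so $T_2\lesssim\sqrt{\log b}$ is immediate from $T_2(\ell_\infty^b(\ell_2^D))$), whereas you pass through $K=B_{\ell_2^n}\cap\rho A(B_{2,1})$ and $\tnorm{\cdot}_K$, which then forces the infimal-convolution argument for the type-$2$ constant; this detour is valid but unnecessary, since you bound $\tnorm{\cdot}_K$ by $\rho\|A^*\cdot\|_{2,\infty}$ in steps (ii)--(iii) anyway. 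Your handling of the $\max_{i,\ell}$ (fixing a pair, then raising the exponent to $\max(p,\log b)$) is a harmless variant of the paper's version, which instead keeps $\max_\ell$ inside the Khintchine step.
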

\begin{proof}
We deduce the assertion from Lemma~\ref{lem:BoN410}. By H\"{o}lder's inequality,
$$\sup_{y\in A\cK\cap S^{n-1}}\langle x,y\rangle = \sup_{v \in \cK \ : \ \|Av\|_2=1} \langle A^*x,v\rangle\leq \|A^*x\|_{2,\infty} \sup_{v\in \cK \ : \ \|Av\|_2=1}\|v\|_{2,1}.$$
We define $\|x\|$ to be the expression on the right hand side. Observe that for any sequence $(x_k)_{k\geq 1}$ in $\R^n$,
\begin{align*}
\E_g\Big\|\sum_{k\geq 1} g_k A^*x_k\Big\|_{2,\infty} & = \E_g\max_{1\leq \ell\leq b}\Big\|\sum_{k\geq 1} g_k(A^*x_k)_{B_{\ell}}\Big\|_2 \\
& \leq (\log b)^{1/2} \max_{1\leq \ell\leq b} \Big(\sum_{k\geq 1} \|(A^*x_k)_{B_{\ell}}\|_2^2\Big)^{1/2} \\
& \leq (\log b)^{1/2} \Big(\sum_{k\geq 1} \|A^*x_k\|_{2,\infty}^2\Big)^{1/2}
\end{align*}
and therefore $T_2(\|\cdot\|)\leq (\log b)^{1/2}$. Similarly,
\begin{align}
\EquationName{gamma2pFirst}
\E_g\Big\|\sum_{j=1}^n g_j\del_{ij} A^*e_j\Big\|_{2,\infty} & = \E_g\max_{1\leq \ell\leq b}\Big\|\Big(\sum_{j=1}^n g_j\del_{ij}A_{jk}\Big)_{k\in B_{\ell}}\Big\|_2 \nonumber \\
& \leq (\log b)^{1/2} \max_{1\leq \ell\leq b} \Big(\sum_{j=1}^n \del_{ij}\|(A_{jk})_{k\in B_{\ell}}\|_2^2\Big)^{1/2}.
\end{align}
Finally,
\begin{equation*}
d_{\|\cdot\|}\Big(\bigcup_{i=1}^m B_{J_i}\Big) = \sup_{v\in \cK \ : \ \|Av\|_2=1}\|v\|_{2,1} \max_{1\leq i\leq m} \sup_{x\in B_{J_i}} \|A^*x\|_{2,\infty} 
\end{equation*}
and for any $1\leq i\leq m$ and $x\in B_{J_i}$,
\begin{align*}
\|A^*x\|_{2,\infty} & = \max_{1\leq \ell\leq b}\Big\|\sum_{j=1}^n x_j\del_{ij}(A_{jk})_{k\in B_{\ell}}\Big\|_2 \\
& \leq  \max_{1\leq \ell\leq b}\sum_{j=1}^n |x_j| \ \del_{ij}\|(A_{jk})_{k\in B_{\ell}}\|_2 \leq  \max_{1\leq \ell\leq b}\Big(\sum_{j=1}^n \del_{ij}\|(A_{jk})_{k\in B_{\ell}}\|_2^2\Big)^{1/2}.
\end{align*}
We now apply Lemma~\ref{lem:BoN410} for $T=A\cK\cap S^{n-1}$ and subsequently take $L_p(\Om_{\del})$-norms to conclude that for any $p\geq \log m$,
\begin{align}
\label{eqn:ga2pIntermediate}
& (\E\ga_2^{2p}(A\cK\cap S^{n-1},\|\cdot\|_{\del}))^{1/p} \nonumber\\
& \qquad \lesssim \frac{\al}{s}\Big[\sup_{x\in \cK \ : \ \|Ax\|_2=1}\|x\|_{2,1}^2\Big] \max_{1\leq i\leq m} \Big(\E_{\del}\max_{1\leq \ell\leq b}\Big(\sum_{j=1}^n \del_{ij}\|(A_{jk})_{k\in B_{\ell}}\|_2^2\Big)^p\Big)^{1/p}.
\end{align}
Since $\E\del_{ij} = \frac{s}{m}$ and $(\del_{ij})_{j=1}^n$ is independent for any fixed $i$, we obtain by symmetrization (see \Equation{symmetrization}) for $(r_j)$ a vector of independent Rademachers
\begin{align*}
& \Big(\E_{\del}\max_{1\leq \ell\leq b}\Big(\sum_{j=1}^n \del_{ij}\|(A_{jk})_{k\in B_{\ell}}\|_2^2\Big)^p\Big)^{1/p} \\
& \qquad \leq \frac{s}{m}\max_{1\leq \ell\leq b} \sum_{j=1}^n\|(A_{jk})_{k\in B_{\ell}}\|_2^2 + 2\Big(\E_{\del}\E_r\max_{1\leq \ell\leq b}\Big(\sum_{j=1}^n r_j\del_{ij}\|(A_{jk})_{k\in B_{\ell}}\|_2^2\Big)^p\Big)^{1/p}.
\end{align*}
By Khintchine's inequality,
\begin{align*}
& \Big(\E_{\del}\E_r\max_{1\leq \ell\leq b}\Big(\sum_{j=1}^n r_j\del_{ij}\|(A_{jk})_{k\in B_{\ell}}\|_2^2\Big)^p\Big)^{1/p} \\
& \qquad \lesssim ((\log b)^{1/2} + p^{1/2}) \Big(\E_{\del}\max_{1\leq \ell\leq b}\Big(\sum_{j=1}^n \del_{ij}\|(A_{jk})_{k\in B_{\ell}}\|_2^4\Big)^{p/2}\Big)^{1/p} \\
& \qquad \leq ((\log b)^{1/2} + p^{1/2}) \Big(\E_{\del}\max_{1\leq \ell\leq b}\Big(\sum_{j=1}^n \del_{ij}\|(A_{jk})_{k\in B_{\ell}}\|_2^2\Big)^{p}\Big)^{1/(2p)} \\
& \qquad \qquad \qquad \qquad \qquad \qquad \qquad \times \max_{1\leq \ell\leq b}\max_{1\leq j\leq n} \|(A_{jk})_{k\in B_{\ell}}\|_2.
\end{align*}
Putting the last two displays together we find a quadratic inequality. By solving it, we arrive at
\begin{align}
\EquationName{gamma2pLast}
& \Big(\E_{\del}\max_{1\leq \ell\leq b}\Big(\sum_{j=1}^n \del_{ij}\|(A_{jk})_{k\in B_{\ell}}\|_2^2\Big)^p\Big)^{1/p} \nonumber\\
& \qquad \lesssim \frac{s}{m} \max_{1\leq \ell\leq b} \sum_j \|(A_{jk})_{k\in B_{\ell}}\|_2^2 + (\log b + p) \max_{1\leq \ell\leq b}\max_{1\leq j\leq n}\|(A_{jk})_{k\in B_{\ell}}\|_2^2.
\end{align}
Combining this estimate with (\ref{eqn:ga2pIntermediate}) yields the assertion.
\end{proof}
\begin{theorem}
\label{thm:SJLmain}
Let $\cC\subset\R^d$ be a closed convex set and let $\al=(\log n)^6(\log m)^2(\log b)^2$. Let $x_*$ and $\hat{x}$ be minimizers of (\ref{eqn:CLS}) and (\ref{eqn:SCLS}), respectively. Set
$$d_{2,1} = \sup_{x\in T_{\cC}(x_*) \ : \ \|Ax\|_{2,1}=1}\|x\|_{2,1}.$$
Suppose that 
\begin{align*}
m & \gtrsim \eps^{-2}\log(\eta^{-1})(\al + \log(\eta^{-1})\log b) \tnorm{A}^2 d_{2,1}^2,\\
s & \gtrsim \eps^{-2}\log(\eta^{-1})(\log b + \log(\eta^{-1}))(\al + \log(\eta^{-1})\log b) \|A\|_{\ell_{2,1}\rightarrow \ell_{\infty}}^2 d_{2,1}^2,
\end{align*}
and $\eta\leq \frac{1}{m}$. Then, with probability at least $1-\eta$,
$$f(\hat{x})\leq (1-\eps)^{-2}f(x_*).$$
\end{theorem}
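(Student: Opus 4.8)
The plan is to follow the template of the unconstrained case (Theorem~\ref{thm:LSuc}), but to invoke the first (general convex) bound of Lemma~\ref{lem:compare} in place of the second. Put $u = (Ax_* - b)/\|Ax_* - b\|_2$, $T = A\,T_{\cC}(x_*)\cap S^{n-1}$, $Z_1 = Z_1(A,\Phi,T_{\cC}(x_*))$, and $Z_2 = Z_2(A,\Phi,T_{\cC}(x_*),u)$ (if $Ax_*=b$ both sides vanish, so assume $Ax_*\neq b$; note also $T_{\cC}(x_*)$ is deterministic, since $x_*$ solves the unsketched program). By Lemma~\ref{lem:compare}, $f(\hat x)\le(1 + Z_2/Z_1)^2 f(x_*)$, and since $\eps\in(0,1/2)$ one checks $1 + Z_2/Z_1 \le (1-\eps)^{-1}$ as soon as $Z_1\ge 1-\eps$ and $Z_2\le\eps$. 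So I would reduce the theorem to proving these two inequalities, each with failure probability at most $\eta/2$; a union bound then gives the claim with probability $1-\eta$ (the cosmetic factor $2$ being absorbed into constants, exactly as in the proof of Theorem~\ref{thm:LSuc}).

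To bound $Z_1$: since $Z_1 = \inf_{v\in T}\|\Phi v\|_2^2$, it suffices that $\Phi$ is an $\eps$-restricted isometry on $T$. I would apply \Eqsub{RIPLpEst} with $p = \log(\eta^{-1})$, which is $\ge\log m$ because $\eta\le1/m$. The chaining term $\E_{\del}\ga_2^{2p}(T,\|\cdot\|_{\del})$ is controlled by Lemma~\ref{lem:gamma2pEst} applied with $\cK = T_{\cC}(x_*)$ (so that the supremum there equals $d_{2,1}$), giving
\[
(\E_{\del}\ga_2^{2p}(T,\|\cdot\|_{\del}))^{1/p}\lesssim\al\, d_{2,1}^2\Big(\tfrac1s(\log b + p)\|A\|_{\ell_{2,1}\to\ell_\infty}^2 + \tfrac1m\tnorm{A}^2\Big),
\]
and I would observe that the diameter term $\E_{\del}d_{\del}^{2p}(T)$ satisfies the same estimate with the factor $\al$ removed: indeed $d_{\del}^2(T)\le\frac{d_{2,1}^2}{s}\max_{1\le i\le m}\max_{1\le\ell\le b}\sum_{j}\del_{ij}\|(A_{jk})_{k\in B_\ell}\|_2^2$ by the same H\"older step as in the proof of Lemma~\ref{lem:gamma2pEst}, to which the moment estimate proved there (the display reached by ``solving the quadratic inequality'') applies. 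Substituting both bounds into \Eqsub{RIPLpEst} and using the hypotheses on $m,s$ yields $(\E_{\del,\si}\eps_{\Phi,T}^p)^{1/p}\lesssim\eps$, whence Lemma~\ref{lem:MomentsToTails} with $w=\log(\eta^{-1})$ gives $\eps_{\Phi,T}<\eps$, i.e.\ $Z_1\ge1-\eps$, with probability at least $1-\eta/2$.

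To bound $Z_2$: I would apply Lemma~\ref{lem:Z2boundGen} with the same $T$ and $u\in S^{n-1}\subset B_{\ell_2^n}$ and $p = \log(\eta^{-1})$, then bound $(\E_{\del}\ga_2^p(T,\|\cdot\|_{\del}))^{1/p}\le(\E_{\del}\ga_2^{2p}(T,\|\cdot\|_{\del}))^{1/(2p)}$ and $(\E_{\del}d_{\del}^p(T))^{1/p}\le(\E_{\del}d_{\del}^{2p}(T))^{1/(2p)}$ via (the square roots of) the two estimates from the previous paragraph, which is legitimate since $2p\ge\log m$. This produces a bound of the form
\[
(\E_{\del,\si}Z_2^p)^{1/p}\lesssim\Big(\sqrt{\tfrac ps}+1\Big)\big(\sqrt{\al}+\sqrt p\big)\,d_{2,1}\Big(\tfrac1{\sqrt m}\tnorm{A} + \tfrac1{\sqrt s}\sqrt{\log b + p}\,\|A\|_{\ell_{2,1}\to\ell_\infty}\Big).
\]
The hypothesis on $s$ is exactly what forces $\sqrt{p/s}\lesssim1$ and pushes the $\|A\|_{\ell_{2,1}\to\ell_\infty}$-term below $\eps$, and the hypothesis on $m$ pushes the $\tnorm{A}$-term below $\eps$, so $(\E_{\del,\si}Z_2^p)^{1/p}\lesssim\eps$ and Lemma~\ref{lem:MomentsToTails} (again with $w = \log(\eta^{-1})$) gives $Z_2\le\eps$ with probability at least $1-\eta/2$. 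Combining the two steps yields $f(\hat x)\le(1-\eps)^{-2}f(x_*)$ with probability at least $1-\eta$.

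I expect the main obstacle to be purely a matter of careful bookkeeping in these last steps: tracking how the $(\log b + p)$ factors, the parameter $\al$, and --- crucially --- the extra decoupling/chaining factor $(\sqrt{p/s}+1)\sqrt p$ coming out of Lemma~\ref{lem:Z2boundGen} interact when fed into \Eqsub{RIPLpEst} (for $Z_1$) and into the $Z_2$-estimate, so as to reproduce exactly the composite factors $\log(\eta^{-1})(\al + \log(\eta^{-1})\log b)$ in the requirement on $m$ and $\log(\eta^{-1})(\log b + \log(\eta^{-1}))(\al + \log(\eta^{-1})\log b)$ in the requirement on $s$. The only other point needing genuine care is to confirm that the diameter term $\E_{\del}d_{\del}^{2p}(T)$ --- which is only implicitly present inside the proof of Lemma~\ref{lem:gamma2pEst} --- is controlled as claimed; everything else is a direct appeal to Lemmas~\ref{lem:compare}, \ref{lem:gamma2pEst}, \ref{lem:Z2boundGen}, and \ref{lem:MomentsToTails}.
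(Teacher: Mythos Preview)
Your proposal is correct and follows essentially the same route as the paper: reduce to $Z_1\ge 1-\eps$ and $Z_2\le\eps$ via Lemma~\ref{lem:compare}, control $\gamma_2$ by Lemma~\ref{lem:gamma2pEst}, control $Z_2$ by Lemma~\ref{lem:Z2boundGen}, and convert moment bounds to tail bounds via Lemma~\ref{lem:MomentsToTails}. The only noteworthy difference is your treatment of the diameter: you bound $d_\del(T)$ directly via $d_\del(T)=\tfrac1{\sqrt s}\max_i\sup_{x\in B_{J_i}}\tnorm{x}_T\le \tfrac{d_{2,1}}{\sqrt s}\max_i\sup_{x\in B_{J_i}}\|A^*x\|_{2,\infty}$ and then the Cauchy--Schwarz step from the proof of Lemma~\ref{lem:gamma2pEst}, whereas the paper routes through \Equation{delTermBd} and \Equation{gamma2pFirst}, which costs an extra $(\log b)^{1/2}$. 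Your argument is thus slightly sharper here (and this is exactly why the paper's displayed bound \Equation{del2pl2l1} carries a leading $\log b$ that yours does not), but both are comfortably absorbed by the stated hypotheses on $m$ and $s$.
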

\begin{proof}
Set $T=A T_{\cC}(x_*)\cap S^{n-1}$, $Z_1=Z_1(A,\Phi,T_{\cC}(x_*))$ and $Z_2=Z_2(A,\Phi,T_{\cC}(x_*),u)$. We apply \Equation{delTermBd}, \Equation{gamma2pFirst} and \Equation{gamma2pLast} to find for any $p\geq \log m$,
\begin{align}
\EquationName{del2pl2l1}
& (\E d_{\del}^{2p}(T))^{1/p} \nonumber\\
& \qquad \leq \frac{1}{s}\Big[\sup_{x\in T_{\cC}(x_*) \ : \ \|Ax\|_{2,1}=1}\|x\|_{2,1}^2\Big] \Big(\E_{\del}\Big(\max_{1\leq i\leq m} \E_g\Big\|\sum_{j=1}^n g_j\del_{ij} A^*e_j\Big\|_{2,\infty}\Big)^{2p}\Big)^{1/p} \nonumber \\
& \qquad \leq \frac{1}{s} (\log b)\Big[\sup_{x\in T_{\cC}(x_*) \ : \ \|Ax\|_{2,1}=1}\|x\|_{2,1}^2\Big]\Big(\E_{\del}\max_{1\leq \ell\leq b}\Big(\sum_{j=1}^n \del_{ij}\|(A_{jk})_{k\in B_{\ell}}\|_2^2\Big)^p\Big)^{1/p} \nonumber \\
& \qquad \leq (\log b)\Big[\sup_{x\in T_{\cC}(x_*) \ : \ \|Ax\|_{2,1}=1}\|x\|_{2,1}^2\Big] \Big(\frac{1}{m} \tnorm{A}^2 + \frac{1}{s}(\log b + p) \|A\|_{\ell_{2,1}\rightarrow \ell_{\infty}}^2\Big).
\end{align}
By \Equation{RIPLpEst}, Lemma~\ref{lem:gamma2pEst} and \Equation{del2pl2l1}
\begin{align*}
& \Big(\E_{\del,\si}\sup_{x \in T} \Big|\|\Phi x\|_2^2 - \|x\|_2^2\Big|^p\Big)^{1/p}\nonumber\\
& \ \ \ \lesssim (\E_{\del}\ga_2^{2p}(T,\|\cdot\|_{\del}))^{1/p} + (\E_{\del}\ga_2^{p}(T,\|\cdot\|_{\del}))^{1/p} + \sqrt{p}(\E d_{\del}^p(T))^{1/p} + p(\E d_{\del}^{2p}(T))^{1/p} \\
& \ \ \ \lesssim \frac{\al E}{s}(\log b + p) + \frac{\al F}{m} + \Big(\frac{\al E}{s}(\log b + p) + \frac{\al F}{m}\Big)^{1/2} \\
& \ \ \ \qquad + \frac{E p\log b}{s}(\log b + p) + \frac{F p\log b}{m} + \Big(\frac{E p\log b}{s}(\log b + p) + \frac{F p\log b}{m}\Big)^{1/2}, 
\end{align*}
where we write
\begin{equation*}
\label{eqn:CDDef}
E = \|A\|_{\ell_{2,1}\rightarrow \ell_{\infty}}^2 d_{2,1}^2, \ F = \tnorm{A}^2 d_{2,1}^2.
\end{equation*}
Similarly, using Lemma~\ref{lem:Z2boundGen} (with $Z(u):=Z_2(A,\Phi,T_{\cC}(x_*),u)$) we find
\begin{align*}
(\E_{\del,\si}Z_2^p)^{1/p} & \lesssim \Big(\sqrt{\frac{p}{s}} + 1\Big)\Big((\E_{\del}\ga_2^p(T,\|\cdot\|_{\del}))^{1/p} + \sqrt{p}(\E_{\del}d_{\del}^p(T))^{1/p}\Big) \\
& \lesssim \Big(\sqrt{\frac{p}{s}} + 1\Big)\Big[\Big(\frac{\al E}{s}(\log b + p) + \frac{\al F}{m}\Big)^{1/2} \\
& \qquad \qquad \qquad \qquad \qquad \qquad + \Big(\frac{E p\log b}{s}(\log b + p) + \frac{F p\log b}{m}\Big)^{1/2}\Big].
\end{align*}
Now apply Lemma~\ref{lem:MomentsToTails} (with $w=\log(\eta^{-1})$) to conclude that with probability at least $1-\eta$ we have $Z_2\leq \eps$ and $Z_1\geq 1-\eps$ under the assumptions on $m$ and $s$. Lemma~\ref{lem:compare} now implies that 
$$f(\hat{x})\leq \Big(1+\frac{\eps}{1-\eps}\Big)^2f(x_*) = (1-\eps)^{-2}f(x_*).$$ 
\end{proof}
\begin{corollary}
\label{cor:SJLl21}
Set 
$$\cC=\{x\in \R^d \ : \ \|x\|_{2,1}\leq R\}.$$ 
Suppose that $x_*$ is $k$-block sparse and $\|x_*\|_{2,1}=R$. Define
$$\si_{\min,k} = \inf_{\|y\|_2=1, \ \|y\|_{2,1}\leq 2\sqrt{k}}\|Ay\|_2.$$
Set $\al =  (\log n)^6(\log m)^2(\log b)^2$. Assume that 
\begin{align*}
m & \gtrsim \eps^{-2}\log(\eta^{-1})(\al + \log(\eta^{-1}) \log b) \tnorm{A}^2 k\si_{\min,k}^{-2},\\
s & \gtrsim \eps^{-2}\log(\eta^{-1})(\log(\eta^{-1}) + \log b)(\al + \log(\eta^{-1}) \log b) \|A\|_{\ell_{2,1}\rightarrow \ell_{\infty}}^2 k\si_{\min,k}^{-2},
\end{align*}
and $\eta\leq \frac{1}{m}$. Then with probability at least $1-\eta$
$$f(\hat{x})\leq (1-\eps)^{-2}f(x^*).$$
\end{corollary}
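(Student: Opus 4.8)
The plan is to deduce Corollary~\ref{cor:SJLl21} directly from Theorem~\ref{thm:SJLmain} by controlling the one geometric quantity appearing there, namely $d_{2,1}=\sup\{\|x\|_{2,1}: x\in T_{\cC}(x_*),\ \|Ax\|_2=1\}$, for the particular set $\cC=\{x\in\R^d:\|x\|_{2,1}\le R\}$ and the particular point $x_*$. Concretely I will show
\[
d_{2,1}\;\le\;2\sqrt{k}\,\si_{\min,k}^{-1},
\]
so that $d_{2,1}^2\lesssim k\,\si_{\min,k}^{-2}$. Substituting this bound into the hypotheses of Theorem~\ref{thm:SJLmain} turns its requirements on $m$ and $s$ into exactly those stated in the corollary (the absolute constant being absorbed into $\gtrsim$, and $\al$ being the same quantity $(\log n)^6(\log m)^2(\log b)^2$ in both places), after which $f(\hat x)\le(1-\eps)^{-2}f(x_*)$ holds with probability at least $1-\eta$ by Theorem~\ref{thm:SJLmain}. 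If $\si_{\min,k}=0$ the stated lower bounds on $m,s$ are vacuous and there is nothing to prove, so I assume $\si_{\min,k}>0$.

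Thus the whole argument reduces to the bound on $d_{2,1}$, which is the standard ``cone condition'' for $\ell_{2,1}$-balls. Let $S\subseteq\{1,\dots,b\}$ be the set of blocks on which $x_*$ is supported, so $|S|\le k$ by the $k$-block-sparsity assumption, and for $h\in\R^d$ write $h_S$ for the vector agreeing with $h$ on the blocks indexed by $S$ and vanishing off them. Since $\cC$ is the $R$-scaled unit ball of $\|\cdot\|_{2,1}$ and $\|x_*\|_{2,1}=R$, the tangent cone $T_{\cC}(x_*)$ is the closure of the set of feasible directions $\{h: x_*+th\in\cC\text{ for some }t>0\}$. For such an $h$, using the block decomposition of $\|\cdot\|_{2,1}$, the triangle inequality inside the blocks of $S$, and the vanishing of $x_*$ off $S$,
\[
R\;\ge\;\|x_*+th\|_{2,1}\;=\;\|(x_*+th)_S\|_{2,1}+t\|h_{S^c}\|_{2,1}\;\ge\;R-t\|h_S\|_{2,1}+t\|h_{S^c}\|_{2,1},
\]
which forces $\|h_{S^c}\|_{2,1}\le\|h_S\|_{2,1}$; since $h\mapsto\|h_{S^c}\|_{2,1}-\|h_S\|_{2,1}$ is continuous this inequality survives the closure, hence holds for every $h\in T_{\cC}(x_*)$. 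By Cauchy--Schwarz over the at most $k$ active blocks this gives
\[
\|h\|_{2,1}=\|h_S\|_{2,1}+\|h_{S^c}\|_{2,1}\le 2\|h_S\|_{2,1}\le 2\sqrt{k}\Big(\sum_{\ell\in S}\|h_{B_\ell}\|_2^2\Big)^{1/2}\le 2\sqrt{k}\,\|h\|_2,\qquad h\in T_{\cC}(x_*).
\]

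Finally I convert this into the bound on $d_{2,1}$. Given $h\in T_{\cC}(x_*)$ with $h\ne 0$, set $y=h/\|h\|_2$; then $\|y\|_2=1$ and $\|y\|_{2,1}\le 2\sqrt{k}$, so by the definition of $\si_{\min,k}$ we have $\|Ah\|_2\ge\si_{\min,k}\|h\|_2$. Hence if in addition $\|Ah\|_2=1$ then $\|h\|_2\le\si_{\min,k}^{-1}$ and therefore $\|h\|_{2,1}\le 2\sqrt{k}\,\si_{\min,k}^{-1}$, proving $d_{2,1}\le 2\sqrt{k}\,\si_{\min,k}^{-1}$ and with it the corollary. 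I do not anticipate any genuine obstacle: everything past the cone inequality is Cauchy--Schwarz, the definition of $\si_{\min,k}$, and a substitution into Theorem~\ref{thm:SJLmain}. The only points that need care are the precise meaning of ``$k$-block sparse'' here (namely $x_*$ supported on at most $k$ of the $b$ blocks, which is exactly what makes the factor $\sqrt{k}$ match the constraint $\|y\|_{2,1}\le 2\sqrt{k}$ in the definition of $\si_{\min,k}$) and the routine justification that the cone inequality passes to the closure defining $T_{\cC}(x_*)$.
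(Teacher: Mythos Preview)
Your proposal is correct and follows essentially the same route as the paper: bound $d_{2,1}$ by $2\sqrt{k}\,\si_{\min,k}^{-1}$ via the cone inequality $\|h\|_{2,1}\le 2\sqrt{k}\,\|h\|_2$ for $h\in T_{\cC}(x_*)$, and then plug into Theorem~\ref{thm:SJLmain}. The only cosmetic difference is that the paper cites Example~\ref{exa:TCl21} (where the cone inequality is derived from the subdifferential description $T_{\cC}(x_*)=[\mathrm{cone}(\partial\|\cdot\|_{2,1}(x_*))]^\circ$), whereas you re-prove it directly from the feasible-direction definition of the tangent cone; the resulting inequality and its use are identical.
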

\begin{proof}
As calculated in Example~\ref{exa:TCl21} of Appendix~\ref{sec:convex-tools}, every $x\in T_{\cC}(x_*)$ satisfies $\|x\|_{2,1}^2\leq 4k\|x\|_2^2$. If moreover $\|Ax\|_2=1$, then
$$\si_{\min,k}^2\leq \|x\|_2^{-2}\leq 4k\|x\|_{2,1}^{-2}.$$ 
In other words,
$$\sup_{x\in T_{\cC}(x_*) \ : \ \|Ax\|_2=1}\|x\|_{2,1}^2 \leq 4k\si_{\min,k}^{-2}.$$
\end{proof}
For a dense random sign matrix $\Phi$, it follows from \cite[Theorem 1]{PiW14} that the result in Corollary~\ref{cor:SJLl21} holds under the condition
$$m\gtrsim \eps^{-2}(\tnorm{A}^2 k\si_{\min,k}^{-2}+\log(\eta^{-1})).$$
Thus, our result shows that one can substantially increase the sparsity in the matrix $\Phi$, while maintaining the embedding dimension $m$ (up to logarithmic factors).\par
Let us now specialize our results to the case $D=1$, which corresponds to the Lasso. In this case, if we let $A_k$ denote the columns of $A$, 
$$\tnorm{A} = \max_{1\leq k\leq d}\|A_k\|_2,$$
the maximum column of $A$. We can alternatively interpret this as the norm $\|A\|_{\ell_1\rightarrow\ell_2}$. Moreover, the norm $\|A\|_{\ell_{2,1}\rightarrow \ell_{\infty}}$ reduces to 
$$\|A\|_{\ell_{1}\rightarrow \ell_{\infty}} = \max_{1\leq j\leq n}\max_{1\leq k\leq d} |A_{jk}|,$$
the largest entry of the matrix. The first norm can be significantly larger than the second. For example, if $A$ is filled with $\pm 1$ entries, then 
$$\|A\|_{\ell_1\rightarrow\ell_2} = \sqrt{n}, \qquad \|A\|_{\ell_{1}\rightarrow \ell_{\infty}} = 1.$$
A similar result for the fast J-L transform, but with a worse dependence of $m$ on the matrix $A$, was obtained in \cite{PiW14}. For completeness, we will show in Appendix~\ref{sec:fjlt-cls} that the fast J-L transform in fact satisfies similar results as Theorem~\ref{thm:SJLmain} and Corollary~\ref{cor:SJLl21}, with the essentially the same dependence of $m$ on $A$.

\section{Proof of the main theorem}\SectionName{general}
After having seen instantiations of various subsets of our ideas for specific applications (linear subspaces, $\tnorm{\cdot}_T$ of small type-2 constant, and closed convex cones), we now prove the main theorem of this work, \Theorem{main}. Our starting point is the observation that inequality \Equation{3.3}, i.e., 
\begin{align}
& \log\mathcal{N}(\tilde{T}, \vertiii{\cdot}, t) \nonumber\\
& \qquad \lesssim \Big(\log\frac {d_{\del}(T)}{t}\Big)(\log m)\log\mathcal{N}\Big(\mathrm{conv}\Big(\bigcup_{i=1}^m B_{J_i}\Big), \tnorm{\cdot}_T, \frac 18\sqrt{s} t\Big) \EquationName{3.3Rep}
\end{align}
holds in full generality. We will need the following replacement of \Claim{claim}.  
\begin{lemma}\LemmaName{maurey}
Let $\epsilon>0$. Then 
\begin{align}
\nonumber & \log\mathcal{N}\Big(\mathrm{conv}\Big(\bigcup_{i=1}^m B_{J_i}\Big), \tnorm{\cdot}_T, \epsilon\Big) \\ 
& \qquad \lesssim \frac 1{\epsilon^2}\log m + (\log 1/\epsilon) \max_{k\lesssim \frac 1{\epsilon^2}} \max_{|A| = k} \log\mathcal{N}\Big(\frac 1k\sum_{i\in A}B_{J_i}, \tnorm{\cdot}_T, \epsilon\Big) \EquationName{5.2}
\end{align}
\end{lemma}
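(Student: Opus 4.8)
The starting point is inequality \Equation{3.3Rep}, which already reduces the estimate of $\ga_2(T,\vertiii{\cdot})$ to bounding the covering numbers $\mathcal{N}(\mathrm{conv}(\cup_{i=1}^m B_{J_i}),\tnorm{\cdot}_T,\epsilon)$, so it suffices to establish \Equation{5.2}. The plan is to run the empirical (Maurey) method. Fix $z\in\mathrm{conv}(\cup_{i=1}^m B_{J_i})$ and write it as a convex combination $z=\sum_{i=1}^m\lambda_i v_i$ with $\lambda_i\ge 0$, $\sum_i\lambda_i=1$, $v_i\in B_{J_i}$. The key structural fact I exploit is that $\tnorm{\cdot}_T\le\|\cdot\|_2$ on $\R^n$ (since $T\subset S^{n-1}$) and each $B_{J_i}\subset B_{\ell_2^n}$; this lets me carry out the averaging step inside the Hilbert space $\ell_2^n$, so the possibly large type-$2$ constant of $\tnorm{\cdot}_T$ never enters (in contrast to \Claim{claim}).

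Concretely, I would let $Y_1,\dots,Y_k$ be i.i.d.\ with $\bP(Y_l=v_i)=\lambda_i$, so that $\E Y_l=z$ and $\|Y_l\|_2\le 1$ almost surely. A one-line variance computation in $\ell_2^n$ gives
\[
\E\,\tnorm{z-\frac1k\sum_{l=1}^k Y_l}_T^2\ \le\ \E\,\Big\|z-\frac1k\sum_{l=1}^k Y_l\Big\|_2^2\ =\ \frac1k\big(\E\|Y_1\|_2^2-\|z\|_2^2\big)\ \le\ \frac1k,
\]
so for $k=\ceil{4\epsilon^{-2}}$ there is a realization $(i_1,\dots,i_k)\in[m]^k$ with $\tnorm{z-\frac1k\sum_{l=1}^k v_{i_l}}_T\le\epsilon/2$, and $\frac1k\sum_l v_{i_l}$ lies in the Minkowski average $\frac1k\sum_{l=1}^k B_{J_{i_l}}$. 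Hence $\mathrm{conv}(\cup_i B_{J_i})$ is contained in the $(\epsilon/2)$-neighbourhood, in $\tnorm{\cdot}_T$, of the union of the sets $\frac1k\sum_{l=1}^k B_{J_{i_l}}$ taken over the at most $\binom{m+k-1}{k}\le(em)^k$ size-$k$ multisets from $[m]$. Covering each of these sets at scale $\epsilon/2$ and taking a union bound gives, using $k\log(em)\lesssim\epsilon^{-2}\log m$,
\[
\log\mathcal{N}\big(\mathrm{conv}(\cup_i B_{J_i}),\tnorm{\cdot}_T,\epsilon\big)\ \lesssim\ \frac1{\epsilon^2}\log m\ +\ \max\ \log\mathcal{N}\Big(\frac1k\sum_l B_{J_{i_l}},\tnorm{\cdot}_T,\frac\epsilon2\Big),
\]
the maximum being over size-$k$ multisets.

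It remains to replace the size-$k$ multiset by a genuine $k$-element subset, which is the one fiddly point and the source of the $\log(1/\epsilon)$ factor. Given a multiset with multiplicity function $i\mapsto n_i$, group its indices by the dyadic size of $n_i$: set $D_\nu=\{i:2^{\nu-1}\le n_i<2^\nu\}$ for $\nu=1,\dots,J$ with $J\lesssim\log k\lesssim\log(1/\epsilon)$. Since $\sum_i n_i=k$ we have $|D_\nu|\le k2^{-(\nu-1)}\lesssim\epsilon^{-2}$; moreover the part of $\frac1k\sum_l v_{i_l}$ coming from $D_\nu$ equals $\sum_{i\in D_\nu}\frac{n_i}{k}\bar v_i$ with $\bar v_i\in B_{J_i}$ and coefficients within a factor $2$ of $2^\nu/k$, hence lies in a fixed constant multiple of the normalized Minkowski average $\frac1{|D_\nu|}\sum_{i\in D_\nu}B_{J_i}$ over the genuine set $D_\nu$ of cardinality $\lesssim\epsilon^{-2}$. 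Covering each of the $J$ pieces in $\tnorm{\cdot}_T$ and summing over $\nu$ produces the factor $\log(1/\epsilon)$ in front of $\max_{k\lesssim\epsilon^{-2}}\max_{|A|=k}\log\mathcal{N}(\frac1k\sum_{i\in A}B_{J_i},\tnorm{\cdot}_T,\epsilon)$; the covering radii, each a fixed poly-$\log(1/\epsilon)$ fraction of $\epsilon$, are harmless because every downstream estimate for $\mathcal{N}(\cdot,\tnorm{\cdot}_T,t)$ (dual Sudakov, volumetric bounds) depends only polynomially on $1/t$. Combining the two displays yields \Equation{5.2}.

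I expect the passage from the repetition-laden empirical list to the clean normalized Minkowski sum over a set $A$ appearing in \Equation{5.2} to be the main obstacle; the dyadic bucketing of multiplicities resolves it at the cost of the $\log(1/\epsilon)$. Everything else — the variance bound, the count of multisets, and the union bounds — is routine, and the complete avoidance of any type-$2$ constant is exactly what the domination $\tnorm{\cdot}_T\le\|\cdot\|_2$ buys.
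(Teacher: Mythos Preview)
Your approach is essentially the paper's: Maurey in $\ell_2^n$ (using $\tnorm{\cdot}_T\le\|\cdot\|_2$) to get $k\lesssim 1/\epsilon^2$, then dyadic bucketing---your bucketing of multiplicities $n_i$ is exactly the paper's bucketing of the convex-combination coefficients $\lambda_i=n_i/k$. The one refinement you miss is that the dyadic weights (in your notation $2^\nu|D_\nu|/k$, in the paper's $2^{-\ell}|A_\ell|$) sum to at most $2$, so one can cover each normalized average $\frac{1}{|D_\nu|}\sum_{i\in D_\nu}B_{J_i}$ at scale $\epsilon$ and still obtain total error $\lesssim\epsilon$ by the triangle inequality; this removes your need to shrink the radii by a $\log(1/\epsilon)$ factor and yields the lemma exactly as stated rather than up to a harmless polylog in the radius.
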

\begin{proof}
Set $\rho_T(x,y)=\tnorm{x-y}_T$ and let $\rho_{\ell_2}(x,y)=\|x-y\|_2$. By Maurey's lemma (Lemma~\ref{lem:Maurey} for $\|\cdot\|_2$), given $x\in\mathrm{conv}(B_{J_i} : 1\le i\le m)$, there is a $k\lesssim 1/\epsilon^2$ and a set $A\subset\{1,\ldots,m\}$, $|A| = k$, such that
\begin{equation}
\rho_T(x, \mathrm{conv}(B_{J_i}; i\in A)) \leq \rho_{\ell_2}(x, \mathrm{conv}(B_{J_i}; i\in A)) \le \epsilon  \EquationName{5.1}
\end{equation}
Now, let us take an element $y\in\mathrm{conv}(B_{J_i} ; i\in A)$, $|A| \lesssim 1/\epsilon^2$. Thus
$$
y = \sum_{i=1}^k \lambda_i y_i
$$
with $y_i\in B_{J_i}$, $\lambda_i\ge 0$, $\sum_i \lambda_i = 1$, $k \lesssim 1/\epsilon^2$. Firstly, we may dismiss the coefficients $\lambda_i < \epsilon^3$. Indeed, let $S=\{i \ : \ \la_i<\epsilon^3\}$ and set $\hat{y}=\sum_{i\in S}\la_i y_i$. Then, 
$$\|\hat{y}\|_2 \leq \sum_{i\in S}\la_i\leq k\epsilon^3\lesssim \epsilon.$$
Consider now the $\la_i$ with $\epsilon^3 \le \lambda_i \le 1$. For $\ell=0,1,\ldots,\ell_*$, $\ell_* \simeq \log(1/\epsilon)$, denote
$$
A_\ell = \Big\{1\leq i\leq k \ : \ \frac 1{2^\ell} \ge \lambda_i > \frac 1{2^{\ell+1}} \Big\}
$$
and write
\begin{equation}
y = \sum_{\ell=0}^{\ell_*} 2^{-\ell} |A_\ell| \cdot\Big(\frac 1{|A_\ell|} \sum_{i\in A_{\ell}} y_i'\Big) + \hat{y} \EquationName{one}
\end{equation}
where $y_i' = \lambda_i 2^\ell y_i \in B_{J_i}$. Note that
\begin{equation}
\sum_{\ell=0}^{\ell_*} 2^{-\ell} |A_\ell| < 2\sum_i \lambda_i \le 2 \EquationName{two}
\end{equation}
and
$$
\frac 1{|A_\ell|} \sum_{i\in A_{\ell}} y_i' \in \frac 1{k_\ell} \sum_{i\in A_{\ell}} B_{J_i}
$$
where $k_\ell = |A_\ell| \lesssim 1/\epsilon^2$. Take finite sets $\xi_\ell \subset \frac 1{k_\ell} \sum_{i\in A_\ell} B_{J_i}$ such that
\begin{align}
&\rho_T(z, \xi_\ell) < \epsilon \text{ for all } z\in \frac 1{k_\ell}\sum_{i\in A_\ell} B_{J_i}\\
&|\xi_\ell| = \mathcal{N}\Big(\frac 1{k_\ell}\sum_{i\in A_\ell} B_{J_i},\tnorm{\cdot}_T,\epsilon\Big)
\end{align}

Let $z_\ell\in\xi_\ell$ satisfy
$$
\btnorm{\frac 1{|A_\ell|} \sum_{i\in A_\ell} y_i' - z_\ell}_T < \epsilon
$$
and set
\begin{equation}
z = \sum_{\ell=0}^{\ell_*} 2^{-\ell} |A_\ell| z_\ell. \EquationName{five}
\end{equation}
By \Eqsub{one}, \Eqsub{two}
$$
\tnorm{y - z}_T \lesssim \sum_{\ell=0}^{\ell_*} 2^{-\ell} |A_\ell| \epsilon + \epsilon \lesssim \epsilon
$$

To summarize, we can find an $\epsilon$-net for $\mathrm{conv}(\cup_{1\leq i\leq m}B_{J_i})$ with respect to $\tnorm{\cdot}_T$ as follows. For every $A\subset [m]$ with $|A|=k$, $k\lesssim 1/\epsilon^2$, we select $\xi_1^A,\ldots,\xi_{\ell_*}^A$ as above. Then,
$$\bigcup_{k\in [m] \ : \ k\lesssim 1/\epsilon^2}\bigcup_{A\subset[m], |A|=k} \sum_{\ell=0}^{\ell_*}2^{-\ell}|A_{\ell}|\xi_{\ell}^A$$
is an $\epsilon$-net of cardinality at most
\begin{align*}
& \max_{k\lesssim 1/\epsilon^2} \frac 1{\epsilon^2} m {m\choose k} \prod_{\ell=0}^{\ell_*} \cN\Big(\frac{1}{k_{\ell}}\sum_{i\in A_{\ell}} B_{J_i},\tnorm{\cdot}_T,\epsilon\Big) \\
& \qquad \lesssim \max_{k\lesssim 1/\epsilon^2}\frac 1{\epsilon^2} m \Big(\frac{em}{k}\Big)^k \Big[\max_{k\leq 1/\epsilon^2}\max_{|A|=k} \cN\Big(\frac{1}{k}\sum_{i\in A} B_{J_i},\tnorm{\cdot}_T,\epsilon\Big)\Big]^{\log(1/\epsilon)}.
\end{align*}
This yields the result.
\end{proof}

Next, we analyze further the set $(1/k)\sum_{i\in A} B_{J_i}$ for some $k\lesssim 1/\epsilon^2$ ($\epsilon>0$ will be fixed later). The elements of $(1/k) \sum_{i\in A}B_{J_i}$ are of the form
$$
y = \frac 1k\sum_{j=1}^n \Big(\sum_{i\in A} \delta_{ij}x_j^{(i)}\Big) e_j
$$
with
$$
\sum_{j\in J_i} |x_j^{(i)}|^2 \le 1, \qquad \mathrm{for \ all} \ i.
$$
Therefore, by Cauchy-Schwarz,
\begin{align*}
\|y\|_2 &= \frac 1k \Big(\sum_{j=1}^n \Big|\sum_{i\in A} \delta_{ij} x_j^{(i)}\Big|^2\Big)^{1/2}\\
{}&\le \frac 1k \Big[\sum_{j=1}^n \Big(\sum_{i\in A}\delta_{ij}\Big)\sum_{i\in A} |x_j^{(i)}|^2 \Big]^{1/2}
\end{align*}

Define for $\alpha = 1,\ldots,\log(\min\{k, s\})$ the set
\begin{equation}
U_{\alpha} = U_{\alpha}(\delta) = \Big\{1\leq j\leq n : 2^{\al}\leq \sum_{i\in A} \delta_{ij}<2^{\alpha+1}\Big\} \EquationName{5.3}
\end{equation}
and define
$$U_0 = U_0(\delta) = \Big\{1\leq j\leq n : \sum_{i\in A} \delta_{ij}<2\Big\}.$$
Estimate for fixed $j$ 
\begin{equation}
\EquationName{IntensityEst}
\tau_{k,\alpha} \eqdef \Pr_\delta\Big(2^\al\leq \sum_{i\in A} \delta_{ij} < 2^{\al+1}\Big) \le
\begin{cases}
1,\ \hbox{if } 2^\alpha \le \frac {2esk} m\\
\min\Big\{2^{-\alpha} \frac {sk} m, 2^{-2^\alpha}\Big\},\ \hbox{if } 2^\alpha > \frac {2esk} m,
\end{cases}
\end{equation}
where the first term in the $\min$ is a consequence of Markov's inequality, and the second term follows by using that
$$\tau_{k,\alpha} \leq \bP\Big(\sum_{j=1}^k\zeta_j\geq 2^{\al}\Big),$$
whenever $\zeta_1,\ldots,\zeta_k$ are i.i.d.\ Bernoulli random variables with mean $s/m$ (see \Eqsub{binom-calc} for bounding Bernoulli moments). Hence, $U_\alpha$ is a random set of intensity $\tau_{k,\alpha}$, i.e., $\Pr_\delta (j \in U_\alpha(\delta)) = \tau_{k,\alpha}$. Write according to the preceding
$$
y = \sum_\alpha y_\alpha, \text{ with } y_\alpha = \sum_{j\in U_\alpha} y_j e_j
$$
and
$$
\|y_\alpha\|_2 \lesssim \frac 1{\sqrt{k}} 2^{\alpha/2}
$$
Hence, denoting $B_{U_{\al}}:=\{\sum_{j\in U_{\al}}x_je_j \ : \ \sum_{j\in U_{\al}}|x_j|^2\leq 1\}$, 
$$
\frac 1k \sum_{i\in A} B_{J_i} \subset \sum_\alpha \frac 1{\sqrt{k}} 2^{\alpha/2} B_{U_\alpha}.
$$
Therefore, we deduce that
\begin{align}
\nonumber \log\mathcal{N}\Big(\frac 1k\sum_{i\in A} B_{J_i}, \tnorm{\cdot}_T, \epsilon\Big) &\lesssim \sum_\alpha \log\mathcal{N}\Big(\frac {1}{\sqrt{k}} 2^{\alpha/2}B_{U_\alpha}, \tnorm{\cdot}_T, \frac{\epsilon}{\log m}\Big)\\
{}&= \sum_\alpha \log\mathcal{N}\Big(B_{U_\alpha}, \tnorm{\cdot}_T, \sqrt{\frac k{2^\alpha}} \frac{\epsilon}{\log m}\Big)\EquationName{5.4}
\end{align}
By the dual Sudakov inequality (Lemma~\ref{lem:dualSudakov} of Appendix~\ref{sec:probTools}), 
$$
t\Big[\log\mathcal{N}(B_{U_\alpha}, \tnorm{\cdot}_T, t)\Big]^{1/2} \lesssim \E_g\btnorm{\sum_{j\in U_\alpha} g_j e_j}_T
$$
It follows that
$$
\Big[\log\mathcal{N}\Big(\frac 1k\sum_{i\in A} B_{J_i}, \tnorm{\cdot}_T, \epsilon\Big)\Big]^{1/2} \lesssim \sum_\alpha \Big(\frac{2^\alpha}k\Big)^{1/2} \Big(\frac{\log m}{\epsilon}\Big) \E_g\btnorm{\sum_{j\in U_\alpha} g_j e_j}_T 
$$
and, by \Lemma{maurey}
\begin{align}
\nonumber \Big[\log& \mathcal{N}\Big(\mathrm{conv}\Big(\bigcup_{i=1}^m B_{J_i}\Big), \tnorm{\cdot}_T, \epsilon\Big)\Big]^{1/2}\\
{}&\le \frac 1{\epsilon}(\log m)^{1/2} + \frac{\log m}{\epsilon}\Big(\log\frac 1{\epsilon}\Big)^{1/2} \max_{\substack{k\lesssim \frac 1{\epsilon^2}\\|A| = k}}\Big[\sum_\alpha \sqrt{\frac{2^\alpha}k} \E_g\btnorm{\sum_{j\in U_\alpha} g_j e_j}_T\Big] \EquationName{5.5}
\end{align}
Applying \Equation{5.5} to \Equation{3.3Rep}, taking $\epsilon \simeq \sqrt{s} t$, gives
\begin{align*}
t\log^{1/2} \mathcal{N}(\tilde{T}, \vertiii{\cdot}, t) \lesssim& \frac 1{\sqrt{s}} \Big(\log \frac{1}{\sqrt{s}t}\Big)^{1/2}\log m\\
\nonumber {}& + \frac 1{\sqrt{s}}(\log m)^{3/2}\Big(\log \frac{1}{\sqrt{s}t}\Big)\cdot\\
{}& \hspace{.2in}\max_{\substack{|A| = k\\k\lesssim \frac 1{\epsilon^2}}}\Big\{\sum_\alpha \sqrt{\frac{2^\alpha}k} \E_g\btnorm{\sum_{j\in U_\alpha} g_j e_j}_T \Big\}.
\end{align*}
Using this bound for large $t$ and the elementary bound in \Equation{volComp} for small $t$ in \Equation{Dudley}, we obtain
\begin{align}
\nonumber \gamma_2(T,\|\cdot\|_{\del}) \lesssim &\frac 1{\sqrt{s}}(\log n)^{3/2}\log m \\
\nonumber &{} + \frac 1{\sqrt{s}}(\log m)^{3/2}(\log n)^2\cdot\\
{}& \hspace{.2in}\max_{k\le m}\max_{|A| = k} \Big\{\sum_{\alpha,2^\alpha\le k} \sqrt{\frac{2^\alpha}k} \E_g\btnorm{\sum_{j\in U_\alpha} g_j e_j}_T \Big\}\EquationName{5.6}
\end{align}
We split the sum over $\al$ into three different parts. Firstly,
\begin{align}
\EquationName{MainBd1}
\max_{k\le m}\max_{|A| = k} \sum_{\al,2^\alpha\leq 2esk/m} \sqrt{\frac{2^\alpha}k} \E_g\btnorm{\sum_{j\in U_\alpha} g_j e_j}_T & \lesssim \sqrt{\frac{s}{m}} \E_g\btnorm{\sum_{j=1}^n g_j e_j}_T \nonumber \\
& \lesssim \sqrt{\frac{s}{m}}\ga_2(T,\|\cdot\|_2).
\end{align}
Next, by setting 
$$U_A=\Big\{j\in [n] \ : \ \frac{2esk}{m}\leq \sum_{i\in A} \del_{ij}\Big\},$$
we obtain
\begin{align*}
& \max_{k\le m}\max_{|A| = k} \sum_{\al,2esk/m<2^{\al}\leq 10\log n} \sqrt{\frac{2^\alpha}{k}} \E_g\btnorm{\sum_{j\in U_\alpha} g_j e_j}_T \\
& \qquad \lesssim (\log n)^{1/2} \max_{k\le m}\max_{|A| = k} \frac{1}{\sqrt{k}} \E_g\btnorm{\sum_{j\in U_A} g_j e_j}_T \\
& \qquad \leq (\log n)^{1/2} \Big[\max_{m/s\leq k\le m}\max_{|A| = k} \frac{1}{\sqrt{k}} \E_g\btnorm{\sum_{j\in U_A} g_j e_j}_T \\
& \qquad \qquad \qquad \qquad \qquad \qquad + \max_{k\leq m/s}\max_{|A| = k} \frac{1}{\sqrt{k}} \E_g\btnorm{\sum_{j\in U_A} g_j e_j}_T\Big].
\end{align*}
Note that the first term on the right hand side is bounded by
\begin{equation}
\EquationName{MainBd2a}
\max_{m/s\leq k\le m}\max_{|A| = k} \frac{1}{\sqrt{k}} \E_g\btnorm{\sum_{j\in U_A} g_j e_j}_T \lesssim \sqrt{\frac{s}{m}}\ga_2(T,\|\cdot\|_2).
\end{equation}
To bound the second term, we take expectations with respect to $\del$ and find for $p_k=k\log\Big(\frac{m}{sk}\Big)$
\begin{align}
\EquationName{UAint}
& \E_{\del}\Big[\max_{k\leq m/s}\max_{|A| = k} \frac{1}{\sqrt{k}} \E_g\btnorm{\sum_{j\in U_A} g_j e_j}_T\Big] \nonumber\\
& \qquad \lesssim \max_{k\leq m/s}\max_{|A| = k} \frac{1}{\sqrt{k}} \Big(\E_{\del}\Big(\E_g\btnorm{\sum_{j\in U_A} g_j e_j}_T\Big)^{p_k}\Big)^{1/p_k}.
\end{align}
By \Equation{IntensityEst}, for any $A\subset [m]$ with $|A|=k$, $U_A$ is a random set of intensity at most $esk/m=sq/(m\log s)$, where we set $q=ek(\log s)$. If we now let $\eta_1,\ldots,\eta_n$ be i.i.d.\ $\{0,1\}$-valued random variables with expectation $qs/(m\log s)$, then the random set $U$ defined by $\bP(j\in U)=\bP(\eta_j=1)$ has a higher intensity than $U_A$ and therefore,
\begin{align}
\EquationName{MainBd2b}
& \E_{\del}\Big[\max_{k\leq m/s}\max_{|A| = k} \frac{1}{\sqrt{k}} \E_g\btnorm{\sum_{j\in U_A} g_j e_j}_T\Big] \nonumber\\
& \qquad \lesssim (\log s)^{1/2}\max_{q\leq \frac{m}{s}\log s} \frac{1}{\sqrt{q}} \Big(\E_{\eta}\Big(\E_g\btnorm{\sum_{j=1}^n \eta_j g_j e_j}_T\Big)^q\Big)^{1/q}.
\end{align}
Finally, consider the $\al$ with $2^{\al}>10\log n$. Since $\tnorm{\cdot}_T\leq \|\cdot\|_2$, 
\begin{align}
\EquationName{appLambda}
& \max_{k\le m}\max_{|A| = k} \sum_{\alpha,10\log n<2^{\al}\leq k} \sqrt{\frac{2^\alpha}k} \E_g\btnorm{\sum_{j\in U_\alpha} g_j e_j}_T \nonumber\\
& \qquad \leq \max_{k\le m}\max_{|A| = k} \sum_{\alpha,10\log n<2^{\al}\leq k} \sqrt{\frac{2^\alpha}k} |U_{\al}|^{1/2}.
\end{align}
By \Equation{IntensityEst}, the intensity of $U_{\al}$ is bounded by $2^{-2^{\al}}$ and therefore,
\begin{align*}
& \E_{\del}\Big[\max_{k\le m}\max_{|A| = k} \sum_{\alpha,10\log n<2^{\al}\leq k} \sqrt{\frac{2^\alpha}k} \E_g\btnorm{\sum_{j\in U_\alpha} g_j e_j}_T \Big] \\
& \qquad \leq \max_{k\le m}\max_{|A| = k} \sum_{\alpha,10\log n<2^{\al}\leq k} \sqrt{\frac{2^\alpha}k} \Big(\E\Big|\sum_{j=1}^n \zeta_j\Big|^{q_k}\Big)^{1/(2q_k)},
\end{align*}
where $q_k\simeq k\log(m/k)$ and the $\zeta_j$ are i.i.d.\ $\{0,1\}$-valued with mean $2^{-2^{\al}}$. Since $\sum_j \zeta_j$ is binomially distributed, we find using that $2^{\al}>10\log n$,
\begin{equation}
\Big\|\sum_j \zeta_j\Big\|_{L_{\zeta}^{q_k}} \leq \Big(\sum_{t=1}^n t^{q_k} e^{-\frac{9}{10}2^{\al}t}\Big)^{1/q_k} \lesssim n^{1/q_k} 2^{-\alpha} q_k \lesssim 2^{-\alpha} k(\log m)\EquationName{binom-calc}
\end{equation}
and therefore
\begin{equation}
\EquationName{MainBd3}
\E_{\del}\Big[\max_{k\le m}\max_{|A| = k} \sum_{\alpha,10\log n<2^{\al}\leq k} \sqrt{\frac{2^\alpha}k} \E_g\btnorm{\sum_{j\in U_\alpha} g_j e_j}_T\Big] \lesssim (\log m)^{3/2}.
\end{equation}

By combining \Equation{5.6}, \Equation{MainBd1}, \Equation{MainBd2a}, \Equation{MainBd2b}, and \Equation{MainBd3} we obtain
\begin{align}
\nonumber & \E \gamma_2(T,\|\cdot\|_{\del}) \\
\nonumber & \qquad \lesssim \frac{(\log m)^{3/2}(\log n)^{5/2}\gamma_2(T,\|\cdot\|_2)}{\sqrt{m}} +\frac 1{\sqrt{s}}\Bigg\{(\log m)^3(\log n)^2 +\\
{}&\qquad \hspace{.6in} (\log m)^2(\log n)^{5/2}\max_{q\le \frac ms \log s} \frac 1{\sqrt{q}}\Big(\E_{\eta}\Big(\E_g\btnorm{\sum_{j=1}^n \eta_jg_j e_j}_T\Big)^q\Big)^{1/q}\Bigg\}, \EquationName{5.13}
\end{align}
with $\eta_j$ i.i.d.\ Bernoulli with mean $qs/(m\log s)$. 

From our proof it is clear that $\E \gamma_2^2(T,\|\cdot\|_{\del})$ can be bounded by the square of the right hand side of \Equation{5.13}. Using \Equation{RIPLpEst} (for $p=1$), we conclude that
$$\E\sup_{x\in T}\Big|\|\Phi x\|_2^2 - 1\Big|<\eps$$
provided that $m,s$ satisfy
\begin{align}
m &\gtrsim (\log m)^3(\log n)^5 \frac{\gamma_2^2(T,\|\cdot\|_2)}{\eps^2} \EquationName{5.14}\\
s &\gtrsim (\log m)^6 (\log n)^4 \frac 1{\eps^2} \EquationName{5.15}
\end{align}
and 
\begin{equation}
\Eqsub{5.16} = (\log m)^2(\log n)^{5/2}\max_{q\le \frac ms \log s} \Big\{\frac 1{\sqrt{qs}} \Big(\E_{\eta}\Big(\E_g\btnorm{\sum_{j=1}^n \eta_j g_j e_j}_T\Big)^q\Big)^{1/q}\Big\} < \eps. \EquationName{5.16}
\end{equation}
This completes the proof.

\section{Example applications of main theorem}\SectionName{applications}
In this section we use our main theorem to give explicit conditions under which
$$\E\sup_{x\in T}\Big|\|\Phi x\|_2^2 - 1\Big|<\eps$$
for several interesting sets $T\subset S^{n-1}$. This amounts to computing an upper bound for the parameters $\ga_2(T,\|\cdot\|_2)$ and
\begin{equation}
\EquationName{kappaTDefEx}
\kappa(T) = \max_{q\le \frac ms \log s} \Big\{\frac 1{\sqrt{qs}} \Big(\E_{\eta}\Big(\E_g\btnorm{\sum_{j=1}^n \eta_j g_j e_j}_T\Big)^q\Big)^{1/q}\Big\}.
\end{equation}
We focus on the latter two and refer to \cite{Dirksen14} for details on how to estimate $\ga_2(T,\|\cdot\|_2)$. Note, however, that $\ga_2(T,\|\cdot\|_2)\lesssim (m \log s)^{1/2}\ka(T)$. Indeed, take $q=\frac{m}{s}\log s$ in \Equation{kappaTDefEx} and note that the $\eta_j$ are identically equal to $1$ in this case. This gives  
$$\ka(T)\geq (m \log s)^{-1/2} g(T) \simeq (m \log s)^{-1/2} \ga_2(T,\|\cdot\|_2).$$
Thus, if we ignore logarithmic factors, it suffices to bound $\ka(T)$.
\subsection{Linear subspace}
In the application from \Section{linear} with $T$ the unit ball of a $d$-dimensional linear subspace $E$ of $\ell_2^n$, we have $\gamma_2(T,\|\cdot\|_2) \simeq \sqrt{d}$ and
\begin{equation}
\Eqsub{5.16} \lesssim (\log m)^2(\log n)^{5/2}\frac 1{\sqrt{s}} \max_{q\le \frac ms \log s} \frac 1{\sqrt{q}} \Big\|\sum_j \|P_E e_j\|_2^2 \eta_j\Big\|_{L_\eta^q}^{1/2} \EquationName{5.17}
\end{equation}
with $\eta_j \in \{0,1\}$ i.i.d.\ of mean $qs/(m\log s)$. Using Khintchine's inequality,
\begin{equation}
\Big\|\sum_j \eta_j \|P_E e_j\|_2^2\Big\|_{L_\eta^q} < \frac{dqs}{m\log s} + q\max_j \|P_E e_j\|_2^2 \EquationName{5.18}
\end{equation}
and therefore
\begin{equation}
\Eqsub{5.16} \lesssim (\log m)^2(\log n)^{5/2}\Big(\sqrt{\frac dm} + \frac 1{\sqrt{s}}\max_j \|P_E e_j\|_2\Big) \EquationName{5.19}
\end{equation}

\Equation{5.14} and \Equation{5.15} then give conditions
\begin{align}
m &\gtrsim d(\log n)^5(\log m)^4 /\eps^2 \EquationName{5.20}\\
s &\gtrsim (\log n)^4(\log m)^6/\eps^2 + (\log n)^5(\log m)^4\max_j \|P_E e_j\|_2^2/\eps^2  \EquationName{5.21}
\end{align}

\subsection{$k$-sparse vectors}\SectionName{ksparse2}
Consider next the application from \Section{ksparse}, replacing $T$ by $K$ given by \Equation{4.1}. Thus $\gamma_2(K) \lesssim \sqrt{k\log n}$ and \Equation{5.16} is bounded by
\begin{equation}
\sqrt{k}(\log m)^2(\log n)^3 \frac 1{\sqrt{s}} \max_{q\le \frac ms \log s}\Big\{\frac 1{\sqrt{q}} \Big\|\max_i \Big(\sum_j \eta_j |A_{ij}|^2\Big) \Big\|_{L^q_{\eta}}^{1/2}\Big\} \EquationName{5.22}
\end{equation}
and
\begin{equation}
\Big\|\max_i (\sum_j \eta_j |A_{ij}|^2)\Big\|_{L^q_{\eta}} \lesssim \frac{qs}{m\log s} + q\max_{i,j}|A_{ij}|^2 \EquationName{5.23}
\end{equation}
Hence
\begin{equation}
\Eqsub{5.16} \lesssim (\log m)^2(\log n)^3\sqrt{\frac km} + (\log m)^2(\log n)^3\sqrt{\frac ks}\max_{i,j}|A_{ij}| \EquationName{5.24}
\end{equation}
We then arrive at the conditions
\begin{align}
m &\gtrsim k(\log n)^6(\log m)^4/\eps^2 \EquationName{5.25}\\
s &\gtrsim (\log n)^4(\log m)^6/\eps^2  \EquationName{5.26}
\end{align}
provided that
\begin{equation}
\max_{i,j} |A_{ij}| < k^{-1/2} (\log n)^{-1} \EquationName{5.27}
\end{equation}
(compare with \Equation{4.7}-\Eqsub{4.8}).

\subsection{Flat vectors}
Let $T\subseteq S^{n-1}$ be finite with $\|x\|_\infty\le \alpha$ for all $x\in T$. Then by a similar calculation as in \Eqsub{sqrt-trick},
\begin{align}
\Big(\E_\eta \Big(\E_g \sup_{x\in T} \Big|\sum_{i=1}^n \eta_i g_i x_i\Big|\Big)^q\Big)^{1/q} &\lesssim \sqrt{\log|T|}\cdot \Big\|\sup_{x\in T}\sum_{i=1}^n \eta_i x_i^2\Big\|_{L^q_{\eta}}^{1/2}\\
{}&\lesssim \sqrt{\frac{qs\log|T|}{m}} + (\alpha\log |T|)\sqrt{q}
\end{align}
Since $\gamma_2(T,\|\cdot\|_2) \lesssim \sqrt{\log|T|}$ we find the conditions
\begin{align}
m&\gtrsim (\log |T|)(\log m)^4(\log n)^5/\eps^2\\
s&\gtrsim (\alpha\log |T|)^2(\log m)^4(\log n)^5/\eps^2 + (\log m)^6(\log n)^4/\eps^2,
\end{align}
which is qualitatively similar to \cite[Theorem 4.1]{Matousek08}.

\subsection{Finite collection of subspaces}
Let $\Theta$ be a finite collection of $d$-dimensional subspaces $E\subset \R^n$ with $|\Theta| = N$. Define
$$
T = \bigcup_{E\in \Theta} \{x\in E : \|x\|_2 = 1 \}. 
$$
In this case, $\gamma_2(T, \|\cdot\|_2) \lesssim \sqrt{d} + \sqrt{\log N}$. For the duration of the next two sections, we define 
$$\alpha= \sup_{E\in\Theta} \max_j \|P_E e_j\|_2.$$
Recall that $\max_j \|P_E e_j\|_2$ is referred to as the {\em incoherence} of $E$, and thus $\sqrt{d/n} \le \alpha\le 1$ (cf.\ Remark~\ref{rem:coherence}) is the maximum incoherence in $\Theta$.

\subsubsection{Collection of incoherent subspaces}
To estimate $\kappa(T)$, consider the collection $\mathcal{A}$ of operators $A = \sum_j (P_E e_j)\otimes e_j$, $E\in \Theta$. Fix $\eta$ and define $R_\eta x = \sum_j \eta_j x_j e_j$. Then applying \cite[Theorem 3.5]{KMR14} to $\mathcal{A} R_\eta$,
\begin{align}
\nonumber \Big\|\max_{E\in \Theta} \|\sum_j \eta_j g_j P_E e_j\| \Big\|_{L_g^1} &= \Big\|\max_{A\in\mathcal{A}} \| AR_\eta g\|\Big\|_{L_g^1}\\
{}&\lesssim d_F(\mathcal{A}R_\eta) + \gamma_2(\mathcal{A} R_\eta, \|\cdot\|) . \EquationName{twoc}
\end{align}
Clearly 
\begin{equation*}
\|A\| \le 1,\ \|AR_\eta\|_F = \Big(\sum_j \eta_j \|P_E e_j\|_2^2\Big)^{1/2} .
\end{equation*}
By \cite{RV07} (see the formulation in \cite[Proposition 7]{Tropp08}),
\begin{equation}
\Big(\E_\eta \|A R_\eta\|^p\Big)^{1/p} \le 3\sqrt{p} \Big(\E \|AR_\eta\|_{1,2}^p\Big)^{1/p} + \sqrt{\varrho}\cdot \|A\| ,
\end{equation}
where $\E \eta_j = \varrho$ and we assume $p \ge 2 \log n$. Here $\|\cdot\|_{1,2}$ is the $\ell_1^n \rightarrow \ell_n^2$ norm, hence 
$$\|AR_{\eta}\|_{1,2} = \max_{1\leq k\leq n}\|AR_{\eta}e_k\|_2 = \max_{1\leq k\leq n}\eta_k \|P_E e_k\|_2 \le \alpha.$$
Taking $\varrho = \frac{qs}{m\log s}$, it follows that
\begin{equation}
\Big(\E_\eta \|AR_\eta\|^p\Big)^{1/p} \lesssim \alpha\sqrt{p} + \Big(\frac{qs}{m\log s}\Big)^{1/2} \EquationName{fourc}
\end{equation}

To estimate $\kappa(T)$, we need to bound
\begin{align}
\|\Eqsub{twoc}\|_{L_\eta^q} &\le  \underbrace{\Big\|\max_E \Big(\sum_j \eta_j \|P_E e_j\|_2^2\Big)\Big\|_{L_\eta^q}^{1/2}}_{\sqrt{\Eqsub{fivec}}} \EquationName{fivec}\\
{}&+ \underbrace{\|\gamma_2(\mathcal{A} R_\eta)\|_{L_{\eta}^q}}_{\Eqsub{sixc}} \EquationName{sixc}
\end{align}
First, by \Equation{5.18} and denoting $q_1 = q + \log N$,
\begin{equation}
\Eqsub{fivec} \lesssim \varrho d + q_1 \max_j \|P_E e_j\|_2^2 = \frac{dqs}{m\log s} + (q + \log N)\alpha^2 \EquationName{sevenc}
\end{equation}
Estimate trivially
$$
\gamma_2(\mathcal{A}R_\eta) \le \sqrt{\log N} \cdot \max_{A\in\mathcal{A}} \|A R_\eta\|
$$
and hence, applying \Equation{fourc} with $p = q + 2\log n + \log N$
\begin{align}
\nonumber \Eqsub{sixc} &\lesssim \sqrt{\log N} \cdot \Big(\max_A \E \|AR_\eta\|^p\Big)^{1/p} \\
{}&\lesssim \sqrt{\log N} \cdot \Big\{(q + \log n + \log N)\alpha^2 + \frac{qs}{m\log s}\Big\}^{1/2} \EquationName{eightc}
\end{align}

Collecting our estimates we find
\begin{align}
\nonumber \kappa(T) & \lesssim \max_{q\leq\frac{m}{s}\log s}\frac{1}{\sqrt{qs}} \Bigg\{\Big(\frac{dqs}{m\log s}\Big)^{1/2} + \alpha\sqrt{q}\\
\nonumber&\hspace{.1in}{} \qquad + \sqrt{\log N}\cdot\Big[(\sqrt{q} + \sqrt{\log n} + \sqrt{\log N})\alpha + \Big(\frac{qs}{m\log s}\Big)^{1/2}\Big]\Bigg\}\\
{}&\lesssim \Big(\frac{d + \log N}{m\log s}\Big)^{1/2} + \frac{(\sqrt{\log n}{\sqrt{\log N}} + \log N)\alpha}{\sqrt{s}} \EquationName{ninec}
\end{align}

Hence in this application (assuming $\log N \ge \log n$),
\begin{equation}
\Eqsub{5.16} < (\log m)^2(\log n)^{5/2} \Big\{\Big(\frac{d + \log N}{m}\Big)^{1/2} + \frac{\alpha\log N}{\sqrt{s}}\Big\} \EquationName{10c}
\end{equation}
and the conditions on $m,s$ become 
\begin{align}
m &\gtrsim (\log m)^4(\log n)^5(d + \log N)\eps^{-2} \EquationName{11c}\\
s&\gtrsim ((\log m)^6(\log n)^4 + (\alpha \log N)^2(\log m)^4(\log n)^5)\eps^{-2} \EquationName{12c}
\end{align}

Notice that $s$ depends \emph{only} on $|\Theta|$ and not on the dimension $d$. Thus this bound is of interest when $\log|\Theta|$ is small compared with $d$.

\subsubsection{Collection of coherent subspaces}
We can also obtain a bound on $s$ that does not improve for small $\alpha$, but has linear dependence on $\log N$. Here we will not rely on \cite{RV07}. As described in \Section{intro}, this setting has applications to model-based compressed sensing. For example, for approximate recovery of block-sparse signals using the notation of \Section{intro}, our bounds will show that a measurement matrix $\Phi$ may have $m\lts kb + k\log(n/k)$, $s \lts k\log(n/k)$ and allow for efficient recovery. This is non-trivial if the number of blocks satisfies $b \gts \log(n/k)$.

One may indeed trivially bound
$$
\gamma_2(\mathcal{A} R_\eta) \lesssim \sqrt{\log N}
$$
since certainly $\|AR_\eta\| \le \|A\| \le 1$. Hence
$$
\Eqsub{sixc} \lesssim \sqrt{\log N}
$$
leading to the following bound on $\kappa(T)$
\begin{equation*}
\max_{q\leq\frac{m}{s}\log s} \frac 1{\sqrt{qs}} \Big\{\sqrt{\frac{dqs}{m\log s}} + (\sqrt{q} + \sqrt{\log N})\alpha + \sqrt{\log N}\Big\} \lesssim \sqrt{\frac dm} + \sqrt{\frac {\log N}s}
\end{equation*}

Instead of \Eqsub{11c}, \Eqsub{12c}, one may impose the conditions
\begin{align}
m &\gtrsim (\log m)^4(\log n)^5 \frac{d}{\eps^2} + (\log m)^3(\log n)^5\frac{\log N}{\eps^2}\\
s &\gtrsim (\log m)^6(\log n)^4\frac 1{\eps^2} + (\log m)^4(\log n)^5\frac{\log N}{\eps^2}
\end{align}
We remark that previous work \cite{NN13b} which achieved $m \approx d/\eps^2$ for small $s$ had worse dependence on $\log N$: in particular $s \gtrsim (\log N)^3, m \gtrsim (\log N)^6$. In fact, Conjecture 14 of \cite{NN13b} if true would imply that the correct dependence on $N$ in both $m$ and $s$ should be $\log N$ (which is optimal due to known lower bounds for the Johnson-Lindenstrauss lemma, i.e., the special case $d=1$). We thus have shown that this implication is indeed true.

\subsection{Possibly infinite collection of subspaces}\SectionName{finsler}
Assume next $\Theta$ is an arbitrary family of linear $d$-dimensional subspaces $E\subset \R^n$, equipped with the Finsler metric
$$
\rho_{\mathrm{Fin}}(E, E') = \|P_E - P_E'\| .
$$
Let
$$
T = \bigcup_{E\in\Theta} \{ x\in E : \|x\|_2 = 1 \}
$$
for which (cf.\ \cite{Dirksen14})
\begin{equation}
\gamma_2(T, \|\cdot\|_2) \lesssim \sqrt{d} + \gamma_2(\Theta, \rho_{\mathrm{Fin}}) . \EquationName{13c}
\end{equation}
Fix some parameter $\eps_0>0$ and let $\Theta_1\subset \Theta$ be a finite subset such that
\begin{align}
&|\Theta_1| \le \mathcal{N}(\Theta, \rho_{\mathrm{Fin}}, \eps_0) \EquationName{15c}\\
&\rho_{\mathrm{Fin}}(E, \Theta_1) \leq \eps_0\text{ for any } E \in \Theta \EquationName{16c}
\end{align}
Let further
$$
T_1 = \bigcup_{E'\in\Theta_1} \{x \in E' : \|x\|_2 = 1\}
$$
Let $x\in T, x\in E, E\in\Theta$ and $\rho_{\mathrm{Fin}}(E,E') \leq \eps_0$ for some $E' \in \Theta_1$. Hence
\begin{equation}
x = P_{E'} x + (P_Ex - P_{E'} x) = x_1 + x_2 \EquationName{17c}
\end{equation}
where $x_1\in T_1$ and $x_2\in B_E + B_{E'}$, $\|x_2\|_2 \leq \eps_0$. Hence $x_2\in T_2$ with
$$
T_2 = \bigcup_{E, F\in\Theta} \{x \in B_E + B_F : \|x\|_2 \leq \eps_0 \}
$$

For $t < \eps_0$, we estimate $\mathcal{N}(T_2, \|\cdot\|_2, t)$. Let $\Theta_t \subset \Theta$ satisfy
\begin{align}
&|\Theta_t| \leq \mathcal{N}\Big(\Theta, \rho_{\mathrm{Fin}}, \frac t{4}\Big) \EquationName{18c}\\
&\rho_{\mathrm{Fin}}(E, \Theta_t) \leq \frac t{4} \text{ for all } E\in\Theta \EquationName{19c}
\end{align}
By \Equation{volComp}, for each $E'\in\Theta_t$ we can find $\xi_{E'} \subset B_{E'}$ such that
\begin{align}
&\log|\xi_{E'}| \lesssim d \log\frac 1t \EquationName{20c}\\
&\rho_{\ell_2}(x, \xi_{E'}) \leq t\text{ for all } x\in B_{E'} \EquationName{21c}
\end{align}
Denote
$$
\xi_t = \bigcup_{E',F'\in\Theta_t} (\xi_{E'} - \xi_{F'})
$$
for which by construction
$$
\log|\xi_t| \lesssim \log\mathcal{N}\Big(\Theta, \rho_{\mathrm{Fin}}, \frac t{4}\Big) + d\log \frac 1t
$$
Also, for $x\in T_2$, $x = y + z \in B_E + B_F$ and $E',F'\in\Theta_t$ satisfying
$$
\rho_{\mathrm{Fin}}(E, E') \leq \frac t{4}, \rho_{\mathrm{Fin}}(F, F') \leq \frac t{4} ,
$$
we have
$$
\|x - (P_{E'} y + P_{F'} z)\|_2 \leq \frac{t}{2},
$$
while
$$
\rho_{\ell_2}(P_{E'} y, \xi_{E'}) \leq \frac t{4}, \qquad \rho_{\ell_2}(P_{F'} z, \xi_{F'}) \leq \frac t{4}
$$
Therefore
$$
\rho_{\ell_2}(x, \xi_t) \leq t
$$
and we get for $t < \eps_0$ (otherwise $\Eqsub{22c} = 0$)
\begin{equation}
\log\mathcal{N}(T_2, \|\cdot\|_2, t) \lesssim \log\mathcal{N}\Big(\Theta, \rho_{\mathrm{Fin}}, \frac t{4}\Big) + d\log \frac 1t \EquationName{22c}
\end{equation}

Using the decomposition \Equation{17c} and the bound \Equation{10c} for the contribution of $T_1$, we find
\begin{align}
\kappa(T) & \lesssim \Big\{\Big(\frac{d + \log\mathcal{N}(\Theta, \rho_{\mathrm{Fin}}, \eps_0)}{m}\Big)^{1/2} + \frac{\alpha\log\mathcal{N}(\Theta, \rho_{\mathrm{Fin}}, \eps_0)}{\sqrt{s}}\Big\} \EquationName{23c}\\
& \qquad \ + \max_{q\le \frac ms\log s}\Big\{\frac 1{\sqrt{qs}} \Big(\E_{\eta}\Big(\E_g\sup_{x\in T_2} \Big|\sum_{j=1}^n \eta_j g_j x_j\Big|\Big)^q\Big)^{1/q}\Big\} \EquationName{24c}
\end{align}
with $(\eta_j)\in\{0,1\}$ i.i.d. of mean $\frac{qs}{m\log s}$.

Estimate by the contraction principle \cite{Kahane68}, the Gaussian concentration for Lipschitz functions (\Equation{GaussLip}), and Dudley's inequality \Equation{Dudley}\cite{Dudley67}
\begin{align}
\nonumber \frac 1{\sqrt{q}} \Big\|\sup_{x\in T_2} \Big|\sum_{j=1}^n \eta_j g_j x_j\Big|\Big\|_{L_g^1,L_\eta^q} &\le \frac 1{\sqrt{q}} \Big\|\sup_{x\in T_2} \Big|\sum_{j=1}^n g_j x_j\Big|\Big\|_{L_g^q}\\
\nonumber {}&\lesssim \Big\|\sup_{x\in T_2} \Big|\sum_{j=1}^n g_j x_j\Big|\Big\|_{L_g^1}\\
\nonumber {}&\lesssim \int_0^{\eps_0} [\log\mathcal{N}(T_2, \|\cdot\|_2, t)]^{1/2} dt \\
{}&\lesssim \int_0^{\eps_0}[\log\mathcal{N}(\Theta, \rho_{\mathrm{Fin}}, t)]^{1/2} dt + \sqrt{d} \eps_0 \sqrt{\log\frac 1{\eps_0}}, \EquationName{25c}
\end{align}
where in the final step we used \Equation{22c}.

Summarizing, the conditions on $m$ and $s$ are as follows (for any $\eps_0>0$)
\begin{align}
\nonumber m &\gtrsim \eps^{-2}(\log m)^3(\log n)^5 \gamma_2^2(\Theta, \rho_{\mathrm{Fin}})\\
&\hspace{.2in}{}+\eps^{-2}(\log m)^4(\log n)^5[d + \log\mathcal{N}(\Theta, \rho_{\mathrm{Fin}}, \eps_0)] \EquationName{26c}\\
\nonumber s&\gtrsim \eps^{-2}(\log m)^6(\log n)^4 + \eps^{-2}(\log m)^4(\log n)^5[\alpha \log \mathcal{N}(\Theta, \rho_{\mathrm{Fin}}, \eps_0)]^2\\
\nonumber &\hspace{.2in}{}+\eps^{-2}(\log m)^4(\log n)^5\eps_0^2\Big(\log\frac 1{\eps_0}\Big)d\\
&\hspace{.2in}{} + \eps^{-2}(\log m)^4(\log n)^5\Big[\int_0^{\eps_0}[\log\mathcal{N}(\Theta, \rho_{\mathrm{Fin}}, t)]^{1/2} dt\Big]^2 \EquationName{27c}
\end{align}

If $|\Theta| = N < \infty$, then $\log\mathcal{N}(\Theta, \rho_{\mathrm{Fin}}, t)\leq\log N$ and \Eqsub{26c}, \Eqsub{27c} turn into \Eqsub{11c}, \Eqsub{12c} for $\eps_0\rightarrow 0$.

\subsection{Manifolds}
Let $\mathcal{M}\subset\R^n$ be a $d$-dimensional manifold obtained as the image $\mathcal{M} = F(B_{\ell_2^d})$, for a smooth map $F:B_{\ell_2^d}\rightarrow \R^n$. More precisely, we assume that $\|F(x) - F(y)\|_2 \simeq \|x - y\|_2$ and the {\em Gauss map}, which sends $x\in\mathcal{M}$ to $E_x$, the tangent plane at $x$, is Lipschitz from the geodesic distance $\rho_{\mathcal{M}}$ to $\rho_{\mathrm{Fin}}$. Following \cite{Dirksen14}, we want to  ensure that the sparse matrix $\Phi$ satisfies
\begin{equation}
(1-\eps)|\gamma| \le |\Phi(\gamma)| \le (1+\eps)|\gamma|\EquationName{28c}
\end{equation}
for any $C^1$-curve $\gamma\subset \mathcal{M}$, where $|\cdot|$ denotes curve length. Note that \Equation{28c} is equivalent to requiring
\begin{equation}
1-\eps \le \|\Phi(v)\|_2 \le 1+\eps \EquationName{29c}
\end{equation}
for any tangent vector $v$ of $\mathcal{M}$ at a point $x\in \mathcal{M}$. Denote by
$$
\Theta = \{E_x : x\in \mathcal{M}\}
$$
the tangent bundle of $\mathcal{M}$, to which we apply the estimates on $m,s$ obtained above. By assumption,
$$
\rho_{\mathrm{Fin}}(E_x, E_y) \lesssim \rho_{\mathcal{M}}(x,y) \simeq \|F^{-1}(x) - F^{-1}(y)\|_2,
$$
so that for $0\le t \le 1/2$ by \Equation{volComp},
\begin{equation*}
\log \mathcal{N}(\Theta, \rho_{\mathrm{Fin}}, t) \lesssim \log \mathcal{N}(\mathcal{M}, \|\cdot\|_2, ct) \simeq \log \mathcal{N}(B_{\ell_2^d}, \|\cdot\|_2, t)\lesssim d\log\frac 1t.
\end{equation*}

In this application
\begin{equation}
\alpha = \max_{x\in\mathcal{M}} \max_{\substack{v\in E_x\\\|v\|_2 = 1}} \max_{1\le j\le n}|\inprod{v, e_j}| \EquationName{30c}
\end{equation}
We assume
\begin{equation}
\alpha \ll \frac 1{\sqrt{d}} \EquationName{31c}
\end{equation}
to make the below of interest (otherwise apply the result of \cite{Dirksen14}). Taking $\eps_0 = \alpha\sqrt{d}$ in \Eqsub{26c}, \Eqsub{27c}, it follows that \Equation{28c} may be ensured under parameter conditions
\begin{align}
m &\gtrsim \eps^{-2}(\log m)^4(\log n)^5 d\cdot \log\Big(\frac{1}{\alpha\sqrt{d}}\Big) \EquationName{32c}\\
s &\gtrsim \eps^{-2}(\log m)^6(\log n)^4 + \eps^{-2}(\log m)^4 (\log n)^7(\alpha d)^2 \EquationName{33c}
\end{align}
Thus for $\alpha = o(1/\sqrt{d})$, the condition on $s$ becomes non-trivial. Recall from Remark~\ref{rem:coherence} that $\alpha \ge \sqrt{d/n}$ and therefore the $\log(1/(\alpha\sqrt{d}))$ term in \Equation{32c} is at most $\log n$.

\begin{remark}
\textup{
Returning to the assumptions on $F$, consider $DF : B_{\ell_2^d}\rightarrow \mathcal{L}(\R^d, \R^n)$, the space of linear operators from $\R^d$ to $\R^n$. The first statement means that uniformly for $x\in B_{\ell_2^d}$,
\begin{equation}
c^{-1} \|\xi\|_2 \le \|DF(x)\xi\|_2 \le c\|\xi\|_2\text{ for } \xi\in\R^d \EquationName{34c}
\end{equation}
The second statement follows from requiring
\begin{equation}
\|DF(x) - DF(y)\|_{2\rightarrow 2} \le c\|x - y\|_2 \text{ for } x,y\in B_{\ell_2^d} \EquationName{35c}
\end{equation}
}
\textup{
Indeed, since
$E_x = DF(x)(\R^d), E_y = DF(y)(\R^d)$, it follows from \Equation{35c} 
$$
\inf_{\substack{u\in E_x\\ \|u\|_2 = 1}} \rho_{\ell_2^n}(u, E_y) \lesssim \|x - y\|_2
$$
implying
$$
\rho_{\mathrm{Fin}}(E_x, E_y) = \|P_{E_x} - P_{E_y} \|_{2\to 2} \lesssim \|x - y\|_2 .
$$
}
\end{remark}

\begin{remark}\RemarkName{bad-manifold}
\textup{
If $\alpha \ge 1/\sqrt{d}$, then necessarily $s\gg d$ in \Equation{33c} (up to polylogarithmic factors). Thus the manifold case is very different from the case of linear subspaces. We sketch a construction to demonstrate this.
}

\textup{
Let $n > d^{10}$. Denote by $0\le \varphi \le 1$ a smooth bump function on $\R$ such that 
\begin{equation}
\varphi(t) = t \text{ for } \frac 14\le t \le \frac 12,\ \mathrm{supp}(\varphi) \subset [0,1] \EquationName{36c}
\end{equation}
}
\textup{
By the lower bound in \Equation{volComp}, there is a collection $\{a_\beta\}_{1\le \beta\le 2^d}$ of $2^d$ points in $B_{\ell_2^d}(0, 1/2) \eqdef B_{1/2}\subset\R^d$ such that
$$
\|a_{\beta'} - a_{\beta} \|_2 > \frac 1{10}\text{ for } \beta'\neq \beta
$$
and let $(\eta_\beta)_{1\le \beta\le 2^d}$ be any collection of unit vectors in $\R^n$.
}
\textup{
Consider the map $f:\R^d\rightarrow \R^n$ defined by
\begin{equation}
f(x) = \sum_{\beta=1}^{2^d} \varphi(10^4\|x - a_\beta\|_2^2)\eta_\beta \EquationName{37c}
\end{equation}
Thus by construction, the summands in \Equation{37c} are disjointly supported functions of $x$. Clearly
\begin{equation}
Df(x)\xi = 2\cdot 10^4\sum_\beta \varphi'(10^4\|x - a_\beta\|_2^2) \inprod{x - a_\beta, \xi}\eta_\beta \EquationName{38c}
\end{equation}
implying that
$$
\|Df(x)\|_{2\rightarrow 2} \le C
$$
and
\begin{equation}
\|Df(x) - Df(y)\|_{2\rightarrow 2} \le C\|x - y\|_2 \EquationName{39c}
\end{equation}
}

\textup{
Next, let $\theta_1,\ldots,\theta_d\in\R^n$ be orthogonal vectors such that
\begin{equation}
\|\theta_j\|_2 = 1, \|\theta_j\|_\infty = \frac 1{\sqrt{n}} \EquationName{40c}
\end{equation}
and define the map
\begin{align*}
&F : B_{\ell_2^d}\subset\R^d \rightarrow \R^{2n} = \R^n\times \R^n\\
&F(x) = \Big(\sum_{j=1}^d x_j\theta_j, f(x)\Big)
\end{align*}
}
\textup{
In view of \Equation{39c}, $F$ satisfies conditions \Eqsub{34c},\Eqsub{35c}. Also, by \Eqsub{38c},\Eqsub{40c}
\begin{equation}
\alpha \lesssim \sup_{x\in B_{\ell_2^d}} \max_{\|\xi\|_2 = 1}\|DF(x)\xi\|_\infty \lesssim \sqrt{\frac dn} + \max_\beta \|\eta_\beta\|_\infty \EquationName{41c}
\end{equation}
}

\textup{
Let $\mathcal{M} = F(B_{\ell_2^d})$. Fix some $1\le \beta\le 2^d$ and let for $1/200 \le t \le 1/(100\sqrt{2})$
$$
\gamma(t) = F(a_\beta + t e_1) = \Big(\sum_{j=1}^d a_{\beta,j}\theta_j + t\theta_1, 10^4t^2\eta_\beta\Big)
$$
where we used \Equation{36c}. Thus $\gamma$ is a $C^1$-curve in $\mathcal{M}$ and
\begin{equation}
\gamma'\Big(\frac 1{200}\Big) = (\theta_1, 100\eta_{\beta}) \EquationName{42c}
\end{equation}
}

\textup{
Let $\Phi$ be a sparse $m\times 2n$ matrix for which \Equation{28c} holds. Then $\Phi$ has to satisfy in particular
$$
(1-\eps)(1+10^4)^{1/2} \le \|\Phi(\theta_1, 100\eta_\beta)\|_2 \le (1+\eps)(1+10^4)^{1/2}
$$
and hence
\begin{equation}
\|\Phi(\theta_1, 100\eta_\beta)\|_2 < 10^3 \text{ for all } 1\le \beta\le 2^d \EquationName{43c}
\end{equation}
}
\textup{
Choose $k$ such that
$$
2^d > \binom{n}{k}2^k,\text{ i.e. } k\simeq \frac d{\log n}
$$
and let $(\eta_\beta)$ be the collection of $2^k\binom{n}{k}$ vectors in $\R^n$ of the form
\begin{equation}
\eta = \frac 1{\sqrt{k}} \sum_{j\in I} \pm e_j\text{ with } I\subset \{1,\ldots,n\}, |I| = k \EquationName{44c}
\end{equation}
}
\textup{
By \Equation{43c}, $\Phi$ needs to satisfy
\begin{equation}
\|\Phi(0,\eta)\|_2 \le 20 \EquationName{45c}
\end{equation}
for all vectors $\eta$ of the form \Equation{44c}. But if $m < n/d$, \Equation{45c} implies that $s\ge k/400 \gtrsim d/\log n$. On the other hand, \Equation{41c} gives
$$
\alpha \lesssim \sqrt{\frac dn} + \frac 1{\sqrt{k}} \simeq\  \sqrt{\frac{\log n}d}
$$
}
\end{remark}

\subsubsection{Geodesic distances}\SectionName{geodesic}

In this section we show that not only do sparse maps preserve curve lengths on manifolds, but in fact they preserve geodesic distances.

\begin{lemma}\LemmaName{lem1}
Fix $x\in\R^n$ such that $|x_j| \ge 1$ for $j\in J\subset\{1,\ldots,n\}, |J| = r$. Then 
\begin{equation}
\Big|\Big\{ 1\le i\le m: \sum \Phi_{ij}^2 x_j^2 \ge \frac 1s \Big\} \Big| > \min\Big\{\frac{sr}3, cm\Big\} \eqdef r_1 
\end{equation}
with probability (with respect to $\Phi$) at least
\begin{equation}
1 - 2^{-sr}
\end{equation}
\end{lemma}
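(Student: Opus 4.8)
The plan is to reduce the statement to a purely combinatorial covering estimate about the sparsity pattern $(\delta_{ij})$ and then prove that estimate by a direct union bound, which will automatically produce the strong probability $1-2^{-sr}$ without any need for Chernoff-type concentration.

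\emph{Reduction.} Since $\Phi_{ij}=\tfrac1{\sqrt s}\sigma_{ij}\delta_{ij}$, one has $\Phi_{ij}^2=\tfrac1s\delta_{ij}$, so for every row $i$,
\[
\sum_{j=1}^n\Phi_{ij}^2x_j^2=\frac1s\sum_{j=1}^n\delta_{ij}x_j^2\ \ge\ \frac1s\sum_{j\in J}\delta_{ij}x_j^2\ \ge\ \frac1s\sum_{j\in J}\delta_{ij},
\]
using $x_j^2\ge1$ for $j\in J$. Hence the set whose cardinality we must bound contains $\{i:\sum_{j\in J}\delta_{ij}\ge1\}$, so it suffices to prove that the number $M$ of rows that are ``hit'' by at least one column $j\in J$ exceeds $r_1$ with probability at least $1-2^{-sr}$. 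Note also that if $r_1<s$ this is automatic, since one column of $J$ alone occupies exactly $s$ rows; so we may assume $\ell:=\floor{r_1}\ge s\ge1$ (and $\ell<m$ since $r_1\le cm<m$).

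\emph{Union bound over coverings.} The key observation is that $\{M\le r_1\}$ forces the supports $\{i:\delta_{ij}=1\}$ of all $j\in J$ to lie inside a common set $R$ of $\ell$ rows. Since the columns of $\Phi$ are independent and, for a fixed $R$ with $|R|=\ell$, a fixed column has all its $s$ nonzeros in $R$ with probability at most $(\ell/m)^s$ — this holds with that constant for both concrete SJLT constructions, and with $(e\ell/m)^s$ in general by summing the negative-correlation inequality \Equation{neg-cor} over the $\binom{\ell}{s}$ admissible supports — a union bound over the $\binom{m}{\ell}$ choices of $R$ yields $\Pr(M\le r_1)\le\binom m\ell(\ell/m)^{sr}$. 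It then remains to fix a suitable absolute constant $c$ so that $\binom m\ell(\ell/m)^{sr}\le2^{-sr}$, i.e.\ $\binom m\ell\le(m/2\ell)^{sr}$; using $\binom m\ell\le(em/\ell)^{\ell}$ one checks this separately in the two branches of the $\min$. When $\ell=\floor{sr/3}$ (so $3m/(sr)\ge 1/c$) the inequality reduces, with $v=3m/(sr)$, to roughly $(ev)^{1/3}\le v/2$, valid once $c$ is a small enough constant; when $\ell=\floor{cm}$ (so $sr>3cm$) it reduces, using $(1/2c)^{sr}\ge(1/2c)^{3cm}$, to $8ec^2\le1$. Both are satisfied for $c=1/8$, and a slightly smaller $c$ also absorbs the extra factor $e^s$ coming from the general per-column bound.

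I expect the only genuinely delicate point to be this last step: obtaining the per-column covering probability with a clean enough constant, and pushing the combinatorial inequality through both branches of the $\min$ defining $r_1$, so that everything closes for an honest absolute constant $c$ rather than one depending on $m,s$. The reduction and the union-bound set-up are routine; the arithmetic optimization of $c$ is where some care is needed.
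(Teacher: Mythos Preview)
Your proposal is correct and follows essentially the same approach as the paper: the same reduction to ``every column in $J$ has its support inside a fixed row set $I$,'' the same per-column bound $(e|I|/m)^s$ via negative correlation, and the same union bound over $\binom{m}{|I|}$ choices. The only cosmetic difference is the final arithmetic: the paper uses $r_1\le sr/3$ and $r_1\le cm$ simultaneously to get $\binom{m}{r_1}(er_1/m)^{sr}\le(em/r_1)^{r_1}(er_1/m)^{sr}\le(e^2r_1/m)^{2sr/3}<2^{-sr}$ in one line, whereas you split into the two branches of the $\min$; both work.
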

\begin{proof}
Fix $I\subset \{1,\ldots,m\}$, $|I|\le r_1$ and assume $\sum_j \Phi_{ij}^2 x_j^2 < 1/s$ for $i\notin I$. This means that for any $j\in J$, $S = \eqdef\{i ; \Phi_{ij} \neq 0\} \subset I$. The probability (with respect to $\Phi$) that $\{i ; \Phi_{ij} \neq 0\} \subset I$ (with $j$ fixed) is (by \Equation{neg-cor})
\begin{align*}
\sum_{\substack{K\subset I\\|K|=s}}\Pr_{\Phi}(S = K) &= \sum_{\substack{K\subset I\\|K|=s}} \E \prod_{i\in K} \delta_{ij}\\
{}&\le \sum_{\substack{K\subset I\\|K|=s}} \Big(\frac sm\Big)^s\\
{}&\le \binom{|I|}{s} \cdot \Big(\frac sm\Big)^s\\
{}&\le \Big(\frac{e\cdot |I|}m\Big)^s
\end{align*}

Since for different $j$ the events are independent, it follows the probability that
$$
\Big\{ 1\le i\le m ; \sum \Phi_{ij}^2 x_j^2 \ge \frac 1s \Big\} \subset I
$$
is bounded by $((e|I|/m)^s)^r \le (er_1/m)^{sr}$. Taking a union bound over all $I\subset\{1,\ldots,m\},|I|\le r_1$ gives by the choice of $r_1$
$$
\binom{m}{|I|}\cdot \Big(\frac{er_1}m\Big)^{sr} \le  \Big(\frac {em}{r_1}\Big)^{r_1} \Big(\frac{er_1}m\Big)^{sr} \le \Big(\frac{e^2r_1}m\Big)^{2sr/3} < 2^{-sr}
$$
\end{proof}
For the following lemma recall that for $a\in\R^n$ and $(\sigma_j)$ a Rademacher vector
\begin{equation}
\EquationName{Berger}
\Pr_\sigma\Big(\Big|\sum_j a_j \sigma_j\Big| < \frac 12 \|a\|_2\Big) < \frac 45. 
\end{equation}
This is a consequence of the Paley-Zygmund inequality, see e.g.\ \cite[Theorem 3.6]{Berger97}.
\begin{lemma}\LemmaName{lem2}
Let $x\in\R^n$ satisfy the assumption of \Lemma{lem1}. Then
\begin{equation}
\Pr_{\Phi}\Big(\|\Phi x\|_2 < \frac 1{2\sqrt{s}}\Big) < 2^{-cr_1} \EquationName{three}
\end{equation}
\end{lemma}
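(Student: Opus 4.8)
The plan is to condition on the random support pattern $\del$ and then exploit the independence of the Rademacher signs across rows together with the anti-concentration estimate \Eqsub{Berger}.

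First I would apply \Lemma{lem1}: with probability at least $1-2^{-sr}$ over the choice of $\del$ there is a set $I\subset\{1,\ldots,m\}$, namely $I=\{1\le i\le m:\sum_j\Phi_{ij}^2x_j^2\ge 1/s\}$, with $|I|\ge r_1$. Writing $a^{(i)}=(\del_{ij}x_j)_{j=1}^n$, the defining condition of $I$ reads $\|a^{(i)}\|_2^2=\sum_j\del_{ij}x_j^2=s\sum_j\Phi_{ij}^2x_j^2\ge 1$ for every $i\in I$, since $\Phi_{ij}^2=\del_{ij}/s$. Call such a $\del$ \emph{good} and fix one; everything that follows is conditional on $\del$.

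Next, note that $(\Phi x)_i=\tfrac1{\sqrt s}\sum_j\si_{ij}\del_{ij}x_j$ depends only on the signs $(\si_{ij})_j$ in row $i$, and the $\si_{ij}$ are independent of $\del$ and mutually independent across rows; hence conditionally on $\del$ the coordinates $\{(\Phi x)_i\}_{1\le i\le m}$ are independent. For each $i\in I$, since $\|a^{(i)}\|_2\ge 1$ we have the inclusion of events $\{|(\Phi x)_i|<\tfrac1{2\sqrt s}\}\subset\{|\sum_j\si_{ij}a^{(i)}_j|<\tfrac12\|a^{(i)}\|_2\}$, whose $\bP_\si$-probability is strictly less than $4/5$ by \Eqsub{Berger}. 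As $\|\Phi x\|_2<\tfrac1{2\sqrt s}$ forces $|(\Phi x)_i|<\tfrac1{2\sqrt s}$ for \emph{all} $i$, and in particular for all $i\in I$, independence across the rows in $I$ gives
\[
\bP_\si\Big(\|\Phi x\|_2<\tfrac1{2\sqrt s}\Big)\le\prod_{i\in I}\bP_\si\Big(|(\Phi x)_i|<\tfrac1{2\sqrt s}\Big)<\Big(\tfrac45\Big)^{|I|}\le\Big(\tfrac45\Big)^{r_1}.
\]

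Finally I would remove the conditioning on $\del$: averaging and using \Lemma{lem1} to bound the probability of a bad $\del$,
\[
\bP_\Phi\Big(\|\Phi x\|_2<\tfrac1{2\sqrt s}\Big)\le\bP_\del(\del\text{ not good})+\Big(\tfrac45\Big)^{r_1}\le 2^{-sr}+\Big(\tfrac45\Big)^{r_1}.
\]
Since $r_1\le sr$ and $(4/5)^{r_1}=2^{-\log_2(5/4)\,r_1}$, both terms are at most $2^{-cr_1}$ for a suitable absolute constant $c>0$, which gives the claimed bound \Eqsub{three}. I do not expect a genuine obstacle here; the only points that need care are verifying that conditioning on $\del$ really decouples the rows so the per-row failure probabilities multiply, and the constant bookkeeping needed to absorb the $2^{-sr}$ contribution of \Lemma{lem1} into a single clean exponent $cr_1$ using $r_1\le sr$.
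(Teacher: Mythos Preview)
Your proof is correct and follows essentially the same route as the paper: condition on the support pattern $\del$, invoke \Lemma{lem1} to obtain at least $r_1$ ``heavy'' rows, apply the Paley--Zygmund anti-concentration bound \Eqsub{Berger} in each such row, and multiply using independence of the signs across rows. If anything you are slightly more careful than the paper, which simply writes ``We may assume\ldots'' for the conditioning step and does not explicitly add back the $2^{-sr}$ failure probability from \Lemma{lem1}; your observation that $r_1\le sr$ lets both terms be absorbed into a single $2^{-cr_1}$.
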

\begin{proof}
We may assume that $\sum \Phi_{ij}^2x_j^2 \ge 1/s$ for $i$ in a subset $I\subset \{1,\ldots,m\}$, $|I| \ge r_1$. Exploiting the random signs of $\Phi_{ij}$, we find by \Equation{Berger} for each $i\in I$
\begin{equation}
\Pr\Big(\Big|\sum_j \Phi_{ij}x_j\Big| < \frac 1{2\sqrt{s}}\Big) < \frac 45\EquationName{four}
\end{equation}
If $\|\Phi x\|_2 < 1/(2\sqrt{s})$, then $|\sum_j \Phi_{ij}x_j| < 1/(2\sqrt{s})$ for all $i$, in particular for all $i\in I$. By \Equation{four} the probability for this event is at most
$$
\Big(\frac 45\Big)^{|I|} < 2^{-cr_1} ,
$$
proving \Equation{three}.
\end{proof}

\begin{corollary}\CorollaryName{cor1}
Let $\xi\subset\R^n$ be a set of vectors $x$ with following properties:
\begin{enumerate}
\item Each $x\in\xi$ has a decomposition $x = x' + x''$, $x'\in \xi'$ and there is a set $J = J_{x'}\subset \{1,\ldots,n\}$, $|J|\ge r$ so that $|x_j'| \ge 1$ for $j\in J$. Moreover 
\begin{equation}
\|x''\|_2 < \frac 1{10n} \EquationName{smallnorm}
\end{equation}
\item \begin{equation}|\xi'| < 2^{c\min\{sr, m\}}\EquationName{xiprime}\end{equation}
\end{enumerate}
Then
\begin{equation}
\|\Phi x\|_2 > \frac 1{4\sqrt{s}}\text{ for all } x\in\xi \EquationName{normlb}
\end{equation}
with probability at least $1 - 2^{-c\min\{sr,m\}}$.
\end{corollary}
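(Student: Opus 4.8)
The plan is to apply Lemma~\ref{lem:lem2} to each $x'\in\xi'$, combine these estimates by a union bound over the net $\xi'$, and then absorb the small perturbation $x''$ via a crude deterministic bound on the operator norm of $\Phi$. Concretely: fix $x\in\xi$ and write $x=x'+x''$ as in condition (a); after discarding coordinates we may assume $|J_{x'}|=r$, so that $x'$ satisfies the hypothesis of \Lemma{lem1}. By \Lemma{lem2}, $\|\Phi x'\|_2\ge \frac 1{2\sqrt s}$ holds with probability at least $1-2^{-cr_1}$, and since $r_1=\min\{sr/3,cm\}\ge c_0\min\{sr,m\}$ for an absolute constant $c_0>0$, the failure probability for a single point of $\xi'$ is at most $2^{-c_1\min\{sr,m\}}$ with $c_1=cc_0>0$ absolute. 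A union bound over $\xi'$, using condition (b) with the (implicit) understanding that the constant $c$ in $|\xi'|<2^{c\min\{sr,m\}}$ is a sufficiently small absolute constant, chosen strictly smaller than $c_1$, shows that with probability at least $1-2^{-c_2\min\{sr,m\}}$ (some absolute $c_2>0$) we have $\|\Phi x'\|_2\ge \frac 1{2\sqrt s}$ simultaneously for every $x'\in\xi'$.

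The perturbation term is then handled with no randomness. Every column of $\Phi$ has exactly $s$ nonzero entries, each of modulus $1/\sqrt s$, hence unit $\ell_2$-norm, so $\|\Phi z\|_2\le\sum_j|z_j|\,\|\Phi e_j\|_2=\|z\|_1\le\sqrt n\,\|z\|_2$ for all $z\in\R^n$. Applying this to $z=x''$ and invoking \Equation{smallnorm} together with $s\le m\le n$ gives $\|\Phi x''\|_2\le \sqrt n/(10n)=1/(10\sqrt n)\le 1/(10\sqrt s)$. Hence, on the event above, for every $x=x'+x''\in\xi$,
\[
\|\Phi x\|_2 \;\ge\; \|\Phi x'\|_2 - \|\Phi x''\|_2 \;\ge\; \frac 1{2\sqrt s} - \frac 1{10\sqrt s} \;=\; \frac{2}{5\sqrt s} \;>\; \frac{1}{4\sqrt s},
\]
which is exactly \Equation{normlb}, and the event has probability at least $1-2^{-c_2\min\{sr,m\}}$, as required.

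The only genuinely delicate point is the bookkeeping of absolute constants in the union bound: one must ensure that the constant governing $|\xi'|$ in condition (b) is strictly dominated by the constant $c_1=cc_0$ coming from the per-point deviation estimate of \Lemma{lem2}, so that summing the failure probability $2^{-cr_1}$ over the $|\xi'|$ points of the net still leaves an exponentially small bound in $\min\{sr,m\}$. Everything else — the triangle inequality and the trivial estimate $\|\Phi\|_{\ell_2\to\ell_2}\le\sqrt n$ — is routine, and I expect no further obstacle.
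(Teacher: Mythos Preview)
Your proof is correct and follows essentially the same route as the paper: apply \Lemma{lem2} to each $x'\in\xi'$, take a union bound over $\xi'$ using condition (b), and absorb the perturbation $x''$ via the crude deterministic bound $\|\Phi\|_{\ell_2\to\ell_2}\le\sqrt{n}$ together with the triangle inequality. Your treatment is in fact more explicit than the paper's about the constant bookkeeping in the union bound and the justification of $\|\Phi\|\le\sqrt n$.
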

\begin{proof}
Write
$$
\|\Phi x\|_2 \ge \|\Phi x'\|_2 - \|\Phi x''\|_2 \ge \|\Phi x'\|_2 - \sqrt{n}\cdot \|x''\|_2 \ge \|\Phi x'\|_2 - \frac 1{10\sqrt{n}} .
$$
since clearly $\|\Phi\|_2 \le \sqrt{n}$. Next, \Lemma{lem2} and a union bound will ensure that $\|\Phi x'\|_2 \ge 1/(2\sqrt{s})$ for all $x'\in\xi'$ with the desired probability.
\end{proof}

\begin{lemma}\LemmaName{smalllb}
Let $s\ge c(\log n)^2$. Let $\xi\subset \R^n$ be a finite set of unit vectors satisfying
\begin{equation}
\log|\xi|< cm
\end{equation}
Then with probability at least $1 - e^{-cs}$ for some constant $c>0$, $\Phi$ satisfies
\begin{equation}
\|\Phi x\|> e^{-c(\log n)^2}\text{ for } x\in\xi .
\end{equation}
\end{lemma}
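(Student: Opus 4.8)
The plan is to run a dyadic decomposition of the coordinates of each $x\in\xi$ separately, feed the pieces into \Lemma{lem1}, \Lemma{lem2} and \Corollary{cor1}, and then union bound over $\xi$, exploiting that $s\le m$ to convert a per-vector failure probability of order $e^{-cm}$ into the advertised $e^{-cs}$. First I would discard negligible coordinates: the $j$ with $|x_j|<n^{-3}$ contribute at most $n^{-5}$ to $\|x\|_2^2$, so it suffices to bound $\|\Phi\bar x\|_2$ from below for $\bar x$ the restriction of $x$ to $J:=\{j:|x_j|\ge n^{-3}\}$. The surviving magnitudes occupy $O(\log n)$ dyadic bands $D_\ell=\{j\in J:2^{-\ell-1}<|x_j|\le 2^{-\ell}\}$, $0\le\ell\le L=O(\log n)$, and since $\sum_\ell\|\bar x|_{D_\ell}\|_2^2\ge\tfrac12$ there is a heaviest band $D_{\ell^*}$ with $\|\bar x|_{D_{\ell^*}}\|_2^2\ge 1/(2L)$; the rescaled vector $2^{\ell^*+1}\bar x|_{D_{\ell^*}}$ has $r:=|D_{\ell^*}|$ coordinates of absolute value $\ge 1$, which is exactly the hypothesis of \Lemma{lem1} and \Lemma{lem2}.

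Then I would split on the size of $D_{\ell^*}$ relative to the threshold $r_0:=cm/s$. In the ``spread out'' case $r\ge r_0$, one has $sr\gtrsim m$, so $\min\{sr,m\}\gtrsim m$ and $|\xi|<2^{cm}$ fits the cardinality requirement of \Corollary{cor1}; writing $2^{\ell^*+1}\bar x=(2^{\ell^*+1}\bar x|_{D_{\ell^*}})+(2^{\ell^*+1}\bar x|_{J\setminus D_{\ell^*}})$ and invoking \Corollary{cor1} gives $\|\Phi\bar x\|_2\gtrsim 2^{-\ell^*}/\sqrt s\gg e^{-c(\log n)^2}$ with probability $1-2^{-cm}$, and since $\log|\xi|<cm$ with $c$ small against the universal constant in \Lemma{lem1}, a union bound over $\xi$ costs $2^{cm}\cdot 2^{-c'm}\le e^{-(c'-c)m}\le e^{-(c'-c)s}$. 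In the ``concentrated'' case $r<r_0$ for every band, $\bar x$ is supported on at most $Lr_0=O((m/s)\log n)$ coordinates; using $s\ge c_1(\log n)^2$ with $c_1$ large, $\bar x$ is $k$-sparse with $k\ll m/\log n$, hence $k\log(n/k)\ll m$. Here I would peel the bands from the top: at step $\ell$, conditioning on all sign variables used so far, the band $D_\ell$ contributes in each row it meets an independent Rademacher sum of size $\simeq 2^{-\ell}/\sqrt s$, so that \Lemma{lem2}, the symmetrisation identity $2\,\mathbb{P}(|c+W|<t)-1\le\mathbb{P}(|W|<t)$, and the small-ball bound \Equation{Berger} prevent annihilation, the cumulative loss over the $L=O(\log n)$ bands being a polynomial-in-$n$ factor raised to the power $O(\log n)$, i.e.\ $e^{-O((\log n)^2)}$; the union bound is then taken over the $\sum_{j\le k}\binom nj\le e^{O(k\log(n/k))}=e^{o(m)}$ possible supports (with a fine net of values on each) against a per-instance failure $e^{-\Omega(s)}$, which again survives because $k\log(n/k)\ll m$ and $s\le m$.

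The hard part will be the interference between dyadic scales. \Lemma{lem2} and \Corollary{cor1} control $\|\Phi y\|_2$ only for $y$ equal to a block of ``large'' coordinates plus a genuinely tiny ($\ell_2<1/(10n)$) remainder, whereas a general unit vector has mass at many scales and rescaling one band to have entries $\ge1$ inflates the other bands by the same factor, so they are not automatically negligible. The real content is to dominate these cross terms — in the spread-out case by showing that the heaviest band alone forces $\Omega(m)$ rows with an irreducible contribution (a weighted Chernoff estimate over the rows produced by \Lemma{lem1}), and in the concentrated case by the band-by-band peeling above with honest independence of a fresh block of sign variables at each step — and it is precisely the tension between needing an exponentially small failure probability for the union bound over a set as large as $e^{cm}$ while only having $\Omega(s)$-dimensional flatness to work with that both forces the weak conclusion $\|\Phi x\|>e^{-c(\log n)^2}$ and makes the hypothesis $s\ge c(\log n)^2$ essential.
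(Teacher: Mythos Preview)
Your spread-out case invokes \Corollary{cor1} with the decomposition $2^{\ell^*+1}\bar x = 2^{\ell^*+1}\bar x|_{D_{\ell^*}} + 2^{\ell^*+1}\bar x|_{J\setminus D_{\ell^*}}$, but the second piece does not satisfy $\|x''\|_2<1/(10n)$ --- the other bands, rescaled by the same factor, can carry essentially all of the mass. This part is repairable, since \Lemma{lem2} applies to the \emph{full} rescaled vector (it only asks for $r$ coordinates of modulus $\ge 1$ and imposes no constraint on the rest), so a direct per-vector application of \Lemma{lem2} followed by a union bound over $\xi$ does give failure $2^{-\Omega(m)}$ whenever some band has $\ge cm/s$ elements.

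The genuine gap is your concentrated case. You claim a net of log-cardinality $k\log(n/k)=o(m)$ and per-instance failure $e^{-\Omega(s)}$, and say the union bound ``survives because $k\log(n/k)\ll m$ and $s\le m$''. This arithmetic is wrong: to beat a net of log-size $\simeq m$ you need per-instance failure $e^{-\Omega(m)}$, and the hypothesis only gives $s\gtrsim(\log n)^2$, which may be far smaller than $m$. Nothing in your peeling sketch produces $e^{-\Omega(m)}$: the symmetrisation bound $\Pr(|c+W|<t)\le\tfrac12(1+\Pr(|W|<t))$ together with \Equation{Berger} only keeps each row alive with probability bounded away from zero at each step, and in the concentrated regime each band has fewer than $cm/s$ elements, so \Lemma{lem1} and \Lemma{lem2} yield at best $r_1\simeq sr<cm$ good rows and failure $2^{-csr}$, not $2^{-cm}$. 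There is no mechanism by which the $O(\log n)$ bands compound to close this gap.

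The paper's argument is structurally different and avoids the problem. Rather than dyadic magnitude bands, it stratifies $\xi$ by the decreasing rearrangement: $\xi_\ell=\{x:x^*_{m/(2^\ell K^2)}>n^{-K+2\ell}\}$ with $K\simeq\log n$. For $x\in\xi_\ell\setminus\xi_{\ell-1}$ one has \emph{simultaneously} that the top $r=m/(2^\ell K^2)$ coordinates exceed $n^{-K+2\ell}$ (from $x\in\xi_\ell$) and that all coordinates beyond the top $2r$ are below $n^{-K+2\ell-2}$ in $\ell_\infty$ (from $x\notin\xi_{\ell-1}$). After rescaling by $n^{K-2\ell}$ the head is $\ge 1$ and the tail genuinely has $\ell_2$-norm $<1/(10n)$, so \Corollary{cor1} applies. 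The net is taken only over the $\simeq 2r$ head coordinates at resolution $n^{-K}$, giving $\log|\xi'|\lesssim r\cdot K\log n$; the condition $\log|\xi'|<c\,sr$ of \Corollary{cor1} then reduces exactly to $s\gtrsim K\log n\simeq(\log n)^2$. The point your dyadic decomposition misses is that the net size must be tied to $r$ (the number of guaranteed large coordinates after rescaling), not to the total support $k$: the rearrangement thresholds force head size $\simeq 2r$, whereas picking out a single magnitude band leaves bands both above and below it uncontrolled.
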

\begin{proof}
Given $x\in\R^n$, let $x^*$ be the decreasing rearrangement of $|x_i|$. Let $K = c\log n$. We partition $\xi$ as 
$$
\xi = \bigcup_{\ell} (\xi_\ell\backslash (\xi_0\cup \xi_1 \cup \ldots \cup \xi_{\ell-1}))
$$
where $\xi_{-1}=\emptyset$, and for $\ell\geq 0$ satisfying $2^\ell K^2 < m$,
$$\xi_\ell = \{x \in\xi : x^*_{m/(2^\ell K^2)} > n^{-K + 2\ell}\}$$
For the vectors $x\in\xi_0$, apply \Corollary{cor1} with $x=x', r = m/K^2$ (after rescaling by $n^{-K}$). Since $csr > cm> \log|\xi| \ge \log|\xi_0|$, \Eqsub{xiprime} holds and by \Equation{normlb}, we ensure that $\|\Phi x\|_2 > n^{-K}/(4\sqrt{s})$ for all $x\in\xi_0$.

Consider the set $\xi_\ell' = \xi_\ell\backslash(\xi_0\cup\xi_1 \cup\ldots\cup\xi_{\ell-1})$. Thus each $x\in\xi_\ell'$ has a decomposition
$$
x = y + z
$$
where $y$ is obtained by considering the $m/(2^{\ell-1}K^2)$ largest coordinates of $x$ and $\|z\|_\infty\le x^*_{m/(2^{\ell-1}K^2)} \le n^{-K+2\ell-2}$ since $x\notin \xi_{\ell-1}$.

Note also that
$$
y\in \bigcup_{\substack{S\subset\{1,\ldots,n\}\\|S| = \frac m{2^{\ell-1}K^2}}} B_{X_S}, \text{ with } X_S = \{e_j : j\in S\} .
$$
Let $\mathfrak{f}_S \subset B_{X_S}$ be a finite subset such that
\begin{align}
\nonumber \mathrm{dist}(y, \mathfrak{f}_S) &< n^{-K}\text{ for all } y\in B_{X_S}\\
\log|\mathfrak{f}_S| &\lesssim \frac{m}{2^{\ell-1}K^2}\log n^K = \frac{m\log n}{2^{\ell-1}K} \EquationName{finite}
\end{align}

Hence $y = x' + w$ with
\begin{equation}
x' \in \bigcup_S \mathfrak{f}_S,\ \|w\|_2 < n^{-K} \EquationName{contained}
\end{equation}

Apply \Corollary{cor1} to the set $n^{K - 2\ell}\xi'_\ell$, considering the decomposition
$$
n^{K-2\ell}x = n^{K-2\ell}x' + n^{K-2\ell}(w + z)
$$
satisfying by \Equation{contained}
\begin{align*}
\|n^{K - 2\ell}(w+z)\|_2 &< n^{-2\ell} + n^{K-2\ell}\|z\|_\infty\\
{}&< n^{-2\ell} + n^{-2} < \frac 1{10 n}
\end{align*}
which is condition \Eqsub{smallnorm}.

Also $n^{K-2\ell}x'\in\xi'$, where by \Equation{finite} (and choice of $s,K$)
$$
\log|\xi'| \le \log\binom{n}{m/(2^{\ell-1}K^2)} + \frac{m\log n}{2^{\ell-1}K} \lesssim \frac{m\log n}{2^{\ell-1}K} < c\min\Big\{\frac{sm}{2^\ell K^2}, m\Big\} ,
$$
which is condition \Eqsub{xiprime} with $r = m/(2^{\ell}K^2)$.

Let $J_x$ be the set of $r = m/(2^\ell K^2)$ largest coordinates of $x$ (which are also the $r$ largest coordinates of $y$). Hence, for $j\in J_x$
\begin{align*}
n^{K-2\ell}|x_j'| &\ge n^{K-2\ell}|y_j| - n^{K-2\ell}|w_j|\\
{}&= n^{K-2\ell}|x_j| - n^{K-2\ell}|w_j|\\
{}& > n^{K-2\ell}n^{-K+2\ell} - n^{K-2\ell}n^{-K} > \frac 12
\end{align*}
since $x\in\xi_\ell$. By \Equation{normlb}
$$
\|\Phi x\| > n^{-K+2\ell}\cdot \frac 1{4\sqrt{s}}\text{ for all } x\in\xi_\ell
$$
with probability at least $1 - e^{-c\min\{m, sr\}} \geq 1 - e^{-cs}$, since $m/(2^\ell K^2) \ge 1$.
\end{proof}

Let $\mathcal{M}\subset\R^n$ be a $d$-dimensional manifold obtained as a bi-Lipschitz image of the unit ball $B_{\ell_2^d}$, i.e.\ $F:B_{\ell_2^d}\rightarrow \mathcal{M}$ satisfies
\begin{equation}
\|F(x) - F(y)\|_2 \simeq \|x - y\|_2 .\EquationName{bilip}
\end{equation}

We assume moreover $F$ is smooth, more specifically $DF:B_{\ell_2^d}\rightarrow \mathcal{L}(\R^d, \R^n)$ (i.e.\ linear maps from $\R^d$ to $\R^n$ under operator norm) is Lipschitz, i.e.,
\begin{equation}
\|DF(x) - DF(y)\|_{\ell_2^d\rightarrow\ell_2^n} \lesssim \|x - y\|_2\EquationName{lipschitz-derivative}
\end{equation}
\begin{lemma}
Let $\mathcal{M}$ be as above. Assume
\begin{align}
\nonumber m&\gtrsim d(\log n)^2\\
s&\gtrsim (\log n)^2 \EquationName{parameters}
\end{align}
Then with probability at least $1 - e^{-cs}$ for some constant $c>0$, $\Phi|_{\mathcal{M}}$ is bi-Lipschitz, and more specifically
\begin{equation}
\|\Phi(x) - \Phi(y)\|_2 \ge e^{-c(\log n)^2}\cdot \|x-y\|_2\text{ for }x,y\in\mathcal{M} \EquationName{terrible-distortion}
\end{equation}
\end{lemma}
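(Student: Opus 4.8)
The plan is to deduce \Equation{terrible-distortion} from \Lemma{smalllb} by exhibiting a single finite set $\xi$ of unit vectors, of cardinality at most $e^{cm}$, which is a sufficiently fine net in $\ell_2^n$ of the set of all normalized secants
$$S = \Big\{\frac{F(a)-F(b)}{\|F(a)-F(b)\|_2}\ :\ a,b\in B_{\ell_2^d},\ a\neq b\Big\}.$$
Since each column of $\Phi$ has Euclidean norm $1$, one has the crude bound $\|\Phi\|_{\ell_2^n\to\ell_2^m}\le\sqrt n$ (as already used in the proof of \Corollary{cor1}); hence if $\xi$ is an $n^{-K}$-net of $S$ with $K\simeq\log n$, then \Lemma{smalllb} --- which applies because $s\gtrsim(\log n)^2$ by \Equation{parameters} and $\log|\xi|\lesssim m$ --- gives, with probability at least $1-e^{-cs}$, that $\|\Phi v'\|_2>e^{-c(\log n)^2}$ for all $v'\in\xi$, and therefore $\|\Phi v\|_2>e^{-c(\log n)^2}-\sqrt n\, n^{-K}>\tfrac12 e^{-c(\log n)^2}$ for every $v\in S$, once $K$ is a large enough constant multiple of $\log n$. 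Writing $x=F(a)$, $y=F(b)$ and $v=(x-y)/\|x-y\|_2\in S$ then gives $\|\Phi(x)-\Phi(y)\|_2=\|x-y\|_2\,\|\Phi v\|_2\gtrsim e^{-c(\log n)^2}\|x-y\|_2$, which is the claim; combined with the trivial upper bound $\|\Phi(x)-\Phi(y)\|_2\le\sqrt n\,\|x-y\|_2$ this also gives that $\Phi|_{\mathcal M}$ is bi-Lipschitz.

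To construct $\xi$ I would fix a threshold $\delta_0=n^{-c'\log n}$ and split $S$ by the size of $\|a-b\|_2$. On the region $\|a-b\|_2\ge\delta_0$, the bi-Lipschitz bound \Equation{bilip} forces $\|F(a)-F(b)\|_2$ to stay between absolute constant multiples of $\delta_0$ and of $1$, so $(a,b)\mapsto (F(a)-F(b))/\|F(a)-F(b)\|_2$ is Lipschitz on that region with constant $\lesssim\delta_0^{-2}$; pushing forward a $\delta_0^{2}n^{-K}$-net of $B_{\ell_2^d}\times B_{\ell_2^d}$, which by \Equation{volComp} has log-cardinality $\lesssim d\log(\delta_0^{-2}n^{K})\lesssim d(\log n)^2\lesssim m$, produces an $O(n^{-K})$-net of this part of $S$. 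On the region $\|a-b\|_2<\delta_0$, the Lipschitz-derivative hypothesis \Equation{lipschitz-derivative} yields the Taylor estimate $F(a)-F(b)=DF(b)(a-b)+R$ with $\|R\|_2\lesssim\|a-b\|_2^2$; together with $\|DF(b)(a-b)\|_2\simeq\|a-b\|_2$ (again \Equation{bilip}) this shows the corresponding normalized secant lies within $O(\|a-b\|_2)=O(\delta_0)\le O(n^{-K})$ (taking $\delta_0\le n^{-K}$, consistent with $\delta_0=n^{-c'\log n}$) of the unit tangent vector $DF(b)(a-b)/\|DF(b)(a-b)\|_2$, i.e.\ of the normalized tangent bundle $T=\bigcup_{x\in\mathcal M}E_x\cap S^{n-1}$. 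For this $T$ the covering estimates of the Manifolds subsection --- $\log\mathcal N(\Theta,\rho_{\mathrm{Fin}},t)\lesssim d\log(1/t)$, and hence $\log\mathcal N(T,\|\cdot\|_2,t)\lesssim d\log(1/t)$ after covering each approximating subspace (compare the reasoning behind \Equation{22c} and \Equation{13c}) --- show that an $n^{-K}$-net of $T$ also has log-cardinality $\lesssim dK\log n\lesssim d(\log n)^2\lesssim m$. Taking $\xi$ to be the union of the two nets and adjusting $K$ to absorb the $O(\cdot)$ constants finishes the construction, since then $\log|\xi|\lesssim d(\log n)^2\le cm$ by the hypothesis $m\gtrsim d(\log n)^2$.

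The main obstacle is the near-diagonal behaviour: as $a\to b$ the secant direction degenerates to an arbitrary tangent direction at $F(b)$, so one genuinely needs (i) a quantitative first-order control of this degeneration, which is precisely what the Lipschitz-derivative assumption \Equation{lipschitz-derivative} supplies, and (ii) to check that the net scale $n^{-K}$ with $K\simeq\log n$ is simultaneously fine enough to overcome the $\sqrt n$ operator-norm loss of $\Phi$ in passing from $\xi$ to $S$ and coarse enough to keep all the covering numbers below $cm$. Both requirements are met exactly on the strength of $m\gtrsim d(\log n)^2$ and $s\gtrsim(\log n)^2$; the remaining steps are routine covering-number bookkeeping.
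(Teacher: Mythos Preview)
Your proposal is correct and follows essentially the same approach as the paper: both split into a far regime and a near (tangent-direction) regime, cover each by a net of log-cardinality $O(d(\log n)^2)$, and feed the resulting unit vectors into \Lemma{smalllb}, using the crude bound $\|\Phi\|\le\sqrt n$ to pass from the net to all secants. The only differences are cosmetic --- you apply \Lemma{smalllb} once to the union of the two nets rather than twice, and you parametrize the far regime by $B_{\ell_2^d}\times B_{\ell_2^d}$ rather than by a net of $\mathcal M$ itself --- but the scales, the Taylor argument via \Equation{lipschitz-derivative}, and the final accounting are identical.
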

\begin{proof}
We treat separately the pairs $x,y\in\mathcal{M}$ which are at a ``large'' and ``small'' distance from each other. Fix $\eps_1 > 0$ and $\eps_2>\eps_1$ to be specified later. Let $A_{\eps_1} \subset\mathcal{M}$ be an $\eps_1$-net for $\cM$. By \Equation{bilip} and \Equation{volComp}, we can assume that
$$
\log|A_{\eps_1}| \lesssim d\log(1/\eps_1).
$$
Assume that $x,y\in\mathcal{M}$, $\|x-y\|_2>\eps_2$. Take $x_1,y_1\in A_{\eps_1}$ s.t.\ $\|x - x_1\|_2 < \eps_1$, $\|y - y_1\|_2 < \eps_1$. Since $\Phi$ is linear 
$$
\|\Phi(x) - \Phi(y)\|_2 \ge \|\Phi(x_1 - y_1)\|_2 - 2\sqrt{n}\eps_1
$$
Apply \Lemma{smalllb} to the set
$$
\xi = \Big\{\frac{x_1 - y_1}{\|x_1 - y_1\|_2} : x_1,y_1 \in A_{\eps_1}\Big\} .
$$
Assuming
\begin{equation}
m \gtrsim d\log(1/\eps_1)> c\log|A_{\eps_1}|\EquationName{basic-entropy}
\end{equation}
we ensure $\Phi$ to satisfy
$$
\Big\|\Phi\Big(\frac{x_1 - y_1}{\|x_1 - y_1\|_2}\Big)\Big\|_2 > e^{-c(\log n)^2}\text{ for }x_1,y_1\in A_{\eps_1} .
$$

Therefore, if $x,y\in\mathcal{M}$, $\|x - y\|_2 > \eps_2$
\begin{align}
\nonumber \|\Phi(x) - \Phi(y)\|_2 &\ge e^{-c(\log n)^2}\cdot \|x_1 - y_1\|_2 - 2\sqrt{n}\frac{\eps_1}{\eps_2}\cdot\|x-y\|_2\\
{}&\ge \frac 12 e^{-c(\log n)^2}\cdot\|x - y\|_2
\end{align}
choosing $\eps_2 = 5\sqrt{n} e^{c(\log n)^2}\eps_1$. This takes care of large distances.

In order to deal with small distances, we first ensure that
\begin{equation}
\|\Phi(v)\|> e^{-c(\log n)^2}\text{ for any unit tangent vector }v\text{ of } \mathcal{M} \EquationName{tangents}
\end{equation}

Consider
\begin{equation}
\mathcal{T}_{\eps_1} = \bigcup_{x_1\in A_{\eps_1}} \Big\{v\in T_{x_1}: \|v\| = 1\Big\}\EquationName{all-tangents}
\end{equation}

Under the assumption \Eqsub{basic-entropy} on $m$ (and making an $\eps_1$-discretization of each $\{v\in T_{x_1} : \|v\|_2 = 1\}$), another application of \Lemma{smalllb} will ensure that \Eqsub{tangents} holds for all $v\in\mathcal{T}_{\eps_1}$. Next, if $x\in\mathcal{M}$, $x_1\in A_{\eps_1}$, $\|x - x_1\|_2 < \eps_1$, it follows from \Equation{lipschitz-derivative} that $\|DF(x) - DF(x_1)\|_{\ell_2\to\ell_2} \lesssim \eps_1$ and hence $\rho_{\mathrm{Fin}}(T_x, T_{x_1}) \lesssim \eps_1$. Therefore, if $v\in T_x$, $\|v\|_2 = 1$, there is $v_1\in\mathcal{T}_{\eps_1}$ s.t.\ $\|v - v_1\|_2 \lesssim \eps_1$. Therefore
$$
\|\Phi(v)\|_2 \ge \|\Phi(v_1)\|_2 - \sqrt{n}\eps_1 \ge e^{-c(\log n)^2} - \sqrt{n}\eps_1 > \frac 12 e^{-c(\log n)^2}
$$
since $\eps_1 < n^{-1/2} e^{-c(\log n)^2}/2$. Thus $\Phi$ satisfies \Eqsub{tangents}.

Assume $x,y\in\mathcal{M}$, $\|x-y\|_2<\eps_2$. Let $u = F^{-1}(x), w = F^{-1}(y) \in B_{\ell_2^d}$. By \Equation{bilip}, $\|u-w\|_2 \simeq \|x-y\|_2$. Write
\begin{align*}
y-x = F(w) - F(u) &= \int_0^1 \partial_t F(tw + (1-t)u) dt\\
{}&= \int_0^1 DF(tw + (1-t)u)(w - u)dt.
\end{align*}
Hence, again invoking \Equation{lipschitz-derivative},
\begin{align}
& \|F(w) - F(u) - DF(u)(w-u)\|_2 \nonumber \\
& \qquad \le \|w-u\|_2\cdot \sup_t \|DF(tw + (1-t)u) - DF(u)\|_{\ell_2\to\ell_2} \nonumber\\
& \qquad \lesssim \|u - w\|_2^2. \EquationName{squared-dist}
\end{align}

Denote
$$
v = DF(u)\Big(\frac{w-u}{\|w-u\|_2}\Big) \in T_u .
$$
By \Equation{squared-dist} and using $\|\Phi\|\leq \sqrt{n}$,
\begin{align*}
\|y - x - \|u-w\|_2 v\|_2 &< c\eps_2\|x-y\|_2\\
\|\Phi(y) - \Phi(x) - \|u-w\|_2\Phi(v)\|_2 &< c\sqrt{n}\eps_2\|x-y\|_2
\end{align*}
and by \Eqsub{tangents}
\begin{align*}
\|\Phi(x) - \Phi(y)\|_2 &\ge \|u-w\|_2 e^{-c(\log n)^2} - c\sqrt{n}\eps_2\|x-y\|_2\\
{}&\ge \Big(ce^{-c(\log n)^2} - c\sqrt{n}\eps_2\Big)\|x-y\|_2\\
{}&\gtrsim e^{-c(\log n)^2}\|x-y\|_2
\end{align*}
provided
\begin{equation}
\eps_2 < c\frac 1{\sqrt{n}} e^{-c(\log n)^2}
\end{equation}

Thus we may take $\log(1/\eps_1),\log(1/\eps_2)\simeq (\log n)^2$ and condition \Eqsub{basic-entropy} will hold for
$$
m\gtrsim d(\log n)^2.
$$
\end{proof}

\begin{theorem}
Let $\mathcal{M}$ be as above and $m,s$ satisfy \Eqsub{parameters}. Assume moreover that $m,s$ satisfy the appropriate conditions to ensure that
\begin{equation}
1-\eps \le \|\Phi(v)\|_2 \le 1+\eps \EquationName{curves}
\end{equation}
for all unit tangent vectors $v$ of $\mathcal{M}$. Then with probability at least $1 - e^{-cs}$ for some constant $c>0$, $\Phi$ preserves geodesic distances up to factor $1+\eps$.
\end{theorem}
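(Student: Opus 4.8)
The plan is to derive the geodesic statement from three facts, all of which hold with probability at least $1-e^{-cs}$ (after intersecting the relevant events and adjusting the constant $c$): the crude bi-Lipschitz estimate \eqref{eq:terrible-distortion} of the preceding lemma, which under \eqref{eq:parameters} in particular makes $\Phi|_{\mathcal{M}}$ injective; the hypothesis \eqref{eq:curves}, which by the equivalence of \eqref{eq:28c} and \eqref{eq:29c} recorded above says precisely that $\Phi$ changes the length of every $C^1$ curve $\gamma\subset\mathcal{M}$ by at most a factor $1\pm\eps$; and the smoothness of $F$ encoded in \eqref{eq:bilip} and \eqref{eq:lipschitz-derivative}. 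All statements below are made on the intersection of these events.

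First I would upgrade $\Phi|_{\mathcal{M}}$ from an injection to a $C^1$ diffeomorphism onto its image. Set $G=\Phi\circ F:B_{\ell_2^d}\to\R^m$, which is $C^1$ with $DG(u)=\Phi\circ DF(u)$ since $DF$ is (Lipschitz, hence) continuous by \eqref{eq:lipschitz-derivative}. For a unit vector $\xi\in\R^d$, the vector $DF(u)\xi=\lim_{t\to0}t^{-1}(F(u+t\xi)-F(u))$ satisfies $\|DF(u)\xi\|_2\simeq1$ by \eqref{eq:bilip}, and it is a tangent vector of $\mathcal{M}$ at $F(u)$ (recall $E_{F(u)}=DF(u)(\R^d)$), so \eqref{eq:curves} applied to $DF(u)\xi/\|DF(u)\xi\|_2$ gives $\|DG(u)\xi\|_2=\|\Phi DF(u)\xi\|_2\ge(1-\eps)\|DF(u)\xi\|_2\gtrsim1>0$. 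Hence $G$ is a $C^1$ immersion; being moreover injective (because $\mathcal{M}=F(B_{\ell_2^d})$ and $\Phi|_{\mathcal{M}}$ is injective) and defined on the compact set $B_{\ell_2^d}$, $G$ is a $C^1$ embedding. Therefore $\Phi\mathcal{M}=G(B_{\ell_2^d})$ is a $d$-dimensional $C^1$ submanifold of $\R^m$ and $\Phi|_{\mathcal{M}}:\mathcal{M}\to\Phi\mathcal{M}$ is a $C^1$ diffeomorphism ($\mathcal{M}$ itself is a $C^1$ submanifold for the same reason, applied to $F$ in place of $G$). In particular $\gamma\mapsto\Phi\gamma$ is a bijection between $C^1$ curves in $\mathcal{M}$ and $C^1$ curves in $\Phi\mathcal{M}$, so the preimage under $\Phi$ of any $C^1$ curve in $\Phi\mathcal{M}$ is a $C^1$ curve in $\mathcal{M}$ (this is the ``manifold embedding'' requirement discussed in the introduction).

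Given that, the geodesic bound reduces to a short infimum-over-paths argument. Let $\rho_{\mathcal{M}}$ and $\rho_{\Phi\mathcal{M}}$ denote the intrinsic (geodesic) metrics; since $\mathcal{M}$ and $\Phi\mathcal{M}$ are $C^1$ submanifolds, each is realized up to arbitrarily small error by $C^1$ curves joining the two endpoints. Fix $x,y\in\mathcal{M}$. For the upper bound, take a near-length-minimizing $C^1$ curve $\gamma$ from $x$ to $y$ in $\mathcal{M}$; then $\Phi\gamma\subset\Phi\mathcal{M}$ joins $\Phi x$ to $\Phi y$ with $|\Phi\gamma|\le(1+\eps)|\gamma|$ by \eqref{eq:curves}, so $\rho_{\Phi\mathcal{M}}(\Phi x,\Phi y)\le(1+\eps)\rho_{\mathcal{M}}(x,y)$. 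For the lower bound, take a near-length-minimizing $C^1$ curve $\gamma'\subset\Phi\mathcal{M}$ from $\Phi x$ to $\Phi y$; its preimage $\gamma=(\Phi|_{\mathcal{M}})^{-1}\gamma'$ is a $C^1$ curve from $x$ to $y$ in $\mathcal{M}$ and $|\gamma'|=|\Phi\gamma|\ge(1-\eps)|\gamma|\ge(1-\eps)\rho_{\mathcal{M}}(x,y)$, so $\rho_{\Phi\mathcal{M}}(\Phi x,\Phi y)\ge(1-\eps)\rho_{\mathcal{M}}(x,y)$. Combining, $(1-\eps)\rho_{\mathcal{M}}(x,y)\le\rho_{\Phi\mathcal{M}}(\Phi x,\Phi y)\le(1+\eps)\rho_{\mathcal{M}}(x,y)$, which is the claim.

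The hard part will be the regularity/embedding step: one must be sure that the preimage under $\Phi$ of a $C^1$ curve in $\Phi\mathcal{M}$ is genuinely $C^1$ in $\mathcal{M}$, so that the sharp $1\pm\eps$ length distortion from \eqref{eq:curves} can be applied to it; the crude bi-Lipschitz estimate \eqref{eq:terrible-distortion} by itself only controls such a preimage up to the useless factor $e^{\pm c(\log n)^2}$, which is exactly why the immersion-plus-injectivity (hence embedding) argument of the second paragraph is needed. A secondary, routine technicality is handling $B_{\ell_2^d}$ as a manifold with boundary — so that ``injective immersion on a compact domain $\Rightarrow$ embedding'' applies and so that the intrinsic metric of $\mathcal{M}$ is approximated by $C^1$ paths — which can be dealt with by working on a slightly smaller ball (or by noting that a $C^{1,1}$ map on the closed ball extends to a neighbourhood).
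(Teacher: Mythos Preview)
Your proposal is correct and follows essentially the same route as the paper: invoke the preceding lemma for injectivity, upgrade $\Phi|_{\mathcal{M}}$ to a $C^1$ diffeomorphism onto its image, and then run the infimum-over-paths argument in both directions. The only difference is cosmetic: the paper compresses your second paragraph into the single clause ``$\Phi|_{\mathcal{M}}$ is bi-Lipschitz and hence a diffeomorphism (since $\Phi$ is also smooth)'', whereas you spell out the immersion-plus-injectivity-on-a-compact-domain reasoning explicitly; your version is arguably cleaner, since it makes transparent exactly where the tangent-vector hypothesis \eqref{eq:curves} (and not merely the crude bi-Lipschitz bound) is used to show $DG$ has full rank.
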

\begin{proof}
By \Eqsub{curves}, $(1-\eps)|\gamma| < |\Phi(\gamma)| < (1+\eps)|\gamma|$ for any $C^1$-curve $\gamma$ in $\mathcal{M}$. Let $\rho_{\mathcal{M}}$ refer to the geodesic distance in $\mathcal{M}$. Clearly
\begin{equation}
\rho_{\Phi(\mathcal{M})}(\Phi(x), \Phi(y)) \le (1+\eps)\rho_{\mathcal{M}}(x,y)
\end{equation}
from the above. We need to show the reverse inequality. Let $\gamma_1$ be a $C^1$-curve in $\Phi(\mathcal{M})$ joining $\Phi(x),\Phi(y)$ such that $\rho_{\Phi(\mathcal{M})}(\Phi(x), \Phi(y)) = |\gamma_1|$. Since $\Phi$ satisfies \Eqsub{terrible-distortion}, $\Phi|_{\mathcal{M}} : \mathcal{M}\rightarrow\Phi(\mathcal{M})$ is bi-Lipschitz and hence a diffeomorphism (since $\Phi$ is also smooth). It follows that $\gamma = \Phi^{-1} \gamma_1$ is a $C^1$-curve joining $x$ and $y$ and
$$
\rho_{\mathcal{M}}(x,y) \le |\gamma| \le (1+\eps) |\Phi(\gamma)| = (1+\eps)|\gamma_1| = (1+\eps)\rho_{\Phi(\mathcal{M})}(\Phi(x), \Phi(y)) .
$$
\end{proof}

\section{Discussion}\SectionName{discussion}
We have provided a general theorem which captures sparse dimensionality reduction in Euclidean space and qualitatively unifies much of what we know in specific applications. There is still much room though for quantitative improvement. We here list some known quantitative shortcomings of our bounds and discuss some avenues for improvement in future work.

First, our dependence on $1/\eps$ in $s$ in all our theorems is, up to logarithmic factors, quadratic. Meanwhile the works \cite{KN14,NN13b} show that the correct dependence in the case of small $m$ should be linear. Part of the reason for this discrepancy may be our use of the chaining result of \cite{KMR14}. Specifically, chaining is a technique that in general converts tail bounds into bounds on the expected supremum of stochastic processes (see \cite{Talagrand05} for details). Perhaps one could generalize the tail bound of \cite{KN14} to give a broad understanding of the decay behavior for the error random variable in the SJLT as a function of $s,m$ then feed such a quantity into a chaining argument to improve our use of \cite{KMR14}.

Another place where we lost logarithmic factors is in our use of the duality of entropy numbers \cite[Proposition 4]{BPST89} in \Equation{3.3}. It is believed that in general if $K,D$ are symmetric convex bodies and $\mathcal{N}(K,D)$ is the number of translations of $D$ needed to cover $K$ then
$$
\mathcal{N}(D^\circ, a K^\circ)\lesssim \mathcal{N}(K, D) \lesssim \mathcal{N}(D^\circ, a^{-1} K^\circ)
$$
for some universal constant $a>0$. This is known as Pietsch's duality conjecture \cite[p.\ 38]{P72}, and unfortunately resolving it has been a challenging open problem in convex geometry for over 40 years.

We have also lost logarithmic factors in our use of dual Sudakov minoration, which bounds $\sup_{t>0} t [\log \mathcal{N}(B_E, \|\cdot\|_X, t)]^{1/2}$ by a constant times $\E_g \|P_E g\|_X$ for any norm $\|\cdot\|_X$ even though what we actually wish to bound is the $\gamma_2$ functional. Passing from this functional via \Equation{Dudley} to $\sup_{t>0} t [\log \mathcal{N}(B_E, \|\cdot\|_X, t)]^{1/2}$ costs us logarithmic factors. The majorizing measures theory (see \cite{Talagrand05}) shows that the loss can be avoided when $\|\cdot\|_X$ is the $\ell_2$ norm; it would be interesting to explore how the loss can be avoided in our case.

Finally, note that the SJLT $\Phi$ considered in this work is a randomly signed adjacency matrix of a random bipartite graph with $m$ vertices in the left bipartition, $n$ in the right, and with all right vertices having equal degree $s$. Sasha Sodin has asked in personal communication whether taking a random signing of the adjacency matrix of a random biregular graph (degree $s$ on the right and degree $ns/m$ on the left) can yield improved bounds on $s,m$ for some $T$ of interest. We leave this as an interesting open question.

\bigskip

\paragraph{Acknowledgement}
We would like to thank Huy L.\ Nguy$\tilde{\hat{\mbox{e}}}$n for pointing out the second statement in Lemma~\ref{lem:compare}. Part of this work was done during a visit of the second-named author to Harvard University. He wishes to thank the Theory of Computation group for its hospitality.

\newcommand{\etalchar}[1]{$^{#1}$}

\appendix

\section{Tools from probability theory}
\label{sec:probTools}

We collect some useful tools from probability theory that are used throughout the text. For further reference we record an easy consequence of Markov's inequality.
\begin{lemma}
\label{lem:MomentsToTails}
If $\xi$ is a real-valued random variable satisfying
$$(\E|\xi|^p)^{1/p} \leq a_1 p^2 + a_2 p^{3/2} + a_3 p + a_4 p^{1/2} + a_5, \qquad \mathrm{for \ all} \ p\geq p_0,$$
for some $0\leq a_1,a_2,a_3,a_4,a_5<\infty$, then
\begin{equation}
\label{eqn:MomentsToTails}
\bP(|\xi|\geq e(a_1 w^2 + a_2 w^{3/2} + a_3 w+ a_4 \sqrt{w} + a_5)) \leq \exp(-w) \qquad (w\geq p_0).
\end{equation}
\end{lemma}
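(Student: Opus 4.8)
The plan is to apply Markov's inequality to the nonnegative random variable $|\xi|^p$ with the specific choice $p=w$. Fix $w\geq p_0$ and write $t = e(a_1 w^2 + a_2 w^{3/2} + a_3 w + a_4 \sqrt{w} + a_5)$. Since $p=w\geq p_0$, the hypothesis applies and gives $(\E|\xi|^p)^{1/p}\leq a_1 p^2 + a_2 p^{3/2} + a_3 p + a_4 p^{1/2} + a_5 = t/e$. By Markov applied to $|\xi|^p$,
$$\bP(|\xi|\geq t) = \bP(|\xi|^p\geq t^p)\leq \frac{\E|\xi|^p}{t^p} = \Big(\frac{(\E|\xi|^p)^{1/p}}{t}\Big)^p \leq \Big(\frac{t/e}{t}\Big)^p = e^{-p} = e^{-w},$$
which is exactly \eqref{eqn:MomentsToTails}.

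The only points requiring care are bookkeeping ones: first, one must take $p=w$ precisely (rather than, say, optimizing over $p$), since it is this choice that makes the moment bound at exponent $p$ coincide with $t/e$ and produces the clean factor $e^{-w}$; second, one must record the constraint $w\geq p_0$ so that the moment hypothesis is legitimately invoked at $p=w$. No convexity, interpolation, or auxiliary estimate is needed. I do not anticipate any genuine obstacle here — the statement is a standard ``moments-to-tails'' conversion — so the write-up can be essentially the three-line display above together with the remark that the presence of the constant $e$ in the threshold is exactly what absorbs the $\big((\E|\xi|^p)^{1/p}\big)^p$ term.
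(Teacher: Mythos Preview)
Your proof is correct and matches the paper's approach: the paper does not even spell out a proof, merely recording the lemma as ``an easy consequence of Markov's inequality,'' which is precisely the one-line Markov argument with $p=w$ that you wrote.
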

Let us call $\si=(\si_i)_{i=1}^n$ a Rademacher vector if its entries are independent Rademacher random variables. Khintchine's inequality states that for any $1\leq p<\infty$
\begin{equation}
\EquationName{KI}
(\E_{\si}|\langle \si,x\rangle|^p)^{1/p} \leq C_p \|x\|_2 \qquad (x\in \R^n),
\end{equation}
where $C_p\leq \sqrt{p}$. The noncommutative Khintchine inequality \cite{LP86,LPP91} says that for any $A_1,\ldots,A_n\in\R^{m\ti n}$ and $1\leq p<\infty$
\begin{equation*}
\Big(\E\Big\|\sum_{i=1}^n \si_i A_i\Big\|_{S_p}^p\Big)^{1/p} \leq C\sqrt{p} \max\Big\{\Big\|\Big(\sum_i A_i^* A_i\Big)^{1/2}\Big\|_{S_p},\Big\|\Big(\sum_i A_i A_i^*\Big)^{1/2}\Big\|_{S_p}\Big\},  
\end{equation*}
where $\|\cdot\|_{S_p}$ is the $p$-th Schatten norm. In particular, if $p\geq \max\{\log m,\log n\}$, then
\begin{equation}
\EquationName{NCKI}
\Big(\E\Big\|\sum_{i=1}^n \si_i A_i\Big\|^p\Big)^{1/p} \leq C\sqrt{p} \max\Big\{\Big\|\Big(\sum_i A_i^* A_i\Big)^{1/2}\Big\|,\Big\|\Big(\sum_i A_i A_i^*\Big)^{1/2}\Big\|\Big\}  
\end{equation}
as $\|\cdot\|\leq \|\cdot\|_{S_p}\leq e\|\cdot\|$ in this case. Khintchine's inequality and its noncommutative version are frequently used in combination with symmetrization: if $\zeta_1,\ldots,\zeta_n$ are $X$-valued random variables and $\si$ is a Rademacher vector, then for any $1\leq p<\infty$,
\begin{equation}
\label{eq:symmetrization}
\Big(\E_{\zeta}\Big\|\sum_{i=1}^n \zeta_i - \E\zeta_i\Big\|_X^p\Big)^{1/p} \leq \Big(\E_{\zeta}\E_{\si}\Big\|\sum_{i=1}^n \si_i\zeta_i\Big\|_X^p\Big)^{1/p}.
\end{equation}
The following decoupling inequality is elementary to prove (see e.g.\ \cite[Theorem 8.11]{FoR13}). Let $\cM\subset\R^{n\ti n}$, let $\si$ be an $n$-dimensional Rademacher vector and $\si'$ an independent copy. Then, for any $1\leq p<\infty$
\begin{equation}
\label{eqn:decRad}
(\E_{\si}\sup_{M\in \cM}|\si^* M\si - \E(\si^* M\si)|^p)^{1/p} \leq 4(\E_{\si,\si'}\sup_{M\in \cM}|(\si')^* M\si|^p)^{1/p}. 
\end{equation}
A special case of this bound, combined with Khintchine's inequality, implies the following.
\begin{lemma}
\label{lem:Lpl2}
Let $A\in \R^{m\ti n}$ and let $\si$ be an $n$-dimensional Rademacher vector. For any $p\geq 1$,
$$(\E\|A\si\|_2^p)^{1/p} \leq \|A\|_F + 2\sqrt{2p}\|A\|.$$
\end{lemma}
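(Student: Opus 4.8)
The plan is to reduce the estimate to a second-order Rademacher chaos bound for the symmetric matrix $M=A^*A\in\R^{n\times n}$, apply the decoupling inequality \eqref{eqn:decRad} to the one-element family $\cM=\{M\}$, and finish by solving a quadratic inequality in the quantity to be bounded; this realizes the ``special case'' alluded to just above the statement.

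First I would dispose of the trivial range: by monotonicity of $L^p$-norms, for $1\le p<2$ one has $(\E\|A\si\|_2^p)^{1/p}\le(\E\|A\si\|_2^2)^{1/2}=\|A\|_F$, so the inequality is immediate, and henceforth $p\ge 2$. Put $Y=(\E\|A\si\|_2^p)^{1/p}$ and note that $\|A\si\|_2^2=\si^*M\si$ with $\E_\si(\si^*M\si)=\Tr(M)=\|A\|_F^2$. Since $p/2\ge 1$, the triangle inequality in $L^{p/2}(\Om_\si)$ gives
\[
Y^2=\bigl(\E(\si^*M\si)^{p/2}\bigr)^{2/p}\le\|A\|_F^2+\bigl(\E\bigl|\si^*M\si-\E(\si^*M\si)\bigr|^{p/2}\bigr)^{2/p}.
\]
Next I would apply \eqref{eqn:decRad} with exponent $p/2$ in place of $p$ to bound the last term by $4\bigl(\E_{\si,\si'}|(\si')^*M\si|^{p/2}\bigr)^{2/p}$, then condition on $\si$ and use Khintchine's inequality \eqref{eq:KI} in $\si'$ (constant $\le\sqrt{p/2}$) to bound this in turn by $4\sqrt{p/2}\,\bigl(\E_\si\|M\si\|_2^{p/2}\bigr)^{2/p}$. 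Finally, using $\|M\si\|_2=\|A^*(A\si)\|_2\le\|A\|\,\|A\si\|_2$ together with $\bigl(\E\|A\si\|_2^{p/2}\bigr)^{2/p}\le(\E\|A\si\|_2^p)^{1/p}=Y$ yields $Y^2\le\|A\|_F^2+2\sqrt{2p}\,\|A\|\,Y$ (since $4\sqrt{p/2}=2\sqrt{2p}$). Solving $Y^2-2\sqrt{2p}\,\|A\|\,Y-\|A\|_F^2\le 0$ and applying $\sqrt{a+b}\le\sqrt a+\sqrt b$ gives $Y\le\sqrt{2p}\,\|A\|+\sqrt{2p\,\|A\|^2+\|A\|_F^2}\le 2\sqrt{2p}\,\|A\|+\|A\|_F$, which is the assertion.

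The argument is short and purely computational, so there is no genuine obstacle; the only thing requiring attention is the exponent bookkeeping. One must first reduce to $p\ge 2$, keep $p/2\ge 1$ at every step so that both the $L^{p/2}$ triangle inequality and the stated form of \eqref{eqn:decRad} apply, and at the end notice that $Y$ reappears on the right-hand side of its own estimate --- so the conclusion is obtained by solving a quadratic in $Y$ rather than by iterating.
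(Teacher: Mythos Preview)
Your proof is correct and follows essentially the same route as the paper's: write $\|A\si\|_2^2=\si^*A^*A\si$, apply the triangle inequality in $L^{p/2}$, decouple via \eqref{eqn:decRad}, apply Khintchine in $\si'$, feed $\|A^*A\si\|_2\le\|A\|\,\|A\si\|_2$ back into the estimate, and solve the resulting quadratic in $(\E\|A\si\|_2^p)^{1/p}$. If anything, your version is slightly more careful than the paper's, since you explicitly dispose of the range $1\le p<2$ (where the $L^{p/2}$ triangle inequality is unavailable) before proceeding.
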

\begin{proof}
By decoupling,
\begin{align*}
(\E\|A\si\|_2^p)^{2/p} & \leq (\E_{\si}|\si^*A^*A\si - \E\si^*A^*A\si|^{p/2})^{2/p} + \E(\si^*A^*A\si)\\
& \leq 4(\E_{\si,\si'}|(\si')^*A^*A\si|^{p/2})^{2/p} + \|A\|_F^2
\end{align*}
and therefore Khintchine's inequality implies that
\begin{align*}
(\E\|A\si\|_2^p)^{2/p} & \leq 4\sqrt{p/2} (\E_{\si}\|A^*A\si\|_2^{p/2})^{2/p} + \|A\|_F^2 \\
& \leq 2\sqrt{2p} \|A\| (\E_{\si}\|A\si\|_2^p)^{1/p} + \|A\|_F^2.
\end{align*}
Solving this quadratic inequality yields the claim.
\end{proof}
To conclude we collect some tools to estimate covering numbers. Given two closed sets $U,V\subset \R^n$, we let the \emph{covering number} $\cN(U,V)$ be the minimal number of translates of $V$ needed to cover $U$. If $V$ is closed, convex and symmetric, then we can associate with it a semi-norm
$$\|x\|_V = \inf\{t>0 \ : \ x\in tV\}.$$
In this case it is customary to write
$$\cN(U,\|\cdot\|_V,\eps) := \cN(U,\eps V) \qquad (\eps>0).$$
Conversely, if $\|\cdot\|$ is a semi-norm, then $V=\{x\in \R^n \ : \ \|x\|\leq 1\}$ is closed, convex and symmetric, and $\|\cdot\|_V=\|\cdot\|$.\par 
As a first tool we state two elementary bounds that follow from volumetric comparision. If $\|\cdot\|$ is any semi-norm on $\R^n$ and $B_{\|\cdot\|}$ is the associated unit ball, 
\begin{equation}
\Big(\frac{1}{\eps}\Big)^n \leq \cN(B_{\|\cdot\|},\|\cdot\|,\eps) \leq \Big(1+\frac{2}{\eps}\Big)^n \qquad (0<\eps\leq 1). \EquationName{volComp}
\end{equation}
The following is known as Sudakov minoration or the dual Sudakov inequality \cite[Proposition 4.2]{BLM89}, \cite{PTJ86}. \Equation{polar-def} in Appendix~\ref{sec:convex-tools} gives a definition of the polar $V^{\circ}$.
\begin{lemma}
\label{lem:dualSudakov}
Let $V\subset \R^n$ be closed, convex and symmetric and let $g$ be a standard Gaussian vector. Then,
$$\sup_{\eps>0} \eps[\log \cN(B_{\ell_2^n},\|\cdot\|_V,\eps)]^{1/2} \lesssim \E\sup_{x\in V^{\circ}}\langle g,x\rangle.$$
\end{lemma}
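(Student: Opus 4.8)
The plan is to run the classical ``packing plus Gaussian shift'' argument. Write $\|\cdot\|_V$ for the gauge of $V$; since $V$ is closed, convex and symmetric we have $(V^\circ)^\circ=V$, so the support function of $V^\circ$ is exactly $\|\cdot\|_V$ and hence $\E\sup_{x\in V^\circ}\langle g,x\rangle=\E\|g\|_V$. It therefore suffices to bound $\eps\,[\log\cN(B_{\ell_2^n},\|\cdot\|_V,\eps)]^{1/2}$ by a universal constant times $\E\|g\|_V$ for each fixed $\eps>0$ (if $\|\cdot\|_V$ is infinite somewhere, i.e.\ $V$ is not absorbing, both sides are typically infinite and there is nothing to prove, so assume $\|\cdot\|_V$ is a finite semi-norm). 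First I would reduce to a packing count: let $\{x_i\}_{i=1}^N\subseteq B_{\ell_2^n}$ be a maximal subset with $\|x_i-x_j\|_V>\eps$ for $i\neq j$, which is finite since $\|\cdot\|_V$ is continuous and $B_{\ell_2^n}$ is compact. Maximality gives $B_{\ell_2^n}\subseteq\bigcup_{i=1}^N(x_i+\eps V)$, so $\cN(B_{\ell_2^n},\|\cdot\|_V,\eps)\le N$; moreover the sets $x_i+\tfrac{\eps}{2}V$ are pairwise disjoint, because a common point would force $x_i-x_j\in\tfrac{\eps}{2}V-\tfrac{\eps}{2}V=\eps V$, i.e.\ $\|x_i-x_j\|_V\le\eps$. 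It remains to show $\eps\,[\log N]^{1/2}\lesssim\E\|g\|_V$.

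Next I would combine disjointness with a shift bound for the Gaussian density. For any $t>0$ the dilated translates $tx_i+\tfrac{t\eps}{2}V$ are still pairwise disjoint, so $1\ge\sum_{i=1}^N\Pr(g\in tx_i+\tfrac{t\eps}{2}V)$. Writing $W=\tfrac{t\eps}{2}V$ and using $\|x_i\|_2\le1$ to drop the factor $e^{-t^2\|x_i\|_2^2/2}$, and then the symmetry of $W$ to replace $e^{-t\langle y,x_i\rangle}$ by $\cosh(t\langle y,x_i\rangle)\ge1$ after averaging $y$ and $-y$, one obtains
\[
\Pr(g\in tx_i+W)\ \ge\ e^{-t^2/2}\,\Pr(g\in W),
\]
and summing over $i$ yields $\Pr(\|g\|_V\le\tfrac{t\eps}{2})=\Pr(g\in W)\le e^{t^2/2}/N$.

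Finally I would optimize in $t$: the choice $t=(2\log N+2)^{1/2}$ makes the right-hand side at most $1/e$, hence $\Pr(\|g\|_V>\tfrac{t\eps}{2})\ge1-1/e$ and, since $\|g\|_V\ge0$,
\[
\E\|g\|_V\ \ge\ \tfrac{t\eps}{2}\big(1-\tfrac1e\big)\ \gtrsim\ \eps\,[\log N]^{1/2}\ \ge\ \eps\,[\log\cN(B_{\ell_2^n},\|\cdot\|_V,\eps)]^{1/2}.
\]
Taking the supremum over $\eps>0$ finishes the argument.

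I do not expect a genuine obstacle; this is the textbook ``dual Sudakov'' inequality of \cite{PTJ86,BLM89}, which we are reproducing for completeness. The one step that must be carried out carefully is the Gaussian shift inequality $\Pr(g\in tx_i+W)\ge e^{-t^2/2}\Pr(g\in W)$, since it is precisely where the hypotheses $\|x_i\|_2\le1$ and the symmetry of $V$ enter; the slightly unusual cutoff $t=(2\log N+2)^{1/2}$ is chosen only so that the estimate holds uniformly, with no separate treatment of small $N$.
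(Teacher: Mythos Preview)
The paper does not prove this lemma; it merely states it and cites \cite[Proposition~4.2]{BLM89} and \cite{PTJ86}. Your argument is exactly the classical packing-plus-Gaussian-shift proof from those references, so you are filling in precisely what the paper omits, and the structure (maximal $\eps$-separated set, disjoint translates, the shift inequality $\Pr(g\in tx_i+W)\ge e^{-t^2/2}\Pr(g\in W)$ via symmetry of $W$) is correct.

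There is, however, a sign slip in the final optimization. With $t=(2\log N+2)^{1/2}$ one has $t^2/2=\log N+1$, hence $e^{t^2/2}/N=e$, not $1/e$; the resulting bound $\Pr(g\in W)\le e$ is vacuous. You need $t^2/2<\log N$: for instance $t=(2\log N-2)^{1/2}$ (valid for $N\ge 3$) gives exactly the $1/e$ you wanted, or more simply $t=\sqrt{\log N}$ gives $e^{t^2/2}/N=N^{-1/2}\le 1/2$ for $N\ge 4$. The remaining cases $N\le 3$ are handled directly: if $N\ge 2$ there exist $x_1,x_2\in B_{\ell_2^n}$ with $\|x_1-x_2\|_V>\eps$, so some $y\in V^\circ$ has $\|y\|_2\ge\eps/2$, and since $V^\circ$ is symmetric $\E\|g\|_V\ge\E|\langle g,y\rangle|=\sqrt{2/\pi}\,\|y\|_2\gtrsim\eps\ge\eps\sqrt{\log N}$. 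With this correction your proof is complete.
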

We will also use the following duality for covering numbers from \cite[Proposition 4]{BPST89}.
\begin{lemma}
\label{lem:BPST}
Let $U,V\subset \R^n$ be closed, bounded, convex and symmetric. For every $\theta\geq \eps$,
$$\cN(U,\|\cdot\|_V,\eps)\leq \cN(U,\|\cdot\|_V,\theta)[\cN(V^{\circ},\|\cdot\|_{U^{\circ}},\eps/8)]^r,$$
with $r\leq (2^7 T_2(\|\cdot\|_V))^2(1+\log(\theta/\eps))$.
\end{lemma}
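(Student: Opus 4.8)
The plan is to recast the statement in the language of normed spaces and work with separated sets. Write $X=(\R^n,\|\cdot\|_U)$ and $Y=(\R^n,\|\cdot\|_V)$, so that $U=B_X$, $V=B_Y$, while $U^{\circ}=B_{X^*}$ and $V^{\circ}=B_{Y^*}$; the quantities in the lemma become $\cN(U,\|\cdot\|_V,\eps)=\cN(B_X,\eps B_Y)$ and $\cN(V^{\circ},\|\cdot\|_{U^{\circ}},\eps/8)=\cN(B_{Y^*},\tfrac\eps8 B_{X^*})$. The first step is to peel off the coarse scale: covering $B_X$ by $\cN(B_X,\theta B_Y)$ translates of $\theta B_Y$ centred in $B_X$ gives $\cN(B_X,\eps B_Y)\le \cN(B_X,\theta B_Y)\cdot\max_a \cN\big((a+\theta B_Y)\cap B_X,\,\eps B_Y\big)$, and each local piece, after translating to the origin and passing to its difference body (which at worst squares the covering number), is a symmetric convex set contained in $2\theta B_Y\cap 2 B_X$. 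After rescaling it therefore suffices to prove a scale-free ``one body'' estimate of the form $\cN(\theta B_Y\cap B_X,\,\delta B_Y)\le \big[\cN(B_{Y^*},\tfrac\delta8 B_{X^*})\big]^{C\,T_2(\|\cdot\|_V)^2(1+\log(\theta/\delta))}$, which with $\delta=\eps$ is the claim.

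The second step is to prove this one-body estimate by a dyadic iteration over the scales between $\delta$ and $\theta$. The single halving step is the heart of the matter: given a symmetric convex $C$ of $\|\cdot\|_Y$-diameter $O(\delta)$ and $\|\cdot\|_X$-diameter $\le 2$, one must bound the cardinality of a maximal $\delta$-separated set $\{x_i\}\subset C$ in $\|\cdot\|_Y$. For each $i\neq j$ choose a separating functional $\phi_{ij}\in B_{Y^*}$ with $\langle\phi_{ij},x_i-x_j\rangle\ge\delta$, and replace it by a point $\psi_{l(i,j)}$ of an $\tfrac\delta8$-net of $B_{Y^*}$ in $\|\cdot\|_{X^*}$ of cardinality $M:=\cN(B_{Y^*},\tfrac\delta8 B_{X^*})$; since $\|x_i-x_j\|_X\le 2$ one keeps $\langle\psi_{l(i,j)},x_i-x_j\rangle\ge\tfrac34\delta$. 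Thus the finitely many functionals $\psi_1,\dots,\psi_M$ (a map $S\colon Y\to\ell_\infty^M$ of norm $\le1$) separate the $x_i$, and the crucial point is that the number of such points is at most $M^{C\,T_2(\|\cdot\|_V)^2}$. This is proved by Maurey's empirical method in the type-$2$ space $Y$ — in the same spirit as \Claim{claim} and \Lemma{maurey} — writing a convex combination built from the dual net as an average of only $O(T_2(\|\cdot\|_V)^2)$ of its terms, so that the relevant count runs over an alphabet of size $M$ rather than over an $M$-dimensional cube. Iterating the halving step across the $\asymp\log_2(\theta/\delta)$ dyadic scales produces the exponent $C\,T_2(\|\cdot\|_V)^2(1+\log(\theta/\delta))$; combined with the coarse-scale peeling this yields the lemma with $r\lesssim T_2(\|\cdot\|_V)^2(1+\log(\theta/\eps))$, and tracking the constants through Maurey's selection gives the stated factor $2^7$.

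The main obstacle is exactly this empirical step. A naive argument that maps the separated set into $\ell_\infty^M$ and counts only gives $M$ in the exponent, whereas one needs $\log M$, and extracting that while paying merely a $T_2(\|\cdot\|_V)^2$ factor is the substantive content of \cite[Proposition 4]{BPST89}, whose argument I would follow. Two technical points require care there: Maurey's selection lemma uses only type $2$ (not $K$-convexity), so the argument applies to an arbitrary normed space $Y$; and the localization must be arranged so that at every dyadic scale the bodies in play retain $\|\cdot\|_X$-diameter $O(1)$, which is precisely what makes the $\tfrac\eps8$ in the dual covering number — rather than a scale-dependent radius — the correct quantity to appear.
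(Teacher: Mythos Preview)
The paper does not supply a proof of this lemma: it is stated in the appendix as a quotation of \cite[Proposition~4]{BPST89}, with no argument given. So there is no ``paper's own proof'' to compare against beyond the citation itself.

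Your proposal is, in effect, the same citation wrapped in an outline. The scaffolding you describe --- peel off the coarse scale $\theta$ by submultiplicativity, symmetrize the local pieces, then run a dyadic halving iteration whose single step bounds the cardinality of a $\delta$-separated set in terms of a dual net of size $M=\cN(B_{Y^*},\tfrac{\delta}{8}B_{X^*})$ --- is indeed the architecture of the BPST argument. But at the one place where real work happens, namely showing that the separated set has cardinality at most $M^{C\,T_2(\|\cdot\|_V)^2}$ rather than the naive $(\theta/\delta)^{M}$, you explicitly write ``whose argument I would follow'' and point back to \cite{BPST89}. That is not a proof; it is the same deferral the paper makes.

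If you intend this as a self-contained argument, the empirical step needs to be written out. Two cautions there. First, your one-line description (``writing a convex combination built from the dual net as an average of only $O(T_2(\|\cdot\|_V)^2)$ of its terms'') is ambiguous about \emph{where} Maurey's selection is being applied: the dual-net vectors $\psi_l$ live in $Y^*$, yet the hypothesis is type~$2$ of $Y$, so the mechanism is not simply \Lemma{maurey} applied to the $\psi_l$. You need to say precisely which convex combination, in which space, is being sparsified. Second, the remark that passing to the difference body ``at worst squares the covering number'' is loose; make sure the localization actually produces a symmetric body with the claimed $\|\cdot\|_X$- and $\|\cdot\|_Y$-diameters without an uncontrolled loss, or work directly with packing numbers of the (non-symmetric) cap.
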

Finally, we state Maurey's lemma \cite{Pi80,Carl85} and its usual proof.
\begin{lemma}
\label{lem:Maurey}
Let $\|\cdot\|$ be a semi-norm and $T_2(\|\cdot\|)$ be its type-$2$ constant. Let $\Omega$ be a set of points $x$ each with $\|x\| \le 1$. Then for any $z\in \mathrm{conv}(\Omega)$ and any integer $\ell>0$ there exist $y_1,\ldots,y_\ell\in\Omega$ with
$$
\Big\|z - \frac 1{\ell}\sum_{i=1}^\ell y_i\Big\| \le \frac{T_2(\|\cdot\|)}{\sqrt{\ell}}
$$
\end{lemma}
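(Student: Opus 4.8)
The plan is to prove Maurey's lemma by the classical probabilistic (empirical) method. First I would fix a representation of $z$ as a finite convex combination $z=\sum_j\lambda_j x_j$ with $x_j\in\Omega$, $\lambda_j\ge 0$, $\sum_j\lambda_j=1$, and introduce independent random vectors $Y_1,\dots,Y_\ell$, each equal to $x_j$ with probability $\lambda_j$. Then $\E Y_i=z$ and $\|Y_i\|\le 1$ almost surely, so it suffices to show that $\E\big\|\tfrac1\ell\sum_{i=1}^\ell Y_i-z\big\|\lesssim T_2(\|\cdot\|)/\sqrt\ell$; once this holds, there must be a deterministic realization $y_1,\dots,y_\ell\in\Omega$ of $Y_1,\dots,Y_\ell$ for which $\big\|z-\tfrac1\ell\sum_i y_i\big\|$ is at most this expected value, which is precisely the assertion.

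To bound the expectation I would symmetrize. Let $Y_1',\dots,Y_\ell'$ be an independent copy of $Y_1,\dots,Y_\ell$ and $(\varepsilon_i)$ an independent Rademacher vector. By Jensen's inequality and the fact that $(Y_i-Y_i')_i$ is a sequence of independent symmetric vectors,
\[
\E\Big\|\frac1\ell\sum_{i=1}^\ell (Y_i-z)\Big\|\le \E\Big\|\frac1\ell\sum_{i=1}^\ell (Y_i-Y_i')\Big\|=\E\Big\|\frac1\ell\sum_{i=1}^\ell \varepsilon_i(Y_i-Y_i')\Big\|\le \frac2\ell\,\E\Big\|\sum_{i=1}^\ell \varepsilon_i Y_i\Big\|.
\]
Comparing the Rademacher average with the Gaussian one (at the cost of an absolute constant) and then applying the definition \Eqsub{3.1} of the type-$2$ constant together with $\|Y_i\|\le 1$,
\[
\E\Big\|\sum_{i=1}^\ell \varepsilon_i Y_i\Big\|\lesssim \E_g\Big\|\sum_{i=1}^\ell g_i Y_i\Big\|\le T_2(\|\cdot\|)\Big(\sum_{i=1}^\ell\|Y_i\|^2\Big)^{1/2}\le T_2(\|\cdot\|)\sqrt\ell.
\]
Combining the two displays yields $\E\big\|\tfrac1\ell\sum_i Y_i-z\big\|\lesssim T_2(\|\cdot\|)/\sqrt\ell$, and the probabilistic method then produces the desired $y_1,\dots,y_\ell$.

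I do not expect a genuine obstacle here: the argument is short and completely standard. The only points needing care are the symmetrization step — one must pass through an independent copy because the centered vectors $Y_i-z$ are not themselves symmetric — and the bookkeeping of absolute constants, since symmetrization contributes a factor $2$ and the Rademacher-to-Gaussian comparison a factor $\sqrt{\pi/2}$. Thus one literally obtains $\big\|z-\tfrac1\ell\sum_i y_i\big\|\le C\,T_2(\|\cdot\|)/\sqrt\ell$ for an absolute constant $C$; since every use of this lemma in the paper is only up to such constants this is harmless, and if one insists on the stated inequality verbatim one may work with the Gaussian type-$2$ constant throughout and absorb the constants by a harmless renormalization.
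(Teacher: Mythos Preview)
Your proposal is correct and is essentially the same argument as the paper's: random sampling of $Y_1,\dots,Y_\ell$ according to the convex-combination weights, symmetrization to a Rademacher sum, comparison to a Gaussian sum, and then the type-$2$ inequality, followed by the probabilistic method to extract a deterministic choice. Your remark about the absolute constant is also on point; the paper's own proof likewise only obtains the bound up to a universal constant (its last display uses $\lesssim$), which is all that is ever used.
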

\begin{proof}
Write $z = \sum_i \lambda_i x_i$ for $0<\lambda_i\le 1$, $\sum_i \lambda_i = 1$, $x_i\in \Omega$. Let $y = (y_1,\ldots,y_\ell)$ have i.i.d.\ entries from the distribution in which $y_j = x_j$ with probability $\lambda_j$. Let $\sigma$ be a Rademacher vector and let $g$ be a Gaussian vector. By symmetrization,
\begin{align*}
\E_y\Big\|z - \frac 1{\ell}\sum_{i=1}^\ell y_i\Big\| & \le \frac{2}{\ell}\E_{\si,y}\Big\|\sum_{i=1}^\ell \sigma_iy_i\Big\|\\
{}& = \frac{2}{\ell}\E_{\si,y}\Big\|\E_g\sum_{i=1}^\ell \sigma_i|g_i|y_i\Big\| \lesssim \frac 1{\ell}\E_{y,g}\Big\|\sum_{i=1}^\ell g_iy_i\Big\| \le \frac{T_2(\|\cdot\|)}{\sqrt{\ell}}.
\end{align*}
\end{proof}

\section{Tools from convex analysis}\label{sec:convex-tools}

In this appendix we recall some basic facts from convex analysis. More details can be found in e.g.\ \cite{HiL01,Roc70}. For any set $S\subset\R^n$ we let $\mathrm{conv}(S)$ denote the closed convex hull of $S$, i.e., the closure of the set of all convex combinations of elements in $S$. The polar of a set $S$ is 
\begin{equation}
S^{\circ} = \{y \in \R^n \ : \ \langle x,y\rangle\leq 1 \ \mathrm{for \ all} \ x\in S\}, \EquationName{polar-def}
\end{equation}
which is always closed and convex. A set $\cK$ is called a cone if $\al\cK\subset \cK$ for all $\al>0$. It is called a convex cone if it is, in addition, convex. We use $\mathrm{cone}(S)$ to denote the closed convex cone generated by a set $S$. The polar $\cK^{\circ}$ of a convex cone $\cK$ can be written as
$$\cK^{\circ} = \{y \in \R^n \ : \ \langle x,y\rangle\leq 0 \ \mathrm{for \ all} \ x\in \cK\}$$
and is always a closed convex cone. The bipolar theorem states that $\cK^{\circ\circ}$ is equal to the closure $\overline{\cK}$ of $\cK$ (cf.\ \cite[Theorem 6.11]{MeiseV97}). Let $\cC$ be a closed convex set in $\R^n$. The tangent cone $T_{\cC}(x)$ to $\cC$ at $x \in \cC$ is the closed cone generated by $\cC - \{x\}$, i.e.,
$$T_{\cC}(x) = \mathrm{cone}\{d \in \R^n \ : \ d=y-x, \ y\in \cC\}.$$
Clearly, if $x$ is in the interior of $\cC$, then $T_{\cC}(x)=\R^n$. The normal cone $N_{\cC}(x)$ to $\cC$ at $x\in \cC$ is given by
$$N_{\cC}(x) = \{s\in\R^n \ : \ \langle s,y-x\rangle\leq 0 \ \forall y\in\cC\}.$$
It is easy to see that $(N_{\cC}(x))^{\circ}=T_{\cC}(x)$. Since $T_{\cC}(x)$ is closed, the bipolar theorem implies that $(T_{\cC}(x))^{\circ} = N_{\cC}(x)$.\par
If $f:\R^n\rightarrow \R$ is any function, then a vector $\xi\in \R^n$ is called a subgradient of $f$ at $x\in \mathrm{dom}(f)$ if for any $y\in \mathrm{dom}(f)$
\begin{equation}
\label{eqn:subgradDef}
f(y)\geq f(x) + \langle \xi,y-x\rangle.
\end{equation}
The set of all subgradients of $f$ at $x$ is called the subdifferential of $f$ at $x$ and denoted by $\partial f(x)$. If $f:\R^n\rightarrow \R\cup\{\infty\}$ is a proper convex function (i.e., $f(x)<\infty$ for some $x\in\R^n$), then the descent cone $D(f,x)$ of $f$ at $x\in \R^n$ is defined by
$$D(f,x) = \bigcup_{t>0} \{d \in \R^n \ : \ f(x+td)\leq f(x)\}.$$
The descent cone is always a convex cone, but it may not be closed.
\begin{theorem}
\label{thm:DCPolar}
\cite[Theorem 23.7]{Roc70} Let $f:\R^n\rightarrow \overline{\R}$ be a proper convex function. Suppose that the subdifferential $\partial f(x)$ is nonempty and does not contain $0$. Then,
$$D(f,x)^{\circ} = \mathrm{cone}(\partial f(x)) = \overline{\bigcup_{t\geq 0} \ t\partial f(x)}.$$
\end{theorem}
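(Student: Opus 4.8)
The statement is Theorem~23.7 of \cite{Roc70}, and the plan is to reconstruct its standard proof, splitting off the trivial second equality first. Since $\partial f(x)$ is convex, the set $\bigcup_{t\ge 0} t\,\partial f(x)$ is already a convex cone: for $u=s\xi$, $v=t\zeta$ with $\xi,\zeta\in\partial f(x)$ and $s,t\ge 0$ not both zero, $u+v=(s+t)\big(\tfrac{s}{s+t}\xi+\tfrac{t}{s+t}\zeta\big)\in(s+t)\,\partial f(x)$, while the case $s=t=0$ is trivial. Hence its closure is the smallest closed convex cone containing $\partial f(x)$, i.e.\ $\mathrm{cone}(\partial f(x))=\overline{\bigcup_{t\ge 0} t\,\partial f(x)}$, so only the first equality requires work.

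For the inclusion $\mathrm{cone}(\partial f(x))\subseteq D(f,x)^{\circ}$, fix $\xi\in\partial f(x)$ and $d\in D(f,x)$. By definition of the descent cone there is $t>0$ with $f(x+td)\le f(x)$, while the subgradient inequality \eqref{eqn:subgradDef} gives $f(x+td)\ge f(x)+t\langle\xi,d\rangle$; combining the two forces $\langle\xi,d\rangle\le 0$. Thus $\partial f(x)\subseteq D(f,x)^{\circ}$, and since the polar of a cone is a closed convex cone, $\mathrm{cone}(\partial f(x))\subseteq D(f,x)^{\circ}$.

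For the reverse inclusion I would pass to the directional derivative $h(d)=f'(x;d)=\inf_{t>0}\tfrac{f(x+td)-f(x)}{t}$, a positively homogeneous convex function (finite or $\pm\infty$). First, $\{d:h(d)<0\}\subseteq D(f,x)\subseteq\{d:h(d)\le 0\}$: the left inclusion holds because $h(d)<0$ makes some difference quotient negative, hence $f(x+td)<f(x)$ for small $t$; the right inclusion holds because a descent direction has a nonpositive difference quotient and $h(d)$ is the infimum of these. Second, $0\notin\partial f(x)$ means $x$ is not a global minimizer of $f$, equivalently $h(d_0)<0$ for some $d_0$, so $\{h<0\}$ is nonempty; the standard closure lemma for sublevel sets of convex functions then yields $\overline{D(f,x)}=\{d:\bar h(d)\le 0\}$, where $\bar h=\mathrm{cl}\,h$. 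Finally, since $\partial f(x)\neq\emptyset$, $\bar h$ is the support function of $\partial f(x)$ (Rockafellar, Theorem~23.4), so $\{d:\bar h(d)\le 0\}=\{d:\langle\xi,d\rangle\le 0\ \forall\xi\in\partial f(x)\}=\big(\mathrm{cone}(\partial f(x))\big)^{\circ}$. Taking polars of $\overline{D(f,x)}=(\mathrm{cone}(\partial f(x)))^{\circ}$ and invoking the bipolar theorem on the closed convex cones $\overline{D(f,x)}$ (which has the same polar as $D(f,x)$) and $\mathrm{cone}(\partial f(x))$ gives $D(f,x)^{\circ}=\mathrm{cone}(\partial f(x))$, completing the proof.

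The delicate points, and where the two hypotheses are genuinely used, are: (i) $D(f,x)$ itself need not be closed, which is why the conclusion concerns its polar rather than $D(f,x)$ directly, so one must work with $\overline{D(f,x)}$ throughout; (ii) the closure identity $\overline{\{h<0\}}=\{\mathrm{cl}\,h\le 0\}$ requires $\{h<0\}\neq\emptyset$, which is exactly the content of $0\notin\partial f(x)$; and (iii) the identification of $\mathrm{cl}\,h$ with the support function of $\partial f(x)$ needs $\partial f(x)\neq\emptyset$, since otherwise $h$ may fail to be proper. These are all standard facts from convex analysis, so the real obstacle is bookkeeping around closures and properness rather than any new idea.
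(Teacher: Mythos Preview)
The paper does not prove this theorem at all; it is stated with a bare citation to \cite[Theorem 23.7]{Roc70} and used as a black box in Appendix~\ref{sec:convex-tools}. So there is no ``paper's own proof'' to compare against.

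Your reconstruction follows the standard Rockafellar route and is correct in outline and in all essential points: the second equality is immediate from convexity of $\partial f(x)$; the inclusion $\mathrm{cone}(\partial f(x))\subseteq D(f,x)^\circ$ comes straight from the subgradient inequality; and the reverse inclusion is obtained by sandwiching $D(f,x)$ between the strict and weak sublevel sets of the directional derivative $h=f'(x;\cdot)$, identifying $\mathrm{cl}\,h$ with the support function of $\partial f(x)$ via Theorem~23.4, and applying the bipolar theorem. You correctly flag where each hypothesis enters. The one step you gloss as ``the standard closure lemma'' --- that $\overline{\{h<0\}}=\{\mathrm{cl}\,h\le 0\}$ when $\{h<0\}\neq\emptyset$ --- deserves one more line: pick $d_0\in\mathrm{ri}(\mathrm{dom}\,h)$ with $h(d_0)<0$ (possible since $h=\mathrm{cl}\,h$ on the relative interior and $\{h<0\}\neq\emptyset$); then for any $d$ with $\mathrm{cl}\,h(d)\le 0$ the segment $d+\lambda d_0$ lies in $\mathrm{ri}(\mathrm{dom}\,h)$ for $\lambda\in(0,1]$, where $h=\mathrm{cl}\,h$, and sublinearity gives $h(d+\lambda d_0)\le\mathrm{cl}\,h(d)+\lambda h(d_0)<0$, so $d\in\overline{\{h<0\}}$. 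With that filled in, your argument is complete.
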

Let $\|\cdot\|$ be any norm on $\R^n$ and set
$$\cC = \{x \in \R^n \ : \ \|x\|\leq R\}.$$
Then, for any $x\in \cC$ with $\|x\|=R$, $T_{\cC}(x)$ is equal to the descent cone of $\|\cdot\|$ at $x$. By Theorem~\ref{thm:DCPolar} and the bipolar theorem this implies that
\begin{equation}
\label{eqn:tanNormBall}
T_{\cC}(x) = [\mathrm{cone}(\partial\|\cdot\|(x))]^{\circ}.
\end{equation}
Let $\|\cdot\|_*$ denote the dual norm of $\|\cdot\|$, i.e.,
$$\|y\|_* = \sup_{\|x\|\leq 1} \langle x,y\rangle.$$
It is easy to verify from the definition (\ref{eqn:subgradDef}) that
$$
\partial\|\cdot\|(x) =
\begin{cases}
\{y \in \R^n \ : \ \|y\|_*\leq 1\}, & \mathrm{if} \ x=0, \\
\{y\in\R^n \ : \ \|y\|_*\leq 1 \ \mathrm{and} \ \langle y,x\rangle = \|x\|\}, & \mathrm{otherwise}.
\end{cases}
$$
Using these tools, we can readily calculate the tangent cone to an $\ell_{2,1}$-ball.
\begin{example}
\label{exa:TCl21} 
Consider the $\ell_{2,1}$-norm defined in (\ref{eqn:l21NormDef}) and its dual norm, the $\ell_{2,\infty}$-norm. Set 
$$\cC = \{x\in \R^{bD} \ : \ \|x\|_{2,1} \leq R\}.$$ 
Suppose that $x\neq 0$ and let $S\subset[b]$ be the indices corresponding to nonzero blocks of $x$. For $z\in \R^{bD}$, let $z_S$ be the vector 
\begin{equation*}
(z_S)_{B_i} = \begin{cases} z_{B_i} & \mathrm{if} \ i\in S \\
0 & \mathrm{if} \ i\in S^c. \end{cases}  
\end{equation*}
Then,
\begin{align*}
\partial\|\cdot\|_{2,1}(x) & = \{z\in \R^{bD} \ : \ \|z\|_{2,\infty}\leq 1 \ \mathrm{and} \ \langle z,x\rangle = \|x\|_{2,1}\} \\
& = \{z\in \R^{bD} \ : \ \|z_{S^c}\|_{2,\infty}\leq 1 \ \mathrm{and} \ z_{B_i} = x_{B_i}/\|x_{B_i}\|_2 \ \mathrm{for \ all} \ i\in S\}.
\end{align*}
If $x\in \R^{bD}$ satisfies $\|x\|_{2,1}=R$, then by (\ref{eqn:tanNormBall}),
\begin{align*}
T_{\cC}(x) & = \{y\in \R^{bD} \ : \ \langle z,y\rangle\leq 0 \ \mathrm{for \ all} \ z\in \partial\|\cdot\|_{2,1}(x)\} \\
& = \Big\{y\in \R^{bD} \ : \ \sum_{i\in S}\Big\langle \frac{x_{B_i}}{\|x_{B_i}\|_2},y_{B_i}\Big\rangle + \|y_{S^c}\|_{2,1} \leq 0\Big\}.
\end{align*}
In particular, any $y\in T_{\cC}(x)$ satisfies
$$\|y\|_{2,1} = \|y_S\|_{2,1} + \|y_{S^c}\|_{2,1} \leq 2\|y_S\|_{2,1} \leq 2\sqrt{|S|} \ \|y\|_2.$$
\end{example}

\section{Sketching least squares programs with an FJLT}\label{sec:fjlt-cls}

In this appendix we study sketching of least squares programs using a fast Johnson-Lindenstrauss transform (FJLT). We show in Theorem~\ref{thm:FJLTMain} that for $\ell_{2,1}$-constrained least squares minimization, one can achieve the same sketching dimension as in the sparse case (cf.\ Section~\ref{sec:l21Constraint}).\par
We first recall the definition of the FJLT. Let $F$ be the discrete Fourier transform. Let $\theta_1,\ldots,\theta_n:\Om_{\theta}\rightarrow \{0,1\}$ be independent random selectors satisfying $\bP_{\theta}(\theta_i = 1) = m/n$. Let $\si_1,\ldots,\si_n:\Om_{\si}\rightarrow \{-1,1\}$ be independent Rademacher random variables. The FJLT is defined by $\Psi=\Theta FD_{\si}$, where $\Theta = \sqrt{n/m} \ \mathrm{diag}((\theta_i)_{i=1}^n)$ and $D_{\si}=\mathrm{diag}((\si_i)_{i=1}^n)$. Here $\mathrm{diag}((x_i))$ denotes the diagonal matrix with the elements of the sequence $(x_i)$ on its diagonal. To prove Theorem~\ref{thm:FJLTMain} we apply Lemma~\ref{lem:compare} using a suitable upper bound for $Z_2$ and lower bound for $Z_1$. To obtain the latter, we use the following chaining estimate.\par
Let $T$ be some index set. For a given set $(x_{t,i})_{t\in T,1\leq i\leq n}$ in $\R$, we define a semi-metric $\rho_x$ on $T$ by
$$\rho_x(t,s) = \max_{1\leq i\leq n} |x_{t,i} - x_{s,i}|$$
and the denote the associated radius by
$$d_x(T) = \sup_{t\in T} \max_{1\leq i\leq n} |x_{t,i}|.$$
For every $1\leq i\leq n$ we fix a probability space $(\Om_i,\cF_i,\bP_i)$ and let $(\Om,\cF,\bP)$ denote the associated product space. The following observation was proven in the special case $p=1$ in \cite[Theorem 1.2]{GMP07}.
\begin{lemma}
\label{lem:expAtBddLp}
Fix $1\leq p<\infty$. For every $t\in T$ and $1\leq i\leq n$ let $X_{t,i}\in L_{2p}(\Om_i)$. Then,
\begin{align*}
& \Big(\E\sup_{t\in T}\Big|\sum_{i=1}^n X_{t,i}^2-\E X_{t,i}^2\Big|^p\Big)^{1/p} \\
& \qquad \lesssim (\E\ga_2^{2p}(T,\rho_X))^{1/p} + \sup_{t\in T} \Big(\sum_{i=1}^n \E X_{t,i}^2\Big)^{1/2}(\E\ga_2^p(T,\rho_X))^{1/p} \\
& \qquad \qquad \qquad  + \sqrt{p} \sup_{t\in T} \Big(\sum_{i=1}^n \E X_{t,i}^2\Big)^{1/2}(\E d_X^p(T))^{1/p} + p(\E d_X^{2p}(T))^{1/p}.
\end{align*}
\end{lemma}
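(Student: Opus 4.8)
The plan is to prove Lemma~\ref{lem:expAtBddLp} by reducing it to a chaining estimate for a suitable stochastic process, exactly parallel to how \eqref{eqn:supChaosImproved} is used in the body of the paper. Write $S_t = \sum_{i=1}^n (X_{t,i}^2 - \E X_{t,i}^2)$. The first step is symmetrization: introduce independent Rademachers $(r_i)$ and bound, for $p \ge 1$,
\[
\Big(\E\sup_{t\in T}|S_t|^p\Big)^{1/p} \le 2\Big(\E\sup_{t\in T}\Big|\sum_{i=1}^n r_i X_{t,i}^2\Big|^p\Big)^{1/p}.
\]
Conditionally on the $X$'s, the process $t \mapsto \sum_i r_i X_{t,i}^2$ is a Rademacher chaos of order one (equivalently, apply the subgaussian chaining Lemma~\ref{lem:supSubg} with the conditional semi-metric $\rho^{(2)}_X(t,s) = (\sum_i (X_{t,i}^2 - X_{s,i}^2)^2)^{1/2}$). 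This yields
\[
\Big(\E_r\sup_{t\in T}\Big|\sum_i r_i X_{t,i}^2\Big|^p\Big)^{1/p} \lesssim \gamma_2(T,\rho^{(2)}_X) + \sqrt p\, d_{\rho^{(2)}_X}(T) + \sup_{t}\Big(\sum_i X_{t,i}^4\Big)^{1/2},
\]
where the last term comes from centering at a fixed $t_0$ (or, more carefully, one keeps the $\ell_2$-norm of the ``diagonal'' coming from $\E_r$).

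The second step is to linearize the quadratic semi-metric. Using $|a^2 - b^2| = |a-b|\,|a+b| \le |a-b|(|a-b| + 2|b|)$ coordinatewise, one gets the pointwise bound
\[
\rho^{(2)}_X(t,s) \le \rho_X(t,s)\Big(\rho_X(t,s) \cdot \sqrt n \wedge \text{something}\Big) + 2\rho_X(t,s)\Big(\sum_i X_{s,i}^2\Big)^{1/2},
\]
but the clean way is the standard product trick: $\gamma_2$ of a product metric factorizes, giving
\[
\gamma_2(T,\rho^{(2)}_X) \lesssim \gamma_2(T,\rho_X)^2 + \Big(\sup_t \big(\textstyle\sum_i X_{t,i}^2\big)^{1/2}\Big)\gamma_2(T,\rho_X) + d_X(T)\gamma_2(T,\rho_X),
\]
and similarly $d_{\rho^{(2)}_X}(T) \lesssim d_X(T)\,d_{\|X\|_2}(T)$ where $d_{\|X\|_2}(T) = \sup_t (\sum_i X_{t,i}^2)^{1/2}$. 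This is the $L^1$-level version and recovers \cite[Theorem 1.2]{GMP07} in the case $p=1$.

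The third step is to take $L^p(\Om)$-norms over the randomness of the $X_{t,i}$ on both sides and control each term. The terms $\gamma_2(T,\rho_X)^2$, $\gamma_2(T,\rho_X)$, and $d_X(T)$ become $(\E\gamma_2^{2p}(T,\rho_X))^{1/p}$, $(\E\gamma_2^p(T,\rho_X))^{1/p}$, $(\E d_X^{2p}(T))^{1/p}$ directly; for the mixed terms one uses that $\sup_t(\sum_i X_{t,i}^2)^{1/2}$ is dominated by $\sup_t(\sum_i \E X_{t,i}^2)^{1/2} + (\sup_t|S_t|)^{1/2}$, so those terms can be moved to the left-hand side after a self-improving (quadratic-inequality) argument — precisely the ``solve a quadratic inequality'' maneuver used repeatedly in Sections~\ref{sec:linear} and~\ref{sec:ksparse}. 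Collecting everything and using $\sqrt p\cdot\sup_t(\sum_i\E X_{t,i}^2)^{1/2}d_X(T) \lesssim$ the stated fourth term (via $ab \lesssim a^2 + b^2$ if needed, or kept as is since the statement has exactly this cross term), one arrives at the claimed bound.

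I expect the main obstacle to be the bookkeeping in Step~3: separating the genuinely ``deterministic-metric'' pieces from the pieces that must be reabsorbed into $\E\sup_t|S_t|^p$, and making sure the quadratic-inequality step is valid at the level of $L^p$-norms (one needs $\|\sup_t|S_t|\|_{L^p}^{1/2} \le \||\sup_t|S_t||^{1/2}\|_{L^{2p}}$ type manipulations, which are fine). A secondary subtlety is justifying the conditional application of the subgaussian chaining lemma when centering, i.e.\ controlling the $d_{\rho^{(2)}_X}(T)\sqrt p$ term and the diagonal $\ell_2$ contribution uniformly; but both are routine consequences of Khintchine's inequality \eqref{eq:KI} applied conditionally and then integrated. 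Everything else is parallel to arguments already carried out in the paper, so no genuinely new technique is required.
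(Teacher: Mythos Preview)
Your overall strategy---symmetrize, apply subgaussian chaining conditionally on the $X$'s, then split off the random ``diameter'' factor and solve a quadratic inequality---is exactly the paper's approach, and the reabsorption step in your Step~3 is precisely what the paper does. The one place where the paper is simpler than your outline is your Step~2/3: rather than chaining first with the metric $\rho^{(2)}_X(t,s)=(\sum_i(X_{t,i}^2-X_{s,i}^2)^2)^{1/2}$ and then invoking a ``product--metric factorization'' to bound $\gamma_2(T,\rho^{(2)}_X)$, the paper observes directly that
\[
\Big(\sum_i (X_{t,i}^2-X_{s,i}^2)^2\Big)^{1/2}\le \sqrt{2}\,\Big[\sup_{u\in T}\Big(\sum_i X_{u,i}^2\Big)^{1/2}\Big]\rho_X(t,s),
\]
so the conditional process is subgaussian with respect to the single metric $M\cdot\rho_X$ where $M=\sup_u(\sum_i X_{u,i}^2)^{1/2}$, and Lemma~\ref{lem:supSubg} gives at once $\lesssim M(\gamma_2(T,\rho_X)+\sqrt{p}\,d_X(T))$. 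No separate factorization of $\gamma_2(T,\rho^{(2)}_X)$ is needed, and no base-point term $\sup_t(\sum_i X_{t,i}^4)^{1/2}$ appears. The terms $(\E\gamma_2^{2p})^{1/p}$ and $p(\E d_X^{2p})^{1/p}$ in the statement arise only \emph{after} splitting $M\le \sup_t(\sum_i\E X_{t,i}^2)^{1/2}+(\sup_t|S_t|)^{1/2}$, applying H\"older, and solving the resulting quadratic---not from a $\gamma_2^2$-type bound on $\gamma_2(T,\rho^{(2)}_X)$ as your outline suggests. Your route would still close, but the paper's shortcut makes Step~3 a one-line metric comparison rather than a genuine chaining-of-chaos argument.
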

\begin{proof}
Let $(r_i)_{i\geq 1}$ be a Rademacher sequence. By symmetrization,
\begin{equation}
\label{eqn:AtSymm}
\Big(\E\sup_{t\in T}\Big|\sum_{i=1}^n X_{t,i}^2-\E X_{t,i}^2\Big|^p\Big)^{1/p}
\leq 2 \Big(\E\E_r\sup_{t\in T}\Big|\sum_{i=1}^n r_i X_{t,i}^2\Big|^p\Big)^{1/p}.
\end{equation}
By Hoeffding's inequality, we have for any $s,t \in T$,
$$\bP_{r}\Big(\sum_{i=1}^n r_i (X_{t,i}^2 - X_{s,i}^2) \geq u\Big(\sum_{i=1}^n (X_{t,i}^2 - X_{s,i}^2)^2\Big)^{1/2}\Big)\leq \exp(-u^2/2).$$
Since
$$\Big(\sum_{i=1}^n (X_{t,i}^2 - X_{s,i}^2)^2\Big)^{1/2} \leq \sqrt{2}\sup_{t\in T}\Big(\sum_{i=1}^n X_{t,i}^2\Big)^{1/2}\rho_X(t,s)$$
we conclude that $(\sum_{i=1}^n r_i X_{t,i}^2)_{t\in T}$ is subgaussian with respect to the semi-metric
$$\rho_*(s,t) = \sqrt{2}\sup_{t\in T}\Big(\sum_{i=1}^n X_{t,i}^2\Big)^{1/2}\rho_X(s,t).$$
By Lemma~\ref{lem:supSubg},
\begin{align*}
& \Big(\E_{r}\sup_{t\in T}\Big|\sum_{i=1}^n r_i X_{t,i}^2\Big|^p\Big)^{1/p} \\
& \qquad \lesssim \sup_{t\in T}\Big(\sum_{i=1}^n X_{t,i}^2\Big)^{1/2}\ga_2(T,\rho_X) + \sqrt{p}\sup_{t\in T}\Big(\sum_{i=1}^n X_{t,i}^2\Big)^{1/2}d_X(T) \\
& \qquad \leq \sup_{t\in T}\Big|\sum_{i=1}^n X_{t,i}^2-\E X_{t,i}^2\Big|^{1/2}\Big(\ga_2(T,\rho_X)+\sqrt{p}d_X(T)\Big) \\
& \qquad \qquad \qquad + \sup_{t\in T}\Big(\sum_{i=1}^n \E X_{t,i}^2\Big)^{1/2}\Big(\ga_2(T,\rho_X) + \sqrt{p}d_X(T)\Big).
\end{align*}
Taking $L_p$-norms on both sides, using (\ref{eqn:AtSymm}) and applying H\"{o}lder's inequality yields
\begin{align*}
& \Big(\E\sup_{t\in T}\Big|\sum_{i=1}^n X_{t,i}^2-\E X_{t,i}^2\Big|^p\Big)^{1/p} \\
& \ \ \lesssim \Big(\E\sup_{t\in T}\Big|\sum_{i=1}^n X_{t,i}^2-\E X_{t,i}^2\Big|^{p}\Big)^{1/2p}\Big((\E\ga_2^{2p}(T,\rho_X))^{1/2p} + \sqrt{p}(\E d_X^{2p}(T))^{1/2p}\Big) \\
& \ \ \qquad + \sup_{t\in T}\Big(\sum_{i=1}^n \E X_{t,i}^2\Big)^{1/2}\Big((\E\ga_2^p(T,\rho_X))^{1/p} + \sqrt{p}(\E d_X^p(T))^{1/p}\Big).
\end{align*}
Solving this quadratic inequality yields the result.
\end{proof}
To estimate the $\ga_2$-functional occuring in Lemma~\ref{lem:expAtBddLp} we use a covering number estimate from \cite{GMP08}. Recall the following definitions. Let $E$ be a Banach space and let $E^*$ denote its dual space. The modulus of convexity of $E$ is defined by
$$\del_E(\eps) = \inf\Big\{1-\frac{1}{2}\|x+y\|, \ \|x\|=1, \ \|y\|=1, \ \|x-y\|>\eps\Big\} \qquad (0\leq \eps\leq 2).$$
We say that $E$ is uniformly convex if $\del_{E}(\eps)>0$ for all $\eps>0$ and that $E$ is uniformly convex of power type $2$ with constant $\la$ if $\del_E(\eps)\geq \eps^2/(8\la^2)$ for all $\eps>0$. The following observation is due to Figiel \cite[Proposition 24]{Fig76}.
\begin{lemma}
\label{lem:Figiel}
Suppose that $E$ is a $p$-convex and $q$-concave Banach lattice for some $1<p\leq q<\infty$. Set $r=\max\{2,q\}$ and $K=\max\{2,\frac{2}{\sqrt{p-1}}\}$. Then, $\del_E(\eps)\geq r^{-1}K^{-r}\eps^r$ for all $0\leq \eps\leq 2$.
\end{lemma}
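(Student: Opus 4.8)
This is a classical estimate of Figiel (here, as in \cite{Fig76}, the $p$-convexity and $q$-concavity constants are taken to be $1$), and in the paper one simply invokes \cite[Proposition 24]{Fig76}. The argument I would give runs as follows.

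I would first reduce the modulus bound to a Clarkson-type norm inequality: it suffices to prove that for all $x,y\in E$,
\[
\Big\|\tfrac{x+y}{2}\Big\|_E^{\,r}+K^{-r}\,\|x-y\|_E^{\,r}\;\le\;\tfrac12\big(\|x\|_E^{\,r}+\|y\|_E^{\,r}\big).
\]
Indeed, normalizing $\|x\|=\|y\|=1$ and assuming $\|x-y\|>\eps$, and using $\eps\le 2\le K$ so that $(\eps/K)^r\le 1$, this gives $\big\|\tfrac{x+y}{2}\big\|^r\le 1-(\eps/K)^r$; the elementary bound $(1-t)^{1/r}\le 1-t/r$ for $0\le t\le 1$ then yields $\big\|\tfrac{x+y}{2}\big\|\le 1-r^{-1}K^{-r}\eps^r$, and taking the infimum gives $\delta_E(\eps)\ge r^{-1}K^{-r}\eps^r$.

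For the Clarkson inequality I would argue pointwise, using Krivine's functional calculus to realize $E$ as a K\"othe function lattice so that the powers $|x|^s$ and finite lattice operations become honest pointwise operations. Two real-variable inputs feed in, one fixing the exponent and one the constant. The exponent $r=\max\{2,q\}$ comes from the $q$-concavity: the scalar Clarkson inequality $\big|\tfrac{a+b}{2}\big|^r+\big|\tfrac{a-b}{2}\big|^r\le\tfrac12(|a|^r+|b|^r)$ (valid for $r\ge 2$) holds pointwise, hence in the lattice for $x,y$, and then $r$-concavity with constant $1$ — inherited from $q$-concavity since $r\ge q$ — converts it to a norm statement. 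The constant $K=\max\{2,2/\sqrt{p-1}\}$ comes from the $p$-convexity: for $1<p<2$ one must use $p$-convexity at its own exponent, $\big\|(|x|^p+|y|^p)^{1/p}\big\|\le(\|x\|^p+\|y\|^p)^{1/p}$, in tandem with the sharp two-point inequality underlying the power-type-$2$ uniform convexity of $L^p$ (of the form $|a+b|^2+(p-1)|a-b|^2\le K^2\big(\tfrac12(|a|^p+|b|^p)\big)^{2/p}$; trivial with $K=2$ when $p\ge 2$), and one then has to splice the $q$-concavity step and the $p$-convexity step together along a single chain of pointwise lattice inequalities.

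The main obstacle is exactly this splicing when $1<p<2<q$. The $q$-concavity pushes one to work at the exponent $r>2$, whereas the $p$-convexity only has bite at $p<2$, and the naive remedy — simply raising the convexity exponent up to $r$ — loses a factor $2^{1/p-1/r}$ that destroys the estimate; one must instead interleave the two facts through Krivine's calculus without ever incurring a factor depending on the number of lattice "coordinates" involved. The scalar inequalities and the final passage from the Clarkson inequality to the bound on $\delta_E$ are routine; it is the lattice bookkeeping that carries the weight, which is why for our purposes it is cleanest to cite \cite[Proposition 24]{Fig76}.
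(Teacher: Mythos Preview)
The paper does not give its own proof of this lemma; it is simply stated as a citation to \cite[Proposition 24]{Fig76}, exactly as you recognize in your first sentence. Your additional sketch of the argument (reduction to a Clarkson-type inequality, then pointwise estimates via Krivine's calculus combined with $p$-convexity and $q$-concavity) is a faithful outline of the standard route, and your honest acknowledgment that the ``splicing'' in the case $1<p<2<q$ is where the real work lies --- and that it is cleanest to cite Figiel --- matches the paper's own choice.
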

Using the H\"{o}lder-Minkowski inequalities one readily checks that $\ell_{2,p}=\ell_p(\ell_2)$ is $p$-convex and $2$-concave if $1<p\leq 2$. Therefore, $\del_{\ell_{2,p}}(\eps)\geq \frac{1}{8}(p-1)\eps^2$.  
\begin{lemma}
\label{lem:covEstdvGen}
\cite[Lemma 1]{GMP08} Let $E$ be uniformly convex of power type $2$ with constant $\la$. Let $T_2(E^*)$ be the type $2$ constant of $E^*$. Consider $v_1,\ldots,v_N\in E^*$ and define an associated semi-metric on $E$ by
\begin{equation}
\label{eqn:dvDefGen}
\rho_v(x,y) = \max_{1\leq i\leq N} |\langle v_i,x-y\rangle|.
\end{equation}
Set $\nu=\max_{1\leq i\leq N} \|v_i\|_{E^*}$ and let $U\subset B_E$. Then, for all $t>0$,
\begin{equation}
\label{eqn:covEstdvGen}
\log^{1/2}(2\cN(U,\rho_v,t)) \lesssim \nu\la^2 T_2(E^*)\log^{1/2}(N)t^{-1}.
\end{equation}
\end{lemma}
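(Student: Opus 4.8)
The plan is to prove the estimate first for $U=B_E$ and then reduce the general case via the trivial monotonicity $\cN(U,\rho_v,t)\le\cN(B_E,\rho_v,t)$. Note that $\rho_v$ is exactly the semi-metric obtained by pulling back the $\ell_\infty^N$-metric along the operator $S\colon E\to\ell_\infty^N$, $Sx=(\langle v_i,x\rangle)_{i=1}^N$, with $\|S\|_{E\to\ell_\infty^N}=\max_i\|v_i\|_{E^*}=\nu$. Thus $\rho_v=\|\cdot\|_V$ for the symmetric body $V=\{x:|\langle v_i,x\rangle|\le1\ \forall i\}$, whose polar $V^{\circ}=\mathrm{conv}\{\pm v_1,\dots,\pm v_N\}$ is a symmetric polytope with at most $2N$ vertices, each of $E^*$-norm $\le\nu$. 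The claim is therefore a dual-Sudakov-type inequality for the power-$2$ uniformly convex space $E$: when $E$ is Hilbertian (so $\la=1$, $T_2(E^*)=1$) it is precisely Lemma~\ref{lem:dualSudakov} applied to $V$, together with the Gaussian maximal inequality $\E_g\sup_{x\in V^{\circ}}\langle g,x\rangle=\E_g\max_i|\langle v_i,g\rangle|\lesssim\nu(\log N)^{1/2}$ — and it is this last step that produces the factor $(\log N)^{1/2}$. The task is to run the dual-Sudakov argument with the Euclidean structure replaced by the modulus of convexity of $E$.

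I would carry this out using two inputs. First, the quantitative convexity of $E$: since $\delta_E(\eps)\ge\eps^2/(8\la^2)$ (the hypothesis, compatible with Lemma~\ref{lem:Figiel}), one has the averaged parallelogram inequality $\big\|\tfrac1k\sum_{l=1}^k x_l\big\|^2\le\tfrac1k\sum_l\|x_l\|^2-\tfrac{c}{\la^2k^2}\sum_{l<l'}\|x_l-x_{l'}\|^2$ for $x_1,\dots,x_k\in E$, a standard consequence of power-$2$ uniform convexity. Second, a Maurey bound on the dual side: applying Lemma~\ref{lem:Maurey} in $E^*$ (type-$2$ constant $T_2(E^*)$) to the $2N$ generators of $\tfrac1\nu V^{\circ}$ shows every point of $V^{\circ}$ lies within $\nu\,T_2(E^*)/\sqrt\ell$ of an average of $\ell$ generators, so $\log\cN(V^{\circ},\|\cdot\|_{E^*},\eps)\lesssim\eps^{-2}\nu^2T_2(E^*)^2\log N$ — another source of a logarithm in $N$. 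The two are then combined in a multiscale construction of a $\rho_v$-net of $B_E$: the essential role of the convexity inequality is to force the $E$-norms of the successive scale increments of a point of $B_E$ to be $\ell_2$-summable with total $\lesssim\la$, which is exactly what keeps the chaining sum convergent and prevents the extra logarithmic-in-$N$ loss incurred by a purely volumetric treatment. (For comparison, simply invoking the duality of entropy numbers, Lemma~\ref{lem:BPST}, with base scale $\theta\simeq\nu=d_{\rho_v}(B_E)/2$ and the Maurey estimate above, yields the weaker bound $\nu\,T_2(E^*)(\log N)(\log(\nu/t))^{1/2}/t$, which already suffices for many applications but not for the sharp form stated.) Summing the chaining estimate over scales and optimizing gives $\sup_{t>0}t[\log(2\cN(B_E,\rho_v,t))]^{1/2}\lesssim\nu\la^2T_2(E^*)(\log N)^{1/2}$, with one power of $\la$ coming from the convexity inequality in $E$ and the second from balancing it against the dual (Maurey) side; the general $U\subseteq B_E$ follows by monotonicity.

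The main obstacle is exactly this combining step. The separation of the points we want to count is in the pullback metric $\rho_v$, not in $\|\cdot\|_E$, so a direct "barycenter in $E$'' argument is unavailable and a genuine multiscale construction is needed, in which the Maurey approximation of $V^{\circ}$ (carrying the $(\log N)^{1/2}$ and the $T_2(E^*)$) is interleaved with the convexity inequality of $E$ (carrying the $\la^2$). Making this interleaving precise, and tracking the constants through it so that the final $N$-dependence is only $(\log N)^{1/2}$ rather than $\log N$, is the delicate part; the reduction, the convexity inequality, and the Maurey estimate are all routine.
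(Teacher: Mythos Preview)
The paper does not prove this lemma: it is stated with the citation \cite[Lemma 1]{GMP08} and used as a black box in the estimate \eqref{eqn:covEstdv21}, with no argument supplied. There is therefore no ``paper's own proof'' to compare your proposal against.

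As for the proposal itself, it is a sketch rather than a proof, and you say so explicitly: you identify the ingredients (reduction to $U=B_E$; the identification $V^{\circ}=\mathrm{conv}\{\pm v_i\}$; Maurey's lemma in $E^*$ giving the $T_2(E^*)$ and $(\log N)^{1/2}$ factors; the power-$2$ convexity inequality in $E$ giving the $\lambda$-dependence), but you stop short of the ``combining step'' and acknowledge that this interleaving is ``the delicate part.'' Your diagnosis that the naive route via Lemma~\ref{lem:BPST} loses a logarithmic factor is correct, and the ingredients you list are indeed the ones used in \cite{GMP08}. However, since the heart of the argument --- how uniform convexity of $E$ is actually leveraged to count $\rho_v$-separated points in $B_E$ without the extra $\log$ --- is precisely what you have not written down, the proposal as it stands is a plan with a genuine gap rather than a proof. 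If you want to complete it, the argument in \cite{GMP08} proceeds by a direct packing estimate: given $\rho_v$-separated points in $B_E$, one uses Maurey's empirical approximation of each $v_i$ together with the power-$2$ convexity of $E$ to bound the number of such points, rather than a multiscale chaining as you suggest.
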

We can now estimate the parameter $Z_1$. 
\begin{lemma}
\label{lem:Z1EstFJLT}
Set $d=bD$. Let $\Psi$ be the FJLT, fix $A\in \R^{n\ti d}$ and let $\cK\subset \R^d$. Consider the norm $\tnorm{A}$ defined in (\ref{eqn:blockNormA}) and set
\begin{equation}
\label{eqn:betaDef}
\beta = (\log^2(\eta^{-1}) + \log(b) + \log(n))\log(n)\log^3(b)\log^2(d).
\end{equation}
If
\begin{equation}
\label{eqn:Z1EstFJLTCond}
m\gtrsim \beta\eps^{-2}\tnorm{A}^2\Big[\sup_{x\in \cK \ : \|Ax\|_2=1}\|x\|_{2,1}^2\Big],
\end{equation}
then with probability at least $1-\eta$ we have
$$(1-\eps)\|x\|_2^2\leq \|\Psi x\|_2^2 \leq (1+\eps)\|x\|_2^2, \qquad \mathrm{for \ all} \ x\in A\cK\cap S^{n-1}.$$
In particular, $Z_1(A,\Psi,\cK)\geq 1-\eps$.
\end{lemma}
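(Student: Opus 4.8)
The plan is to show that $\Psi$ is an $\eps$-restricted isometry on $T := A\cK\cap S^{n-1}$, from which $Z_1(A,\Psi,\cK)=\inf_{v\in T}\|\Psi v\|_2^2\geq 1-\eps$ is immediate. Since $FD_{\si}$ is orthogonal we have $\|FD_{\si}x\|_2=\|x\|_2$, and since $\E_\theta\Theta^*\Theta=I$ we get $\E_\theta\|\Psi x\|_2^2=\|x\|_2^2$ and $\|\Psi x\|_2^2=\frac nm\sum_{i=1}^n\theta_i|(FD_{\si}x)_i|^2$. Conditioning on $\si$ and setting $X_{x,i}:=\sqrt{n\theta_i/m}\,(FD_{\si}x)_i$ — which for fixed $\si$ are independent across $i$ and satisfy $\sum_i\E_\theta X_{x,i}^2=\|x\|_2^2=1$ on $T$ — I would apply Lemma~\ref{lem:expAtBddLp} and then take $L_p(\Om_\si)$-norms to obtain, for all $p\geq 1$,
$$\Big(\E_{\si,\theta}\sup_{x\in T}\big|\|\Psi x\|_2^2-1\big|^p\Big)^{1/p}\lesssim(\E_\si\ga_2^{2p}(T,\rho_\si))^{1/p}+(\E_\si\ga_2^{p}(T,\rho_\si))^{1/p}+\sqrt p\,(\E_\si d_\si^p(T))^{1/p}+p\,(\E_\si d_\si^{2p}(T))^{1/p},$$
where $\rho_\si(x,y)\leq\sqrt{n/m}\,\|FD_\si(x-y)\|_\infty$ and $d_\si(T)\leq\sqrt{n/m}\sup_{x\in T}\|FD_\si x\|_\infty$ (the factors $\theta_i\leq 1$ are discarded). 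Thus everything reduces to controlling, in $L_p(\Om_\si)$, the $\ell_\infty$-diameter $d_\si(T)$ and the $\ga_2$-functional of $T$ in the $\ell_\infty$-type metric $\rho_\si$.

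For $x=Av$ one has $\rho_\si(Av,Av')=\sqrt{n/m}\,\max_i|\langle(FD_\si A)^*e_i,v-v'\rangle|$, which is a metric of the form $\rho_v$ in Lemma~\ref{lem:covEstdvGen} built from the $n$ functionals $v_i=\sqrt{n/m}\,(FD_\si A)^*e_i$. Since $\{v:Av\in T\}\subset M_{2,1}B_{\ell_{2,1}^d}$ with $M_{2,1}:=\sup_{x\in\cK\,:\,\|Ax\|_2=1}\|x\|_{2,1}$, and since $\ell_{2,1}$ is not uniformly convex, the key move is to apply Lemma~\ref{lem:covEstdvGen} with $E=\ell_{2,r}^d$ for $r=1+1/\log b$: then $B_{\ell_{2,1}^d}\subset e\,B_{\ell_{2,r}^d}$ on $\R^d$, Lemma~\ref{lem:Figiel} (in the form noted thereafter) gives that $E$ is $2$-uniformly convex with $\la\simeq\sqrt{\log b}$, and $E^*=\ell_{2,r'}^d$ with $r'\simeq\log b$ has $T_2(E^*)\lesssim\sqrt{\log b}$. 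With $\nu=\max_i\|v_i\|_{E^*}\lesssim\sqrt{n/m}\,\|FD_\si A\|_{\ell_{2,1}\to\ell_\infty}$ (using $b^{1/r'}\simeq 1$), Lemma~\ref{lem:covEstdvGen} yields
$$\log^{1/2}\cN(T,\rho_\si,t)\lesssim(\log b)^{3/2}(\log n)^{1/2}\,M_{2,1}\,\sqrt{n/m}\,\|FD_\si A\|_{\ell_{2,1}\to\ell_\infty}\,t^{-1}.$$
Plugging this into Dudley's inequality \Equation{Dudley} for large $t$, and for small $t$ using $\rho_\si\leq\sqrt{n/m}\,\|\cdot\|_2$ together with the volumetric bound \Equation{volComp} on $\mathrm{Col}(A)$ (which has dimension $r(A)$) and balancing the crossover point as in the proof of Theorem~\ref{thm:JLsubspace}, the $\sqrt{r(A)}$ contribution is absorbed into logarithmic factors, giving $\ga_2(T,\rho_\si)\lesssim L\cdot M_{2,1}\sqrt{n/m}\,\|FD_\si A\|_{\ell_{2,1}\to\ell_\infty}$ for a polylogarithmic factor $L$.

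It remains to bound $\|FD_\si A\|_{\ell_{2,1}\to\ell_\infty}$ in $L_p(\Om_\si)$, which also controls $\sqrt{m/n}\,d_\si(T)/M_{2,1}$ after the Hölder step $\|FD_\si Av\|_\infty\leq\|v\|_{2,1}\|FD_\si A\|_{\ell_{2,1}\to\ell_\infty}$. For a fixed row $i$ and block $\ell$, let $B\in\C^{n\times D}$ have entries $B_{jk}=F_{ij}A_{jk}$ ($k\in B_\ell$); then $\|((FD_\si A)_{ik})_{k\in B_\ell}\|_2=\|B^\top\si\|_2$, and since $|F_{ij}|=n^{-1/2}$ we have $\|B\|_F^2=\frac1n\sum_j\sum_{k\in B_\ell}A_{jk}^2\leq\frac1n\tnorm A^2$ and $\|B^\top\|\leq\|B\|_F\leq n^{-1/2}\tnorm A$. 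Hence Lemma~\ref{lem:Lpl2} applied to the real and imaginary parts of $B$ gives $(\E_\si\|B^\top\si\|_2^p)^{1/p}\lesssim n^{-1/2}(1+\sqrt p)\tnorm A$; taking the maximum over the $nb$ pairs $(i,\ell)$ costs a $\sqrt{\log(nb)}$ factor, so $(\E_\si\|FD_\si A\|_{\ell_{2,1}\to\ell_\infty}^p)^{1/p}\lesssim n^{-1/2}(\sqrt{\log(nb)}+\sqrt p)\tnorm A$. Combining the estimates, the right-hand side of the moment bound is $\lesssim\Gamma^2+\Gamma+\sqrt p\,\Gamma+p\,\Gamma^2$ with $\Gamma\simeq m^{-1/2}(\sqrt{\log(nb)}+\sqrt p)\,\tnorm A\,M_{2,1}\,(\log n)^{1/2}(\log b)^{3/2}(\log d)$; finally I would invoke Lemma~\ref{lem:MomentsToTails} with $w=p=\log(\eta^{-1})$ (which exceeds $\log m$ since $\eta\leq 1/m$) and require the resulting tail bound to be at most $\eps$. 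This produces exactly $m\gtrsim\beta\eps^{-2}\tnorm A^2 M_{2,1}^2$, with the factor $\eps^{-2}$ (rather than $\eps^{-1}$) coming from the $\sqrt p\,d_\si(T)$ term and the $\log^2(\eta^{-1})$ inside $\beta$ from the $p\,d_\si(T)^2$ term.

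The main obstacle is the covering-number step: Lemma~\ref{lem:covEstdvGen} cannot be applied to $\ell_{2,1}$ directly because that space has no uniform convexity, so one must pass to the nearby space $\ell_{2,1+1/\log b}$ and carefully track how the $2$-uniform-convexity constant $\la$ and $T_2(E^*)$ grow with $b$ — this is what fixes the powers of $\log b$ in $\beta$ — while simultaneously checking that the distortion between $\|\cdot\|_{2,1}$ and $\|\cdot\|_{2,1+1/\log b}$ on $\R^d$ stays bounded (i.e.\ $b^{1-1/r}\simeq 1$). A secondary technical point is keeping the concentration bound for $\|FD_\si A\|_{\ell_{2,1}\to\ell_\infty}$ expressed through $\tnorm A$ rather than a larger norm of $A$, which hinges on the $\ell_2$-normalization $|F_{ij}|=n^{-1/2}$ of the Fourier matrix; everything else is routine chaining and moment bookkeeping.
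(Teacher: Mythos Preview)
Your proposal is correct and follows essentially the same route as the paper: apply Lemma~\ref{lem:expAtBddLp} conditionally on $\si$, bound $\ga_2$ via Dudley's integral using Lemma~\ref{lem:covEstdvGen} on the nearby uniformly convex space $\ell_{2,1+1/\log b}$ (the paper writes this exponent as $\log b/(\log b-1)$), control $\nu=\max_i\|v_i\|_{2,\infty}$ in $L^p(\Om_\si)$ by Khintchine (your use of Lemma~\ref{lem:Lpl2} is equivalent), and finish with Lemma~\ref{lem:MomentsToTails}. The only cosmetic difference is your small-$t$ volumetric step, where you cover $T\subset\mathrm{Col}(A)$ in $\ell_2$ (dimension $r(A)$) rather than the preimage in $\ell_{2,1}$ (dimension $d$) as the paper does; both work, but note that your crossover needs the observation $\max_i\|A^*D_\si F_i\|_{2,\infty}\cdot M_{2,1}\ge n^{-1/2}$ (from $1=\|FD_\si Av\|_2\le\sqrt n\,\|FD_\si Av\|_\infty$) to keep the resulting logarithm under control.
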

\begin{proof}
Let $F_i$ denote the $i$-th row of $F$. Since 
$$\E_{\theta}\|\Psi x\|_2^2 = \|D_{\si}x\|_2^2 = \|x\|_2^2, \qquad \mathrm{for \ all} \ x\in \R^n,$$
we have
\begin{align*}
& \sup_{x\in A\cK\cap S^{n-1}} \Big|\|\Psi x\|_2^2 - \|x\|_2^2\Big| \\
& \qquad = \sup_{x\in A\cK\cap S^{n-1}} \Big|\sum_{i=1}^n \theta_i\Big\langle \sqrt{\frac{n}{m}}D_{\si}F_i,x\Big\rangle^2 - \E_{\theta}\theta_i\Big\langle \sqrt{\frac{n}{m}}D_{\si}F_i,x\Big\rangle^2\Big| \\
& \qquad = \sup_{x\in \cK\cap A^{-1}(S^{n-1})} \Big|\sum_{i=1}^n \theta_i\Big\langle \sqrt{\frac{n}{m}}A^*D_{\si}F_i,x\Big\rangle^2 - \E_{\theta}\theta_i\Big\langle \sqrt{\frac{n}{m}}A^*D_{\si}F_i,x\Big\rangle^2\Big|.
\end{align*}
We apply Lemma~\ref{lem:expAtBddLp} (with $X_{x,i}:=\langle\theta_i\sqrt{\frac{n}{m}}A^*D_{\si}F_i,x\rangle$) to find
\begin{align*}
& \Big(\E_{\theta}\sup_{x\in A\cK\cap S^{n-1}} \Big|\|\Psi x\|_2^2 - \|x\|_2^2\Big|^p\Big)^{1/p} \\
& \qquad \leq (\E_{\theta}\ga_2^{2p}(\cK\cap A^{-1}(S^{n-1}),\rho_X))^{1/p} + (\E_{\theta}\ga_2^p(\cK\cap A^{-1}(S^{n-1}),\rho_X))^{1/p} \\
& \qquad \qquad + \sqrt{p} (\E_{\theta} d_X^p(\cK\cap A^{-1}(S^{n-1}))^{1/p} + p(\E_{\theta} d_X^{2p}(\cK\cap A^{-1}(S^{n-1}))^{1/p} \\
& \qquad \leq \ga_2^2(\cK\cap A^{-1}(S^{n-1}),\rho_v) + \ga_2(\cK\cap A^{-1}(S^{n-1}),\rho_v) \\
& \qquad \qquad + \sqrt{p}d_v(\cK\cap A^{-1}(S^{n-1})) + p d_v^2(\cK\cap A^{-1}(S^{n-1})),
\end{align*}
where we have set $v_i = \sqrt{\frac{n}{m}}A^*D_{\si}F_i$, defined $\rho_v$ as in (\ref{eqn:dvDefGen}) and used that $\rho_X\leq \rho_v$ uniformly. Set $d_{2,1} = d_{\|\cdot\|_{2,1}}(\cK\cap A^{-1}(S^{n-1}))$ and $\nu := \max_{1\leq i\leq n} \|v_i\|_{2,\infty}$.
Clearly
\begin{equation}
\label{eqn:DelOestdv}
d_v(\cK\cap A^{-1}(S^{n-1})) \leq \DelO_{2,1}\nu. 
\end{equation}
We estimate the $\ga_2$-functional by an entropy integral
\begin{align*}
& \ga_2(\cK\cap A^{-1}(S^{n-1}),\rho_v) \\
& \qquad \lesssim \int_0^{t^*}\cN(\cK\cap A^{-1}(S^{n-1}),\rho_v,t) + \int_{t_*}^{\infty} \cN(\cK\cap A^{-1}(S^{n-1}),\rho_v,t) \ dt.
\end{align*}
The first integral we estimate using the volumetric bound
$$\cN(\cK\cap A^{-1}(S^{n-1}),\rho_v,t) \leq \cN(\cK\cap A^{-1}(S^{n-1}),\nu\|\cdot\|_{2,1},t) \leq \Big(1+\frac{2\nu\DelO_{2,1}}{t}\Big)^d.$$
For the second integral we set $p=\log(b)/(\log(b)-1)$ and apply Lemma~\ref{lem:covEstdvGen} with $E=\ell_{2,p}^d$ and $E^*=\ell_{2,p'}^d$, where $p'=\log(b)$. Note that $\|\cdot\|_{2,\infty} \leq \|\cdot\|_{2,p'}\leq e\|\cdot\|_{2,\infty}$ and therefore $T_2(E^*)\leq e\log^{1/2}(b)$. Also, by the remark after Lemma~\ref{lem:Figiel} we have $\la^2=(p-1)^{-1}=\log(b)-1$. By (\ref{eqn:covEstdvGen}),
\begin{equation}
\label{eqn:covEstdv21}
\log^{1/2}(\cN(B_{\ell_{2,1}^d},\rho_v,t)) \leq \log^{1/2}(\cN(B_{\ell_{2,p}^d},\rho_v,t)) \lesssim \nu\log^{3/2}(b)\log^{1/2}(n)t^{-1}.
\end{equation}
Since $\cK\cap A^{-1}(S^{n-1})\subset\DelO_{2,1} B_{\ell_{2,1}^d}$, we arrive at
\begin{align*}
& \ga_2(\cK\cap A^{-1}(S^{n-1}),\rho_v) \\
& \qquad \lesssim \sqrt{d}\int_0^{t_*} \log^{1/2}\Big(1+\frac{2\nu\DelO_{2,1}}{t}\Big) \ dt + \int_{t_*}^{\nu\DelO_{2,1}}\nu\DelO_{2,1} \log^{1/2}(n)\log^{3/2}(b)t^{-1} \ dt \\
& \qquad \leq \sqrt{d} t_*\log^{1/2}(e + 2et_*^{-1}\nu\DelO_{2,1}) + \nu\DelO_{2,1}\log^{1/2}(n)\log^{3/2}(b)\log(t_*^{-1}\nu\DelO_{2,1}).
\end{align*}
Take $t_*=d^{-1/2}\nu\DelO_{2,1}$ to obtain
\begin{align}
\label{eqn:ga2dvEst}
& \ga_2(\cK\cap A^{-1}(S^{n-1}),\rho_v) \nonumber\\
& \qquad \lesssim \nu\DelO_{2,1}\log^{1/2}(e + 2e\sqrt{d}) + \nu\DelO_{2,1}\log^{1/2}(n)\log^{3/2}(b)\log(\sqrt{d}) \nonumber\\
& \qquad \lesssim \nu\DelO_{2,1}\log^{1/2}(n)\log^{3/2}(b)\log(d).
\end{align}
In conclusion,
\begin{align}
\label{eqn:ga2dvEstLp}
& \Big(\E_{\theta}\sup_{x\in A\cK\cap S^{n-1}} \Big|\|\Psi x\|_2^2 - \|x\|_2^2\Big|^p\Big)^{1/p} \nonumber\\
& \qquad \lesssim \nu^2\DelO_{2,1}^2\log(n)\log^3(b)\log^2(d)+ \nu\DelO_{2,1}\log^{1/2}(n)\log^{3/2}(b)\log(d) \nonumber \\
& \qquad \qquad \qquad + \sqrt{p} \nu\DelO_{2,1} + p\nu^2\DelO_{2,1}^2.
\end{align}
Since $|\sqrt{n}F_{ij}|\leq 1$, Khintchine's inequality implies that
\begin{align}
\label{eqn:estNu}
& (\E_{\si}\nu^p)^{1/p} \nonumber\\ 
& \ = \frac{1}{\sqrt{m}}(\E_{\si}\max_{1\leq i\leq n}\sqrt{n}\|A^*D_{\si}F_i\|_{2,\infty}^p)^{1/p} \nonumber\\
& \ = \frac{1}{\sqrt{m}}\Big(\E_{\si}\max_{1\leq i\leq n}\max_{1\leq \ell\leq b}\Big(\sum_{k\in B_{\ell}}\Big|\sum_{j=1}^n \si_j A_{jk}\sqrt{n}F_{ij}\Big|^2\Big)^{p/2}\Big)^{1/p} \nonumber\\
& \ \leq \frac{1}{\sqrt{m}}(\sqrt{p}+\log^{1/2}(b)+\log^{1/2}(n))\max_{1\leq i\leq n}\max_{1\leq \ell\leq b}\Big(\sum_{k\in B_{\ell}}\sum_{j=1}^n|A_{jk}\sqrt{n}F_{ij}|^2\Big)^{1/2} \nonumber\\
& \ \leq \frac{1}{\sqrt{m}}(\sqrt{p}+\log^{1/2}(b)+\log^{1/2}(n))\max_{1\leq \ell\leq b}\Big(\sum_{k\in B_{\ell}}\sum_{j=1}^n|A_{jk}|^2\Big)^{1/2}.
\end{align}
Taking $L_p(\Om_{\si})$-norms in (\ref{eqn:ga2dvEstLp}) and using (\ref{eqn:estNu}), we conclude that
\begin{align*}
& \Big(\E_{\theta,\si}\sup_{x\in A\cK\cap S^{n-1}} \Big|\|\Psi x\|_2^2 - \|x\|_2^2\Big|^p\Big)^{1/p} \\
& \qquad \lesssim \frac{1}{m}(\sqrt{p}+\log^{1/2}(b)+\log^{1/2}(n))^2\tnorm{A}^2\DelO_{2,1}^2\log(n)\log^3(b)\log^2(d)\\
& \qquad \qquad + \frac{1}{\sqrt{m}} (\sqrt{p}+\log^{1/2}(b)+\log^{1/2}(n))\tnorm{A}\DelO_{2,1}\log^{1/2}(n)\log^{3/2}(b)\log(d)  \\
& \qquad \qquad + \sqrt{p} \frac{1}{\sqrt{m}}(\sqrt{p}+\log^{1/2}(b)+\log^{1/2}(n))\tnorm{A}\DelO_{2,1} \\
& \qquad \qquad + p\frac{1}{m}(\sqrt{p}+\log^{1/2}(b)+\log^{1/2}(n))^2\tnorm{A}^2\DelO_{2,1}^2.
\end{align*}
The result now follows from Lemma~\ref{lem:MomentsToTails} and taking $w=\log(\eta^{-1})$ in (\ref{eqn:MomentsToTails}).
\end{proof}
To prove an upper bound for $Z_2$ we use the following variation of Lemma~\ref{lem:expAtBddLp}. Note that the element $u$ below does not need to be in the index set $T$.
\begin{lemma}
\label{lem:expProdBddLp}
For every $t\in T$ and $1\leq i\leq n$ let $X_{t,i}\in L_{2p}(\Om_i)$. Fix also $X_{u,i} \in L_p(\Om_i)$. For any $1\leq p<\infty$,
\begin{align}
\label{eqn:expProdBddLp}
& \Big(\E\sup_{t\in T}\Big|\sum_{i=1}^n X_{t,i}X_{u,i} - \E(X_{t,i}X_{u,i})\Big|^p\Big)^{1/p} \nonumber \\
& \qquad \lesssim \Big(\sqrt{p} \Big(\E\max_{1\leq i\leq n}|X_{u,i}|^{2p}\Big)^{1/(2p)} + \Big(\sum_{i=1}^n \E X_{u,i}^2\Big)^{1/2}\Big) \nonumber \\
& \qquad \qquad \qquad \qquad * \Big((\E\ga_2^{2p}(T,\rho_X))^{1/(2p)} + \sqrt{p} (\E d_X^{2p}(T))^{1/(2p)}\Big).
\end{align}
\end{lemma}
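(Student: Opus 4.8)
The plan is to follow the proof of Lemma~\ref{lem:expAtBddLp} almost verbatim, the only new ingredient being that the ``off\hbox{-}diagonal'' sequence $(X_{u,i})_i$ (whose index $u$ need not lie in $T$) must be disentangled from the chaining over $T$. We may assume $X_{u,i}\in L_{2p}(\Om_i)$ for every $i$, since otherwise $(\E\max_i|X_{u,i}|^{2p})^{1/(2p)}=\infty$ and there is nothing to prove. First I would set $Z_{t,i}=X_{t,i}X_{u,i}$; these are independent across $i$ (each depends only on $\omega_i$), so symmetrization (\Equation{symmetrization}), applied in $\ell^{\infty}(T)$, gives
\[
\Big(\E\sup_{t\in T}\Big|\sum_{i=1}^{n} X_{t,i}X_{u,i}-\E(X_{t,i}X_{u,i})\Big|^{p}\Big)^{1/p}
\lesssim \Big(\E_{\omega}\E_{r}\sup_{t\in T}\Big|\sum_{i=1}^{n} r_i X_{t,i}X_{u,i}\Big|^{p}\Big)^{1/p},
\]
with $(r_i)$ a Rademacher sequence independent of everything else.

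Next I would condition on $\omega=(\omega_i)_{i=1}^{n}$ and study the Rademacher process $t\mapsto \sum_i r_i X_{t,i}X_{u,i}$. By Hoeffding's inequality it is subgaussian with respect to $\rho_*(s,t)=\sqrt{2}\big(\sum_i X_{u,i}^{2}\big)^{1/2}\rho_X(s,t)$, using that $\big(\sum_i (X_{t,i}-X_{s,i})^{2}X_{u,i}^{2}\big)^{1/2}\le \big(\sum_i X_{u,i}^{2}\big)^{1/2}\rho_X(s,t)$; here $\rho_X$, $d_X(T)$ and $\sum_i X_{u,i}^{2}$ are all measurable with respect to $\omega$ and hence frozen constants. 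Applying Lemma~\ref{lem:supSubg} with $\rho=\rho_*$ (and Khintchine's inequality (\Equation{KI}) to absorb the contribution $\|\sum_i r_i X_{t_0,i}X_{u,i}\|_{L^{p}_r}\lesssim \sqrt{p}\,d_X(T)\big(\sum_i X_{u,i}^{2}\big)^{1/2}$ of a fixed base point $t_0$) yields
\[
\Big(\E_{r}\sup_{t\in T}\Big|\sum_{i=1}^{n} r_i X_{t,i}X_{u,i}\Big|^{p}\Big)^{1/p}
\lesssim \Big(\sum_{i=1}^{n} X_{u,i}^{2}\Big)^{1/2}\big(\ga_2(T,\rho_X)+\sqrt{p}\,d_X(T)\big).
\]
Taking $L_p(\Om)$-norms and separating the $u$-factor from the $T$-factor by the Cauchy--Schwarz inequality in $L_p$ then gives
\[
\Big(\E\sup_{t\in T}\Big|\sum_{i}X_{t,i}X_{u,i}-\E(X_{t,i}X_{u,i})\Big|^{p}\Big)^{1/p}
\lesssim \Big(\E\Big(\sum_{i}X_{u,i}^{2}\Big)^{p}\Big)^{1/(2p)}\Big((\E\ga_2^{2p}(T,\rho_X))^{1/(2p)}+\sqrt{p}(\E d_X^{2p}(T))^{1/(2p)}\Big),
\]
where I used $\|\ga_2(T,\rho_X)+\sqrt p\,d_X(T)\|_{L^{2p}}\le \|\ga_2(T,\rho_X)\|_{L^{2p}}+\sqrt p\,\|d_X(T)\|_{L^{2p}}$.

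It then remains to bound $\big(\E(\sum_i X_{u,i}^{2})^{p}\big)^{1/(2p)}\lesssim (\sum_i \E X_{u,i}^{2})^{1/2}+\sqrt{p}(\E\max_i|X_{u,i}|^{2p})^{1/(2p)}$, which I would obtain by the symmetrization-plus-Khintchine ``$\sqrt{\cdot}$-trick'' of \eqref{eq:sqrt-trick}: writing $\sum_i X_{u,i}^{2}=\sum_i\E X_{u,i}^{2}+\sum_i(X_{u,i}^{2}-\E X_{u,i}^{2})$, symmetrizing the centered part, applying Khintchine's inequality, and bounding $\big(\sum_i X_{u,i}^{4}\big)^{1/2}\le \max_i|X_{u,i}|\big(\sum_i X_{u,i}^{2}\big)^{1/2}$, one reaches the self-improving inequality
\[
\Big\|\sum_i X_{u,i}^{2}\Big\|_{L^p}\lesssim \sum_i\E X_{u,i}^{2}+\sqrt{p}\,\big(\E\max_i|X_{u,i}|^{2p}\big)^{1/(2p)}\Big\|\sum_i X_{u,i}^{2}\Big\|_{L^p}^{1/2},
\]
which solves to the claimed bound. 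Substituting it into the previous display produces exactly \eqref{eqn:expProdBddLp}.

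I expect the one delicate point to be the conditioning step: one must check that $\rho_X$, $d_X(T)$ and $\ga_2(T,\rho_X)$ are genuinely measurable with respect to $(\omega_i)_i$ (they are, being suprema over $T$ of measurable functions of the $X_{t,i}$, with the usual separability/approximation caveat), so that Lemma~\ref{lem:supSubg} may be applied with these quantities held fixed, and that the exponents in the Cauchy--Schwarz split are chosen so as to reproduce precisely the $2p$-moments appearing in the statement. Everything else is the same bookkeeping as in the proof of Lemma~\ref{lem:expAtBddLp}, with the products $X_{t,i}X_{u,i}$ playing the role of the squares $X_{t,i}^{2}$ there.
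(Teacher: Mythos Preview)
Your proposal is correct and follows essentially the same approach as the paper's own proof: symmetrize, use Hoeffding to see the Rademacher process is subgaussian with respect to $(\sum_i X_{u,i}^2)^{1/2}\rho_X$, apply Lemma~\ref{lem:supSubg}, separate via Cauchy--Schwarz in $L_p$, and finish with the symmetrization--Khintchine square-root trick for $\|\sum_i X_{u,i}^2\|_{L^p}$. Your treatment is in fact slightly more careful than the paper's in that you explicitly handle the base-point term from Lemma~\ref{lem:supSubg} via Khintchine's inequality.
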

\begin{proof}
Let $(r_i)$ be a Rademacher sequence. By Hoeffding's inequality, we have for any $s,t \in T$ and $w\geq 0$,
$$\bP_{r}\Big(\sum_{i=1}^n r_i (X_{t,i}X_{u,i} - X_{s,i}X_{u,i}) \geq w\Big(\sum_{i=1}^n (X_{t,i}X_{u,i} - X_{s,i}X_{u,i})^2\Big)^{1/2}\Big)\leq e^{-w^2/2}.$$
Since
$$\Big(\sum_{i=1}^n (X_{t,i}X_{u,i} - X_{s,i}X_{u,i})^2\Big)^{1/2} \leq \Big(\sum_{i=1}^n X_{u,i}^2\Big)^{1/2}\rho_X(t,s),$$
we conclude that $(\sum_{i=1}^n r_i X_{t,i}X_{u,i})_{t\in T}$ is subgaussian with respect to the semi-metric
$$\rho_*(s,t) = \Big(\sum_{i=1}^n X_{u,i}^2\Big)^{1/2}\rho_X(s,t).$$
By Lemma~\ref{lem:supSubg},
$$\Big(\E_r\sup_{t\in T}\Big|\sum_{i=1}^n r_iX_{t,i}X_{u,i}\Big|^p\Big)^{1/p} \lesssim \Big(\sum_{i=1}^n X_{u,i}^2\Big)^{1/2} \ga_2(T,\rho_X) + \sqrt{p} d_X(T).$$
Using symmetrization \Eqsub{symmetrization}, this implies that
\begin{align*}
& \Big(\E\sup_{t\in T}\Big|\sum_{i=1}^n X_{t,i}X_{u,i} - \E(X_{t,i}X_{u,i})\Big|^p\Big)^{1/p} \\
& \qquad \leq 2 \Big(\E\E_r\sup_{t\in T}\Big|\sum_{i=1}^n r_iX_{t,i}X_{u,i}\Big|^p\Big)^{1/p} \\
& \qquad \lesssim \Big(\E\Big(\sum_{i=1}^n X_{u,i}^2\Big)^{p}\Big)^{1/(2p)}\Big((\E\ga_2^{2p}(T,\rho_X))^{1/(2p)} + \sqrt{p} (\E d_X^{2p}(T))^{1/(2p)}\Big).
\end{align*}
By symmetrization and Khintchine's inequality,
\begin{align*}
& \Big(\E\Big(\sum_{i=1}^n X_{u,i}^2\Big)^{p}\Big)^{1/p} \\
& \qquad \leq \Big(\E\Big|\sum_{i=1}^n X_{u,i}^2 - \E X_{u,i}^2\Big|^{p}\Big)^{1/p} + \sum_{i=1}^n \E X_{u,i}^2 \\
& \qquad \leq 2 \Big(\E\E_r\Big|\sum_{i=1}^n r_i X_{u,i}^2\Big|^{p}\Big)^{1/p} + \sum_{i=1}^n \E X_{u,i}^2 \\
& \qquad \leq 2\sqrt{p} \Big(\E\Big|\sum_{i=1}^n X_{u,i}^4\Big|^{p/2}\Big)^{1/p} + \sum_{i=1}^n \E X_{u,i}^2 \\
& \qquad \leq 2\sqrt{p} \Big(\E\max_{1\leq i\leq n}|X_{u,i}|^{2p}\Big)^{1/(2p)}\Big(\E\Big(\sum_{i=1}^n X_{u,i}^2\Big)^{p}\Big)^{1/(2p)} + \sum_{i=1}^n \E X_{u,i}^2. 
\end{align*}
Solving this quadratic inequality yields
$$\Big(\E\Big(\sum_{i=1}^n X_{u,i}^2\Big)^{p}\Big)^{1/(2p)} \leq 2\sqrt{p} \Big(\E\max_{1\leq i\leq n}|X_{u,i}|^{2p}\Big)^{1/(2p)} + \Big(\sum_{i=1}^n \E X_{u,i}^2\Big)^{1/2}.$$
\end{proof}
\begin{lemma}
\label{lem:Z2EstFJLT}
Let $\Psi$ be the FJLT, let $A\in \R^{n\ti d}$, $\cK\subset\R^d$ and $u\in S^{n-1}$. Let $\tnorm{A}$ be as in (\ref{eqn:blockNormA}) and $\beta$ as in (\ref{eqn:betaDef}). If
\begin{align*}
m & \gtrsim \beta\eps^{-2}\tnorm{A}^2\Big[\sup_{x\in \cK \ : \|Ax\|_2=1}\|x\|_{2,1}^2\Big],
\end{align*}
then $Z_2(A,\Psi,\cK,u)\leq \eps$ with probability at least $1-\eta$.
\end{lemma}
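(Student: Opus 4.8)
The plan is to run the argument of Lemma~\ref{lem:Z1EstFJLT} almost verbatim, but with the diagonal chaos bound Lemma~\ref{lem:expAtBddLp} replaced by its off-diagonal counterpart Lemma~\ref{lem:expProdBddLp}. Writing $\Psi = \Theta F D_\sigma$ and using $F^*F = I$ together with $\E_\theta\Theta^*\Theta = I$, one has $\E_\theta\Psi^*\Psi = I$, so with $v_i = \sqrt{n/m}\,A^*D_\sigma F_i$ (as in Lemma~\ref{lem:Z1EstFJLT}) and $w_i = \sqrt{n/m}\,\langle D_\sigma F_i,u\rangle$,
\[
  Z_2 \;=\; \sup_{x\in\cK:\,\|Ax\|_2=1}\Big|\sum_{i=1}^n \theta_i\,w_i\,\langle v_i,x\rangle \;-\; \E_\theta\big(\theta_i\,w_i\,\langle v_i,x\rangle\big)\Big|.
\]
Conditioning on $\sigma$, the remaining randomness is carried by the independent selectors $\theta_i$, and setting $X_{x,i} = \sqrt{\theta_i}\,\langle v_i,x\rangle$ and $X_{u,i} = \sqrt{\theta_i}\,w_i$ brings us into the setting of Lemma~\ref{lem:expProdBddLp} with index set $T = \cK\cap A^{-1}(S^{n-1})$. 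Since $\theta_i\le 1$ we have $\rho_X\le\rho_v$ pointwise, so the $\gamma_2$- and radius terms are bounded by $\gamma_2(T,\rho_v)$ and $d_v(T)\le \nu\,d_{2,1}$ with $\nu=\max_i\|v_i\|_{2,\infty}$ and $d_{2,1}=\sup_{x\in\cK:\|Ax\|_2=1}\|x\|_{2,1}$ --- precisely the quantities already estimated in the proof of Lemma~\ref{lem:Z1EstFJLT} via the covering-number bounds \eqref{eqn:ga2dvEst} and \eqref{eqn:estNu}.

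What is new is the $u$-dependent prefactor $\sqrt p\,(\E_\theta\max_i|X_{u,i}|^{2p})^{1/2p} + (\sum_i\E_\theta X_{u,i}^2)^{1/2}$ produced by Lemma~\ref{lem:expProdBddLp}, and this I would argue is harmless. For the second summand, $\sum_i\E_\theta X_{u,i}^2 = \tfrac mn\|w\|_2^2 = \tfrac mn\cdot\tfrac nm\|FD_\sigma u\|_2^2 = 1$ exactly, since $F$ is unitary and $D_\sigma$ orthogonal. For the first, $(\E_\theta\max_i|X_{u,i}|^{2p})^{1/2p}\le\max_i|w_i|$, and a Khintchine estimate over $\sigma$ combined with a union bound over $i\in[n]$ --- using $|F_{ij}|=n^{-1/2}$ and $\sum_j F_{ij}^2u_j^2 = n^{-1}$ --- gives $(\E_\sigma\max_i|w_i|^{2p})^{1/2p}\lesssim \sqrt{(p+\log n)/m}$. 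Hence, conditionally on $\sigma$,
\[
  (\E_\theta Z_2^p)^{1/p} \;\lesssim\; \big(\sqrt p\,\max_i|w_i| + 1\big)\big(\gamma_2(T,\rho_v) + \sqrt p\,\nu\,d_{2,1}\big).
\]

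I would then take $L_p(\Omega_\sigma)$-norms, split the product by H\"older's inequality, and substitute the pointwise-in-$\sigma$ bound $\gamma_2(T,\rho_v)\lesssim \nu\,d_{2,1}\log^{1/2}(n)\log^{3/2}(b)\log(d)$ from \eqref{eqn:ga2dvEst} together with $(\E_\sigma\nu^p)^{1/p}\lesssim m^{-1/2}(\sqrt p + \log^{1/2}b + \log^{1/2}n)\tnorm{A}$ from \eqref{eqn:estNu}. This yields a moment bound for $Z_2$ of the form $a_1p^2 + a_2p^{3/2} + a_3p + a_4p^{1/2} + a_5$ whose coefficients are all controlled by $m^{-1/2}\tnorm{A}\,d_{2,1}$ times fixed powers of $\log n$, $\log b$ and $\log d$. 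Feeding this into Lemma~\ref{lem:MomentsToTails} with $w = \log(\eta^{-1})$ (and $p_0\simeq\log n + \log b + \log d$) converts it into $\Pr(Z_2\ge\eps)\le\eta$ exactly under the hypothesis $m\gtrsim \beta\,\eps^{-2}\tnorm{A}^2\big[\sup_{x\in\cK:\|Ax\|_2=1}\|x\|_{2,1}^2\big]$, with $\beta = (\log^2(\eta^{-1}) + \log b + \log n)(\log n)(\log^3 b)(\log^2 d)$ absorbing all of the accumulated polylogarithmic factors. The conceptual work is minimal --- it reduces to Lemma~\ref{lem:expProdBddLp} plus the estimates already in hand --- so the only real obstacle is the bookkeeping: checking that no logarithm beyond those collected in $\beta$ is lost when the H\"older splitting and the $L_p(\sigma)$ pass are carried out. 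Combined with the lower bound $Z_1\ge 1-\eps$ of Lemma~\ref{lem:Z1EstFJLT}, this lemma is what one plugs into Lemma~\ref{lem:compare} to obtain the FJLT analogue of Theorem~\ref{thm:SJLmain}.
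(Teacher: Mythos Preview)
Your proposal is correct and follows essentially the same route as the paper's own proof: apply Lemma~\ref{lem:expProdBddLp} (in place of Lemma~\ref{lem:expAtBddLp}) conditionally on $\sigma$, reuse the $\gamma_2$- and diameter estimates \eqref{eqn:ga2dvEst}, \eqref{eqn:DelOestdv} and the $\nu$-bound \eqref{eqn:estNu} from the proof of Lemma~\ref{lem:Z1EstFJLT}, handle the $u$-prefactor via $\sum_i\E_\theta X_{u,i}^2=1$ and a Khintchine bound on $\max_i|w_i|$, then take $L_p(\Omega_\sigma)$-norms and invoke Lemma~\ref{lem:MomentsToTails}. The only cosmetic difference is that the paper writes $X_{x,i}=\theta_i\langle v_i,x\rangle$ rather than $\sqrt{\theta_i}\langle v_i,x\rangle$, which is immaterial since $\theta_i\in\{0,1\}$.
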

\begin{proof}
If $\Psi_i$ denotes the $i$-th row of $\Psi$, then we can write
$$Z_2(A,\Psi,\cK,u) = \sup_{x\in \cK\cap A^{-1}(S^{n-1})}\Big|\sum_{i=1}^n \langle A^*\Psi_i,x\rangle \langle \Psi_i,u\rangle - \E(\langle A^*\Psi_i,x\rangle \langle \Psi_i,u\rangle)\Big|.$$
Set $v_i = \sqrt{\frac{n}{m}}A^*D_{\si}F_i$ and let $\rho_v$ be the semi-metric in (\ref{eqn:dvDefGen}). We apply Lemma~\ref{lem:expProdBddLp} with $X_{x,i}:=\langle\theta_i\sqrt{\frac{n}{m}}A^*D_{\si}F_i,x\rangle$ and $X_{u,i}:=\langle\theta_i\sqrt{\frac{n}{m}}D_{\si}F_i,u\rangle$. By (\ref{eqn:DelOestdv}) and (\ref{eqn:ga2dvEst}) we know that 
\begin{align*}
(\E_{\theta} d_X^{2p}(T))^{1/(2p)} & \leq \DelO_{2,1}\nu \\
(\E_{\theta}\ga_2^{2p}(T,\rho_X))^{1/(2p)} & \lesssim \DelO_{2,1}\nu\log^{1/2}(n)\log^{3/2}(b)\log(d).
\end{align*}
Moreover, 
$$\sum_{i=1}^n \E_{\theta} X_{u,i}^2 = \|D_{\si} u\|_2^2=1, \qquad (\E_{\theta}\max_{1\leq i\leq n}|X_{u,i}|^{2p})^{1/(2p)} \leq \max_{1\leq i\leq n} \sqrt{\frac{n}{m}} |\langle F_i,D_{\si}u\rangle|.$$
Applying these estimates in (\ref{eqn:expProdBddLp}) and taking $L_p(\Om_{\si})$-norms yields
\begin{align*}
(\E_{\theta,\si}Z_2^p)^{1/p} & \lesssim \Big(\sqrt{\frac{p}{m}}\Big(\E_{\si}\max_{1\leq i\leq n}\sqrt{n}|\langle F_i,D_{\si}u\rangle|^{2p}\Big)^{1/(2p)} + 1\Big) \\
& \qquad \qquad * (\E_{\si}\nu^{2p})^{1/(2p)}\DelO_{2,1}\Big(\log^{1/2}(n)\log^{3/2}(b)\log(d)+\sqrt{p}\Big).
\end{align*}
By Khintchine's inequality,
\begin{align*}
\Big(\E_{\si}\max_{1\leq i\leq n}\sqrt{n}|\langle F_i,D_{\si}u\rangle|^{2p}\Big)^{1/(2p)} & \lesssim (\sqrt{p} + \log^{1/2}(n)) \max_{1\leq i\leq n} \Big(\sum_{j=1}^n n F_{ij}^2 u_j^2\Big)^{1/2} \\
& \lesssim \sqrt{p} + \log^{1/2}(n).
\end{align*}
and by (\ref{eqn:estNu})
$$(\E\nu^{2p})^{1/(2p)} \lesssim \frac{1}{\sqrt{m}}(\sqrt{p}+\log^{1/2}(b)+\log^{1/2}(n))\tnorm{A}.$$
Combining these estimates and using Lemma~\ref{lem:MomentsToTails} yields the result.
\end{proof}
Combining Lemmas~\ref{lem:compare}, \ref{lem:Z1EstFJLT}, and \ref{lem:Z2EstFJLT} yields the following result.
\begin{theorem}
\label{thm:FJLTMain}
Set $d=bD$. Let $\Psi$ be the FJLT, $A\in \R^{n\ti d}$ and let $\cC$ be a closed convex set in $\R^d$. Set $\beta$ as in (\ref{eqn:betaDef}). Let $x_*$ and $\hat{x}$ be the minimizers of (\ref{eqn:CLS}) and (\ref{eqn:SCLS}), respectively. If
\begin{align*}
m & \gtrsim \beta\eps^{-2}\tnorm{A}^2\Big[\sup_{x\in T_{\cC}(x_*) \ : \ \|Ax\|_2=1}\|x\|_{2,1}^2\Big],
\end{align*}
then, with probability at least $1-\eta$,
$$f(\hat{x})\leq (1-\eps)^{-2} f(x_*).$$
\end{theorem}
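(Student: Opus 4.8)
The plan is to derive Theorem~\ref{thm:FJLTMain} as a direct consequence of Lemmas~\ref{lem:compare}, \ref{lem:Z1EstFJLT}, and~\ref{lem:Z2EstFJLT}, following the same template as the proof of Theorem~\ref{thm:SJLmain} for the SJLT. First, as noted before Lemma~\ref{lem:compare}, we may assume $Ax_*\neq b$, since otherwise $f(\hat x)=f(x_*)=0$. Then set $u = (Ax_*-b)/\|Ax_*-b\|_2\in S^{n-1}$, write $\cK = T_{\cC}(x_*)$ for the tangent cone of $\cC$ at $x_*$, and put $Z_1 = Z_1(A,\Psi,\cK)$ and $Z_2 = Z_2(A,\Psi,\cK,u)$. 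The deterministic input is the first assertion of Lemma~\ref{lem:compare}, which gives $f(\hat x)\le (1 + Z_2/Z_1)^2 f(x_*)$; I use the first assertion rather than the second because $x_*$ is only a minimizer over $\cC$, not an unconstrained minimizer of $f$.

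Next I would bound $Z_1$ and $Z_2$ probabilistically, applying both lemmas with $\cK = T_{\cC}(x_*)$. Lemma~\ref{lem:Z1EstFJLT} shows that if $m\gtrsim \beta\eps^{-2}\tnorm{A}^2[\sup_{x\in\cK\,:\,\|Ax\|_2=1}\|x\|_{2,1}^2]$, then $Z_1\ge 1-\eps$ with probability at least $1-\eta$, and Lemma~\ref{lem:Z2EstFJLT}, applied with the same $\cK$ and this $u$, shows $Z_2\le\eps$ with probability at least $1-\eta$ under the identical requirement on $m$. Since the two conditions on $m$ coincide, the single hypothesis in the theorem statement suffices to invoke both lemmas; replacing $\eta$ by $\eta/2$ in each (which changes $\beta$ only by an absolute constant, as $\eta$ enters $\beta$ through $\log^2(\eta^{-1})$ in~\eqref{eqn:betaDef}) and taking a union bound, the event $\{Z_1\ge 1-\eps\}\cap\{Z_2\le\eps\}$ has probability at least $1-\eta$. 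On this event Lemma~\ref{lem:compare} yields
$$f(\hat x)\ \le\ \Bigl(1 + \frac{\eps}{1-\eps}\Bigr)^2 f(x_*)\ =\ (1-\eps)^{-2} f(x_*),$$
using the exact identity $1 + \eps/(1-\eps) = 1/(1-\eps)$; no smallness restriction on $\eps$ is required here, in contrast with the unconstrained case of Theorem~\ref{thm:LSuc}.

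At the level of the theorem the argument is essentially bookkeeping: the only things to check are that the probabilistic hypotheses on $m$ in the two lemmas are literally the same (they are) and that the $\eta\mapsto\eta/2$ adjustment and the absolute constants are harmless. The real difficulty, and what I would regard as the main obstacle, is already carried by Lemmas~\ref{lem:Z1EstFJLT} and~\ref{lem:Z2EstFJLT}: since $u$ need not lie in $A\cK\cap S^{n-1}$, a plain restricted isometry estimate for $\Psi$ on $A\cK\cap S^{n-1}$ does not control $Z_2$, which is why Lemma~\ref{lem:Z2EstFJLT} is proved via the bilinear chaining of Lemma~\ref{lem:expProdBddLp} rather than via Lemma~\ref{lem:Z1EstFJLT}; and both lemmas ultimately rest on the covering-number bounds for the block norms $\tnorm{\cdot}$ and $\|\cdot\|_{2,1}$ supplied by Lemmas~\ref{lem:covEstdvGen} and~\ref{lem:Figiel}. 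Since those ingredients have been established, the proof of Theorem~\ref{thm:FJLTMain} itself is short.
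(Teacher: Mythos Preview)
Your proposal is correct and follows essentially the same approach as the paper, which simply states that the theorem follows by combining Lemmas~\ref{lem:compare}, \ref{lem:Z1EstFJLT}, and~\ref{lem:Z2EstFJLT}. Your write-up in fact supplies more detail than the paper does (the union bound with $\eta\mapsto\eta/2$, the use of the first rather than the second assertion of Lemma~\ref{lem:compare}, and the algebraic identity $1+\eps/(1-\eps)=(1-\eps)^{-1}$), all of which are correct.
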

The proof of Corollary~\ref{cor:SJLl21} immediately yields the following consequence.
\begin{corollary}
\label{cor:FJLTMainCor}
Set $d=bD$. Let $\Psi$ be the FJLT, $A\in \R^{n\ti d}$ and let $\cC=\{x\in \R^d \ : \ \|x\|_{2,1}\leq R\}$. Define
$$\si_{\min,k} = \inf_{\|y\|_2=1, \ \|y\|_{2,1}\leq 2\sqrt{k}}\|Ay\|_2.$$
Suppose that $x_*$ is $k$-block sparse and $\|x_*\|_{2,1}=R$. If
\begin{align*}
m & \gtrsim \beta \eps^{-2}\tnorm{A}^2 k \si_{\min,k}^{-2},
\end{align*}
then, with probability at least $1-\eta$,
$$f(\hat{x})\leq (1-\eps)^{-2}f(x_*).$$
\end{corollary}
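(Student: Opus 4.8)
The plan is to deduce Corollary~\ref{cor:FJLTMainCor} from Theorem~\ref{thm:FJLTMain} in exactly the way Corollary~\ref{cor:SJLl21} was deduced from Theorem~\ref{thm:SJLmain}. The only real work is to bound the geometric quantity
$$d_{2,1} = \sup_{x\in T_{\cC}(x_*)\ :\ \|Ax\|_2 = 1}\|x\|_{2,1}$$
that appears in the hypothesis of Theorem~\ref{thm:FJLTMain}, in terms of $k$ and $\si_{\min,k}$, using the assumption that $x_*$ is $k$-block sparse with $\|x_*\|_{2,1}=R$.

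First I would invoke Example~\ref{exa:TCl21} of Appendix~\ref{sec:convex-tools}. Writing $S\subset[b]$ for the set of indices of the nonzero blocks of $x_*$, that example shows every $y\in T_{\cC}(x_*)$ satisfies $\|y\|_{2,1}\le 2\sqrt{|S|}\,\|y\|_2$. Since $x_*$ is $k$-block sparse, $|S|\le k$, and hence $\|y\|_{2,1}^2\le 4k\|y\|_2^2$ for all $y\in T_{\cC}(x_*)$. Next, take any such $y$ with $\|Ay\|_2 = 1$ and set $\bar y = y/\|y\|_2$; then $\|\bar y\|_2 = 1$ and $\|\bar y\|_{2,1}\le 2\sqrt{k}$, so by the definition of $\si_{\min,k}$ we get $\|A\bar y\|_2\ge \si_{\min,k}$, i.e.\ $\|y\|_2 = 1/\|A\bar y\|_2 \le \si_{\min,k}^{-1}$. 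Combining the two displays yields $\|y\|_{2,1}^2\le 4k\si_{\min,k}^{-2}$, and therefore $d_{2,1}^2\le 4k\si_{\min,k}^{-2}$.

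Finally, I would substitute this into Theorem~\ref{thm:FJLTMain}: its hypothesis $m\gtrsim \beta\eps^{-2}\tnorm{A}^2 d_{2,1}^2$ is implied, up to the harmless absolute factor $4$, by the stated condition $m\gtrsim \beta\eps^{-2}\tnorm{A}^2 k\si_{\min,k}^{-2}$. Theorem~\ref{thm:FJLTMain} then gives $f(\hat{x})\le (1-\eps)^{-2}f(x_*)$ with probability at least $1-\eta$, which is the assertion.

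There is essentially no obstacle here; the only point that requires a moment's care is the passage from the tangent-cone inequality of Example~\ref{exa:TCl21}, which is phrased in terms of $|S|$, the number of active blocks of $x_*$, to a bound in terms of $k$, which is immediate from $k$-block sparsity ($|S|\le k$). Everything else is a direct substitution of the estimate $d_{2,1}^2\le 4k\si_{\min,k}^{-2}$ into Theorem~\ref{thm:FJLTMain}.
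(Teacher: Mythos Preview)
Your proposal is correct and matches the paper's approach exactly: the paper simply says ``The proof of Corollary~\ref{cor:SJLl21} immediately yields the following consequence,'' and that proof is precisely the tangent-cone computation from Example~\ref{exa:TCl21} combined with the definition of $\si_{\min,k}$ that you carry out. Your phrasing via the normalized vector $\bar y$ is a slightly more explicit version of the paper's chain $\si_{\min,k}^2\le \|x\|_2^{-2}\le 4k\|x\|_{2,1}^{-2}$, but the content is identical.
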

Observe that the condition on $m$ in Theorem~\ref{thm:FJLTMain} and Corollary~\ref{cor:FJLTMainCor} is, up to different log-factors, the same as the condition for the SJLT in Theorem~\ref{thm:SJLmain} and Corollary~\ref{cor:SJLl21}.\par
In the special case $D=1$, which corresponds to the Lasso, the result in Corollary~\ref{cor:FJLTMainCor} gives a qualitative improvement over \cite[Corollary 3]{PiW14}. Recall from the discussion following Corollary~\ref{cor:SJLl21} that in this case
$$\tnorm{A} = \max_{1\leq k\leq d}\|A_k\|_2, \qquad \si_{\min,k} = \inf_{\|y\|_2=1, \ \|y\|_1\leq 2\sqrt{k}}\|Ay\|_2.$$ 
In \cite[Corollary 3]{PiW14} the condition
$$m\gtrsim \eps^{-2}\log(\eta^{-1}) + \eps^{-2}\min\Big\{\log^2(d)\Big(\tnorm{A}^2 k \si_{\min,k}^{-2}\Big)^2, k\log(d)\log^4(n)\frac{\si_{\max,k}^4}{\si_{\min,k}^4}\Big\},$$
was obtained, where $\si_{\max,k} = \sup_{\|y\|_2=1, \ \|y\|_{1}\leq 2\sqrt{k}}\|Ay\|_2$. In terms of the dependence on $A$ this bound is worse than our result. We note, however, that the bound contains fewer log-factors and in particular the dependence on $\eta$ is better. 


\begin{thebibliography}{DMIMW12}

\bibitem[ABTZ13]{ABTZ13}
Haim Avron, Christos Boutsidis, Sivan Toledo, and Anastasios Zouzias.
\newblock Efficient dimensionality reduction for canonical correlation
  analysis.
\newblock In {\em Proceedings of the 30th International Conference on Machine
  Learning (ICML)}, 2013.

\bibitem[AC09]{AC09}
Nir Ailon and Bernard Chazelle.
\newblock The {Fast} {Johnson--Lindenstrauss} {Transform} and approximate
  nearest neighbors.
\newblock {\em SIAM J. Comput.}, 39(1):302--322, 2009.

\bibitem[Ach03]{Achlioptas03}
Dimitris Achlioptas.
\newblock Database-friendly random projections: {Johnson-Lindenstrauss} with
  binary coins.
\newblock {\em J. Comput. Syst. Sci.}, 66(4):671--687, 2003.

\bibitem[ADR14]{ADR14}
Ula{\c{s}} Ayaz, Sjoerd Dirksen, and Holger Rauhut.
\newblock Uniform recovery of fusion frame structured sparse signals.
\newblock {\em CoRR}, abs/1407.7680, 2014.

\bibitem[AHK06]{AHK06}
Sanjeev Arora, Elad Hazan, and Satyen Kale.
\newblock A fast random sampling algorithm for sparsifying matrices.
\newblock In {\em APPROX-RANDOM}, pages 272--279, 2006.

\bibitem[AL09]{AL09}
Nir Ailon and Edo Liberty.
\newblock Fast dimension reduction using {Rademacher} series on dual {BCH}
  codes.
\newblock {\em Discrete Comput. Geom.}, 42(4):615--630, 2009.

\bibitem[AL13]{AL13}
Nir Ailon and Edo Liberty.
\newblock An almost optimal unrestricted fast {Johnson}-{Lindenstrauss}
  transform.
\newblock {\em ACM Transactions on Algorithms}, 9(3):21, 2013.

\bibitem[Alo03]{Alon03}
Noga Alon.
\newblock Problems and results in extremal combinatorics--i.
\newblock {\em Discrete Mathematics}, 273(1-3):31--53, 2003.

\bibitem[AMT10]{AMT10}
Haim Avron, Petar Maymounkov, and Sivan Toledo.
\newblock Blendenpik: Supercharging {LAPACK}'s least-squares solver.
\newblock {\em SIAM J. Scientific Computing}, 32(3):1217--1236, 2010.

\bibitem[AN13]{AN13}
Alexandr Andoni and Huy~L. Nguy$\tilde{\hat{\mbox{e}}}$n.
\newblock Eigenvalues of a matrix in the streaming model.
\newblock In {\em Proceedings of the 24th Annual ACM-SIAM Symposium on Discrete
  Algorithms (SODA)}, pages 1729--1737, 2013.

\bibitem[BB05]{BalcanB05}
Maria{-}Florina Balcan and Avrim Blum.
\newblock A pac-style model for learning from labeled and unlabeled data.
\newblock In {\em Learning Theory, 18th Annual Conference on Learning Theory,
  {COLT} 2005, Bertinoro, Italy, June 27-30, 2005, Proceedings}, pages
  111--126, 2005.

\bibitem[BBDS12]{BlockiBDS12}
Jeremiah Blocki, Avrim Blum, Anupam Datta, and Or~Sheffet.
\newblock The johnson-lindenstrauss transform itself preserves differential
  privacy.
\newblock In {\em 53rd Annual {IEEE} Symposium on Foundations of Computer
  Science, {FOCS} 2012, New Brunswick, NJ, USA, October 20-23, 2012}, pages
  410--419, 2012.

\bibitem[BBV06]{BalcanBV06}
Maria{-}Florina Balcan, Avrim Blum, and Santosh Vempala.
\newblock Kernels as features: On kernels, margins, and low-dimensional
  mappings.
\newblock {\em Machine Learning}, 65(1):79--94, 2006.

\bibitem[BCDH10]{BCDH10}
Richard~G. Baraniuk, Volkan Cevher, Marco~F. Duarte, and Chinmay Hegde.
\newblock Model-based compressive sensing.
\newblock {\em IEEE Trans. Inf. Theory}, 56:1982--2001, 2010.

\bibitem[BCIS05]{BadoiuCIS05}
Mihai Badoiu, Julia Chuzhoy, Piotr Indyk, and Anastasios Sidiropoulos.
\newblock Low-distortion embeddings of general metrics into the line.
\newblock In {\em Proceedings of the 37th Annual {ACM} Symposium on Theory of
  Computing, Baltimore, MD, USA, May 22-24, 2005}, pages 225--233, 2005.

\bibitem[BD08]{BD08}
Thomas Blumensath and Mike~E. Davies.
\newblock Iterative hard thresholding for compressed sensing.
\newblock {\em J. Fourier Anal. Appl.}, 14:629--654, 2008.

\bibitem[BDF{\etalchar{+}}11]{BDFKK11}
Jean Bourgain, Stephen Dilworth, Kevin Ford, Sergei Konyagin, and Denka
  Kutzarova.
\newblock Explicit constructions of {RIP} matrices and related problems.
\newblock {\em Duke J. Math.}, 159(1):145--185, 2011.

\bibitem[Ber97]{Berger97}
Bonnie Berger.
\newblock The fourth moment method.
\newblock {\em SIAM J. Comput.}, 26(4):1188--1207, 1997.

\bibitem[BLM89]{BLM89}
Jean Bourgain, Joram Lindenstrauss, and Vitali~D. Milman.
\newblock Approximation of zonoids by zonotopes.
\newblock {\em Acta Math.}, 162:73--141, 1989.

\bibitem[BOR10]{BOR10}
Vladimir Braverman, Rafail Ostrovsky, and Yuval Rabani.
\newblock Rademacher chaos, random {Eulerian} graphs and the sparse
  {Johnson-Lindenstrauss} transform.
\newblock {\em CoRR}, abs/1011.2590, 2010.

\bibitem[BPSTJ89]{BPST89}
Jean Bourgain, Alain Pajor, Stanis{\l}aw~J. Szarek, and Nicole
  Tomczak-Jaegermann.
\newblock On the duality problem for entropy numbers of operators.
\newblock {\em Geometric Aspects of Functional Analysis}, 1376:50--163, 1989.

\bibitem[BT02]{BuhlerT02}
Jeremy Buhler and Martin Tompa.
\newblock Finding motifs using random projections.
\newblock {\em Journal of Computational Biology}, 9(2):225--242, 2002.

\bibitem[BvdG11]{BuG11}
Peter B{\"u}hlmann and Sara van~de Geer.
\newblock {\em Statistics for high-dimensional data}.
\newblock Springer, Heidelberg, 2011.

\bibitem[BW09]{BW09}
Richard~G. Baraniuk and Michael~B. Wakin.
\newblock Random projections of smooth manifolds.
\newblock {\em Foundations of Computational Mathematics}, 9(1):51--77, 2009.

\bibitem[BZMD11]{BZMD11}
Christos Boutsidis, Anastasios Zouzias, Michael~W. Mahoney, and Petros Drineas.
\newblock Stochastic dimensionality reduction for k-means clustering.
\newblock {\em CoRR}, abs/1110.2897, 2011.

\bibitem[Can08]{Candes08}
Emmanuel Cand\`{e}s.
\newblock The restricted isometry property and its implications for compressed
  sensing.
\newblock {\em Comptes Rendes Mathematique}, 346 (9-10): 589--592, 2008.

\bibitem[Car85]{Carl85}
Bernd Carl.
\newblock Inequalities of {Bernstein}-{Jackson}-type and the degree of
  compactness operators in {Banach} spaces.
\newblock {\em Ann. Inst. Fourier (Grenoble)}, 35(3):79--118, 1985.

\bibitem[CCFC04]{CCF04}
Moses Charikar, Kevin Chen, and Martin Farach-Colton.
\newblock Finding frequent items in data streams.
\newblock {\em Theor. Comput. Sci.}, 312(1):3--15, 2004.

\bibitem[CDMI{\etalchar{+}}13]{CDM+13}
Kenneth~L. Clarkson, Petros Drineas, Malik Magdon-Ismail, Michael~W. Mahoney,
  Xiangrui Meng, and David~P. Woodruff.
\newblock The {Fast} {Cauchy} {Transform} and faster robust linear regression.
\newblock In {\em Proceedings of the 24th Annual ACM-SIAM Symposium on Discrete
  Algorithms (SODA)}, pages 466--477, 2013.

\bibitem[CEM{\etalchar{+}}14]{CohenEMMP14}
Michael Cohen, Sam Elder, Cameron Musco, Christopher Musco, and
  M\u{a}d\u{a}lina Persu.
\newblock Dimensionality reduction for $k$-means clustering and low rank
  approximation.
\newblock {\em CoRR}, abs/1410.6801, 2014.

\bibitem[CGV13]{CGV13}
Mahdi Cheraghchi, Venkatesan Guruswami, and Ameya Velingker.
\newblock Restricted isometry of {Fourier} matrices and list decodability of
  random linear codes.
\newblock {\em SIAM J. Comput.}, 42(5):1888--1914, 2013.

\bibitem[Cla08]{Clarkson08}
Kenneth~L. Clarkson.
\newblock Tighter bounds for random projections of manifolds.
\newblock In {\em Proceedings of the 24th {ACM} Symposium on Computational
  Geometry, College Park, MD, USA, June 9-11, 2008}, pages 39--48, 2008.

\bibitem[CM12]{ContrerasM12}
Pedro Contreras and Fionn Murtagh.
\newblock Fast, linear time hierarchical clustering using the {Baire} metric.
\newblock {\em J. Classification}, 29(2):118--143, 2012.

\bibitem[Coh14]{Cohen14}
Michael~B. Cohen.
\newblock Personal communication, 2014.

\bibitem[CT65]{CooleyT65}
James~W. Cooley and John~M. Tukey.
\newblock An algorithm for the machine calculation of complex {Fourier} series.
\newblock {\em Math. Comput.}, 19:297--301, 1965.

\bibitem[CT05]{CT05}
Emmanuel Cand\`{e}s and Terence Tao.
\newblock Decoding by linear programming.
\newblock {\em IEEE Trans. Inf. Theory}, 51(12):4203--4215, 2005.

\bibitem[CT06]{CT06}
Emmanuel~J. Cand\`{e}s and Terence Tao.
\newblock Near-optimal signal recovery from random projections: universal
  encoding strategies?
\newblock {\em IEEE Trans. Inform. Theory}, 52:5406--5425, 2006.

\bibitem[CW09]{ClarksonW09}
Kenneth~L. Clarkson and David~P. Woodruff.
\newblock Numerical linear algebra in the streaming model.
\newblock In {\em Proceedings of the 41st Annual {ACM} Symposium on Theory of
  Computing, {STOC} 2009, Bethesda, MD, USA, May 31 - June 2, 2009}, pages
  205--214, 2009.

\bibitem[CW13]{CW13}
Kenneth~L. Clarkson and David~P. Woodruff.
\newblock Low rank approximation and regression in input sparsity time.
\newblock In {\em Proceedings of the 45th ACM Symposium on Theory of Computing
  (STOC)}, pages 81--90, 2013.

\bibitem[DG13]{DonohoG03}
David~L. Donoho and Carrie Grimes.
\newblock Hessian eigenmaps: Locally linear embedding techniques for
  high-dimensional data.
\newblock {\em Proc. Natl. Acad. Sci.}, 100(10):5591--5596, 2013.

\bibitem[Dir14]{Dirksen14}
Sjoerd Dirksen.
\newblock Dimensionality reduction with subgaussian matrices: a unified theory.
\newblock {\em CoRR}, abs/1402.3973, 2014.

\bibitem[Dir15]{Dir13}
Sjoerd Dirksen.
\newblock Tail bounds via generic chaining.
\newblock {\em Electron. J. Probab.}, 20:no. 53, 1--29, 2015.

\bibitem[DKS10]{DKS10}
Anirban Dasgupta, Ravi Kumar, and Tam{\'a}s Sarl{\'o}s.
\newblock A sparse {Johnson-Lindenstrauss} transform.
\newblock In {\em Proceedings of the 42nd ACM Symposium on Theory of Computing
  (STOC)}, pages 341--350, 2010.

\bibitem[DMIMW12]{DMMW12}
Petros Drineas, Malik Magdon-Ismail, Michael Mahoney, and David Woodruff.
\newblock Fast approximation of matrix coherence and statistical leverage.
\newblock {\em Journal of Machine Learning Research}, 13:3475--3506, 2012.

\bibitem[Don06]{Donoho04}
David Donoho.
\newblock Compressed sensing.
\newblock {\em IEEE Trans. Inf. Theory}, 52(4):1289--1306, 2006.

\bibitem[Dud67]{Dudley67}
Richard~M. Dudley.
\newblock The sizes of compact subsets of {Hilbert} space and continuity of
  {Gaussian} processes.
\newblock {\em J. Functional Analysis}, 1:290--330, 1967.

\bibitem[EW13]{EfW13}
Armin Eftekhari and Michael~B Wakin.
\newblock New analysis of manifold embeddings and signal recovery from
  compressive measurements.
\newblock {\em CoRR}, abs/1306.4748, 2013.

\bibitem[Fer75]{Fernique75}
Xavier Fernique.
\newblock Regularit\'{e} des trajectoires des fonctions al\'{e}atoires
  gaussiennes.
\newblock {\em Lecture Notes in Math.}, 480:1--96, 1975.

\bibitem[Fig76]{Fig76}
T.~Figiel.
\newblock On the moduli of convexity and smoothness.
\newblock {\em Studia Math.}, 56(2):121--155, 1976.

\bibitem[FR13]{FoR13}
Simon Foucart and Holger Rauhut.
\newblock {\em A Mathematical Introduction to Compressive Sensing}.
\newblock Birkha{\"{u}}ser, Boston, 2013.

\bibitem[GMPTJ07]{GMP07}
O.~Gu{\'e}don, S.~Mendelson, A.~Pajor, and N.~Tomczak-Jaegermann.
\newblock Subspaces and orthogonal decompositions generated by bounded
  orthogonal systems.
\newblock {\em Positivity}, 11(2):269--283, 2007.

\bibitem[GMPTJ08]{GMP08}
O.~Gu{\'e}don, S.~Mendelson, A.~Pajor, and N.~Tomczak-Jaegermann.
\newblock Majorizing measures and proportional subsets of bounded orthonormal
  systems.
\newblock {\em Rev. Mat. Iberoam.}, 24(3):1075--1095, 2008.

\bibitem[Gor88]{Gordon88}
Yehoram Gordon.
\newblock On {Milman's} inequality and random subspaces which escape through a
  mesh in $\mathbb{R}^n$.
\newblock {\em Geometric Aspects of Functional Analysis}, pages 84--106, 1988.

\bibitem[HIM12]{HarPeledIM12}
Sariel Har{-}Peled, Piotr Indyk, and Rajeev Motwani.
\newblock Approximate nearest neighbor: Towards removing the curse of
  dimensionality.
\newblock {\em Theory of Computing}, 8(1):321--350, 2012.

\bibitem[HUL01]{HiL01}
Jean-Baptiste Hiriart-Urruty and Claude Lemar{\'e}chal.
\newblock {\em Fundamentals of convex analysis}.
\newblock Springer-Verlag, Berlin, 2001.

\bibitem[HWB07]{BHW07}
C.~Hegde, M.~Wakin, and R.~Baraniuk.
\newblock Random projections for manifold learning.
\newblock In {\em Advances in neural information processing systems}, pages
  641--648, 2007.

\bibitem[Ind01]{Indyk01}
Piotr Indyk.
\newblock Algorithmic applications of low-distortion geometric embeddings.
\newblock In {\em Proceedings of the 42nd Annual Symposium on Foundations of
  Computer Science (FOCS)}, pages 10--33, 2001.

\bibitem[IR13]{IndykR13}
Piotr Indyk and Ilya Razenshteyn.
\newblock On model-based {RIP}-1 matrices.
\newblock In {\em Proceedings of the 40th International Colloquium on Automata,
  Languages and Programming (ICALP)}, pages 564--575, 2013.

\bibitem[JL84]{JL84}
William~B. Johnson and Joram Lindenstrauss.
\newblock Extensions of {Lipschitz} mappings into a {Hilbert} space.
\newblock {\em Contemporary Mathematics}, 26:189--206, 1984.

\bibitem[Kah68]{Kahane68}
Jean-Pierre Kahane.
\newblock {\em Some Random Series of Functions}.
\newblock Heath Math. Monographs. Cambridge University Press, 1968.

\bibitem[KM05]{KM05}
Bo'az Klartag and Shahar Mendelson.
\newblock Empirical processes and random projections.
\newblock {\em J. Funct. Anal.}, 225(1):229--245, 2005.

\bibitem[KMR14]{KMR14}
{F}. {K}rahmer, {S}. {M}endelson, and {H}olger {R}auhut.
\newblock {S}uprema of chaos processes and the restricted isometry property.
\newblock {\em {C}omm. {P}ure {A}ppl. {M}ath.}, 67(11):1877--1904, 2014.

\bibitem[KN10]{KN10}
Daniel~M. Kane and Jelani Nelson.
\newblock A derandomized sparse {Johnson-Lindenstrauss} transform.
\newblock {\em CoRR}, abs/1006.3585, 2010.

\bibitem[KN14]{KN14}
Daniel~M. Kane and Jelani Nelson.
\newblock Sparser {Johnson}-{Lindenstrauss} transforms.
\newblock {\em Journal of the ACM}, 61(1):4, 2014.

\bibitem[KW11]{KW11}
Felix Krahmer and Rachel Ward.
\newblock New and improved {J}ohnson-{L}indenstrauss embeddings via the
  {R}estricted {I}sometry {P}roperty.
\newblock {\em SIAM J. Math. Anal.}, 43(3):1269--1281, 2011.

\bibitem[KZM02]{KovacsZM02}
Geza Kov\'{a}cs, Shay Zucker, and Tsevi Mazeh.
\newblock A box-fitting algorithm in the search for periodic transits.
\newblock {\em Astronomy and {Astrophysics}}, 391:369--377, 2002.

\bibitem[LDFU13]{LDFU13}
Yichao Lu, Paramveer Dhillon, Dean Foster, and Lyle Ungar.
\newblock Faster ridge regression via the subsampled randomized {Hadamard}
  transform.
\newblock In {\em Proceedings of the 26th Annual Conference on Advances in
  Neural Information Processing Systems (NIPS)}, 2013.

\bibitem[LN14]{LarsenN14}
Kasper~Green Larsen and Jelani Nelson.
\newblock The {Johnson}-{Lindenstrauss} lemma is optimal for linear
  dimensionality reduction, 2014.
\newblock Manuscript.

\bibitem[LP86]{LP86}
Fran\c{c}ois Lust-Piquard.
\newblock In\'{e}galit\'{e}s de {Khintchine} dans ${C_p}$ $(1 < p < \infty)$.
\newblock {\em C. R. Math. Acad. Sci. Paris}, 303(7):289--292, 1986.

\bibitem[LPP91]{LPP91}
Fran\c{c}ois Lust-Piquard and Gilles Pisier.
\newblock Noncommutative {Khintchine} and {Paley} inequalities.
\newblock {\em Ark. Mat.}, 29(2):241--260, 1991.

\bibitem[LS13]{LeeS13a}
Yin~Tat Lee and Aaron Sidford.
\newblock Matching the universal barrier without paying the costs : Solving
  linear programs with {\~{o}}(sqrt(rank)) linear system solves.
\newblock {\em CoRR}, abs/1312.6677, 2013.

\bibitem[Mat08]{Matousek08}
Jir\'{\i} Matousek.
\newblock On variants of the {Johnson-Lindenstrauss} lemma.
\newblock {\em Random Struct. Algorithms}, 33(2):142--156, 2008.

\bibitem[MM13]{MM13}
Xiangrui Meng and Michael~W. Mahoney.
\newblock Low-distortion subspace embeddings in input-sparsity time and
  applications to robust linear regression.
\newblock In {\em Proceedings of the 45th ACM Symposium on Theory of Computing
  (STOC)}, pages 91--100, 2013.

\bibitem[MPTJ07]{MPT07}
S.~Mendelson, A.~Pajor, and N.~Tomczak-Jaegermann.
\newblock Reconstruction and subgaussian operators in asymptotic geometric
  analysis.
\newblock {\em Geom. Funct. Anal.}, 17(4):1248--1282, 2007.

\bibitem[MV97]{MeiseV97}
Reinhold Meise and Dietmar Vogt.
\newblock {\em Introduction to Functional Analysis}.
\newblock Oxford Graduate Texts in Mathematics (Book 2). Oxford University
  Press, 1997.

\bibitem[NN13a]{NN13b}
Jelani Nelson and Huy~L. Nguy$\tilde{\hat{\mbox{e}}}$n.
\newblock {OSNAP}: Faster numerical linear algebra algorithms via sparser
  subspace embeddings.
\newblock In {\em Proceedings of the 54th Annual IEEE Symposium on Foundations
  of Computer Science (FOCS)}, pages 117--126, 2013.

\bibitem[NN13b]{NN13a}
Jelani Nelson and Huy~L. Nguy$\tilde{\hat{\mbox{e}}}$n.
\newblock Sparsity lower bounds for dimensionality-reducing maps.
\newblock In {\em Proceedings of the 45th ACM Symposium on Theory of Computing
  (STOC)}, pages 101--110, 2013.

\bibitem[NPW14]{NPW14}
Jelani Nelson, Eric Price, and Mary Wootters.
\newblock New constructions of {RIP} matrices with fast multiplication and
  fewer rows.
\newblock In {\em Proceedings of the 25th Annual ACM-SIAM Symposium on Discrete
  Algorithms (SODA)}, 2014.

\bibitem[NT09]{NT09}
Deanna Needell and Joel~A. Tropp.
\newblock {CoSaMP}: Iterative signal recovery from incomplete and inaccurate
  samples.
\newblock {\em Appl. Comput. Harmon. Anal.}, 26:301--332, 2009.

\bibitem[PBMID13]{PBMD13}
Saurabh Paul, Christos Boutsidis, Malik Magdon-Ismail, and Petros Drineas.
\newblock Random projections for support vector machines.
\newblock In {\em Proceedings of the 16th International Conference on
  Artificial Intelligence and Statistics (AISTATS)}, pages 498--506, 2013.

\bibitem[Pie72]{P72}
Albrecht Pietsch.
\newblock {\em Theorie der Operatorenideale (Zusammenfassung)}.
\newblock Friedrich-Schiller-Universit\"{a}t Jena, 1972.

\bibitem[Pis86]{Pi86}
Gilles Pisier.
\newblock Probabilistic methods in the geometry of {Banach} spaces.
\newblock {\em Probability and Analysis, Lecture Notes in Math.},
  1206:167--241, 1986.

\bibitem[Pis81]{Pi80}
Gilles Pisier.
\newblock Remarques sur un r\'{e}sultat non public de {B.} {Maurey}.
\newblock {\em S\'{e}minaire d'analyse fonctionnelle}, Exp. V., 1980/81.

\bibitem[PTJ86]{PTJ86}
Alain Pajor and Nicole Tomczak-Jaegermann.
\newblock Subspaces of small codimension of finite dimensional {Banach} spaces.
\newblock {\em Proc. Amer. Math. Soc.}, 97:637--642, 1986.

\bibitem[PW14]{PiW14}
M.~Pilanci and M.~Wainwright.
\newblock Randomized sketches of convex programs with sharp guarantees.
\newblock {\em arXiv}, abs/1404.7203, 2014.

\bibitem[Roc70]{Roc70}
R.~Rockafellar.
\newblock {\em Convex Analysis}.
\newblock Princeton University Press, 1970.

\bibitem[RS00]{RS00}
Sam~T. Roweis and Lawrence~K. Saul.
\newblock Nonlinear dimensionality reduction by locally linear embedding.
\newblock {\em Science}, 290:2323--2326, 2000.

\bibitem[RV07]{RV07}
Mark Rudelson and Roman Vershynin.
\newblock Sampling from large matrices: An approach through geometric
  functional analysis.
\newblock {\em Journal of the ACM}, 54(4), 2007.

\bibitem[RV08]{RV08}
Mark Rudelson and Roman Vershynin.
\newblock On sparse reconstruction from {Fourier} and {Gaussian} measurements.
\newblock {\em Communications on Pure and Applied Mathematics},
  61(8):1025--1045, 2008.

\bibitem[Sar06]{Sarlos06}
Tam{\'a}s Sarl{\'o}s.
\newblock Improved approximation algorithms for large matrices via random
  projections.
\newblock In {\em Proceedings of the 47th Annual IEEE Symposium on Foundations
  of Computer Science (FOCS)}, pages 143--152, 2006.

\bibitem[SS11]{SpielmanS11}
Daniel~A. Spielman and Nikhil Srivastava.
\newblock Graph sparsification by effective resistances.
\newblock {\em {SIAM} J. Comput.}, 40(6):1913--1926, 2011.

\bibitem[Tal05]{Talagrand05}
Michel Talagrand.
\newblock {\em The generic chaining: upper and lower bounds of stochastic
  processes}.
\newblock Springer Verlag, 2005.

\bibitem[TdSL00]{TSL00}
Joshua~B. Tenenbaum, Vin de~Silva, and John~C. Langford.
\newblock A global geometric framework for nonlinear dimensionality reduction.
\newblock {\em Science}, 290(5500):2319--2323, 2000.

\bibitem[Tib96]{Tibshirani96}
Robert Tibshirani.
\newblock Regression shrinkage and selection via the {Lasso}.
\newblock {\em Journal of the Royal Statistical Society, Series B},
  58(1):267--288, 1996.

\bibitem[Tro08]{Tropp08}
Joel~A. Tropp.
\newblock The random paving property for uniformly bounded matrices.
\newblock {\em Studia Math.}, 185:67--82, 2008.

\bibitem[Tro11]{Tropp11}
Joel~A. Tropp.
\newblock Improved analysis of the subsampled randomized hadamard transform.
\newblock {\em Adv. Adapt. Data Anal.}, 3(1--2):115--126, 2011.

\bibitem[VJ11]{VanderburgJ14}
Andrew Vanderburg and John~Asher Johnson.
\newblock A technique for extracting highly precise photometry for the
  two-wheeled {Kepler} mission.
\newblock {\em CoRR}, abs/1408.3853, 2011.

\bibitem[WDL{\etalchar{+}}09]{WDLSA09}
Kilian~Q. Weinberger, Anirban Dasgupta, John Langford, Alexander~J. Smola, and
  Josh Attenberg.
\newblock Feature hashing for large scale multitask learning.
\newblock In {\em Proceedings of the 26th Annual International Conference on
  Machine Learning (ICML)}, pages 1113--1120, 2009.

\bibitem[Woo14]{Woodruff14}
David~P. Woodruff.
\newblock Personal communication, 2014.

\bibitem[WZ13]{WZ13}
David~P. Woodruff and Qin Zhang.
\newblock Subspace embeddings and $\ell_p$ regression using exponential random
  variables.
\newblock In {\em Proceedings of the 26th Conference on Learning Theory
  (COLT)}, 2013.

\end{thebibliography}
\end{document}